\documentclass[authorcolumns,numberwithinsect]{no-lipics-v2022}
\nolinenumbers

\usepackage{url}
\usepackage{graphicx,accents}
\usepackage[]{mdframed}
\usepackage{xcolor}
\usepackage{colortbl}
\definecolor{myblue}{rgb}{0.1,0.1,0.5}
\definecolor{mygreen}{rgb}{0.00,0.26,0.15}
\definecolor{iris}{rgb}{0.35, 0.31, 0.81}
\definecolor{shadecolor}{gray}{0.75}

\usepackage{xspace}
\usepackage{framed}
\usepackage{xifthen}
\usepackage{subcaption}
\usepackage{tikz}
\usetikzlibrary{calc}
\usetikzlibrary{arrows.meta}
\usepackage{amsmath}
\usepackage{amsthm}
\usepackage{thmtools,thm-restate}
\usepackage{amssymb}
\usepackage{nicefrac}
\usepackage{multirow}
\usepackage{lineno}
\usepackage{paralist}
\usepackage{todonotes}

\tikzstyle{smallvertex}=[circle, draw=black, fill=white, inner sep=2pt]

\makeatletter
\newcommand{\letter}[1]{\@alph{#1}}
\makeatother

\usepackage{comment}

\usepackage{bm}

\newcommand{\counttemphomsprob}{\#\textsc{TemporalHom}}

\newcommand{\counthoms}[2]{\#\mathsf{Hom}({#1} \to {#2})}

\newlength{\RoundedBoxWidth}
\newsavebox{\GrayRoundedBox}
\newenvironment{GrayBox}[1]%
   {\setlength{\RoundedBoxWidth}{.93\textwidth}
    \def\boxheading{#1}
    \begin{lrbox}{\GrayRoundedBox}
       \begin{minipage}{\RoundedBoxWidth}}%
   {   \end{minipage}
    \end{lrbox}
    \begin{center}
    \begin{tikzpicture}%
       \node(Text)[draw=black!20,fill=white,rounded corners,%
             inner sep=2ex,text width=\RoundedBoxWidth]%
             {\usebox{\GrayRoundedBox}};
        \coordinate(x) at (current bounding box.north west);
        \node [draw=white,rectangle,inner sep=3pt,anchor=north west,fill=white] 
        at ($(x)+(6pt,.75em)$) {\boxheading};
    \end{tikzpicture}
    \end{center}}     
\newenvironment{defproblemx}[2][]{\noindent\ignorespaces%
                                \FrameSep=6pt%
                                \parindent=0pt%
                \vspace*{-1.5em}
                \ifthenelse{\isempty{#1}}{%
                  \begin{GrayBox}{\textsc{#2}}%
                }{%
                  \begin{GrayBox}{\textsc{#2}  parameterized by~{#1}}%
                }
                \begin{tabular*}{\textwidth}{@{\hspace{.1em}} >{\itshape} p{1.8cm} p{0.8\textwidth} @{}}%
            }{
                \end{tabular*}%
                \end{GrayBox}%
                \ignorespacesafterend
            }

\newcommand{\card}[1]{\ensuremath{{\vert {#1} \vert }}}
\newcommand{\ca}[1]{\ensuremath{\mathcal{#1} }}
\newcommand{\set}[1]{\ensuremath{\left\{ {#1} \right\}}}
\newcommand{\fn}[3]{\ensuremath{{{#1} : {#2} \rightarrow {#3}}}}

\newcommand{\cO}{\mathcal{O}}

\newcommand{\ext}{\ensuremath{\mathsf{ext}}}

\newcommand{\lab}{\ensuremath{\mathtt{lab}}}

\mathchardef\mh="2D

\newcommand{\lt}{\ensuremath{\Lambda}}

\newcommand{\kw}{\ensuremath{\varrho}}

\newcommand{\lgp}{order-augmented-dual}
\newcommand{\lplus}{\ensuremath{\text{oad}}}
\newcommand{\ord}{\ensuremath{\preccurlyeq}}
 
\newcommand{\touch}{\ensuremath{\mathsf{touch}}}

\newcommand{\simsize}{\ensuremath{b}}

\newcommand{\simcw}{\ensuremath{4 \simsize^4 + 12 \simsize^3 + 14 \simsize^2 + 6 \simsize + 2}}

\newcommand{\simcwone}{\ensuremath{4 \simsize^4 + 12 \simsize^3 + 14 \simsize^2 + 6 \simsize + 1}}

\newcommand{\eqc}{\ensuremath{\#\mathsf{eqc}}}
\newcommand{\vmap}{\ensuremath{\alpha}}

\newcommand{\nnn}{\ensuremath{\mathsf{null}}}

\newcommand{\maps}{\ensuremath{\mathsf{maps}}}
\newcommand{\tms}{\ensuremath{\mathsf{times}}}

\newcommand{\bt}{\ensuremath{{\mathbf{t}}}}
\newcommand{\comhom}{\ensuremath{\mathsf{comhom}}}

\newcommand{\cwDP}{\ensuremath{\mathsf{DP}}}
\newcommand{\cwDPruntime}{\ensuremath{|P|^{\cO(1)}\cdot |\Gamma|^{\cO(\kw)}}}
\newcommand{\intro}{\ensuremath{\mathtt{intro}}}
\newcommand{\union}{\ensuremath{\mathtt{union}}}
\newcommand{\relabel}{\ensuremath{\mathtt{relabel}}}
\newcommand{\ejoin}{\ensuremath{\mathtt{e \mh join}}}
\newcommand{\ajoin}{\ensuremath{\mathtt{a \mh join}}}

\newtheorem{observation}{Observation}

\newcommand{\homs}[2]{\mathsf{Hom}(#1 \to #2)}
\newcommand{\stricthoms}[2]{\mathsf{Hom}_{<}(#1 \to #2)}

\newcommand{\W}{\mathrm{W}}

\newcommand{\R}{\mathbb{R}}

\newcommand{\fptred}{\leq^{\mathsf{FPT}}_{\mathsf{T}}}

\newcommand{\nzero}[1]{[#1)}

\newcommand{\colhom}[2]{\mathsf{ColHom}(#1 \to #2)}

\usepackage{tikz, tikz-cd}
\usetikzlibrary{matrix}
\usetikzlibrary{shapes}
\usetikzlibrary{arrows,decorations.markings, tikzmark}
\usetikzlibrary{decorations.pathreplacing,calligraphy,backgrounds}
\usetikzlibrary{arrows.meta,arrows}
\usetikzlibrary{conference}
\usetikzlibrary{cd}

\newcommand{\orderiso}{\cong_{\mathsf{o}}}

\usepackage{tikz}
\usetikzlibrary{calc}

\newcommand{\DiamondCell}[4]{%
\rule{0pt}{1.25cm}
\begin{tikzpicture}[
  scale=0.55,
  baseline=(current bounding box.center),
  vtx/.style={circle, fill=black, inner sep=1.1pt},
  lab/.style={font=\tiny, inner sep=0.5pt},
  cond/.style={font=\scriptsize, inner sep=0.5pt}
]
  \node[vtx] (L) at (-2.5,0) {};
  \node[vtx] (M) at (0,0) {};
  \node[vtx] (R) at (2.5,0) {};
  \node[vtx] (T) at (0,1.75) {};
  \node[vtx] (B) at (0,-1.75) {};

  \draw (L) -- (T) -- (R);
  \draw (L) -- (B) -- (R);
  \draw (L) -- (M) -- (R);

  \ifthenelse{\equal{#4}{same}}
{
  \node[lab, above] at (-1.0,0)
    {$e^\ell_{#1,#2}(u)$};
  \node[lab, above] at ( 1.0,0)
    {$e^r_{#1,#2}(u)$};
}
{
  \node[lab, above] at (-1.0,0)
    {$e^\ell_{#1,#2}(u)$};
  \node[lab, above] at ( 1.0,0)
    {$e^r_{#1,#2}(u')$};
}

  \node[cond, anchor=south] at (0,-2.5) {$#3$};
  \foreach \t in {0.35,0.5,0.65}
    \fill ($(M)!\t!(T)$) circle (0.6pt);
    \foreach \t in {0.35,0.5,0.65}
    \fill ($(B)!\t!(M)$) circle (0.6pt);
\end{tikzpicture}%

\rule[-1.2cm]{0pt}{1.2cm}
}

\newcommand{\PTwoCell}[4]{%
\rule{0pt}{1.25cm}
\begin{tikzpicture}[
  scale=0.55,
  baseline=(current bounding box.center),
  vtx/.style={circle, fill=black, inner sep=1.1pt},
  lab/.style={font=\small, inner sep=0.5pt},
  cond/.style={font=\scriptsize, inner sep=0.5pt}
]
  \node[vtx] (L) at (-2.5,0) {};
  \node[vtx] (M) at (0,0) {};
  \node[vtx] (R) at (2.5,0) {};

  \draw (L) -- (M) -- (R);

  \ifthenelse{\equal{#4}{same}}
{
  \node[lab, above=2pt] at (-1.0,0)
    {$e^\ell_{#1,#2}$};
  \node[lab, above=2pt] at ( 1.0,0)
    {$e^r_{#1,#2}$};
}
{
  \node[lab, above=2pt] at (-1.0,0)
    {$e^\ell_{#1,#2}$};
  \node[lab, above=2pt] at ( 1.0,0)
    {$e^r_{#1,#2}$};
}

\node[lab, below=3pt] at (-2.5,0) {$p^1_{#1,#2}$};
\node[lab, below=3pt] at (0,0) {$p^2_{#1,#2}$};
\node[lab, below=3pt] at (2.5,0) {$p^3_{#1,#2}$};

  \node[cond, anchor=south] at (0,-1) {$#3$};
\end{tikzpicture}%

\rule[-1.2cm]{0pt}{1.2cm}
}

\title{The Parameterised Complexity of Temporal Motif Counting, and a Lov{\'{a}}sz-Style Isomorphism Theorem}

\author{Jayakrishnan Madathil}{Indian Institute of Technology Palakkad}{jkmadathil@iitpkd.ac.in}{}{}

\author{Kitty Meeks}{School of Computing Science, University of Glasgow}{Kitty.Meeks@glasgow.ac.uk}{}{}

\author{Marc Roth}{School of Electronic Engineering and Computer Science, Queen Mary University of London}{m.roth@qmul.ac.uk}{}{}

\authorrunning{J. Madathil, K. Meeks, and M. Roth}

\titlerunning{Motif Counting in Temporal Graphs}

\begin{document}

\maketitle

\begin{abstract}
    We study the structural expressivity and the parameterised complexity of counting homomorphisms from small temporal patterns to large temporal graphs. Here, a temporal pattern $P$ consists of a graph together with a partial order on its edges, and a homomorphism from $P$ to a temporal graph must not only preserve edges, but also satisfy the temporal constraints imposed by the partial order of the edge set of the pattern.

    The main results of this work are three-fold:
    \begin{itemize}
        \item[(1)] We prove a temporal Lov{\'{a}}sz-style theorem, stating that two temporal graphs are isomorphic (under a natural definition of temporal isomorphisms) if and only if they have the same number of homomorphisms from all temporal patterns.
        \item[(2)] We introduce a cliquewidth-based measure on temporal patterns, called the temporally order-augmented dual width, the \emph{toadwidth} for short, and show that counting temporal homomorphisms is fixed-parameter tractable for temporal patterns of bounded toadwidth.
        \item[(3)] We provide a parameterised complexity dichotomy with an explicit tractability criterion for counting homomorphisms from totally ordered temporal patterns, classified along their underlying graph structure. 
    \end{itemize}
    The methods and tools invoked for proving (1)-(3) vary significantly: The proof of the Lov{\'{a}}sz-style Theorem is obtained by combining Lov{\'{a}}sz' original argument with an inclusion-exclusion construction to deal with temporal equalities and inequalities. The FPT algorithm in (2) is obtained by an involved dynamic programming algorithm along toadwidth. Finally, the upper bound for the dichotomy in (3) relies on the toadwidth-based algorithm in (2), and the lower bound follows from a reduction from the clique problem by embedding the temporal pattern into a grid, the connections of which are either formed by edges or by temporal constraints.
\end{abstract}

\newpage

\section{Introduction}
Temporal graphs model networks the connections of which are only available at certain points, or during certain intervals, of time. They find applications in the analysis of protein-protein interaction networks~\cite{Hanetal04}, social networks~\cite{ZhaoTHOJL10}, and in machine learning~\cite{Longaetal23,WalegaR25}, only to name a few examples; we refer the reader to the survey of Holme and Saramäki~\cite{HolmeS12} for a comprehensive exposition.

Following the success of ``Network Motifs'' in static (non-temporal) graphs~\cite{Miloetal02} recent years have seen a flurry of applied results on \emph{temporal motif counting problems}~\cite{Tempcount1,Tempcount2,Tempcount3,Tempcount4}: in a nutshell, given a pattern $P$ and a temporal graph $\Gamma$, the task is to count the number of occurrences of $P$ in $\Gamma$. Similarly as in the static case, it has been observed that frequencies of temporal patterns correlate with global features of the temporal network. Despite their relevance in practical applications witnessed by the previous works, we find a notable gap when it comes to our theoretical understanding on the inherent complexity of temporal motif counting problems: while there are results for selected patterns, such as temporal walks~\cite{EnrightMM25}, and temporal stars~\cite{Tempcount4}, the state of the art is far away from a comprehensive understanding of the complexity of counting temporal patterns in general. This stands in sharp contrast to the static, non-temporal, case, for which we know deep dichotomy results w.r.t.\ parameterised and fine-grained complexity theory that determine almost precisely the best possible running times for arbitrary motif counting problems under standard lower bound assumptions~\cite{DalmauJ04,Marx10,CurticapeanM14,CurticapeanDM17,FockeR24,DoringMW24}.

In this work, we address this gap and present the first comprehensive complexity analysis of temporal motif counting problems.

\paragraph*{The Model: Temporal Graphs and Parameterised Complexity}
The literature exhibits a variety of different, but equivalent, ways to define temporal graphs: for example, the snapshot model~\cite{Longaetal23,GaoR22} introduces a temporal graph as a sequence of static graphs on the same set of vertices, where the $i$-th element of the sequence represents the edges present at time $i$. The time-varying model~\cite{CasteigtsFQS12} equips a graph $G$ with a function $\rho: E(G) \times \mathcal{T} \to \{0,1\}$, where $\mathcal{T}$ is the set of available times, and $\rho(e,i)=1$ if and only if $e$ is available at time $i$. Note that the latter model accounts for both discrete time ($\mathcal{T}\subseteq \mathbb{N}$) and continuous time ($\mathcal{T}= \mathbb{R}_{\geq 0}$). In the present work we use the discrete time-varying model, defined as follows:
\begin{definition}[Temporal Graphs]\label{def:tempgraphs_intro}
    A temporal graph is a pair $\Gamma=(G,\tau)$ of a graph $G$ without selfloops, but with parallel edges allowed, and a mapping $\tau:E(G)\to \mathbb{N}$ satisfying that $\tau(e)\neq \tau(e')$ for each pair of distinct parallel edges $e,e' \in E(G)$. 
\end{definition}
Note that the previous definition is equivalent to the standard definition of Kempe, Kleinberg, and Kumar~\cite{KempeKK02} by enforcing $G$ to be a graph without parallel edges, but allowing $\tau$ to map edges to finite subsets of $\mathbb{N}$. The latter is also equivalent to the aforementioned time-varying model by setting $\rho(e,i)=1$ if and only if $i \in \tau(e)$. As those encodings can trivially be transformed into each other, our complexity-theoretic results are invariant on the choice; we emphasise that we  use the specific model in Definition~\ref{def:tempgraphs_intro} solely due to the fact that it will be most convenient, notation-wise, for our proofs.

To study the complexity of temporal motif counting we rely on the framework of parameterised complexity theory: Given a pattern $P$ and a temporal graph $(G,\tau)$ for which we are interested in the number of ocurences of $P$, we assume that $P$ is significantly smaller than $G$. This assumption is clearly reflected by existing practical applications of temporal motif counting, and captures, precisely, the intuition behind temporal network motifs being small and simple patterns whose frequencies in large networks correlate with global features. We also note that the same assumption is standard for the case of static (non-temporal) motif counting~\cite{Miloetal02,CurticapeanDM17}.

In the language of parameterised complexity, we say that a temporal motif counting problem is \emph{fixed-parameter tractable} (FPT) if there is a computable function $f$ such that that problem can be solved in time $f(|P|)\cdot |(G,\tau)|^{O(1)}$, where $|P|$ is the size of the pattern, and $|(G,\tau)|$ is the size of the temporal graph. 

\paragraph*{Temporal Patterns and Homomorphisms}
So far, we have not provided details on what is meant, formally, by a temporal pattern or an occurrence of a pattern in a temporal graph. Before introducing the mathematical details, let us consider a well-studied example: temporal walks and paths~\cite{KempeKK02,BentertHNN20,CasteigtsHMZ21,EnrightMM25}. A temporal walk of size $k$ in a temporal graph $(G,\tau)$ is a walk in $G$ consisting of $k$ edges $e_1,\dots,e_k$ such that $\tau(e_1)\leq \tau(e_2)\leq\dots\leq \tau(e_k)$. Similarly, a temporal path of size $k$ is a self-avoiding temporal walk of size $k$. 
A temporal walk can be equivalently expressed as a homomorphism\footnote{A homomorphism from a graph $H$ to a graph $G$ is an adjacency preserving mapping from $V(H)$ to $V(G)$; see Section~\ref{sec:prelimns} for the formal definition.} from $P_k$ to $G$ such that, for each $i \in [k-1]$ the time step of the image of the $i$'th edge is at most the time step of the image of the $(i+1)$'st edge. Moreover, a temporal path can be defined similarly via subgraph embeddings, that is, vertex-injective homomorphisms. 

In other words, the previous examples highlight that we can model temporal patterns as a combination of static patterns, such as homomorphisms and subgraph embeddings, and temporal constraints. Given that graph homomorphisms counts have shown to be, literally, the basis of static network motif counting~\cite{CurticapeanDM17}\footnote{See also~\cite[Section 4]{Roth26} for a survey on the homomorphism basis.}, we will model temporal motif counting via \emph{temporal homomorphisms} in the present paper:

\begin{definition}[Temporal Patterns and Homomorphisms]\label{def:temp_patternsandhoms_intro}
    A \emph{temporal pattern} is a triple $P=(H,R,\vartheta)$ where $H$ is a (static) graph without parallel edges, $R$ is a finite poset, and $\vartheta$ is a surjection from $E(H)$ to the groundset of $R$ such that for every distinct pair $e,e'$ of parallel edges of $H$, we have $\vartheta(e)\neq \vartheta(e')$. 

    Given a temporal graph $\Gamma=(G,\tau)$, a \emph{homomorphism} from $P$ to $\Gamma$ is a pair $\varphi=(\nu,\xi)$ such that
    \begin{enumerate}
        \item $\nu:V(H)\to V(G)$,
        \item $\xi: E(H) \to E(G)$ and for all $u,v \in V(H)$ and edge $e\in E(H)$ with endpoints $u$ and $v$, we have that $\nu(u)$ and $\nu(v)$ are the endpoints of $\xi(e)$,
        \item for all $e,e' \in E(H)$ we have that \[\vartheta(e)\leq_R\vartheta(e') \Rightarrow \tau(\xi(e))\leq \tau(\xi(e')) ~~\left(\text{implying that }\vartheta(e)=\vartheta(e') \Rightarrow \tau(\xi(e))= \tau(\xi(e'))\right) \,.\]
    \end{enumerate}
    We write $\homs{P}{\Gamma}$ for the set of all homomorphisms from $P$ to $\Gamma$
\end{definition}
We note that, in the previous definition, (1) and (2) enforce $(\nu,\xi)$ to be a graph homomorphism from $H$ to $G$, and (3) enforces that the homomorphism satisfies the temporal constraints imposed by the poset $R$. Specifically, we highlight that the condition $\vartheta(e)=\vartheta(e') \Rightarrow \tau(\xi(e))= \tau(\xi(e'))$ allows us to enforce that distinct edges of the pattern are mapped to edges of the temporal graph that are present at \emph{the same time} --- this illustrates the reason on why we do not just enforce $R$ to be a poset on $E(H)$, but rather to allow $R$ to be any finite poset the ground set of which is the image of the surjection $\vartheta$. Examples of temporal patterns are depicted in Figures~\ref{fig:temp_patterns_example} and~\ref{fig:mixed_graph_intro}.

\begin{remark}[Temporal Patterns vs.\ Temporal Graphs]
    We wish to highlight explicitly the fact that temporal graphs have edges with specific time steps while temporal patterns have relative constraints along pairs of edges via the surjection into a poset. This is due to the fact that we wish to express that edges in the image of a homomorphism satisfy temporal constraints such as ``no later than'' or ``at the same time'', rather than enforcing that the image of an edge needs to happen at a specific time; think for example of temporal paths and temporal walks which require that the times of the edges of the paths and walks are monotonically increasing.
\end{remark}

\subsection{Our Results}
This work establishes three main contributions: 
\begin{itemize}
    \item[(1)] We show that our choice for modeling temporal motif counting via temporal homomorphisms is not just some generalisation of counting temporal walks, but captures the principal expressivity desired and expected from a foundation for temporal motif counting: we establish a Lov{\'{a}}sz-style theorem stating that two temporal graphs are isomorphic (under a standard notion of temporal isomorphisms) if, and only if, they have the same number of temporal homomorphisms from all temporal patterns. In other words, the homomorphism counts from temporal patterns determine precisely the isomorphism type of a temporal graph.
    \item[(2)] We introduce a novel cliquewidth-based invariant, called the \textbf{t}emporally \textbf{o}rder-\textbf{a}ugmented \textbf{d}ual width (``\emph{toadwidth}'') on temporal patterns that does take into account not only the graph underlying the temporal pattern, but also its temporal constraints. We then provide an FPT algorithm for counting homomorphisms from temporal patterns of bounded toadwidth.
    \item[(3)] We show that our FPT result is optimal for totally ordered temporal patterns if we restrict input instances only by the underlying static graph of the temporal patterns. More specifically, we establish an explicit criterion on classes of (static) graphs $\mathcal{H}$ which, if satisfied, makes the problem of counting temporal homomorphisms from totally ordered patterns with underlying graphs in $\mathcal{H}$ fixed-parameter tractable, and which, if not satisfied, makes the problem hard for the parameterised complexity class~$\W[1]$.
\end{itemize}

\begin{figure}[t]
    \centering
\begin{minipage}{0.48\textwidth}
\centering
  \begin{tikzpicture}[
  scale=0.9,
  vertex/.style={circle, draw, fill=white, inner sep=1.6pt},
  edge/.style={line width=0.8pt},
  darrow/.style={dashed, -{Stealth[length=2.5mm]}, line width=0.8pt}
]

\def\dx{1.2}
\def\dy{1.6}

\foreach \i in {0,...,5}
  \node[vertex] (v1-\i) at (\i*\dx,0) {};

\foreach \i in {0,...,5}
  \node[vertex] (v2-\i) at (\i*\dx,-\dy) {};

\foreach \i in {0,...,5}
  \node[vertex] (v3-\i) at (\i*\dx,-2*\dy) {};

\foreach \i in {0,...,5}
  \node[vertex] (v4-\i) at (\i*\dx,-3*\dy) {};

\foreach \i in {0,...,4}{
  \draw[edge] (v1-\i) -- (v1-\the\numexpr\i+1\relax);
  \draw[edge] (v2-\i) -- (v2-\the\numexpr\i+1\relax);
  \draw[edge] (v3-\i) -- (v3-\the\numexpr\i+1\relax);
  \draw[edge] (v4-\i) -- (v4-\the\numexpr\i+1\relax);
}

\draw[edge] (v1-5) -- (v2-5);
\draw[edge] (v2-0) -- (v3-0);
\draw[edge] (v3-5) -- (v4-5);

\foreach \r in {1,2,3,4}{
  \draw[darrow]
    ($(v\r-0)!0.55!(v\r-1)$)
    to[out=40,in=140]
    ($(v\r-2)!0.45!(v\r-3)$);

  \draw[darrow]
    ($(v\r-2)!0.55!(v\r-3)$)
    to[out=40,in=140]
    ($(v\r-4)!0.45!(v\r-5)$);
}

\foreach \r/\s in {1/2,2/3,3/4}{
  \draw[darrow]
    ($(v\r-1)!0.5!(v\r-2)$)
    -- ($(v\s-1)!0.5!(v\s-2)$);

  \draw[darrow]
    ($(v\r-3)!0.5!(v\r-4)$)
    -- ($(v\s-3)!0.5!(v\s-4)$);
}

\end{tikzpicture}
\end{minipage}
\hfill
\begin{minipage}{0.48\textwidth}
\centering
\begin{tikzpicture}[
  scale=0.9,
  vertex/.style={circle, draw, fill=white, inner sep=1.6pt},
  edge/.style={line width=0.8pt},
  darrow/.style={dashed, -{Stealth[length=2.5mm]}, line width=0.8pt}
]

\def\dx{1.2}
\def\dy{1.6}

\foreach \r in {1,2,3,4}{
  \foreach \i in {0,...,5}{
    \node[vertex] (v\r-\i) at (\i*\dx,{-(\r-1)*\dy}) {};
  }
}

\foreach \i in {0,...,4}{
  \draw[edge] (v1-\i) -- (v1-\the\numexpr\i+1\relax);
  \draw[edge] (v2-\i) -- (v2-\the\numexpr\i+1\relax);
  \draw[edge] (v3-\i) -- (v3-\the\numexpr\i+1\relax);
  \draw[edge] (v4-\i) -- (v4-\the\numexpr\i+1\relax);
}

\draw[edge] (v1-5) -- (v2-5);
\draw[edge] (v2-0) -- (v3-0);
\draw[edge] (v3-5) -- (v4-5);

\draw[darrow, draw=white]
  ($(v1-0)!0.55!(v1-1)$)
    to[out=40,in=140]
    ($(v1-2)!0.45!(v1-3)$);;

\draw[darrow] ($(v1-0)!0.5!(v1-1)$) to[out=40,in=140] ($(v1-1)!0.5!(v1-2)$);
\draw[darrow] ($(v1-1)!0.5!(v1-2)$) to[out=40,in=140] ($(v1-2)!0.5!(v1-3)$);
\draw[darrow] ($(v1-2)!0.5!(v1-3)$) to[out=40,in=140] ($(v1-3)!0.5!(v1-4)$);
\draw[darrow] ($(v1-3)!0.5!(v1-4)$) to[out=40,in=140] ($(v1-4)!0.5!(v1-5)$);
\draw[darrow] ($(v1-4)!0.5!(v1-5)$) -- ($(v1-5)!0.5!(v2-5)$);

\draw[darrow] ($(v1-5)!0.5!(v2-5)$) -- ($(v2-5)!0.5!(v2-4)$);
\draw[darrow] ($(v2-5)!0.5!(v2-4)$) to[out=140,in=40] ($(v2-4)!0.5!(v2-3)$);
\draw[darrow] ($(v2-4)!0.5!(v2-3)$) to[out=140,in=40] ($(v2-3)!0.5!(v2-2)$);
\draw[darrow] ($(v2-3)!0.5!(v2-2)$) to[out=140,in=40] ($(v2-2)!0.5!(v2-1)$);
\draw[darrow] ($(v2-2)!0.5!(v2-1)$) to[out=140,in=40] ($(v2-1)!0.5!(v2-0)$);
\draw[darrow] ($(v2-1)!0.5!(v2-0)$) -- ($(v2-0)!0.5!(v3-0)$);

\draw[darrow] ($(v2-0)!0.5!(v3-0)$) -- ($(v3-0)!0.5!(v3-1)$);
\draw[darrow] ($(v3-0)!0.5!(v3-1)$) to[out=40,in=140] ($(v3-1)!0.5!(v3-2)$);
\draw[darrow] ($(v3-1)!0.5!(v3-2)$) to[out=40,in=140] ($(v3-2)!0.5!(v3-3)$);
\draw[darrow] ($(v3-2)!0.5!(v3-3)$) to[out=40,in=140] ($(v3-3)!0.5!(v3-4)$);
\draw[darrow] ($(v3-3)!0.5!(v3-4)$) to[out=40,in=140] ($(v3-4)!0.5!(v3-5)$);
\draw[darrow] ($(v3-4)!0.5!(v3-5)$) -- ($(v3-5)!0.5!(v4-5)$);

\draw[darrow] ($(v3-5)!0.5!(v4-5)$) -- ($(v4-5)!0.5!(v4-4)$);
\draw[darrow] ($(v4-5)!0.5!(v4-4)$) to[out=140,in=40] ($(v4-4)!0.5!(v4-3)$);
\draw[darrow] ($(v4-4)!0.5!(v4-3)$) to[out=140,in=40] ($(v4-3)!0.5!(v4-2)$);
\draw[darrow] ($(v4-3)!0.5!(v4-2)$) to[out=140,in=40] ($(v4-2)!0.5!(v4-1)$);
\draw[darrow] ($(v4-2)!0.5!(v4-1)$) to[out=140,in=40] ($(v4-1)!0.5!(v4-0)$);

\end{tikzpicture}
\end{minipage}
    \caption{Illustrations of two temporal patterns, both of which have a $24$-vertex path as underlying graph, depicted with solid edges. The left pattern exhibits grid-like temporal constraints, depicted with dashed arrows, and we will show that a generalisation of such constraints to larger patterns yields intractability. The right pattern exhibits temporal constraints that follow the path, and our main algorithmic result (Theorem~\ref{thm:intro_main_algo}) implies that counting homomorphisms from such patterns is fixed-parameter tractable. Note that for both patterns, the dashed arrows only depict the Hasse diagram of the partial order; the full partial order is obtained by taking the transitive closure.}
    \label{fig:temp_patterns_example}
\end{figure}

In what follows, we present each contribution in detail.

\paragraph*{A Temporal Lov{\'{a}}sz Theorem}
One of Lov{\'{a}}sz's most famous results, often referred to as ``Lov{\'{a}}sz's Theorem'' states that the isomorphism type of a graph $G$ is determined by both infinite vectors $(\#\homs{H}{G})_{H}$ and $(\#\homs{G}{H})_{H}$, where $\#\homs{H}{G}$ denotes the number of graph homomorphisms from $H$ to $G$. For this work, the ``left-hand side'' result involving the vector $(\#\homs{H}{G})_{H}$ plays a more important role, and we can state the corresponding version of Lov{\'{a}}sz's Theorem formally as follows
\begin{theorem}[Lov{\'{a}}sz's Theorem; cf.\ Chapter 5.4 in~\cite{Lovasz12}]
    Two graphs $G_1$ and $G_2$ are isomorphic if and only if $\#\homs{H}{G_1}=\#\homs{H}{G_2}$ for each graph $H$.
\end{theorem}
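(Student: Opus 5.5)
The forward direction is immediate: an isomorphism $G_1\to G_2$ composed with a homomorphism $H\to G_1$ yields a homomorphism $H\to G_2$, and this gives a bijection $\homs{H}{G_1}\to\homs{H}{G_2}$. For the converse, I will follow Lovász's classical route through injective homomorphisms. Let $\mathsf{Inj}(H\to G)$ denote the set of injective homomorphisms. The plan is to show first that equal homomorphism counts force equal injective counts for every $H$, and then to conclude by choosing $H$ cleverly.

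\textbf{Step 1 (Möbius inversion over partitions).} Every homomorphism $h\colon H\to G$ induces a partition $\pi$ of $V(H)$ into the fibres of $h$. It then factors uniquely as the quotient map $H\to H/\pi$ followed by an injective homomorphism $H/\pi\to G$. Here $H/\pi$ identifies each block to a single vertex. If some block contains both endpoints of an edge, the quotient has a selfloop and admits no homomorphism into the loopless graph $G$, so it contributes zero. This gives
\[\#\homs{H}{G}=\sum_{\pi}\#\mathsf{Inj}(H/\pi\to G),\]
summing over all partitions $\pi$ of $V(H)$. Every $H/\pi$ with $\pi$ nontrivial has strictly fewer vertices than $H$. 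By induction on $|V(H)|$, or equivalently by Möbius inversion on the partition lattice, $\#\mathsf{Inj}(H\to G)$ is therefore an integer linear combination of the numbers $\#\homs{H'}{G}$, where $H'$ ranges over quotients of $H$. The coefficients depend only on $H$ and not on $G$. Hence $\#\homs{H}{G_1}=\#\homs{H}{G_2}$ for all $H$ implies $\#\mathsf{Inj}(H\to G_1)=\#\mathsf{Inj}(H\to G_2)$ for all $H$.

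\textbf{Step 2 (conclude isomorphism).} Take $H=G_1$. Then $\mathsf{Inj}(G_1\to G_1)$ contains the identity, so $\#\mathsf{Inj}(G_1\to G_2)>0$, and there is an injective homomorphism $G_1\to G_2$. Symmetrically, there is an injective homomorphism $G_2\to G_1$. The single-vertex graph $H=K_1$ gives $|V(G_1)|=|V(G_2)|$, so both maps are bijections on vertices. Each bijection maps edges injectively, so $|E(G_1)|\le|E(G_2)|\le|E(G_1)|$. A vertex bijection that preserves edges between graphs with equally many edges is an isomorphism.

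\textbf{Main obstacle.} The only delicate point is Step 1. I need to make sure the quotient decomposition is a bijection, and I need to handle quotients that create loops or parallel edges. For simple graphs, collapse parallel edges in the quotient: homomorphism counts into a simple $G$ do not depend on edge multiplicity. Declare quotients with loops to contribute zero. With these conventions the inversion is a purely formal triangular system, since quotients have fewer vertices. It can be solved uniformly in $G$, which is exactly what Step 1 needs.
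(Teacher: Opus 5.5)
Your proposal is correct and is the classical Lov\'asz argument. The paper does not prove this static theorem; it only cites it. However, its temporal analogue follows exactly your route: M\"obius inversion over partitions of $V(H)$ to pass from homomorphism counts to injective counts (Lemma~\ref{lem:temp_hom_basis}), then injective maps in both directions forcing an isomorphism (Lemma~\ref{lem:inj_to_iso}). Your Step~2 relies on simplicity to get edge-injectivity, which is fine here because the statement's homomorphisms are adjacency-preserving vertex maps; in the paper's temporal setting, where parallel edges are allowed, that step is supplied by Lemma~\ref{lem:total_injectivity}.
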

The above version of Lov{\'{a}}sz's Theorem, while foundational on its own right, has witnessed a renaissance over the last decade in the context of the theory of homomorphism indistinguishability which has turned out to be a bridge between motif counting, descriptive complexity theory, graph neural networks, and isomorphism heuristics~\cite{Dvorak10,DellGR18,Morrisetal19}.

We establish a temporal version of Lov{\'{a}}sz's Theorem. For the statement however, we have to first discuss the notion of an isomorphism between temporal graphs. The literature contains various proposals, featured below. For what follows, given a temporal graph $\Gamma=(G,\tau)$, we write $G_i$ for the subgraph of $G$ only containing edges present at time $i$.
\begin{itemize}
    \item Two temporal graphs $\Gamma_1=(G,\tau_G)$ and $\Gamma_2=(H,\tau_H)$ are \emph{pointwise isomorphic}~\cite{BeddarWiesingetal24,WalegaR25} if $G_i$ and $H_i$ are isomorphic for each $i$.
    \item Let $\Gamma_1=(G,\tau_G)$ and $\Gamma_2=(H,\tau_H)$, and let $\tau_G(E(G))=\{t_1,\dots,t_n\}$ with $t_1<\dots<t_n$ and $\tau_H(E(H))=\{t'_1,\dots,t'_m\}$ with $t'_1<\dots<t'_m$. Then $\Gamma_1$ and $\Gamma_2$
    are \emph{timewise isomorphic}~\cite{WalegaR25} if $n=m$, $t_{i+1}-t_i = t'_{i+1}-t'_{i}$ and there is a mapping $\pi: V(G) \to V(H)$ such that $\pi$ is an isomorphism from $G_{t_i}$ to $H_{t'_i}$ for each $i$.
    \item The authors of~\cite{Heegetal25} introduce a third version of temporal isomorphism that interpolates between the former two and that is motivated by the preservation of temporal paths.\footnote{The details of the definition of temporal isomorphisms in~\cite{Heegetal25} are not important for the current paper.}
\end{itemize}
We claim that, in the context of the current work and temporal motif counting, the first version, i.e., pointwise isomorphism, is too weak as it allows to change the vertex identification of the isomorphism at any timestep, which will not be invariant for connected temporal patterns that enforce multiple time steps. We furthermore claim that the second version, i.e., timewise isomorphism, is too strong as it would distinguish e.g.\ the temporal graphs shown in Figure~\ref{fig:intro_isomorphisms}. Finally, the third version provided in~\cite{Heegetal25}, while enabling invariance under the number of homomorphisms from temporal walks, fails to capture global consistency of the order of the time steps. For the purpose of this work we therefore propose a natural notion of temporal isomorphisms between $\Gamma_1=(G,\tau_G)$ and $\Gamma_2=(H,\tau_H)$ as global mappings from $V(G)$ to $V(H)$ that preserve the order of the time steps, which we call \emph{order-isomorphisms}; an example is provided in Figure~\ref{fig:intro_isomorphisms}.

\begin{figure}[t]
    \centering
\begin{tikzpicture}[
  scale=0.9,
  vertex/.style={circle, draw, fill=white, inner sep=1.6pt},
  edge/.style={line width=0.8pt}
]

\def\dy{1.1}


\foreach \i in {0,...,4}
  \node[vertex] (a\i) at (0,-\i*\dy) {};
\foreach \i in {0,...,3}
  \draw[edge] (a\i)--(a\the\numexpr\i+1\relax);
\foreach \i/\lab in {0/1,1/2,2/3,3/4}
  \node[left=5pt] at ($(a\i)!0.5!(a\the\numexpr\i+1\relax)$) {$\lab$};

\foreach \i in {0,...,4}
  \node[vertex] (b\i) at (2.5,-\i*\dy) {};
\foreach \i in {0,...,3}
  \draw[edge] (b\i)--(b\the\numexpr\i+1\relax);
\foreach \i/\lab in {0/1,1/3,2/2,3/4}
  \node[left=5pt] at ($(b\i)!0.5!(b\the\numexpr\i+1\relax)$) {$\lab$};

\foreach \i in {0,...,4}
  \node[vertex] (c\i) at (5,-\i*\dy) {};
\foreach \i in {0,...,3}
  \draw[edge] (c\i)--(c\the\numexpr\i+1\relax);
\foreach \i/\lab in {0/2,1/3,2/1,3/4}
  \node[left=5pt] at ($(c\i)!0.5!(c\the\numexpr\i+1\relax)$) {$\lab$};

\foreach \i in {0,...,4}
  \node[vertex] (d\i) at (7.5,-\i*\dy) {};
\foreach \i in {0,...,3}
  \draw[edge] (d\i)--(d\the\numexpr\i+1\relax);
\foreach \i/\lab in {0/1,1/4,2/9,3/10}
  \node[left=5pt] at ($(d\i)!0.5!(d\the\numexpr\i+1\relax)$) {$\lab$};

\foreach \v/\lab in {a4/1,b4/2,c4/3,d4/4}
  \node[below=10pt] at (\v) {$\Gamma_{\lab}$};
\end{tikzpicture}
    \caption{Four temporal graphs; the numerical labels represent the function $\tau$, that is, the time steps of the edges. $\Gamma_1$, $\Gamma_2$, and $\Gamma_3$ are all pointwise isomorphic, but none of them is pointwise isomorphic to $\Gamma_4$. All four temporal graphs are pairwise distinct w.r.t.\ timewise isomorphism. Moreover, the pairs $\Gamma_1$ and $\Gamma_4$, as well as $\Gamma_2$ and $\Gamma_3$ are isomorphic w.r.t.\ the notion introduced in~\cite{Heegetal25}. For order-isomorphism (Definition~\ref{def:order_iso_intro}), we have $\Gamma_1 \orderiso \Gamma_4$, but no other pair is order-isomorphic, illustrating that we aim for a global notion of isomorphism that preserves the order of the time steps, but not their precise values.}
    \label{fig:intro_isomorphisms}
\end{figure}

\begin{definition}[Order-Isomorphisms between temporal graphs]\label{def:order_iso_intro}
Let $\Gamma_1=(G_1,\tau_1)$ and $\Gamma_2=(G_2,\tau_2)$ be temporal graphs. An \emph{order-isomorphism} from $\Gamma_1$ to $\Gamma_2$ is a pair $(\pi,t)$ of an isomorphism\footnote{As our temporal graphs are allowed to contain parallel edges, isomorphisms must not only specify a bijection from vertices to vertices $(\pi_\nu)$, but also a bijection from edges to edges $(\pi_\xi)$; see Section~\ref{sec:prelimns} for the formal definition of isomorphisms between static graphs with parallel edges.} $\pi=(\pi_\nu,\pi_\xi)$ from $G_1$ to $G_2$ and a bijection $t:\tau_1(E(G_1))\to \tau_2(E(G_2))$ such that
\begin{enumerate}
    \item for all $e \in E(G_1)$ we have $t(\tau_1(e)) = \tau_2(\pi_\xi(e))$, and
    \item $t$ is strongly monotonically increasing, i.e., $x<y$ implies $t(x)<t(y)$.
\end{enumerate}
We write $\Gamma_1 \orderiso \Gamma_2$ if $\Gamma_1$ and $\Gamma_2$ are order-isomorphic.
\end{definition}

To improve intuition, we provide an alternative way to define order-isomorphisms via normal forms of temporal graphs:
A temporal graph $\Gamma=(G,\tau)$ is in \emph{ordered normal form} (``ONF'') if $\tau(E(G))=[k]$ for some $k \in \mathbb{N}$, that is, the edges of $G$ contain precisely $k$ distinct time steps, and there is at least one edge per time step. This allows us to encode $\Gamma$ in a snap-shot representation $\Gamma=(G_1,\dots,G_k)$ where $V(G_i)=V(G)$ for each $i \in [k]$, and $G_i$ is, as before, the subgraph of $G$ containing only edges $e$ with $\tau(e)=i$; note that, specifically, none of the $G_i$ will have parallel edges as we enforced that parallel edges have different times in the definition of temporal graphs. We observe that each temporal graph $\Gamma=(G,\tau)$ with $\tau(E(G))=\{t_1,\dots,t_k\}$ and $t_1<\dots<t_k$ can be transformed into an ordered normal form by reassigning times $t_i \mapsto i$. In that way, it is easy to see that two temporal graphs $\Gamma=(G,\tau)$ and $\Gamma'=(G',\tau')$ with normal forms $(G_1,\dots,G_k)$ and $(G'_1,\dots,G'_k)$ are order-isomorphic if and only if there is a mapping $\pi: V(G)\to V(G')$ such that $\pi$ is an isomorphism from $G_i$ to $G'_i$ for each $i \in [k]$.

We are now able to present our first main result:
\begin{theorem}[A Temporal Lov{\'{a}}sz's Theorem]\label{thm:lovasz_improved_intro}
    Two temporal graphs $\Gamma_1$ and $\Gamma_2$ are order-isomorphic if and only if for all temporal patterns $P$ we have $\#\homs{P}{\Gamma_1}=\#\homs{P}{\Gamma_2}$.\qed
\end{theorem}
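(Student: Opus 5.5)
The forward direction is a direct transport argument. The backward direction follows Lov{\'{a}}sz's strategy: from the homomorphism counts we recover, by inclusion--exclusion, the number of ``temporal embeddings'' of $\Gamma_1$ (viewed as a pattern) into $\Gamma_2$, and show that such an embedding must be an order-isomorphism.

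\emph{Forward direction.} Let $(\pi,t)$ be an order-isomorphism from $\Gamma_1$ to $\Gamma_2$. I map a homomorphism $(\nu,\xi)$ from $P$ to $\Gamma_1$ to $(\pi_\nu\circ\nu,\pi_\xi\circ\xi)$. This preserves incidences because $\pi$ is an isomorphism. It also preserves the temporal constraints: $\tau_2(\pi_\xi(\xi(e)))=t(\tau_1(\xi(e)))$ and $t$ is strictly increasing, so $\le$ between times is preserved. The inverse $(\pi^{-1},t^{-1})$ gives the inverse map, so this is a bijection and the counts agree for every $P$.

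\emph{Backward direction, setup.} I assume both graphs are in ordered normal form, which is harmless since this changes neither hom counts nor order-isomorphism type. From $\Gamma_1=(G_1,\tau_1)$ with $\tau_1(E(G_1))=[k]$ I build the pattern $P_1=(G_1,[k],\tau_1)$, where $[k]$ is the chain. (Parallel edges of $G_1$ receive distinct $\vartheta$-values, as the definition of patterns requires.) A homomorphism $(\nu,\xi)$ from $P_1$ into $\Gamma$ induces a weakly monotone map $f:[k]\to\mathbb{N}$ via $f(\tau_1(e))=\tau(\xi(e))$. Call it an \emph{embedding} if $\nu$ is injective and $f$ is strictly increasing. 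For an embedding, $\xi$ is automatically injective:
\begin{itemize}
\item edges with distinct $\vartheta$-values get distinct times;
\item edges with equal $\vartheta$-values in different vertex pairs get distinct vertex pairs.
\end{itemize}
So an embedding is a temporal-graph embedding that is injective on time steps.

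\emph{Expressing embeddings by hom counts.} This is the key step and the main obstacle. Strictness of $f$ is handled first. Since $f$ is weakly increasing, it is strict iff $f(i)\ne f(i+1)$ for all $i<k$. For a set $S$ of adjacent pairs, let $P_S$ be the pattern obtained by identifying those chain elements, i.e.\ composing $\vartheta$ with the quotient map. Inclusion--exclusion then gives
\[
\#\{\text{homs with strict } f\}=\sum_{S}(-1)^{|S|}\,\#\homs{P_S}{\Gamma}.
\]
Vertex-injectivity is handled by the usual M\"obius inversion over the partition lattice of $V(G_1)$, applied to the quotient patterns $P_S/\rho$. Two subtleties arise with quotients:
\begin{itemize}
\item A quotient may create a loop. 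Such a pattern has no homomorphism, since temporal graphs have no self-loops, so it contributes $0$ and can be dropped.
\item A quotient may create parallel edges with the same $\vartheta$-value, which is forbidden in patterns. In any temporal graph such edges must map to the same edge, because parallel edges there have distinct times. Hence I may merge them into a single edge without changing the hom count.
\end{itemize}
I would check carefully that these normalisations commute with the inversion formula. The outcome is that the number of embeddings of $P_1$ into $\Gamma$ is a fixed integer combination of hom counts from patterns. Hence it is the same for $\Gamma_1$ and $\Gamma_2$.

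\emph{Conclusion.} The identity gives an embedding of $P_1$ into $\Gamma_1$, so there is also one into $\Gamma_2$, and symmetrically one from $\Gamma_2$ into $\Gamma_1$. Injective maps in both directions between finite sets force $|V(G_1)|=|V(G_2)|$, $|E(G_1)|=|E(G_2)|$, and equal numbers $k$ of time steps. Hence the embedding $\Gamma_1\to\Gamma_2$ is bijective on vertices and edges, so it is a graph isomorphism. Its time map $f:[k]\to[k]$ is strictly increasing, hence a bijection, and together with the isomorphism it forms an order-isomorphism.
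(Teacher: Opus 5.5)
Your proposal is correct and follows essentially the same route as the paper: transport along the order-isomorphism for the easy direction; for the converse, reduce to ordered normal forms, use the pattern $P[\Gamma^1]$, and express strict vertex-injective homomorphism counts through inclusion--exclusion over identified consecutive time steps and M\"obius inversion over vertex partitions (Lemmas~\ref{lem:strict_to_nostrict} and~\ref{lem:temp_hom_basis}), concluding from embeddings in both directions (Lemmas~\ref{lem:total_injectivity} and~\ref{lem:inj_to_iso}). Your closing cardinality argument is a slightly cleaner form of Lemma~\ref{lem:inj_to_iso}; note also that equal-$\vartheta$ parallel edges already appear in $P_S$ itself, since identifying consecutive times can give parallel edges of $G_1$ the same $\vartheta$-value, and your merging step handles this case in the same way.
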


\paragraph*{An FPT Algorithm for Temporal Patterns of Bounded Toadwidth}
In the second part of this work, we develop an FPT algorithm for counting homomorphisms from temporal patterns of bounded toadwidth. We start by introducing the latter, which requires us to consider graphs containing both directed and undirected edges.

\begin{definition}[Mixed Graph]
    A \emph{mixed graph} is a graph that may contain both undirected and directed edges. For convenience, we call the undirected edges simply ''edges'' and the directed edges ''arcs.'' For a mixed graph $M$, we use $E(M)$ to denote the set of edges of $M$, and $A(M)$ to denote the set of arcs of $M$. For vertices $u, v \in V(M)$, we denote the undirected edge between $u$ and $v$ by $uv$ or $\set{u, v}$, and we denote the arc directed from $u$ to $v$ by $(u, v)$. 
\end{definition}

A \emph{labelled mixed graph} is a mixed graph $M$ equipped with a labeling function $\fn{\mu}{V(M)}{\mathbb{N}}$. For $\kw \in \mathbb{N}$, we say that a labelled mixed graph $(M, \mu)$ is $\kw$-labelled if $\mu(v) \leq \kw$ for every $v \in V(M)$. 
As a shorthand, we write $\mu(M)$ to denote the set $\mu(V(M)) = \set{\mu(v) ~|~ v \in V(M)}$. 

Let $(M_1, \mu_1)$ and $(M_2, \mu_2)$ be two labelled mixed graphs such that $V(M_1) \cap V(M_2) = \emptyset$. By the disjoint union of $(M_1, \mu_1)$ and $(M_2, \mu_2)$, we mean the labelled mixed graph $(M, \mu)$ where $V(M) = V(M_1) \cup V(M_2)$, $E(M) = E(M_1) \cup E(M_2)$ and $A(M) = A(M_1) \cup A(M_2)$, and the labeling function $\fn{\mu}{V(M)}{\mathbb{N}}$ is defined as follows: for $v \in V(M)$, we have $\mu(v) = \mu_1(v)$ if $v \in V(M_1)$, and $\mu(v) = \mu_2(v)$ if $v \in V(M_2)$. 

Next, we introduce the cliquewidth of mixed graphs; the definition follows the standard definitions w.r.t.\ both undirected and directed graphs, see e.g.\ \cite{CourcelleO00} (see also the surveys of Courcelle and Engelfriet~\cite{CourcelleE12}, and of Dabrowski, Johnson and Paulusma~\cite{DabrowskiJP19}). Since mixed graphs contain both directed and undirected edges, we introduce separate join operations for edges and arcs.
\begin{definition}[Cliquewidth of Mixed Graphs]
Consider the following operations on labelled~graphs. 
\begin{enumerate}
        \item Introduce operation: Create a vertex $v$ with label $i \in \mathbb{N}$. We denote this operation by $\intro_i(v)$. 
        
        \item Union operation: Disjoint union of two labelled mixed graphs $(M_1, \mu_1)$ and $(M_2, \mu_2)$ such that  $\mu_1(M_1) \cap \mu_2(M_2) = \emptyset$. We denote this operation by $\union((M_1, \mu_1), (M_2, \mu_2))$. 
        
        \item Relabeling operation: For a labelled mixed graph $(M, \mu)$, and for distinct $i, j \in \mathbb{N}$, change the label of all vertices with label $i$ to $j$. We denote this operation by $\relabel_{i, j}(M, \mu)$. 

        \item Edge-join operation: For a labelled mixed graph $(M, \mu)$, and for distinct $i, j \in \mathbb{N}$, add an edge between every vertex of label $i$ and every vertex of label $j$. 
        We denote this operation by $\ejoin_{i, j}(M, \mu)$. 

        \item Arc-join operation: For a labelled mixed graph $(M, \mu)$, and for distinct $i, j \in \mathbb{N}$, add an arc directed from every vertex of label $i$ to every vertex of label $j$. We denote this operation by $\ajoin_{i, j}(M, \mu)$.  
\end{enumerate}
A sequence of operations, in which each operation is one of the five operations above, is what we call a \emph{clique-expression}. For a clique-expression $\sigma$, we will use $(M^{\sigma}, \mu^{\sigma})$ to denote the labelled mixed graph constructed by $\sigma$. Let $\kw$ be a positive integer. We say that a clique-expression $\sigma$ is a \emph{$\kw$-expression} if at most $\kw$ labels are used in the construction of $(M^{\sigma}, \mu^{\sigma})$. 
Consider a mixed graph $M$. We say that $M$ admits a $\kw$-expression if there is a $\kw$-expression $\sigma$ such that $M = M^{\sigma}$; in this case, we also say that $\sigma$ is a $\kw$-expression for $M$. The \emph{cliquewidth} of $M$ is the least integer $\kw$ for which $M$ admits a $\kw$-expression. 
\end{definition}

The toadwidth of a temporal pattern will be based on the cliquewidth of a mixed graph associated with the pattern that takes into account both the underlying graph as well as the temporal constraints. \pagebreak

\begin{definition}[Order-Augmented Duals]
    Let $P=(H,R,\vartheta)$ be a temporal pattern. The \emph{order-augmented dual} (``oad'') of $P$ is a mixed graph $M$ defined as follows:
    \begin{itemize}
        \item $V(M)=E(H)$.
        \item $E(M)=\{\{e,e'\} \in V(M)^{(2)} \mid e \neq e' ~\wedge~e \cap e' \neq \emptyset\}$, that is, we add an edge between $e$ and $e'$ if $e \cap e' \neq \emptyset$.
        \item $A(M)=\{(e,e') \in V(M)^{2} \mid e \neq e' ~\wedge~\vartheta(e)\leq_R\vartheta(e')\}$, that is, we add an arc from $e$ to $e'$ if $\vartheta(e)\leq_R\vartheta(e')$. 
    \end{itemize}
\end{definition}
Consult Figure~\ref{fig:mixed_graph_intro} for an example of the construction of an order-augmented dual.

We are finally able to formally define the toadwidth.
\begin{definition}[Toadwidth]
    Let $P=(H,R,\vartheta)$ be a temporal pattern. The \textbf{t}emporally \textbf{o}rder-\textbf{a}ugmented \textbf{d}ual width (the ``\emph{toadwidth}'') of $P$ is defined as the cliquewidth of the order-augmented dual of $P$. We say that a class $\mathcal{C}$ of temporal patterns has \emph{bounded toadwidth} if there is a constant $B$ such that the toadwidth of each member of $\mathcal{C}$ is at most $B$.
\end{definition}

\begin{figure}
    \centering
    \usetikzlibrary{arrows.meta,calc}

\begin{tikzpicture}[
  scale=0.75,
  vertex/.style={circle, draw, fill=white, inner sep=1.6pt},
  edge/.style={line width=0.8pt},
  darrow/.style={dashed, -{Stealth[length=2.5mm]}, line width=0.8pt},
  darrowdashdotted/.style={dashdotted, -{Stealth[length=2.5mm]}, line width=0.8pt},
  lab/.style={font=\normalsize}
]

\node[vertex] (1) at (5,4.5) {};
\node[vertex] (2) at (3,4.5) {};
\node[vertex] (3) at (1,3.5) {};
\node[vertex] (4) at (0,2.5) {};
\node[vertex] (5) at (2,2.5) {};
\node[vertex] (6) at (0,1) {};
\node[vertex] (7) at (2,1) {};
\node[vertex] (8) at (1,0) {};
\node[vertex] (9) at (3,-1.0) {};
\node[vertex] (10) at (5,-1.0) {};

\draw[edge] (1) -- (2) -- (3) -- (4);
\draw[edge] (7) -- (8) -- (9) -- (10);
\draw[edge] (3) -- (5);
\draw[edge] (6) -- (8);

\draw[darrow] (0.5,2.9) -- (0.5,0.6);
\draw[darrow] (1.5,2.9) -- (1.5,0.6);
\draw[darrow] (4,4.4) -- (4,-0.9);

\node[lab,above=1pt]  at (4,4.5) {$d$};
\node[lab,below=1pt]  at (4,-1) {$h$};
\node[lab,above=1pt]  at (2,4) {$b$};
\node[lab,below=1pt]  at (2,-0.5) {$g$};
\node[lab,above left]  at (0.5,3) {$a$};
\node[lab,above right]  at (1.5,3) {$c$};
\node[lab,below left]  at (0.5,0.5) {$e$};
\node[lab,below right=-2pt]  at (1.5,0.5) {$f$};

\begin{scope}[xshift=9cm]

\node[vertex] (a) at (0,3) {};
\node[vertex] (b) at (1.0,4.5) {};
\node[vertex] (c) at (2,3.0) {};
\node[vertex] (d) at (4,4.5) {};
\node[vertex] (e) at (0.0,0.5) {};
\node[vertex] (f) at (2,0.5) {};
\node[vertex] (g) at (1.0,-1.0) {};
\node[vertex] (h) at (4.0,-1.0) {};

\draw[edge] (a)--(b)--(c)--(a);
\draw[edge] (b)--(d);
\draw[edge] (e)--(f)--(g)--(e);
\draw[edge] (g)--(h);

\draw[darrowdashdotted] (a) -- (e);
\draw[darrowdashdotted] (c) -- (f);
\draw[darrowdashdotted] (d) -- (h);

\node[lab,left=5pt]  at (a) {$a$};
\node[lab,above=4pt] at (b) {$b$};
\node[lab,right=5pt] at (c) {$c$};
\node[lab,above=4pt] at (d) {$d$};
\node[lab,left=5pt]  at (e) {$e$};
\node[lab,right=5pt] at (f) {$f$};
\node[lab,below=4pt] at (g) {$g$};
\node[lab,below=4pt] at (h) {$h$};

\end{scope}

\end{tikzpicture}
    \caption{\emph{(Left:)} A temporal pattern $P=(H,R,\vartheta)$. The underlying graph $H$ is depicted with solid edges, and the poset $R$ has groundset $\{a,c,d,e,f,h,\bot\}$ such that $a\leq_R e$, $c \leq_R f$, and $d \leq_R h$, depicted by dashed arrows. Moreover, $\vartheta$ is the identity on $\{a,c,d,e,f,h\}$ and maps the remaining edges to $\bot$. \emph{(Right:)} The order-augmented dual $M$ of $P$. Edges are depicted as solid lines and arcs are depicted as dash-dotted arrows. The toadwidth of $P$ is equal to the cliquewidth of $M$.}
    \label{fig:mixed_graph_intro}
\end{figure}

The central algorithmic result of this paper states that counting homomorphisms from temporal patterns of bounded toadwidth is fixed-parameter tractable. The restriction of the patterns is formalised by specifying the class of allowed patterns in the problem definition as follows; note that this is standard for counting problems in parameterised complexity theory~\cite{DalmauJ04,CurticapeanDM17,Roth26}.

\begin{definition}[$\#\textsc{TemporalHom}(\mathcal{C})$]
Let $\mathcal{C}$ be a class of temporal patterns. We define the following parameterised counting problem.
\begin{mdframed}
\#\textsc{TemporalHom}$(\mathcal{C})$ \\
\textbf{Input:} A temporal pattern $P\in \mathcal{C}$, and a temporal graph $(G,\tau)$\\
\textbf{Parameter:} $|P|$ \\
\textbf{Output:} $\#\homs{P}{(G,\tau)}$
\end{mdframed} 
\end{definition}

We are now able to state our main algorithmic result. Note that we restrict ourselves to temporal patterns without parallel edges for the sake of simplicity in our presentation, and we leave the case of parallel edges for future work.
\begin{restatable}{theorem}{MAINALGO}\label{thm:intro_main_algo}
    Let $\mathcal{C}$ be a class of temporal patterns without parallel edges. If $\mathcal{C}$ has bounded toadwidth, then $\#\textsc{TemporalHom}(\mathcal{C})$ is fixed-parameter tractable.\qed
\end{restatable}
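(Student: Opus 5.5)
A homomorphism from $P=(H,R,\vartheta)$ to $\Gamma=(G,\tau)$ is determined by $\xi:E(H)\to E(G)$, because $H$ has no isolated-edge ambiguity: $\nu$ is recovered from $\xi$ on non-isolated vertices. Isolated vertices of $H$ contribute a factor of $|V(G)|$ each, so we may assume there are none. I would therefore first reformulate $\#\homs{P}{\Gamma}$ as a count of edge maps $\xi$ with two properties. First, the \emph{local consistency} condition: whenever $e,e'$ share a vertex $u$ of $H$, the images $\xi(e),\xi(e')$ must map $u$ consistently. Second, the \emph{temporal} condition $\tau(\xi(e))\le\tau(\xi(e'))$ for every arc $(e,e')$ of the oad $M$. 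The subtlety is that sharing a vertex is a property of the pair together with which endpoint is shared. To avoid orientation issues, I would let the target objects be \emph{oriented} edges of $G$ and fix, for each $e=\{u,v\}\in E(H)$, an arbitrary orientation. Then $\xi$ becomes an assignment of oriented edges, and consistency for an edge $\{e,e'\}$ of $M$ is a binary relation depending on which endpoints coincide. This turns the problem into counting solutions to a binary CSP whose constraint graph is exactly $M$. Undirected edges of $M$ carry ``vertex-agreement'' constraints, and arcs carry ``$\le$ in time'' constraints. We assume a $\kw$-expression of $M$ with $\kw$ bounded is given, or computed FPT-approximately via Oum--Seymour-type algorithms on the underlying structure; this is legitimate since it depends only on $P$.

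The naive cliquewidth DP over this CSP fails, because an $\ejoin_{i,j}$ imposes constraints between all label-$i$ and label-$j$ vertices simultaneously. So the state must record, per label class, enough aggregated information about the images of its members. I would exploit that joins are uniform. After $\ejoin_{i,j}$, every edge with label $i$ shares a vertex with every edge with label $j$. Hence, for each label class, I would guess and record a ``type'' with three components. The first is the set of at most two vertices of $G$ that all future-joined partners will need to hit, or more precisely, for the edge-join structure, the vertex of $G$ that serves as the common endpoint. The second is the minimum and maximum time among the class's images, relevant for future arc-joins. The third records which endpoint role matters. For arcs, $\ajoin_{i,j}$ requires $\max\tau$ over class $i$ to be $\le\min\tau$ over class $j$. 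So storing, per label, the pair $(\min\tau,\max\tau)$ (values from the at most $|E(G)|$ times) suffices, giving $|\Gamma|^{\cO(\kw)}$ states.

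For the vertex part, the key structural claim is the following: if a label class $L$ will later be edge-joined to a nonempty class, then the dual structure forces strong restrictions. Each future partner $f$ must intersect every $e\in L$ in $H$. In $H$, a set of edges all meeting a common edge $f$ lies within the star of its two endpoints. If two different partners exist, this forces $L$ to be contained in a small structure, such as at most a triangle or a star through one vertex. Thus, at the time the class is formed, I would guess for each label its \emph{future join profile}. The profile says whether the class will be edge-joined later, and if so, the pair $\{x,y\}\subseteq V(G)$ onto which the intersection pattern forces the shared endpoints. The DP then keeps, per label, a constant number of vertices of $G$ plus time bounds, again $|\Gamma|^{\cO(\kw)}$ states. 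The operations are handled as follows. Introduce enumerates oriented images. Union multiplies counts over compatible profiles. Relabel merges types, taking the min/max of times and checking that the vertex profiles are equal. Ejoin verifies that the stored vertices are consistent. Ajoin verifies that $\max\le\min$. The total running time is $|P|^{\cO(1)}\cdot|\Gamma|^{\cO(\kw)}$.

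The main obstacle I expect is the edge-join step. The subtle point is proving that the guessed per-label vertex profile is both sound and complete, i.e.\ that the static consistency constraint across an $\ejoin$ really depends only on a bounded amount of per-class information. This requires a careful case analysis on how edges of $H$ pairwise intersecting with a fixed set can be arranged (stars versus triangles). It also requires ensuring that each homomorphism is counted exactly once, rather than once per admissible profile. I would first try to make the profile canonical: define it as the unique minimal forced vertex set, determined by $\xi$ and the remaining expression. That way, summing over profiles yields a partition of the homomorphisms.
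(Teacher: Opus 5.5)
There is a genuine gap. Your architecture is essentially the paper's: a DP over a clique-expression of the oad, with a per-label $(\min,\max)$ time pair for arc-joins and guessed images of a bounded set of vertices for the line-graph structure. But the step you defer as ``the main obstacle'' is the actual content of the proof, and the concrete choices you sketch for it would fail.

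\emph{The structural claim is wrong.} A class $L$ that is later edge-joined to some $f=\{a,b\}$ need not be a triangle or a star. All you get is that $\{a,b\}\cap V(H[L])$ is an \emph{independent} vertex cover of $H[L]$; it is independent because $f\notin L$ and $H$ is simple. Classes such as $K_{2,t}$, or two disjoint stars, qualify with the single partner $ab$. Moreover, such a cover is not unique in general ($P_4$, $C_4$, $2K_2$, or an isolated edge plus a star), so a single guessed pair $\{x,y\}$ is not enough. The paper instead uses the centre-set $W(H^{z,i})$, the union of \emph{all} independent vertex covers of size at most $2$, which has at most $4$ vertices (Corollary~\ref{cor:centre-set}).

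\emph{The stored data is insufficient.} Recording a set of vertices of $G$ loses which vertex of $H$ goes where. For two disjoint stars centred at $a$ and $b$, the two orientations of $\xi(ab)$ cannot then be told apart.

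\emph{The relabel rule is wrong.} You propose to ``check that the vertex profiles are equal,'' but merging a star centred at $a$ with a star centred at $b\neq a$ is perfectly legal. What is needed is that the merged class's centre-set lies inside the union of the old ones, $W(z)\subseteq W(z')$ (Claim~\ref{claim:W(z)}). The recurrence must then sum over all extensions of the guessed map to the vertices that are forgotten.

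\emph{The missing idea.} Guess a map on a set of \emph{pattern} vertices that depends only on $P$ and the expression: namely $\alpha=\nu|_{W(z)}$, where $W(z)=\bigcup_{(z,i)\text{ viable}}W(H^{z,i})$. Do not guess a ``minimal forced vertex set determined by $\xi$.'' With this choice every homomorphism lies in exactly one state (Lemma~\ref{lem:unique}), so your canonicity worry disappears.

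\emph{The two lemmas you still need.} First, a class whose $H^{z,i}$ has no independent vertex cover of size at most $2$ is never edge-joined again (Lemma~\ref{lem:non-viable}). Second, any vertex of $H$ shared by a member of a class and a future join partner lies in that class's centre-set (Lemma~\ref{lem:common-endpoint}). Together these show that every $H$-vertex touched by both sides of a union lies in $W(z)\cap W(z_1)\cap W(z_2)$ (Claim~\ref{claim:union}). Hence vertex agreement can be enforced at the union, the count factorises as a product, and $\ejoin$ becomes a no-op. Alternatively, in your check-at-join formulation, the same lemmas guarantee that the stored centre-set images are exactly what the join check needs. Without them, the proposal has not shown that a bounded per-label state is sound and complete.
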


\noindent \textbf{Toadwidth vs.\ Treewidth}\\
Readers familiar with parameterised counting might be aware that, in the static (non-temporal) setting, the complexity of counting homomorphisms is determined by the treewidth of the pattern~\cite{DalmauJ04,Marx10}. We argue that toadwidth is not just an arbitrary cliquewidth based width measure that happens to work for the upper bound, but that toadwidth is a very natural measure arising in the context of temporal constraints: First, it is well-known that a class of graphs has bounded treewidth if and only if the class of the associated line graphs has bounded cliquewidth~\cite{GurskiW07}. As a consequence, using the classification of Dalmau and Jonsson~\cite{DalmauJ04}, counting homomorphisms to non-temporal graphs is fixed-parameter tractable if and only if the line graphs of the allowed patterns have bounded cliquewidth; the only-if direction relies on standard lower bound assumptions in parameterised complexity theory. Now, in the temporal setting, the order augmented dual is identical to the line graph of the pattern with added arcs for the temporal constraints --- as the temporal constraints involve edges, and not vertices, it is unsurprising that we need to operate on the line graph. For this reason, we view our FPT algorithm in Theorem~\ref{thm:intro_main_algo}, which operates via dynamic programming along toadwidth, as a natural temporal counterpart of the well-established dynamic programming algorithm for counting static homomorphisms along treewidth. In particular, we believe that toadwidth will contribute to the active search for finding an appropriate temporal version of treewidth~\cite{TempTreewidth}.

\paragraph*{A Complexity Dichotomy for Totally Ordered Temporal Patterns}
In the final part of the paper, we ask whether the fixed-parameter tractability result in Theorem~\ref{thm:intro_main_algo} is optimal. While we do not answer this question in that generality, we analyse a modification of $\#\textsc{TemporalHom}(\mathcal{C})$ for which we only restrict the underlying graphs of the temporal patterns and for which we consider only total orders. For this special case, formally introduced below, we establish an exhaustive and explicit complexity dichotomy, the tractability criterion of which implies a bound on the toadwidth.

In what follows, we say that a temporal pattern $P=(H,R,\vartheta)$ is \emph{totally ordered} if $R$ is a total order and $\vartheta$ is a bijection. To avoid notational clutter, we allow ourselves to write $P=(H,\preccurlyeq)$, where $\preccurlyeq$ is a total order on $E(H)$; moreover, we write $e \prec e'$ if $e \preccurlyeq e'$ and $e \neq e'$.  
\begin{definition}
Let $\mathcal{H}$ be a class of graphs. We define the following parameterised counting problem.
\begin{mdframed}
$\#\textsc{TemporalHom}_\mathrm{TO}(\mathcal{H})$ \\
\textbf{Input:} A totally ordered temporal pattern $(H,\preccurlyeq)$ with $H \in \mathcal{H}$, and a temporal graph $(G,\tau)$\\
\textbf{Parameter:} $|(H,\preccurlyeq)|$ \\
\textbf{Output:} $\#\homs{(H,\preccurlyeq)}{(G,\tau)}$
\end{mdframed} 
\end{definition}

We are able to classify $\#\textsc{TemporalHom}_\mathrm{TO}(\mathcal{H})$ completely along the \emph{semi-induced matching number} of the line graphs of $\mathcal{H}$:
\begin{definition}[Semi-Induced Matchings]
A \emph{semi-induced matching} of size $k$ in a graph $L$ is a set of $k$ pairwise disjoint edges $\{u_1,v_1\},\dots,\{u_k,v_k\}$ such that there are no edges in $L$ between $u_i$ and $v_j$ for $i \neq j$.

The \emph{semi-induced matching number} of $L$ is the maximum size of a semi-induced matching in $L$, and we say that a class of graphs has \emph{bounded semi-induced matching number} if there is a constant $B$ such that each member of the class has semi-induced matching number at most $B$.
\end{definition}

The third main result of this work reads as follows.
\begin{restatable}{theorem}{CLASSIFICATION}\label{thm:intro_main_classification}
    Let $\mathcal{H}$ be a recursively enumerable\footnote{We note that the restriction to recursively enumerable classes is standard in parameterised complexity theory as, without this assumption, the parameter dependency in the running time of the reductions could not be bounded by a computable function.} class of graphs. Assuming that $\mathrm{FPT}\neq \mathrm{W}[1]$, the following are equivalent:
    \begin{itemize}
        \item $\#\textsc{TemporalHom}_\mathrm{TO}(\mathcal{H})$ is fixed-parameter tractable.
        \item the class of line graphs of $\mathcal{H}$ has bounded semi-induced matching number.\qed
    \end{itemize}
\end{restatable}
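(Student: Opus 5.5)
The plan has two directions: an upper bound reducing to Theorem~\ref{thm:intro_main_algo}, and a W[1]-hardness lower bound via a grid embedding. For the upper bound, suppose every line graph $L(H)$ with $H\in\mathcal{H}$ has semi-induced matching number at most $B$. Let $(H,\ord)$ be totally ordered. Its order-augmented dual $M$ is $L(H)$ on the vertex set $E(H)$ together with the arcs of a transitive tournament given by $\ord$.

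I would build a clique-expression for $M$ by introducing the vertices $e_1\prec e_2\prec\dots\prec e_m$ one at a time in the order $\ord$. The arc structure is then handled by a single ``past'' label: after introducing $e_i$ with a fresh label, apply $\ajoin$ from the past label to $e_i$'s label. The remaining task is to realise the edges of $L(H)$ while processing vertices in this fixed linear order, which is a linear-cliquewidth-type question for the ordered graph $(L(H),\ord)$.

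The key combinatorial step is the following. Say two already-introduced vertices are equivalent if they have the same neighbourhood among the future vertices. Then the number of equivalence classes at each cut should be bounded by a function $f(B)$, for every ordering. If it is, each class gets one label (plus the past bookkeeping). We $\ejoin$ the new vertex to the appropriate classes, and classes only merge as we proceed, which we implement by relabelling.

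To prove the bound, suppose there are many past vertices with pairwise distinct future neighbourhoods. A Ramsey/Sauer–Shelah-type argument then extracts either a large semi-induced matching across the cut, which is forbidden, or a large ``co-matching''/half-graph pattern. The latter must itself yield a large semi-induced matching: in a half-graph, $\{u_i,v_i\}$ with $u_i\sim v_j$ iff $i\le j$ already is one after reindexing. The co-matching must be excluded using the structure of line graphs, where each vertex's neighbourhood is the union of two cliques, so complements of matchings cannot be large. I expect this step to be the main obstacle. Once the bound holds, the toadwidth is at most $f(B)+O(1)$ and Theorem~\ref{thm:intro_main_algo} gives FPT.

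For the lower bound, assume the semi-induced matching number of $L(\mathcal{H})$ is unbounded. Given $k$, enumerate $\mathcal{H}$ to find $H$ whose line graph has a semi-induced matching $\{a_i,b_i\}_{i\le K}$ with $K=\binom{k}{2}$ or $k^2$. Each $a_i,b_i$ is a pair of edges of $H$ sharing a vertex, i.e.\ a path $P_2$ with edges $e^\ell_i,e^r_i$, matching the paper's figure macros. I would choose the total order so that the matching pairs are placed as grid cells. The semi-induced condition says that $e^\ell_i$ shares no vertex with $e^r_j$ for $i\neq j$, so only the intended pairs are coupled by adjacency.

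From a $k$-clique instance on graph $X$ I would then construct a temporal graph. Time windows are attached to grid rows and columns, each encoding a choice of a vertex of $X$, and the order constraints propagate that choice along rows and columns. The cell gadgets consist of $P_2$ or diamond-shaped structures with $e^\ell(u),e^r(u')$ present only if $uu'\in E(X)$ (or $u=u'$ on the diagonal). All remaining edges of $H$ map to a universal gadget with free times, which contributes a computable factor. Because the reduction must count homomorphisms, I would make the non-matching edges unconstrained and cancel their contribution by inclusion-exclusion or interpolation, using times that isolate the cell edges. The count is then a fixed multiple of the number of $k$-cliques, giving $\#\W[1]$-hardness.
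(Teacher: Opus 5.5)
Your architecture matches the paper's in both directions: an order-respecting clique-expression with one label per class of equal future neighbourhood, and a grid embedding of \textsc{Clique} along a $P_2$-packing. However, the step you call the crux of the upper bound is not established, and the lower-bound gadget omits the two mechanisms that make it work.

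\textbf{Upper bound.} A minor point first: arcs must be added by an $\ajoin$ from \emph{every} active class label to the fresh label. A single shared ``past'' label would erase the distinctions you need for the $\ejoin$s.

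The real problem is the bound on the number of classes. Under the paper's (symmetric) definition, a half-graph $a_i\sim b_j \iff i\le j$ contains no semi-induced matching of size $2$, under any reindexing. Pairs $(a_i,b_j)$ and $(a_{i'},b_{j'})$ would need $i\le j<i'\le j'<i$. So the half-graph outcome of your Ramsey-type lemma is not excluded by the hypothesis. Like the co-matching, it has to be excluded using line-graph structure. ``Neighbourhoods are unions of two cliques'' does not do this, since edges inside each side of the cut are allowed.

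Both patterns can in fact be excluded. In either pattern, all but at most five future edges $b_j=p_jq_j$ meet each of $a_1,\dots,a_5$ in exactly one endpoint. Hence each such $b_j$ contains one of the at most three vertices lying in three of these edges. Four of them, $b_j=dq_j$, then share some $d$, and this forces some $a_i$ that avoids $d$ to contain three distinct $q_j$. But this argument is absent from your proposal, and it is the whole content of this direction.

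The paper argues directly instead, in two steps.
\begin{enumerate}
    \item Each vertex of $H$ touches at most $b(b+2)+2$ classes. Otherwise its edges into distinct live classes, extended by future edges, greedily yield an orsim of size $b+1$.
    \item Take representatives $x_jv_j$ of many live classes, chosen greedily so that no chosen representative's endpoints touch another chosen class. Extend them by future edges $v_jy_j$ and prune collisions $y_j\in\{x_p,v_p\}$; this gives an orsim of size $b+1$.
\end{enumerate}

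\textbf{Lower bound.} First, strict order constraints through time windows only force the encoded vertex to be non-decreasing along a row or column. Equality requires the chain to be closed by two edges that are forced to encode the same vertex. In the paper this is the anchor path $p_{h,0}$:
\begin{itemize}
    \item its two edges meet the same clone $p^2_{h,0}(u)$;
    \item they receive times $t^\ell(u,h,0)$ and $t^\ell(u,h,k)+1$;
    \item they are placed first and last in $\mathsf{row}_h$.
\end{itemize}
With $t^\ell(u,h,v)=(u+1)n^{2(h+1)}+(v+1)$, every $u^\ell_{h,v}$ is then sandwiched to equal $u^\ell_{h,0}$, and similarly for columns. This is why the paper uses $(k+1)^2-1$ paths; with $K=\binom{k}{2}$ you have nothing to close the chains with.

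Second, the edges of $H$ outside the packing can be neither ``unconstrained'' nor sent to a separate universal gadget. In a totally ordered pattern each of them is comparable with every other edge. They also share endpoints, possibly the middle vertices $p^2_{h,v}$, with the grid paths. So their images depend on how the grid is mapped, and there is no uniform factor. Moreover, uncoloured homomorphisms may fold the grid.

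The paper handles this as follows.
\begin{itemize}
    \item It builds $G$ from $H$ itself by cloning only the middle vertices; clones inherit all neighbours, so extensions always exist.
    \item It places the remaining edges last, with the $\kappa$ largest times.
    \item It proves only the decision statement: $F$ has a $k$-clique iff some edge-colourful homomorphism exists. Colour preservation is forced by the times, working backwards from $e^\ell_{0,k}$.
    \item It removes the colouring by inclusion--exclusion over deleted colour classes.
\end{itemize}
No exact multiple of the clique count is needed.
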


The assumption $\mathrm{FPT}\neq \mathrm{W}[1]$ is the standard hardness assumption in parameterised complexity theory (see Section~\ref{sec:prelimns} for a concise introduction). While it is unknown whether $\mathrm{FPT}= \mathrm{W}[1]$ implies $\mathrm{P}=\mathrm{NP}$, it is known that $\mathrm{FPT}= \mathrm{W}[1]$ refutes the Exponential Time Hypothesis~\cite{ETH,Chenetal05,Chenetal06}, leading to subexponential time algorithms for $3$-$\textsc{SAT}$.

Finally, we note that the tractability part of Theorem~\ref{thm:intro_main_classification} relies on our algorithm from Theorem~\ref{thm:intro_main_algo}, and connects toadwidth with semi-induced matching number, via the following bound, which might be of independent interest.

\begin{restatable}{lemma}{SIMBOUNDSTOADWIDTH}\label{lem:sim-cw-bound}
    Let $\simsize$ be a positive integer. 
    Let $H$ be a simple, undirected  graph. If every semi-induced matching in $L(H)$ has size at most $\simsize$, then for every total order $\ord$ on $E(H)$, the toadwidth of $(H,\ord)$ is at most $\simcw$.
\end{restatable}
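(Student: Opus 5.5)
The plan is to build a \emph{linear} clique-expression for the order-augmented dual $M$ of $(H,\ord)$. We introduce the edges of $H$, which are the vertices of $M$, one by one in the order $\ord$. The point is that for a total order every pair $e\prec e'$ gives an arc $(e,e')$, so the arcs form a transitive tournament following the insertion order. When a new vertex $f$ is introduced with a fresh label $\lambda$, we perform $\ajoin_{i,\lambda}$ for every label $i$ currently in use; this creates exactly the arcs from all earlier edges to $f$. We then perform $\ejoin_{i,\lambda}$ for every label class $i$ consisting of earlier edges that share an endpoint with $f$. Finally we relabel $f$ into its class. This is correct only if, at every moment, the labels partition the already introduced edges (the \emph{prefix}) so that all vertices in a label class have the same $L(H)$-neighbourhood among the not yet introduced edges (the \emph{suffix}).

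\textbf{Step 1: the invariant.} Call a vertex $x\in V(H)$ \emph{active} at a cut if it is incident to some prefix edge and to some suffix edge. The suffix-neighbourhood of a prefix edge $e$ in $L(H)$ is determined by the set $e\cap A$, where $A$ is the set of active vertices. So we label prefix edges by their type $e\cap A$. This type is empty (a ``dead'' label, which stays dead forever), a single active vertex, or a pair of active vertices. Two auxiliary labels are used as well: the fresh label $\lambda$, and one temporary label needed for merging classes when types change after an insertion, since types only shrink and merging is done by $\relabel$. Hence the number of labels is at most $\binom{a}{2}+a+O(1)$, where $a$ bounds $|A|$ over all cuts. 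If we get $a\le O(\simsize^2)$ with suitable constants, this is a polynomial of degree $4$ in $\simsize$, matching the claimed bound $\simcw$. Some care is needed to count pair-types more tightly: pairs $\{x,y\}$ that actually occur as prefix edges with both endpoints active.

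\textbf{Step 2: bounding active vertices via semi-induced matchings (main obstacle).} Suppose there are many active vertices $x_1,\dots,x_m$. For each $x_i$ pick a prefix edge $e_i\ni x_i$ and a suffix edge $f_i\ni x_i$. Then $e_if_i$ is an edge of $L(H)$, and the pairs $(e_i,f_i)$ are disjoint after discarding duplicates. A cross adjacency $e_i f_j$ with $i\neq j$ means the two edges share an endpoint.
\begin{itemize}
\item Sharing the vertex $x_j$ via the other endpoint of $e_i$ happens for at most one $j$ per $i$.
\item The troublesome case is that $e_i$ and $f_j$ share a third vertex $y$. This is a star-like configuration centred at $y$.
\end{itemize}
I would handle this by a greedy/Ramsey cleaning. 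First, bound the number of ``hub'' vertices $y$ that are the common endpoint of many chosen $e$'s or $f$'s. Then re-choose $e_i$, $f_i$ away from hubs where possible. A vertex $x_i$ all of whose prefix and suffix edges go to hubs contributes only boundedly many types per hub pair. After cleaning, the conflict graph on the indices has bounded degree, so we can extract an independent set of size $\Omega(m/c)$. This independent set gives a semi-induced matching in $L(H)$ (indeed an induced one across pairs), which forces $m\le c(\simsize+1)$. Getting this step right, with explicit constants, is where I expect the real work to lie. An alternative first attempt is to bound the number of distinct nonempty types directly: take many prefix edges with pairwise different types, choose suffix witnesses, and run the same cleaning. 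This yields the $O(\simsize^4)$ count of types without going through $|A|$.

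\textbf{Step 3: assembling the bound.} Finally, plug the bound on types into the label count from Step 1. Then verify that the relabelling and join operations keep all classes consistent after each insertion. Inserting $f$ may deactivate endpoints, which only coarsens the types, so classes merge via $\relabel$ and never need to split. This gives a $\kw$-expression with $\kw\le\simcw$.
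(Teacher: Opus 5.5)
Your Steps 1 and 3 are sound and are essentially the paper's construction. Edges of $H$ are introduced in the order $\ord$ with a fresh label, arcs come from arc-joins with every existing label, edges come from edge-joins with the adjacent classes, and classes are merged by $\relabel$. The paper uses the classes of ``same $L(H)$-neighbourhood among the not yet introduced edges''. Your types $e\cap A$ refine these classes and only coarsen over time, so the expression is valid.

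The gap is Step 2. It carries all of the combinatorial content of the lemma, and you only outline it. The outline also does not work as stated:
\begin{itemize}
\item An active $x_i$ whose prefix and suffix edges go to hubs does not collapse into boundedly many types. Since $x_i\in e\cap A$ for its own prefix edge $e$, every such vertex creates new types. What you need is that there are few such vertices. This holds because many of them with prefix edge to $y$ and suffix edge to $z\neq y$ already form an orsim $\{\{x_iy,x_iz\}\}_i$.
\item The conflicts $y_i=z_j$ form complete bipartite blocks around each hub. So ``bounded degree after cleaning'' first requires showing that every vertex is the far endpoint of only $O(\simsize)$ chosen prefix edges and $O(\simsize)$ chosen suffix edges.
\end{itemize}
None of this is proved.

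The lemma also asserts the explicit bound $\simcw$, and ``$a\le O(\simsize^2)$ with suitable constants'' does not deliver it. With $\binom{a}{2}+a+O(1)$ labels you need roughly $a\le 2\sqrt2\,\simsize^2$. A hub-cleaning argument of the kind you describe naturally loses several linear factors: a threshold of about $3\simsize$ per hub side and a conflict degree of about $12\simsize$ put $a$ around $12\simsize^2$, which is far too many labels.

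The paper avoids this by bounding the number of classes of $\sim_i$ directly. Every vertex of $H$ touches at most $\simsize(\simsize+2)+2$ classes. Two greedy extractions from representatives of $\simcwone$ live classes then yield an orsim of size $\simsize+1$.

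If you want to keep your route, here is a clean way to bound $a$:
\begin{itemize}
\item Fix, for each active $x$, a prefix edge $xy_x$ and a suffix edge $xz_x$.
\item For any pairwise disjoint $P,Q,R\subseteq V(H)$, the pairs $\{xy_x,xz_x\}$ with $x\in Q$, $y_x\in P$, $z_x\in R$ always form an orsim.
\item A uniformly random tripartition keeps each $x$ with probability $1/27$, so $a\le 27\simsize$.
\end{itemize}
This gives $O(\simsize^2)$ labels, which is below $\simcw$ once $\simsize$ is moderately large. You would still need sharper constants for small $\simsize$ to prove the statement exactly as written.
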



\section{Preliminaries}\label{sec:prelimns}
Given a positive integer $n$, we set $[n]:=\{1,\dots,n\}$, and we set $\nzero{n}:=\{0,\dots,n-1\}$. For a finite set $S$, we write $|S|$ or $\#S$ do denote the cardinality of $S$. For a set $S$, we write $S^2$ to denote the set of all ordered pairs of elements of $S$, i.e., $S^2 = \set{(x, y) ~|~ x, y \in S}$; and we write $S^{(2)}$ to denote the the set of all unordered pairs of elements of $S$, i.e., $S^{(2)} = \set{\set{x, y} ~|~ x, y \in S}$. We use $\uplus$ to denote the union of two disjoint sets; that is, for disjoint sets $A$ and $B$, we write $A \uplus B$ to denote $A \cup B$. 

\subsection{Graph Theory}
Graphs in this work are undirected, and do not contain self-loops unless stated otherwise, but we allow multi-edges for technical reasons --- among others, this is to enable the presence of the same edge at multiple times in a temporal graph. For an undirected graph $H$, we use $V(H)$ and $E(H)$ to denote the vertex set and the edge set of $H$, respectively. 
A \emph{homomorphism} from a graph $H$ to a graph $G$ is a pair $\varphi=(\nu,\xi)$ where $\nu: V(H) \to V(G)$ and $\xi: E(H) \to E(G)$ such that for each $u,v \in V(H)$ and for each edge $e$ with endpoints $u$ and $v$, we have that $\nu(u)$ and $\nu(v)$ are the endpoints of $\xi(e)$.\footnote{Note that, for simple graphs without multi-edges, a homomorphism is defined as just an edge-preserving mapping $\nu:V(H) \to V(G)$, i.e., $\{u,v\}\in E(H)$ implies $\{\nu(u),\nu(v)\}\in E(G)$; this fully specifies the mapping $\xi$ from $E(H)$ to $E(G)$. However, for graphs with multi-edges, we need to specify $\xi$ explicitly.} A homomorphism $\varphi=(\nu,\xi)$ is called an \emph{embedding} if $\nu$ and $\xi$ are injective, and an embedding is an \emph{isomorphism} if $(\nu^{-1},\xi^{-1})$ is an embedding from $G$ to $H$.

The \emph{line graph} of a graph $G$ contains has vertex set $E(G)$, that is, every edge of $G$ becomes a vertex, and two distinct vertices $e,e' \in E(G)$ are made adjacent if $e \cap e'\neq \emptyset$.

\subsection{Temporal Graphs, Patterns, and Homomorphisms}

\begin{definition}[Temporal Graphs]
    A temporal graph is a pair $\Gamma=(G,\tau)$ of a graph $G$ without selfloops, but with parallel edges allowed, and a mapping $\tau: E(G)\to \mathbb{N}$ such that for every pair of parallel edges $e \neq e'$ we have $\tau(e)\neq \tau(e')$.
\end{definition}

As discussed in the introduction we will focus on the following type of isomorphisms between temporal graphs, called order-isomorphisms.

\begin{definition}[Order-Isomorphisms]
Let $\Gamma_1=(G_1,\tau_1)$ and $\Gamma_2=(G_2,\tau_2)$ be temporal graphs. An order-isomorphism from $\Gamma_1$ to $\Gamma_2$ is a pair $(\pi,t)$ of an isomorphism $\pi=(\pi_\nu,\pi_\xi)$ from $G_1$ to $G_2$ and a bijection $t:\tau_1(E(G_1))\to \tau_2(E(G_2))$ such that
\begin{enumerate}
    \item for all $e \in E(G_1)$ we have $t(\tau_1(e)) = \tau_2(\pi_\xi(e))$, and
    \item $t$ is strongly monotonically increasing, i.e., $x<y$ implies $t(x)<t(y)$.
\end{enumerate}
We write $\Gamma_1 \orderiso \Gamma_2$ if $\Gamma_1$ and $\Gamma_2$ are order-isomorphic.
\end{definition}

\begin{definition}[Temporal Pattern]
    A temporal pattern is a triple $P=(H,R,\vartheta)$ where $H$ is a graph, $R$ is a finite poset, and $\vartheta$ is a surjection from $E(H)\to R$ such that for ever pair $e,e'$ of parallel edges of $H$ we have $\vartheta(e)\neq \vartheta(e')$.
\end{definition}

To avoid notational clutter, given a temporal pattern $P=(H,R,\varphi)$, we might interchangeably use $R$ to denote the poset as well as its groundset. Moreover, we use the symbols $<_R$ and $\leq_R$ for the relation of $R$, that is, we write $a\leq_R b$ if $(a,b)$ is a tuple of the partial ordering relation, and we write $a <_R b$ if $a \leq_R b$ and $a \neq b$.

\begin{definition}[Homomorphisms from Temporal Patterns]
    Let $P=(H,R,\vartheta)$ be a temporal pattern, and let $\Gamma=(G,\tau)$ be a temporal graph. A homomorphism from $P$ to $\Gamma$ is a homomorphism $\varphi=(\nu,\xi)$ from $H$ to $G$ such that for all $e,e' \in E(H)$ we have
    $\vartheta(e)\leq_R \vartheta(e') \Rightarrow \tau(\xi(e)) \leq \tau(\xi(e'))$ (implying that 
    $\vartheta(e)= \vartheta(e') \Rightarrow \tau(\xi(e)) = \tau(\xi(e'))$). We write $\#\homs{P}{\Gamma}$ for the set of all homomorphisms from~$P$ to~$\Gamma$.
\end{definition}

\paragraph*{Temporal Patterns with Total Orders}
For our lower bounds, we will mainly rely on \emph{totally ordered} temporal patterns, that is, $P=(H,R,\vartheta)$ such that $R$ is a total order and $\vartheta$ is a bijection. To avoid notational clutter, we allow ourselves to write $P=(H,\preccurlyeq)$, where $\preccurlyeq$ is a total order on $E(H)$; moreover, we write $e \prec e'$ if $e \preccurlyeq e'$ and $e \neq e'$. 

A homomorphism from a totally ordered temporal pattern $(H,\preccurlyeq)$ to $(G,\tau)$ is then a homomorphism $\varphi=(\nu,\xi)$ from $H$ to $G$ such that, for all $e,e' \in E(H)$ we have $e \prec e'$ if and only if $\tau(\xi(e))<\tau(\xi(e'))$.

\subsection{Parameterised Complexity Theory}
We provide a brief introduction to parameterised complexity theory and refer the reader to the standard textbook of Flum and Grohe~\cite{FlumG06} for a detailed exposition (cf.\ \cite[Chapter 14]{FlumG06} for a treatment on parameterised counting).

A \emph{parameterised counting problem} is a pair $(P,\kappa)$ of a function $P:\{0,1\}^\ast \to \mathbb{Q}$ and a polynomial-time computable\footnote{We note that, in some literature, the requirement of the parameterisation to be computable is omitted; for our paper, this distinction does not make a difference.} \emph{parameterisation} $\kappa: \{0,1\}^\ast \to \mathbb{N}$. Parameterised decision problems are defined similarly, with the only exception that the co-domain of $P$ is $\{0,1\}$. In what follows, we refer to both parameterised decision and counting problems as \emph{parameterised problems}.

A parameterised problem $(P,\kappa)$ is \emph{fixed-parameter tractable} (``FPT'') if there is a computable function $f$ and an algorithm $\mathbb{A}$ such that $\mathbb{A}$, on input $x$, computes $P(x)$ in time
\[f(\kappa(x))\cdot |x|^{O(1)}\,.\]
We call $\mathbb{A}$ an \emph{FPT algorithm} for $(P,\kappa)$.

A \emph{parameterised Turing-reduction} from $(P_1,\kappa_1)$ to $(P_2,\kappa_2)$ is an FPT algorithm $\mathbb{A}$ for $(P_1,\kappa_1)$ which is given access to an oracle for $P_2$ and which satisfies that, on input $x$, each oracle query $y$ satisfies $\kappa_2(y)\leq g(\kappa_1(x))$, where $g$ is a computable function independent of $x$. In other words, the parameter of each oracle call depends only on the parameter of the input. We write $(P_1,\kappa_1)\fptred (P_2,\kappa_2)$ if a parameterised Turing-reduction exists. 

A parameterised problem $(P,\kappa)$ is $\mathrm{W}[1]$-\emph{hard} if it reduces from the parameterised clique problem $\textsc{Clique}$ (defined below), that is $\textsc{Clique}\fptred (P,\kappa)$.

\paragraph*{Further Parameterised Problem Definitions}
\begin{mdframed}
\textsc{Clique} \\
\textbf{Input:} A graph $G$ and a positive integer $k$ \\
\textbf{Parameter:} $k$ \\
\textbf{Output:} $1$ is $G$ contains a complete $k$-vertex subgraph, and $0$ otherwise.
\end{mdframed} 

For the following problem definition, $\mathcal{C}$ denotes a class of temporal patterns.

\begin{mdframed}
\#\textsc{TemporalHom}$(\mathcal{C})$ \\
\textbf{Input:} A temporal pattern $P\in \mathcal{C}$, and a temporal graph $(G,\tau)$\\
\textbf{Parameter:} $|P|$ \\
\textbf{Output:} $\#\homs{P}{(G,\tau)}$
\end{mdframed} 

For the following problem definitions, we do not restrict the entire temporal pattern, but only the underlying graph. Moreover, we also restrict to totally ordered ``$\mathrm{TO}$'' patterns. We write $\mathcal{H}$ to denote a recursively enumerable class of graphs. \pagebreak

\begin{mdframed}
$\#\textsc{TemporalHom}_\mathrm{TO}(\mathcal{H})$ \\
\textbf{Input:} A totally ordered temporal pattern $(H,\preccurlyeq)$ with $H \in \mathcal{H}$, and a temporal graph $(G,\tau)$\\
\textbf{Parameter:} $|(H,\preccurlyeq)|$ \\
\textbf{Output:} $\#\homs{(H,\preccurlyeq)}{(G,\tau)}$
\end{mdframed} 

\begin{mdframed}
$\#\textsc{ColTemporalHom}_\mathrm{TO}(\mathcal{H})$ \\
\textbf{Input:} A totally ordered temporal pattern $(H,\preccurlyeq)$ with $H \in \mathcal{H}$, and an $H$-coloured temporal graph $(G,\tau,c)$\\
\textbf{Parameter:} $|(H,\preccurlyeq)|$ \\
\textbf{Output:} $\#\colhom{(H,\preccurlyeq)}{(G,\tau,c)}$
\end{mdframed}

\section{Order-Isomorphisms and a Temporal Lov{\'{a}}sz Theorem}

Our goal is to prove the following classification of temporal isomorphisms via homomorphism indistinguishability.

\begin{theorem}[Theorem~\ref{thm:lovasz_improved_intro}; restated]\label{thm:lovasz_improved}
    Two temporal graphs $\Gamma_1$ and $\Gamma_2$ are order-isomorphic if and only if for all temporal patterns $P$ we have $\#\homs{P}{\Gamma_1}=\#\homs{P}{\Gamma_2}$.\qed
\end{theorem}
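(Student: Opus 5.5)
The forward direction is routine: given an order-isomorphism $(\pi,t)$ from $\Gamma_1$ to $\Gamma_2$, composing any homomorphism $(\nu,\xi)$ from $P$ to $\Gamma_1$ with $\pi$ yields $(\pi_\nu\circ\nu,\pi_\xi\circ\xi)$. This is a graph homomorphism into $G_2$. It preserves the temporal constraints because $\tau_2(\pi_\xi(\xi(e)))=t(\tau_1(\xi(e)))$ and $t$ is strictly increasing, so both $\leq$ and $=$ between time steps are preserved. The map is a bijection on $\homs{P}{\cdot}$, since the inverse order-isomorphism gives its inverse.

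For the backward direction, the plan follows Lovász's argument. First reduce both temporal graphs to ordered normal form; this does not change any homomorphism counts, since the constraints only see the relative order of time steps. Then pass from homomorphism counts to counts of \emph{strict time-exact embeddings}. To do this, encode a temporal graph in ONF with $k$ time steps as a pattern $P_\Gamma=(G,R,\vartheta)$, where $R$ is the chain $1<\dots<k$ and $\vartheta=\tau$. A homomorphism from such a pattern to $\Gamma'$ respects the weak order. The key quantity, however, is the number of homomorphisms that are injective on vertices and edges and map distinct classes of $R$ to distinct time steps, with the order strictly preserved.

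Step 1 (inclusion--exclusion on time steps). For a pattern $P=(H,R,\vartheta)$, the homomorphisms where distinct elements of $R$ receive distinct times (with strict order where $<_R$) are obtained by Möbius inversion over the poset of ways to identify elements of $R$. One sums over quotients $R/\!\sim$ obtained by merging elements into blocks, keeping only merges that are consistent with a partial order. Each such quotient is again a temporal pattern, because we may merge images under $\vartheta$. Parallel edges in $H$ must keep distinct labels; homomorphisms merging their times would force them onto the same edge, which is impossible in a temporal graph with parallel edges of distinct times, so those terms vanish anyway. Since $\Gamma_1$ and $\Gamma_2$ agree on all pattern counts, they agree on these time-injective counts.

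Step 2 (inclusion--exclusion on vertices). Next apply the standard Lovász inversion over vertex partitions of $H$. Identify vertices to obtain quotient patterns, where identifying vertices may create parallel edges, which the pattern definition allows provided their labels differ, and may create loops, which are discarded with zero contribution. This yields equality of counts of homomorphisms that are vertex-injective and time-injective. Vertex- and time-injectivity together imply edge-injectivity, since parallel edges carry distinct times. Call these strict embeddings.

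Step 3 (conclusion). Take $P=P_{\Gamma_1}$ in ONF with $k_1$ steps. The identity gives a strict embedding of $\Gamma_1$ into itself, so $\Gamma_2$ admits one as well. Such an embedding is vertex- and edge-injective and maps time steps $1<\dots<k_1$ strictly monotonically. Symmetrically, $\Gamma_1$ embeds into $\Gamma_2$. Equal vertex and edge counts, obtainable from single-vertex and single-edge patterns, then force the vertex and edge maps to be bijections. The time map is a strictly increasing injection between $\tau_1(E(G_1))$ and $\tau_2(E(G_2))$; it is surjective because every edge of $G_2$ is hit. Hence the pair is an order-isomorphism.

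I expect Step 1 to be the main obstacle. The inversion must be set up on the right lattice, namely the set partitions of $R$ whose quotient remains a poset, so that it produces exactly the time-injective counts as an integer combination of ordinary pattern counts. The delicate points are handling merges that create cycles and tracking how the weak inequalities $\leq_R$ interact with equality. I would first try a cleaner route: count homomorphisms from the pattern with $R$ replaced by an antichain-free chain, and recover strictness via $\#\{\text{strict}\}=\sum_{\sim}\mu(\sim)\#\homs{P/\!\sim}{\Gamma}$, where the sum runs over the interval-merges of the chain.
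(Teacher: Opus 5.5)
Your route is essentially the paper's. The paper removes strictness by inclusion--exclusion over subsets of Hasse-diagram pairs, identifying the corresponding poset elements (Lemma~\ref{lem:strict_to_nostrict}). For a chain this is exactly your interval-merge formula with signs $(-1)^{|A|}$. The chain case is also all you need, because the vertex quotients in Step~2 keep $R$ and Step~3 only uses $P_{\Gamma_1}$. So the general partition-lattice version of Step~1, which you flag as the main obstacle, can simply be dropped.

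The paper then applies Lov\'asz's M\"obius inversion over vertex partitions (Lemma~\ref{lem:temp_hom_basis}). It observes that for a total order a strict vertex-injective homomorphism is edge-injective, and evaluates everything at the pattern of an ONF.

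Your last step is a mild variant. You use a single strict injective homomorphism $P_{\Gamma_1}\to\Gamma_2$, together with $|V|$ and $|E|$ read off from the one-vertex pattern (empty poset) and the one-edge pattern ($2|E(G)|$ homomorphisms). The paper instead uses such homomorphisms in both directions, deduces $k=\ell$, and compares the snapshots $G^1_i$ and $G^2_i$ layer by layer. Both work; yours avoids the second embedding. Your sentence ``Symmetrically, $\Gamma_1$ embeds into $\Gamma_2$'' has the roles swapped, but you do not need it.

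One claim in Step~1 is false and must be repaired. When a merge identifies the labels of two parallel pattern edges $e,e'$, the corresponding terms do \emph{not} vanish. A homomorphism may map $e$ and $e'$ to the same edge of $G$; equal times merely force this.

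For instance, let $\Gamma_1$ consist of two parallel edges with times $1$ and $2$. Then $P_{\Gamma_1}$ has $6$ homomorphisms into $\Gamma_1$ and the merged pattern has $4$, so the strict count is $6-4=2$. Dropping the merged term would give $6$.

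The correct treatment is the one the paper uses for vertex quotients in Lemma~\ref{lem:temp_hom_basis}: delete one edge of each parallel pair carrying the same label. Since $\Gamma$ has at most one edge per vertex pair and time step, this does not change the count. It also makes the merged object a legitimate temporal pattern, so its count is again one of your invariants. (The paper's Lemma~\ref{lem:strict_to_nostrict} silently needs the same fix.)

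The same deletion is needed in Step~2. There, identifying vertices can create parallel edges with \emph{equal} labels, e.g.\ two edges of one snapshot sharing an endpoint; you only treated the distinct-label case. With this repair your proof is complete.
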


We start with the easy direction.

\begin{lemma}\label{lem:iso_easy_direction}
    Let $\Gamma_1=(G_1,\tau_1)$ and $\Gamma_2=(G_2,\tau_2)$ and let $P=(H,R,\vartheta)$ be a temporal pattern. If $\Gamma_1 \orderiso \Gamma_2$ then $\#\homs{P}{\Gamma_1}=\#\homs{P}{\Gamma_2}$.
\end{lemma}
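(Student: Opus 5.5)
The plan is to construct an explicit bijection between $\homs{P}{\Gamma_1}$ and $\homs{P}{\Gamma_2}$ by composing with the order-isomorphism. Fix an order-isomorphism $(\pi,t)$ from $\Gamma_1$ to $\Gamma_2$, with $\pi=(\pi_\nu,\pi_\xi)$. For a homomorphism $\varphi=(\nu,\xi)\in\homs{P}{\Gamma_1}$, set
\[
\Phi(\varphi) := (\pi_\nu\circ \nu,\ \pi_\xi\circ\xi).
\]
I will show that $\Phi(\varphi)\in\homs{P}{\Gamma_2}$, and that the analogous map built from the inverse order-isomorphism $((\pi_\nu^{-1},\pi_\xi^{-1}),t^{-1})$ is a two-sided inverse of $\Phi$.

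First, $\Phi(\varphi)$ is a static homomorphism from $H$ to $G_2$. Let $e\in E(H)$ have endpoints $u,v$. Then $\nu(u),\nu(v)$ are the endpoints of $\xi(e)$. Since $\pi$ is an isomorphism, and in particular a homomorphism, $\pi_\nu(\nu(u))$ and $\pi_\nu(\nu(v))$ are the endpoints of $\pi_\xi(\xi(e))$.

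Second, the temporal constraints are preserved. Suppose $\vartheta(e)\leq_R\vartheta(e')$. Then $\tau_1(\xi(e))\leq\tau_1(\xi(e'))$, and since $t$ is strictly increasing it is also weakly monotone, so $t(\tau_1(\xi(e)))\leq t(\tau_1(\xi(e')))$. Condition 1 of Definition~\ref{def:order_iso_intro} turns this into $\tau_2(\pi_\xi(\xi(e)))\leq\tau_2(\pi_\xi(\xi(e')))$, which is exactly the required inequality.

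Third, for the inverse direction I need $((\pi_\nu^{-1},\pi_\xi^{-1}),t^{-1})$ to be an order-isomorphism from $\Gamma_2$ to $\Gamma_1$.
\begin{itemize}
\item By the definition of isomorphism, $(\pi_\nu^{-1},\pi_\xi^{-1})$ is an isomorphism from $G_2$ to $G_1$.
\item $t^{-1}$ is a strictly increasing bijection, since the inverse of a strictly increasing bijection between totally ordered finite sets is strictly increasing.
\item For every $f\in E(G_2)$, condition 1 holds: set $e=\pi_\xi^{-1}(f)$; then $t(\tau_1(e))=\tau_2(f)$, hence $t^{-1}(\tau_2(f))=\tau_1(\pi_\xi^{-1}(f))$.
\end{itemize}
So the same argument applied to this inverse order-isomorphism gives a map $\Psi:\homs{P}{\Gamma_2}\to\homs{P}{\Gamma_1}$. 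Both $\Psi\circ\Phi$ and $\Phi\circ\Psi$ are the identity by componentwise composition, so $\Phi$ is a bijection and the counts agree.

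The only step needing any care is the monotonicity transfer in the second step. It relies on strict monotonicity of $t$ implying weak monotonicity, which also covers the equality case $\vartheta(e)=\vartheta(e')$. Beyond that the argument is routine.
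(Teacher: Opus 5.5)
Your proposal is correct and matches the paper's proof: compose each homomorphism with the order-isomorphism $(\pi,t)$, check edge preservation and the temporal constraints via monotonicity of $t$, and obtain the inverse map from the inverse order-isomorphism. You are slightly more careful than the paper, since you explicitly verify that $((\pi_\nu^{-1},\pi_\xi^{-1}),t^{-1})$ is again an order-isomorphism, which the paper takes for granted.
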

\begin{proof}
    Let $(\pi,t)$ with $\pi=(\pi_\nu,\pi_\xi)$ be an order-isomorphism from $\Gamma_1$ to $\Gamma_2$. There is a canonical mapping $b$ from $\homs{P}{\Gamma_1}$ to $\homs{P}{\Gamma_2}$: Given $\varphi=(\hat{\nu},\hat{\xi})\in \homs{P}{\Gamma_1}$, set $b(\varphi):=(\nu',\xi')$ where $\nu'(v)=\pi_\nu(\hat{\nu}(v))$ and $\xi'(e)=\pi_\xi(\hat{\xi}(e))$, that is, we obtain the homomorphism $b(\varphi)$ from $P$ to $\Gamma_2$ by composing $\varphi$ with the order-isomorphism. This shows immediately that $b(\varphi)$ is a homomorphism from $H$ to $G_2$. Now let $e,e' \in E(H)$ such that $\vartheta(e)\leq_R \vartheta(e')$. Since $\varphi$ is a homomorphism from $P$ to $\Gamma_1$, we have that $\tau_1(\hat{\xi}(e))\leq \tau_1(\hat{\xi}(e'))$ and hence $t(\tau_1(\hat{\xi}(e)))\leq t(\tau_1(\hat{\xi}(e')))$. Thus
    \[\tau_2(\xi'(e)) = \tau_2(\pi_\xi(\hat{\xi}(e))) = t(\tau_1(\hat{\xi}(e)))\leq t(\tau_1(\hat{\xi}(e'))) = \tau_2(\pi_\xi(\hat{\xi}(e))) = \tau_2(\xi'(e'))\,.\]
    Next let $e,e' \in E(H)$ such that $\vartheta(e) = \vartheta(e')$. Since $\varphi$ is a homomorphism from $P$ to $\Gamma_1$, we have that $\tau_1(\hat{\xi}(e))= \tau_1(\hat{\xi}(e'))$ and hence $t(\tau_1(\hat{\xi}(e)))= t(\tau_1(\hat{\xi}(e')))$. Thus
    \[\tau_2(\xi'(e)) = \tau_2(\pi_\xi(\hat{\xi}(e))) = t(\tau_1(\hat{\xi}(e)))= t(\tau_1(\hat{\xi}(e'))) = \tau_2(\pi_\xi(\hat{\xi}(e))) = \tau_2(\xi'(e'))\,.\]
    Hence $b(\varphi)$ is indeed a homomorphism from $P$ to $\Gamma_2$. 

    Symmetrically, we define $b'$ from $\homs{P}{\Gamma_2}$ to $\homs{P}{\Gamma_1}$, using the inverse of $(\pi,t)$. It is straightforward to check that $b\circ b'$ and $b' \circ b$ are the identity functions on the respective homomorphism sets. Hence $b$ is a bijection and the claim follows.
\end{proof}

For the second, more interesting direction, we consider \emph{strict} homomorphisms in an intermediate step.

\begin{definition}[Strict Homomorphisms from Temporal Patterns]
    Let $P=(H,R,\vartheta)$ be a temporal pattern, and let $\Gamma=(G,\tau)$ be a temporal graph. A homomorphism $\varphi=(\nu,\xi)$ from $P$ to $\Gamma$ is called strict if
    $\vartheta(e)<_R \vartheta(e') \Rightarrow \tau(\xi(e)) < \tau(\xi(e'))$. We write $\#\stricthoms{P}{\Gamma}$ for the set of all strict homomorphisms from $P$ to~$\Gamma$.
\end{definition}

\begin{lemma}\label{lem:strict_to_nostrict}
    Let $\Gamma_1=(G_1,\tau_1)$ and $\Gamma_2=(G_2,\tau_2)$ and assume that, for all temporal patterns $P$, we have that $\#\homs{P}{\Gamma_1}=\#\homs{P}{\Gamma_2}$. Then $\#\stricthoms{P}{\Gamma_1}=\#\stricthoms{P}{\Gamma_2}$ for all temporal patterns $P$.
\end{lemma}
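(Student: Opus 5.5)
We express the strict count as an integer linear combination of non-strict counts of patterns that depend only on $P$, using inclusion--exclusion. Fix $P=(H,R,\vartheta)$ and a temporal graph $\Gamma$. A homomorphism $\varphi$ from $P$ to $\Gamma$ fails to be strict exactly when some pair $a<_R b$ in $R$ has edges $e\in\vartheta^{-1}(a)$, $e'\in\vartheta^{-1}(b)$ with $\tau(\xi(e))=\tau(\xi(e'))$. Since $\varphi$ already forces all edges in a class $\vartheta^{-1}(a)$ onto one common time, this condition depends only on the pair $(a,b)$.

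For each homomorphism $\varphi$, define the equivalence relation $\sim_\varphi$ on the groundset of $R$ by $a\sim_\varphi b$ iff the edges of classes $a$ and $b$ receive the same time. The strict homomorphisms are exactly those for which $\sim_\varphi$ identifies no comparable pair.

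\textbf{Step 1 (quotient patterns).} For an equivalence relation $\rho$ on $R$ that is \emph{convex-compatible}, form the quotient pattern $P/\rho=(H,R/\rho,\vartheta_\rho)$. Here $R/\rho$ is the quotient preorder: the transitive closure of $\leq_R$ together with $\rho$. We only keep those $\rho$ for which this preorder is antisymmetric on the $\rho$-classes, so that $R/\rho$ is a poset.

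We also need to respect the side condition that parallel edges of $H$ have distinct $\vartheta$-values. The paper's $H$ may carry parallel edges, and homomorphisms never map parallel edges of $H$ to equal-time edges anyway, because $\Gamma$ forbids equal-time parallel edges. So any $\rho$ merging the classes of two parallel edges contributes $0$ and can be dropped.

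The key claim is that homomorphisms of $P$ with $\sim_\varphi\supseteq\rho$ are exactly the homomorphisms of $P/\rho$. This holds because a monotone map into $\mathbb{N}$ that is constant on $\rho$-classes respects the closure of $\leq_R\cup\rho$. Conversely, if $\rho$ yields a non-antisymmetric preorder, then every homomorphism of $P$ with $\sim_\varphi\supseteq\rho$ also satisfies $\sim_\varphi\supseteq\rho'$ for the coarser $\rho'$ obtained by collapsing the cycles, since equalities are forced along cycles. Hence we may restrict to relations $\rho$ equal to their own "closure" $\bar\rho$; for these the quotient is a genuine temporal pattern.

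\textbf{Step 2 (Möbius inversion).} For each closed $\rho$, let $N_\rho(\Gamma)$ be the number of homomorphisms of $P$ to $\Gamma$ with $\sim_\varphi=\rho$ exactly, and set $F_\rho(\Gamma)=\#\homs{P/\rho}{\Gamma}=\sum_{\sigma\supseteq\rho}N_\sigma(\Gamma)$. Note that $\sim_\varphi$ is itself always closed.

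Möbius inversion on the finite lattice of closed partitions then gives $N_\rho=\sum_{\sigma\supseteq\rho}\mu(\rho,\sigma)F_\sigma$. The coefficients $\mu(\rho,\sigma)$ depend only on $R$, not on $\Gamma$.

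\textbf{Step 3 (conclusion).} The strict count is $\sum N_\sigma$ over those closed $\sigma$ that identify no $<_R$-comparable pair. Each term is an integer combination of values $\#\homs{P'}{\Gamma}$ over patterns $P'$ determined by $P$. These values agree for $\Gamma_1$ and $\Gamma_2$ by hypothesis, so the strict counts agree.

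The main obstacle is the bookkeeping in Step 1: making sure that the quotient is a legitimate poset and that the correspondence "$\sim_\varphi\supseteq\rho$ iff $\varphi$ is a homomorphism of $P/\rho$" is exact. The trick is to index the inclusion--exclusion by the exact equality pattern $\sim_\varphi$, which is automatically closed, rather than by arbitrary sets of "bad pairs". This avoids ever having to build a quotient from a non-closed relation.
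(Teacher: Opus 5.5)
Your overall route is sound and genuinely different from the paper's. The paper does inclusion--exclusion over subsets $A$ of edge pairs realizing cover relations of $R$, identifying those pairs to get $P_A$; you do M\"obius inversion indexed by the exact equality pattern $\sim_\varphi$ over the closed equivalence relations. Your closure step makes explicit something the paper glosses over: identifying several cover pairs (e.g.\ $a<b$, $c<d$, $a<d$, $c<b$ with $a\equiv b$ and $c\equiv d$) does not give a poset until the resulting cycle is collapsed.

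\textbf{The gap: parallel edges.} Your treatment of parallel edges is wrong. You claim that homomorphisms never send parallel edges of $H$ to equal-time edges, so that any $\rho$ merging the $\vartheta$-classes of two parallel edges contributes $0$. But $\xi$ need not be injective. Two parallel edges $e,e'$ of $H$ may both be mapped to the \emph{same} edge of $G$, and then $\vartheta(e)\sim_\varphi\vartheta(e')$. So such relations do occur as $\sim_\varphi$, and discarding them breaks the identity $F_\rho=\sum_{\sigma\supseteq\rho}N_\sigma$ and hence your final formula.

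A concrete counterexample: let $H$ consist of two parallel edges $e,e'$ between $u$ and $v$, let $R=\{a<b\}$ with $\vartheta(e)=a$ and $\vartheta(e')=b$, and let $\Gamma$ be a single edge with time $1$. Then $\#\homs{P}{\Gamma}=2$ (both pattern edges go to the one edge) and $\#\stricthoms{P}{\Gamma}=0$. After you discard the merged relation, your index set contains only the discrete relation, and your formula outputs $N_{\mathrm{disc}}=F_{\mathrm{disc}}=2$.

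\textbf{The repair.} Do not discard such $\rho$. Instead, define $P/\rho$ by additionally deleting, among parallel edges of $H$ that fall into the same $\rho$-class, all but one, exactly as the paper does with $(H/\rho)_\downarrow$ for vertex quotients. Surjectivity onto $R/\rho$ is preserved, since every deleted edge's class is still represented by the kept edge. A temporal graph has at most one edge between two given vertices at any given time. So in any homomorphism of $P$ with $\sim_\varphi\supseteq\rho$, each deleted edge is forced to map to the image of the kept parallel edge of its class. Hence $\#\homs{P/\rho}{\Gamma}$ still equals the number of homomorphisms of $P$ with $\sim_\varphi\supseteq\rho$, and the rest of your argument goes through: $\sim_\varphi$ is closed, M\"obius inversion holds on the poset of closed relations, and you sum over the $\sigma$ that identify no comparable pair. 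The paper's $P_A$ can likewise contain parallel edges with equal $\vartheta$-values, so the same repair is implicitly needed there.
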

\begin{proof}
    We prove this claim by expressing the number of strict homomorphisms as a linear combination of (not necessarily strict) homomorphisms via inclusion-exclusion over the violated strict inequality constraints.
    Let $P=(H,R,\vartheta)$ be a temporal pattern, and let $\mathbb{H}$ be the Hasse diagram of $R$. Recall that, formally, $\mathbb{H}$ contains all strict inequalities $(a < b)$ for direct successors in the partial order $R$. Let $J_P$ be the set of all pairs of edges $e,e'$ of $H$ such that $(\vartheta(e) <\vartheta(e')) \in \mathbb{H}$.

    Observe that, for any temporal graph $\Gamma=(G,\tau)$, we have
    \begin{align}\#\stricthoms{P}{\Gamma} &= \#\homs{P}{\Gamma} - \#\{(\nu,\xi) \in\homs{P}{\Gamma} \mid \exists (e,e')\in J_P: \tau(\xi(e)) = \tau(\xi(e')) \} \label{eq:strict-nostrict1}\\
    ~&= \sum_{A \subseteq J_P} (-1)^{|A|}\cdot \#\{(\nu,\xi) \in\homs{P}{\Gamma} \mid \forall (e,e')\in A: \tau(\xi(e)) = \tau(\xi(e')) \}  \label{eq:strict-nostrict2}
    \end{align}
    Note that it suffices to consider $J_P$ in (\ref{eq:strict-nostrict1}): Given a pair $e,e'$ of edges of $H$ with $a:=\vartheta(e)<\vartheta(e')=:b$ such that $a < b$ is not an element of the Hasse diagram, any homomorphism $\varphi=(\nu,\xi)$ with $\tau(\xi(e))=\tau(\xi(e'))$ would map all edges $\hat{e}$ with $a<\vartheta(\hat{e})<b$ also to an edge with time $\tau(\xi(e))$. Since $\vartheta$ is surjective, this implies that, in particular, there would be an edge $f$ with $(\vartheta(e)<\vartheta(f))\in \mathbb{H}$ and $\tau(\xi(e))=\tau(\xi(f))$.

    The transformation (\ref{eq:strict-nostrict2}) is then just the inclusion-exclusion principle.

    Now, for $A \subseteq J_P$ we define the temporal pattern $P_A=(H,R_A,\vartheta_A)$ obtained from $P$ by identifying, in the poset $R$, the elements $\vartheta(e)$ and $\vartheta(e')$ for all $(e,e')\in A$. Observe that 
    \[\#\{(\nu,\xi) \in\homs{P}{\Gamma} \mid \forall (e,e')\in A: \tau(\xi(e)) = \tau(\xi(e')) \} = \#\homs{P_A}{\Gamma} \,.\]
    Consequently, we have
    \[\#\stricthoms{P}{\Gamma} = \sum_{A \subseteq J_P} (-1)^{|A|}\cdot \#\homs{P_A}{\Gamma}\,.\]
    In particular, for all temporal patterns $P$ we have
    \[\#\stricthoms{P}{\Gamma_1}=  \sum_{A \subseteq J_P} (-1)^{|A|}\cdot \#\homs{P_A}{\Gamma_1} \stackrel{(\ast)}{=}  \sum_{A \subseteq J_P} (-1)^{|A|}\cdot \#\homs{P_A}{\Gamma_2} =\#\stricthoms{P}{\Gamma_2}\,,\]
    where $(\ast)$ is the premise of the lemma. This concludes the proof.
\end{proof}

Next, we need to introduce quotients of temporal graphs. To this end we first recall the definition of quotients of static graphs with parallel edges: given $H$ and a partition $\rho$ of $V(H)$, the quotient $H/\rho$ contains one vertex for each block $B\in \rho$. Furthermore, for each edge $\{u,v\}\in E(H)$, we create an edge $\{B_u,B_v\}$ in $H/\rho$, where $B_u$ and $B_v$ denote, respectively, the blocks of $\rho$ that contain $u$ and $v$. Note that $B_u=B_v$ is possible - in that case, we create a self-loop.\footnote{While we technically allow self-loops for quotients, we note that quotients with self-loops will not influence our proofs as temporal graphs do not have self-loops.}

\begin{definition}[Quotients of Temporal Patterns]
    Let $P=(H,R,\vartheta)$ be a temporal pattern, and let $\rho$ be a partition of $V(H)$. The quotient $P/\rho$ is a temporal pattern defined has follows
    \begin{enumerate}
        \item We construct the quotient graph $H/\rho$, but we do not (yet) delete newly created parallel edges. 
        \item While there is a pair of parallel edges $e,e'$ of $H/\rho$ such that $\vartheta(e)=\vartheta(e')$, we delete one of $e$ or $e'$. The resulting graph is denoted by $(H/\rho)_\downarrow$.
    \end{enumerate}
    Then we set $P/\rho := ((H/\rho)_\downarrow,R,\vartheta|_{E((H/\rho)_\downarrow)})$.
\end{definition} \pagebreak

Note that, in the definition above, the image of the restriction of $\vartheta$ to the edges of $(H/\rho)_\downarrow$ is still $R$ (i.e., it is still surjective), since we did not delete any edge $e$ with a unique $\vartheta(e)$.

\begin{lemma}\label{lem:temp_hom_basis}
    Let $P=(H,R,\vartheta)$ be a temporal pattern, let $\Gamma$ be a temporal graph, and write $\#\mathsf{Inj}_{<}(P\to \Gamma)$ for the set of all strict homomorphisms $(\nu,\xi)\in \stricthoms{P}{\Gamma}$ such that $\nu$ is injective. Then
    \[ \#\mathsf{Inj}_{<}(P \to \Gamma) = \sum_\rho \mu(\rho) \cdot \#\stricthoms{P/\rho}{\Gamma} \,,\]
    where the sum is over all partitions of $V(H)$ and $\mu$ is the M\"obius function of the partition lattice of $V(H)$.
\end{lemma}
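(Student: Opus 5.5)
The plan is to follow the standard static argument: express strict homomorphisms from quotients through strict vertex-injective ones, then apply Möbius inversion on the partition lattice. The key identity to establish is that for every partition $\rho$ of $V(H)$,
\[ \#\stricthoms{P/\rho}{\Gamma} = \sum_{\sigma \geq \rho} \#\mathsf{Inj}_{<}(P/\sigma \to \Gamma)\,,\]
where $\sigma \geq \rho$ means $\sigma$ is coarser than $\rho$, and $P/\sigma$ is understood as the quotient of $P$ by $\sigma$. Once this holds for all $\rho$, Möbius inversion in the partition lattice at $\rho=\bot$ (the finest partition, where $P/\bot=P$ up to the trivial identification) gives the claimed formula $\#\mathsf{Inj}_{<}(P\to\Gamma)=\sum_\rho \mu(\bot,\rho)\,\#\stricthoms{P/\rho}{\Gamma}$, which is the lemma with $\mu(\rho):=\mu(\bot,\rho)$.

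To prove the identity, I would partition $\stricthoms{P/\rho}{\Gamma}$ according to the kernel of the vertex map. For $(\nu,\xi)$, composing $\nu$ with the quotient map $V(H)\to V(H/\rho)$ yields a map on $V(H)$. Its kernel $\sigma$ is a partition of $V(H)$ coarser than $\rho$. The next step is to show that the strict homomorphisms with kernel exactly $\sigma$ are in bijection with $\mathsf{Inj}_{<}(P/\sigma \to \Gamma)$. Here $\nu$ factors through an injective $\bar\nu: V(H/\sigma)\to V(G)$. The edge map must also be transported: each edge of $(H/\sigma)_\downarrow$ comes from edges of $H$, hence of $(H/\rho)_\downarrow$, and we map it via $\xi$.

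This edge transport is the main obstacle, because of the deduplication step in the definition of $P/\rho$. Parallel edges with the same $\vartheta$-value are merged, and parallel edges with distinct $\vartheta$-values are kept. I would argue as follows.

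\begin{itemize}
\item If two edges of $(H/\rho)_\downarrow$ become parallel in $H/\sigma$ and have the same $\vartheta$-value $r$, then any strict homomorphism sends both to parallel edges of $G$ with equal time $\tau$. Since $\Gamma$ forbids distinct parallel edges with the same time, they are sent to the same edge of $G$.
\item Hence $\xi$ is determined by its values on a set of representatives. Deleting duplicates in the same way for $\sigma$ therefore loses no information, and the map is well defined and independent of the choice of which duplicate was deleted.
\item Conversely, a strict homomorphism from $P/\sigma$ with injective vertex map extends uniquely to $P/\rho$: each edge of $(H/\rho)_\downarrow$ is sent to the image of its representative in $(H/\sigma)_\downarrow$ with the same $\vartheta$-value and the same endpoint blocks.
\end{itemize}

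It remains to check that the temporal constraints (both $\le$ and the strict ones) are preserved in both directions. This holds because the poset $R$ and the $\vartheta$-labels are unchanged by quotienting and deduplication, and surjectivity of $\vartheta$ is maintained, as noted after the definition. Self-loops created in quotients contribute zero on both sides, since $\Gamma$ has none, so they cause no trouble.

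Summing over $\sigma\ge\rho$ then gives the identity, and the inversion step is routine.
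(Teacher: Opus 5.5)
Your proposal is correct and takes the same approach as the paper. The paper also invokes Lov\'asz's classical M\"obius-inversion proof over the partition lattice, and it handles the only new subtlety, the deletion of same-$\vartheta$ parallel edges in $P/\rho$, with your observation that $\Gamma$ has no two distinct parallel edges with the same time, so such duplicates are forced onto the same edge of $G$. Your write-up simply spells out the kernel decomposition and the edge transport in more detail than the paper does.
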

\begin{proof}
    The proof follows almost verbatim the classical proof of this transformation on graphs due to Lov{\'{a}}sz~\cite[Chapter 5.2.3]{Lovasz12}. The only subtlety we need to address is that, in the definition of quotients of temporal patterns, we had to delete (one of each pair of) parallel edges $e,e'$ with identical temporal constraints $\vartheta(e)=\vartheta(e')$ since those parallel edges would not be admissible in our definition of temporal patterns.
    However, since the temporal graph $\Gamma=(G,\tau)$ does not have parallel edges $e,e'$ with $\tau(e)=\tau(e')$ by definition, even if we kept the parallel edges of the pattern, it would not change the number of homomorphisms.
\end{proof}

\begin{lemma}\label{lem:total_injectivity}
    Let $P=(H,R,\vartheta)$ be a temporal pattern, and let $\Gamma=(G,\tau)$ be a temporal graph. Let $\varphi=(\nu,\xi)\in \stricthoms{P}{\Gamma}$. If $R$ is a total order and if $\nu$ is injective, then $\xi$ is injective as well. 
\end{lemma}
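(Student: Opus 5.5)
The argument is a direct case distinction on a pair of distinct pattern edges. Take distinct $e,e'\in E(H)$; we must show $\xi(e)\neq\xi(e')$. Let $u,v$ be the endpoints of $e$ and $u',v'$ those of $e'$. Since $\varphi$ is a homomorphism, $\xi(e)$ has endpoints $\nu(u),\nu(v)$ and $\xi(e')$ has endpoints $\nu(u'),\nu(v')$. We split according to whether $e$ and $e'$ have the same endpoint set.

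\emph{Case 1: $\{u,v\}\neq\{u',v'\}$.} Injectivity of $\nu$ gives $\{\nu(u),\nu(v)\}\neq\{\nu(u'),\nu(v')\}$. So $\xi(e)$ and $\xi(e')$ have different endpoint sets and cannot be the same edge of $G$.

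\emph{Case 2: $\{u,v\}=\{u',v'\}$, i.e.\ $e$ and $e'$ are parallel edges of $H$.} Here we use the temporal structure. By the definition of a temporal pattern, parallel edges satisfy $\vartheta(e)\neq\vartheta(e')$. Since $R$ is a total order, the two values are comparable, so without loss of generality $\vartheta(e)<_R\vartheta(e')$. Because $\varphi$ is strict, $\tau(\xi(e))<\tau(\xi(e'))$. Since $\tau$ is a function on $E(G)$, this forces $\xi(e)\neq\xi(e')$.

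The only point needing care is Case~2. There we need both totality (to get comparability of distinct poset elements) and strictness (to turn $<_R$ into a strict inequality of times). Without either, two parallel pattern edges could be sent to the same host edge, and the proof makes clear why both hypotheses are required. If the paper's convention is that $H$ has no parallel edges, Case~2 is vacuous and the lemma follows from Case~1 alone. Writing out both cases keeps the argument valid for multigraph patterns.
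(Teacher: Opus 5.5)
Your proof is correct and follows essentially the same argument as the paper. Injectivity of $\nu$ rules out collisions between non-parallel edges, and for parallel edges, $\vartheta(e)\neq\vartheta(e')$ together with the totality of $R$ and the strictness of $\varphi$ forces $\tau(\xi(e))\neq\tau(\xi(e'))$, hence $\xi(e)\neq\xi(e')$.
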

\begin{proof}
    If $\nu$ is injective, then the only way for $\xi$ to be not injective is the existence of two parallel edges $e$ and $e'$ with $\xi(e)=\xi(e')$. By definition of temporal patterns, we have $\vartheta(e)\neq \vartheta(e')$ since $e$ and $e'$ are parallel edges. Since $R$ is a total order, we can assume w.l.o.g.\ that $\vartheta(e)< \vartheta(e')$, which implies $\tau(\xi(e))<\tau(\xi(e'))$ and thus $\xi(e)\neq \xi(e')$ which yields a contradiction.
\end{proof}


For the next step, we will consider a normal form of temporal graphs under order isomorphisms.
\begin{definition}[Ordered Normal Forms (ONF)]
    A temporal graph $\Gamma=(G,\tau)$ is in \emph{ordered normal form} if $\tau(E(G))=[k]$ for some positive integer $k$; we call $\Gamma$ an \emph{ONF}.
    We will encode ONFs as a tuple of simple graphs without multi-edges $\Gamma=(G_1,\dots,G_k)$ where, for all $i \in [k]$ we have $V(G_i)=V(G)$ and 
    \[ E(G_i):= \{e \in E(G) \mid \tau(e)=i\}\,.\]
\end{definition}

Note that each temporal graph $(G,\tau)$ is order-isomorphic to a an ONF $\Gamma=(G_1,\dots,G_k)$ where $k=|\tau(E(G))|$ and $G=(V(\Gamma),\bigcup_{i \in[k]}E(G_i))$. Moreover, it is easy to verify the following property.

\begin{fact}\label{fact:compressed_isomorphism}
    Let $\Gamma^1=(G^1_1,\dots,G^1_k)$ and $\Gamma^2=(G^2_1,\dots,G^2_\ell)$ be ONFs. Then $\Gamma^1$ and $\Gamma^2$ are order-isomorphic if and only if $k=\ell$ and there is a bijection $\beta:V(\Gamma^1)\to V(\Gamma^2)$ such that, for each $i \in [k]$, $\beta$ is a graph isomorphism from $G^1_i$ to $G^2_i$.
\end{fact}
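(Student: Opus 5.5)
The plan is to prove both directions directly from the definition of order-isomorphism (Definition~\ref{def:order_iso_intro}). Two observations do most of the work. First, a strictly increasing bijection between finite sets of integers $[k]\to[\ell]$ exists only when $k=\ell$, and it is then the identity. Second, in a temporal graph two parallel edges never share a time. So an edge of an ONF is determined uniquely by its unordered pair of endpoints together with its time step, and each $G_i$ is a simple graph.

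For the forward direction, let $(\pi,t)$ with $\pi=(\pi_\nu,\pi_\xi)$ be an order-isomorphism from $\Gamma^1$ to $\Gamma^2$.
\begin{enumerate}
\item Since $\Gamma^1$ and $\Gamma^2$ are ONFs, $t$ is a strictly increasing bijection from $\tau_1(E(G^1))=[k]$ to $\tau_2(E(G^2))=[\ell]$.
\item By finiteness $k=\ell$. An induction on $i$ then shows $t(i)=i$: $t(1)$ must be the minimum $1$, and so on.
\item Set $\beta:=\pi_\nu$, a bijection $V(\Gamma^1)\to V(\Gamma^2)$.
\item Fix $i$ and an edge $e=\{u,v\}\in E(G^1_i)$. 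The edge $\pi_\xi(e)$ has endpoints $\beta(u),\beta(v)$ and time $t(i)=i$, so $\{\beta(u),\beta(v)\}\in E(G^2_i)$.
\item Applying the same argument to the inverse order-isomorphism $(\pi^{-1},t^{-1})$ gives the reverse inclusion. Hence $\beta$ is a graph isomorphism $G^1_i\to G^2_i$.
\end{enumerate}

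For the backward direction, we are given $k=\ell$ and $\beta$, and we build $(\pi,t)$ as follows.
\begin{enumerate}
\item Take $t$ to be the identity on $[k]$, which is strictly increasing, and set $\pi_\nu:=\beta$.
\item For $e\in E(G^1)$ with endpoints $u,v$ and $\tau_1(e)=i$, we have $\{u,v\}\in E(G^1_i)$. Since $\beta$ is an isomorphism of the $i$-th layers, $\{\beta(u),\beta(v)\}\in E(G^2_i)$.
\item Define $\pi_\xi(e)$ to be the unique edge of $G^2$ with endpoints $\beta(u),\beta(v)$ and time $i$. It is unique because parallel edges of $\Gamma^2$ have distinct times.
\item Define the candidate inverse symmetrically using $\beta^{-1}$, which is an isomorphism $G^2_i\to G^1_i$ for each $i$. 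The two maps compose to the identity, again by the uniqueness of an edge given its endpoints and time.
\item Therefore $(\pi_\nu,\pi_\xi)$ is an isomorphism of the multigraphs $G^1$ and $G^2$, and $\tau_2(\pi_\xi(e))=i=t(\tau_1(e))$ holds by construction.
\end{enumerate}

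The only point requiring care is the multi-edge bookkeeping. In the forward direction we must check that $\pi_\xi$ restricted to the time-$i$ edges lands exactly on the time-$i$ edges. In the backward direction we must check that $\pi_\xi$ is well defined and bijective. Both reduce to the observation that an edge is identified by its endpoint pair and time, which is precisely the defining condition on parallel edges in temporal graphs. Everything else is routine unpacking of definitions.
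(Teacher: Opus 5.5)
Your proof is correct and is the routine unpacking of the definition of order-isomorphism that the paper leaves to the reader: it calls the fact ``easy to verify'' and gives no proof, only the introduction's remark about reassigning times $t_i \mapsto i$. The two points you isolate are the right ones. A strictly increasing bijection $[k]\to[\ell]$ forces $k=\ell$ and $t=\mathrm{id}$. Because parallel edges carry distinct times, an edge is determined by its endpoint pair and time, so $\pi_\xi$ is determined by $\beta$ and is well defined and bijective.
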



The final objects we need to introduce in this section are temporal patterns associated with ONFs.

\begin{definition}
    Let $\Gamma=(G_1,\dots,G_k)$ be an ONF. We define the temporal pattern $P[\Gamma]=(H,\R,\vartheta)$ as follows:
    \begin{enumerate}
        \item $H=(V(\Gamma),\bigcup_{i\in[k]}E(G_i))$.
        \item $R=([k],\leq)$, that is, $R$ is just the total order on the first $k$ natural numbers.
        \item $\forall i \in[k], \forall e \in E_i: \vartheta(e)=i$.
    \end{enumerate}
\end{definition}

\begin{lemma}\label{lem:inj_to_iso}
    Let $\Gamma^1=(G^1_1,\dots,G^1_k)$ and $\Gamma^2=(G^2_1,\dots,G^2_\ell)$ be ONFs such that $\#\mathsf{Inj}_{<}(P[\Gamma^1]\to \Gamma_2)>0$ and $\#\mathsf{Inj}_{<}(P[\Gamma^2]\to \Gamma_1)>0$. Then $\Gamma^1 \orderiso \Gamma^2$.
\end{lemma}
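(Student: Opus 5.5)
Take an injective strict homomorphism $\varphi_1=(\nu_1,\xi_1)$ from $P[\Gamma^1]$ to $\Gamma^2$ and an injective strict homomorphism $\varphi_2=(\nu_2,\xi_2)$ from $P[\Gamma^2]$ to $\Gamma^1$. Both exist by assumption. The poset of $P[\Gamma^1]$ is the total order $([k],\leq)$, so Lemma~\ref{lem:total_injectivity} makes $\xi_1$ injective as well, and likewise $\xi_2$. Hence $|V(\Gamma^1)|\le|V(\Gamma^2)|\le|V(\Gamma^1)|$ and $|E(\Gamma^1)|\le|E(\Gamma^2)|\le|E(\Gamma^1)|$. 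In particular $\nu_1$ and $\xi_1$ are bijections, and since vertex counts are equal, $\nu_1$ is a bijection $V(\Gamma^1)\to V(\Gamma^2)$.

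Next I control the time steps. Strictness says that $i<j$ forces every edge of $G^1_i$ to be mapped to an earlier time than every edge of $G^1_j$. Since $\Gamma^1$ is an ONF, each $G^1_i$ is nonempty. Choose $e_i\in E(G^1_i)$. Then $\tau_2(\xi_1(e_1))<\dots<\tau_2(\xi_1(e_k))$ are $k$ distinct values in $[\ell]$, so $k\le\ell$, and symmetrically $\ell\le k$. Hence $k=\ell$, and any strictly increasing map $[k]\to[k]$ is the identity. For each $i$ all edges of $G^1_i$ receive the same time, because $\vartheta$ is constant on $E(G^1_i)$ and equal constraints force equal times. Combining these, $\xi_1$ maps $E(G^1_i)$ into $E(G^2_i)$ for every $i$.

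Now I conclude via Fact~\ref{fact:compressed_isomorphism}. For each $i$, $\nu_1$ is an injective homomorphism from $G^1_i$ into $G^2_i$, so $|E(G^1_i)|\le|E(G^2_i)|$. Applying the same argument to $\varphi_2$ gives $|E(G^2_i)|\le|E(G^1_i)|$, so the edge counts agree for every $i$. A vertex bijection that maps edges of $G^1_i$ injectively onto edges of $G^2_i$, where both graphs are simple and have equal edge counts, sends non-edges to non-edges. So $\nu_1$ is an isomorphism $G^1_i\to G^2_i$ for every $i$, and Fact~\ref{fact:compressed_isomorphism} gives $\Gamma^1\orderiso\Gamma^2$.

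The main obstacle is the time-step step. One must show that strictness together with the ONF property (no empty time layer) forces the identity time map. Only then can the global edge and vertex counting be refined into a per-layer comparison.
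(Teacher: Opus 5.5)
Your proof is correct and follows the paper's argument. Take the injective strict homomorphism $(\nu_1,\xi_1)$, get edge-injectivity from Lemma~\ref{lem:total_injectivity}, show $\xi_1$ maps $E(G^1_i)$ into $E(G^2_i)$, use the reverse homomorphism to upgrade $\nu_1$ from a layerwise embedding to a layerwise isomorphism, and finish with Fact~\ref{fact:compressed_isomorphism}. You are in fact more explicit than the paper on the two points it glosses over: why $k=\ell$ and the induced time map is the identity, and the per-layer edge count and global vertex count that turn mutual embeddings into an isomorphism.
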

\begin{proof}
    Since the injective homomorphisms are strict, we obtain immediately that $k=\ell$. By Fact~\ref{fact:compressed_isomorphism} it suffices to construct a bijection $\beta:V(\Gamma^1)\to V(\Gamma^2)$ such that, for each $i \in [k]$, $\beta$ is a graph isomorphism from $G^1_i$ to $G^2_i$.

    Let $(\nu,\xi)\in  \mathsf{Inj}_{<}(P[\Gamma^1]\to \Gamma_2)>0$ and, for $i\in\{1,2\}$ write $(H^i,R^i,\vartheta^i)$ for $P[\Gamma^i]$. We claim that $\nu$ can be chosen as the desired function $\beta$. 
    
    Since $(\nu,\xi)$ is strict, and since $R^1=([k],\leq)$ and $k=\ell$, we have that $\xi(\vartheta^{-1}(i))\subseteq E(G_i)$ . Moreover, $\nu$ is injective by definition of $\mathsf{Inj}$, and since $R^1$ is a total order, we get by Lemma~\ref{lem:total_injectivity} that $\xi$ is injective as well. Consequently, $\nu$ induces a subgraph embedding from $H[E(G^1_i)]=G^1_i$ to $G^2_i$ for all $i\in [k]$. In particular, this implies that $G^1_i$ is a subgraph of $G^2_i$.
    With a symmetric argument, we obtain that $G^2_i$ is a subgraph of $G^1_i$.

    However, in combination, this implies that, for each $i\in[k]$, the function $\nu$ is a subgraph embedding from $G^1_i$ to $G^2_i$ \emph{and} $G^2_i$ is a subgraph of $G^1_i$. Hence $\nu$ must be an isomorphism, which concludes the proof.
\end{proof}

We are now able to prove the backwards direction of our isomorphism theorem.

\begin{lemma}\label{lem:iso_hard_direction}
     Two temporal graphs $\Gamma^1$ and $\Gamma^2$ are order-isomorphic if for all temporal patterns $P$ we have $\#\homs{P}{\Gamma^1}=\#\homs{P}{\Gamma^2}$.
\end{lemma}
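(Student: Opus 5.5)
We chain the preceding lemmas. First, by Lemma~\ref{lem:iso_easy_direction}, replacing each $\Gamma^i$ by an order-isomorphic ONF changes none of the homomorphism counts. Order-isomorphism is transitive, so it suffices to prove the claim when both $\Gamma^1=(G^1_1,\dots,G^1_k)$ and $\Gamma^2=(G^2_1,\dots,G^2_\ell)$ are ONFs.

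Next, the premise $\#\homs{P}{\Gamma^1}=\#\homs{P}{\Gamma^2}$ for all temporal patterns $P$ lets us apply Lemma~\ref{lem:strict_to_nostrict}. This gives $\#\stricthoms{P}{\Gamma^1}=\#\stricthoms{P}{\Gamma^2}$ for every temporal pattern $P$. For every partition $\rho$ of $V(H)$, the quotient $P/\rho$ is again a temporal pattern, so the equality applies to each $P/\rho$ as well. Lemma~\ref{lem:temp_hom_basis} expresses $\#\mathsf{Inj}_{<}(P\to\Gamma)$ as a fixed linear combination, with Möbius coefficients independent of $\Gamma$, of the numbers $\#\stricthoms{P/\rho}{\Gamma}$. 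Hence $\#\mathsf{Inj}_{<}(P\to\Gamma^1)=\#\mathsf{Inj}_{<}(P\to\Gamma^2)$ for all $P$.

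We then specialise to $P=P[\Gamma^1]$. The pair $(\mathrm{id}_{V},\mathrm{id}_{E})$ is a homomorphism from $P[\Gamma^1]$ to $\Gamma^1$ with injective vertex map. It is strict, because $\vartheta(e)<\vartheta(e')$ means exactly that $\tau(e)<\tau(e')$. So $\#\mathsf{Inj}_{<}(P[\Gamma^1]\to\Gamma^1)\geq 1$, and by the equality above $\#\mathsf{Inj}_{<}(P[\Gamma^1]\to\Gamma^2)\geq 1$. Symmetrically, $\#\mathsf{Inj}_{<}(P[\Gamma^2]\to\Gamma^1)\geq 1$. Lemma~\ref{lem:inj_to_iso} then yields $\Gamma^1\orderiso\Gamma^2$.

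The main point needing care is that $P[\Gamma^1]$ is a legitimate temporal pattern. Parallel edges of $\Gamma^1$ must lie in different snapshots $G^1_i$. This holds because parallel edges in a temporal graph have distinct times, so $\vartheta$ separates them, and $\vartheta$ is surjective onto $[k]$ since every time step of an ONF carries an edge. The other points to check are that Lemma~\ref{lem:temp_hom_basis} is applied uniformly to both graphs with the same partitions, and that the reduction to ONFs preserves all hypotheses. Beyond these checks, the argument is a direct composition of the preceding lemmas.
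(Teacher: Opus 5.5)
Your proposal is correct and follows the paper's proof essentially step for step. Both reduce to ONFs, use Lemma~\ref{lem:temp_hom_basis} together with Lemma~\ref{lem:strict_to_nostrict} to transfer $\#\mathsf{Inj}_{<}(P[\Gamma^1]\to\cdot)>0$ from $\Gamma^1$ to $\Gamma^2$ (and symmetrically), and conclude with Lemma~\ref{lem:inj_to_iso}; your check that $P[\Gamma^1]$ is a valid temporal pattern is a detail the paper leaves implicit.
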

\begin{proof}
    We can assume w.l.o.g.\ that $\Gamma^1$ and $\Gamma^2$ are ONFs as each temporal graph has an order-isomorphic ONF and the number of homomorphisms does not change under order-isomorphic temporal patterns by Lemma~\ref{lem:iso_easy_direction}.

    By Lemma~\ref{lem:inj_to_iso} it suffices to show that $\#\mathsf{Inj}_{<}(P[\Gamma^1]\to \Gamma_2)>0$ and $\#\mathsf{Inj}_{<}(P[\Gamma^2]\to \Gamma_1)>0$. We only show the first inequality, since the second one is fully symmetric. First, note that the identity mapping induces a strict injective homomorphism from $P[\Gamma^1]$ to $\Gamma^1$; therefore
    \begin{equation}\label{eq:iso_hard_helper}
        \mathsf{Inj}_{<}(P[\Gamma^1] \to \Gamma^1)\neq \emptyset 
    \end{equation}
    We hence have
    \begin{align}
         \#\mathsf{Inj}_{<}(P[\Gamma^1] \to \Gamma^2) &= \sum_\rho \mu(\rho) \cdot \#\stricthoms{P[\Gamma^1]/\rho}{\Gamma^2} \tag{Lemma~\ref{lem:temp_hom_basis}}\\
         ~&= \sum_\rho \mu(\rho) \cdot \#\stricthoms{P[\Gamma^1]/\rho}{\Gamma^1} \tag{Lemma~\ref{lem:strict_to_nostrict} \& premise of this lemma} \\
         ~&= \#\mathsf{Inj}_{<}(P[\Gamma^1] \to \Gamma^1) > 0 \tag{Lemma~\ref{lem:temp_hom_basis} \& (\ref{eq:iso_hard_helper})}\,.
    \end{align}
\end{proof}

\begin{proof}[Proof of Theorem~\ref{thm:lovasz_improved}]
    The if-direction if is proved in Lemma~\ref{lem:iso_hard_direction}, and the only-if-direction is proved in Lemma~\ref{lem:iso_easy_direction}.
\end{proof}

\section{An FPT Algorithm for Temporal Patterns of Bounded Toadwidth}

In this section, we prove our main algorithmic result, i.e., Theorem~\ref{thm:intro_main_algo}, which says that $\counttemphomsprob{(\ca{C})}$ is fixed-parameter tractable if $\ca{C}$ has bounded toadwidth. To do this, we design an involved dynamic programming algorithm along toadwidth.

The remainder of this section is dedicated to designing this algorithm, and we assume throughout this part that temporal patterns do not have parallel edges. We begin by introducing notation and proving a number of preparatory results. 

\subsection{Further Notation and Terminology}
 For $E' \subseteq E(H)$, we use $V(E')$ to denote the set of vertices of $H$ that are the endpoints of the edges in $E'$; that is, $V(E') = \set{v \in V(H) ~|~ \text{ there exists } e \in E' \text{ such that } e \text{ is incident with } v}$. For $E' \subseteq E(H)$, the subgraph of $H$ induced by $E'$, denoted by $H[E']$, is the subgraph of $H$ whose vertex set is $V(E')$ and edge set is $E'$. For $V' \subseteq V(H)$, we use $E(V')$ to denote the set of edges that are incident with a vertex in $V'$; that is, $E(V') = \set{e \in E(H) ~|~ \text{ there exist } v \in V' \text{ and } u \in V(H) \text{ such that } e = uv}$. For a graph $H$, we say that $X \subseteq V(H)$ is an \emph{independent vertex cover of $H$} if $X$ is an independent set and a vertex cover of $H$. 

\paragraph*{Viable graphs}

 We say that a simple, undirected graph $H$ is a \emph{star} if $H$ has no isolated vertices, and there exists a vertex $w \in V(H)$ such that every edge in $H$ is incident with $w$; we call such a vertex $w$ a \emph{centre} of $H$. Notice  that if $H$ is a star with at least two edges, then $H$ has a unique centre, and otherwise $H$ consists of a single edge, say edge $e = w_1 w_2$, and both $w_1$ and $w_2$ are centres of $H$. 

Let $H$ be a simple, undirected graph.  
 We say that $H$ is  \emph{viable} if $H$ has no isolated vertices, and $H$ has an independent vertex cover of size at most $2$. We say that $H$ is \emph{non-viable} if $H$ has no isolated vertices and $H$ is not viable. 
We now show that a viable graph $H$ has a unique independent vertex cover of size at most $2$, unless $H$ has at most two edges, or $H$ is a path or a cycle with exactly $4$ vertices, or $H$ has a connected component consisting of a single edge. See Figures~\ref{fig:viable-unique-vc} and~\ref{fig:viable-non-unique-vc}. 
\begin{figure}
    \centering
    \begin{tikzpicture}[scale=0.5, every node/.style={circle, minimum size=2.5mm}, every edge quotes/.style = {auto, font=\footnotesize, sloped}]

\node[minimum size=6mm, draw,fill=gray!30] (u) at (0.0, 9.0) {};
\node[draw, fill=gray!30] (x2) at (2.0, 10.0) {};
\node[draw, fill=gray!30] (x3) at (2.0, 9.0) {};
\node[draw, fill=gray!30] (x4) at (2.0, 8.0) {};

\draw[thick] (u) to (x4);
\draw[thick] (u) to (x3);
\draw[thick] (u) to (x2);

\node[] (H1) at (1.0, 5.7) {$H_1$};

\node[minimum size=6mm, draw, fill=gray!30] (u) at (5.0, 10.0) {};
\node[draw, fill=gray!30] (x1) at (7.0, 11.0) {};
\node[draw, fill=gray!30] (x2) at (7.0, 9.0) {};
\node[minimum size=6mm, draw, fill=gray!30] (v) at (5.0, 8.0) {};
\node[draw, fill=gray!30] (x3) at (7.0, 7.0) {};

\draw[thick] (v) to (x3);
\draw[thick] (v) to (x2);
\draw[thick] (u) to (x2);
\draw[thick] (u) to (x1);

\node[] (H2) at (6.0, 5.7) {$H_2$};

\node[minimum size=6mm, draw, fill=gray!30] (u) at (10.0, 10.5) {};
\node[draw, fill=gray!30] (x1) at (12.0, 11.0) {};
\node[draw, fill=gray!30] (x3) at (12.0, 10.0) {};

\draw[thick] (u) to (x3);
\draw[thick] (u) to (x1);

\node[minimum size=6mm, draw, fill=gray!30] (v) at (10.0, 7.5) {};
\node[draw, fill=gray!30] (y1) at (12.0, 8.0) {};
\node[draw, fill=gray!30] (y3) at (12.0, 7.0) {};

\draw[thick] (v) to (y3);
\draw[thick] (v) to (y1);

\node[] (H3) at (11.0, 5.7) {$H_3$};


\node[minimum size=6mm, draw, fill=gray!30] (u) at (16.0, 11.0) {};

\node[draw, fill=gray!30] (p1) at (15.0, 9.0) {};
\node[draw, fill=gray!30] (p2) at (16.0, 9.0) {};
\node[draw, fill=gray!30] (p3) at (17.0, 9.0) {};

\node[minimum size=6mm, draw, fill=gray!30] (v) at (16.0, 7.0) {};

\draw[thick] (u) to (p1);
\draw[thick] (u) to (p2);
\draw[thick] (u) to (p3);

\draw[thick] (v) to (p1);
\draw[thick] (v) to (p2);
\draw[thick] (v) to (p3);

\node[] (H4) at (16.0, 5.7) {$H_4$};


\node[draw, fill=gray!30] (u1) at (20.5, 13.0) {};
\node[minimum size=6mm, draw, fill=gray!30] (u2) at (20.5, 11.0) {};

\node[draw, fill=gray!30] (p1) at (20.0, 9.0) {};
\node[draw, fill=gray!30] (p2) at (21.0, 9.0) {};

\node[minimum size=6mm, draw, fill=gray!30] (v) at (20.5, 7.0) {};

\draw[thick] (u1) to (u2);
\draw[thick] (u2) to (p1);
\draw[thick] (u2) to (p2);
\draw[thick] (v) to (p1);
\draw[thick] (v) to (p2);

\node[] (H5) at (20.5, 5.7) {$H_5$};


\node[draw, fill=gray!30] (x1) at (24.0, 13.0) {};
\node[draw, fill=gray!30] (x2) at (25.0, 13.0) {};
\node[draw, fill=gray!30] (x3) at (26.0, 13.0) {};
\node[minimum size=6mm, draw, fill=gray!30] (u) at (25.0, 11.0) {};

\node[draw, fill=gray!30] (p1) at (25.0, 9.0) {};

\node[minimum size=6mm, draw, fill=gray!30] (v) at (25.0, 7.0) {};

\draw[thick] (u) to (x1);
\draw[thick] (u) to (x2);
\draw[thick] (u) to (x3);
\draw[thick] (u) to (p1);
\draw[thick] (v) to (p1);

\node[] (H6) at (25.0, 5.7) {$H_6$};


\node[draw, fill=gray!30] (u1) at (-2.0, 3.0) {};
\node[minimum size=6mm, draw, fill=gray!30] (u2) at (0.0, 3.0) {};

\node[draw, fill=gray!30] (p1) at (2.0, 3.5) {};
\node[draw, fill=gray!30] (p2) at (2.0, 2.5) {};

\node[minimum size=6mm, draw, fill=gray!30] (v1) at (4.0, 3.0) {};
\node[draw, fill=gray!30] (v2) at (6.0, 3.0) {};

\draw[thick] (u1) to (u2);
\draw[thick] (u2) to (p1);
\draw[thick] (u2) to (p2);
\draw[thick] (v1) to (p1);
\draw[thick] (v1) to (p2);
\draw[thick] (v1) to (v2);

\node[] (H7) at (2.0, 1.5) {$H_7$};


\node[minimum size=6mm, draw, fill=gray!30] (u) at (11.0, 3.0) {};
\node[draw, fill=gray!30] (x1) at (9.0, 3.5) {};
\node[draw, fill=gray!30] (x2) at (9.0, 2.5) {};

\node[draw, fill=gray!30] (p1) at (13.0, 3.0) {};
\node[minimum size=6mm, draw, fill=gray!30] (v1) at (15.0, 3.0) {};
\node[draw, fill=gray!30] (v2) at (17.0, 3.0) {};

\draw[thick] (u) to (x2);
\draw[thick] (u) to (x1);

\draw[thick] (u) to (p1);
\draw[thick] (v1) to (p1);
\draw[thick] (v1) to (v2);

\node[] (H8) at (13.0, 1.5) {$H_8$};


\node[minimum size=6mm, draw, fill=gray!30] (u) at (22.0, 3.0) {};
\node[draw, fill=gray!30] (x1) at (20.0, 3.5) {};
\node[draw, fill=gray!30] (x2) at (20.0, 2.5) {};

\node[draw, fill=gray!30] (p1) at (24.0, 3.0) {};
\node[minimum size=6mm, draw, fill=gray!30] (v) at (26.0, 3.0) {};
\node[draw, fill=gray!30] (y1) at (28.0, 4.0) {};
\node[draw, fill=gray!30] (y2) at (28.0, 3.0) {};
\node[draw, fill=gray!30] (y3) at (28.0, 2.0) {};

\draw[thick] (u) to (x2);
\draw[thick] (u) to (x1);

\draw[thick] (u) to (p1);
\draw[thick] (v) to (p1);

\draw[thick] (v) to (y1);
\draw[thick] (v) to (y2);
\draw[thick] (v) to (y3);
\node[] (H9) at (24.0, 1.5) {$H_9$};
\end{tikzpicture}
    \caption{Examples of viable graphs with a unique independent vertex cover of size at most 2. In graphs $H_2, H_3,\ldots, H_9$, the two ``big''vertices constitute the unique independent vertex cover of size at most 2. The graph $H_1$ is a star and has an independent vertex cover of size 1, which is also its unique independent vertex cover of size at most 2.}
    \label{fig:viable-unique-vc}
\end{figure}

\begin{lemma}\label{lem:viable-ind-vc}
    Let $H$ be a viable graph. Then $H$ is covered by one of the following four cases: 
    \begin{description} 
        \item[(Viable-1).] $H$ has at most 2 edges; 
        \item[(Viable-2).] $H$ is a path or a cycle on exactly $4$ vertices; 
        \item[(Viable-3).] $H$ has a connected component with a single edge; 
        \item[(Viable-4).] $H$ has a unique independent vertex cover of size at most 2. 
    \end{description}
\end{lemma}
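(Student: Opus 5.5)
We argue by contraposition: assume $H$ is viable and falls under none of (Viable-1)--(Viable-3). Then $H$ has at least three edges, is neither $P_4$ nor $C_4$, and has no connected component consisting of a single edge. We show that $H$ has exactly one independent vertex cover of size at most $2$, which is (Viable-4). Viability gives at least one such cover. So suppose $X$ and $Y$ are two distinct independent vertex covers of size at most $2$; we derive a contradiction.

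\textbf{Step 1 (structure from one cover).} Every edge has at least one endpoint in $X$, because $X$ is a vertex cover. No edge has both endpoints in $X$, because $X$ is independent. Hence $H$ is bipartite with parts $X$ and $V(H)\setminus X$, since $H$ has no isolated vertices. Each connected component of $H$ meets $X$, so $H$ has at most two components. The same holds for $Y$.

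\textbf{Step 2 (components and the choice of cover).} Let $C$ be a connected component. Then $X\cap V(C)$ is an independent vertex cover of the connected bipartite graph $C$. A connected bipartite graph has a unique bipartition, and an independent vertex cover of it must be exactly one of its two sides. Thus $X\cap V(C)$ and $Y\cap V(C)$ are each a full side of $C$. Since $X\ne Y$, there is a component $C$ on which $X$ and $Y$ pick opposite sides. Both sides of $C$ therefore have size at most $2$, so $C$ is a connected bipartite graph with sides of sizes $(a,b)$, where $a,b\le 2$. Such a $C$ is one of the following: a single edge ($a=b=1$), which is excluded by (Viable-3); a path on $3$ vertices ($a=1,b=2$); $P_4$ or $C_4$ ($a=b=2$).

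\textbf{Step 3 (case analysis).} Suppose $C$ is $P_4$ or $C_4$. Then $C$ already uses two cover vertices from both $X$ and $Y$, so $C$ is the only component. Hence $H=C$, which contradicts the exclusion of (Viable-2). Suppose instead $C$ is a path on $3$ vertices. It has two edges, so by the exclusion of (Viable-1) there must be another component $C'$. Now $X$ (or $Y$) has two vertices in $C$, namely the side of size $2$, and also needs at least one vertex in $C'$. This makes its size at least $3$, a contradiction.

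The only delicate point is the counting in Step 3, which relies on the observation from Step 1 that each component needs its own cover vertex. Everything else is routine: an independent vertex cover of a connected graph is one bipartition side.
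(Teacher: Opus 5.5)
Your proof is correct, and it takes a genuinely different route from the paper. The paper first handles the star case on its own. It then fixes a single independent vertex cover $\{w_1,w_2\}$ and splits the remaining vertices into $N_1$, $N_2$, $N_{12}$, according to whether their neighbourhood is $\{w_1\}$, $\{w_2\}$ or $\{w_1,w_2\}$. A case analysis on which of these sets are empty then forces $\{w_1,w_2\}$ into every small cover, using ad hoc arguments: a vertex of degree at least $3$ lies in every vertex cover of size at most $2$, and an induced $x_1w_1yw_2x_2$ path forces both $w_1$ and $w_2$. You instead compare two hypothetical covers $X\neq Y$ and use one structural fact: an independent vertex cover of a connected graph with an edge is exactly one side of its unique bipartition. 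This immediately yields a component in which both sides have size at most $2$. The only candidates are then $K_2$, $P_3$, $P_4$ and $C_4$, and each is dispatched by a short count using that every component needs its own cover vertex. Your argument is shorter and treats stars uniformly, since a star with at least three leaves simply has a side that is too large. It also explains why exactly these exceptional graphs appear. The paper's explicit $N_1/N_2/N_{12}$ description is closer in spirit to the case-by-case computation of the centre-set $W(H)$ in Corollary~\ref{cor:centre-set}, which this lemma feeds into. One cosmetic point: in Step 1, the hypothesis that $H$ has no isolated vertices is what makes every component meet $X$; bipartiteness does not need it.
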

\begin{proof}
     First, since $H$ is viable, $H$ has no isolated vertices, and therefore $H$ contains at least one edge, and consequently, every vertex cover of $H$ must have size at least $1$. Suppose first that $H$ has a vertex cover of size exactly $1$, i.e., $H$ is a star. Now, if $H$ has at most $2$ edges, then $H$ is covered by case Viable-1. Otherwise, $H$ is a star with at least $3$ edges, and therefore $H$ has a unique centre, say $w \in V(H)$; then $\set{w}$ is the unique independent vertex cover of size at most 2. 
     
     So assume from now on that $H$ has no vertex cover of size exactly 1. Since $H$ is viable, $H$ has an independent vertex cover of size at most 2; let $\set{w_1, w_2} \subseteq V(H)$ be such a vertex cover. As $H$ has no isolated vertices, every vertex of $V(H) \setminus \set{w_1, w_2}$ is adjacent to at least one (and possibly both) of $w_1$ and $w_2$. Notice also that as $V(H) \setminus \set{w_1, w_2}$ is an independent set, for every vertex $x \in V(H) \setminus \set{w_1, w_2}$, we have $N(x) \subseteq \set{w_1, w_2}$; in particular, we have $N(x) = \set{w_1}$ or $N(x) = \set{w_2}$ or $N(x) = \set{w_1, w_2}$. For $i \in \set{1, 2}$, let $N_i$ be the set of all vertices $x$ in $V(H) \setminus \set{w_1, w_2}$ with $N(x) = \set{w_i}$, and  let $N_{12}$ be the set of all vertices in $V(H) \setminus \set{w_1, w_2}$ with $N(x) = \set{w_1, w_2}$. 
     We now split the analysis into cases depending on which of the three sets $N_1, N_2$ and $N_{12}$ are empty. Notice that by definition, the sets $N_1, N_2$ and $N_{12}$ are pairwise-disjoint. 

     Suppose first that $N_{12} = \emptyset$. Notice then that $H$ has exactly two connected components---one containing $w_1$ and the other containing $w_2$, and both components are stars. If $\card{N(w_1)} = 1$, then the component containing $w_1$ is simply an edge, and in this case $H_1$ is covered by case Viable-3. Identical reasoning applies if $\card{N(w_2)} = 1$. So assume that $\card{N(w_1)} > 1$ and $\card{N(w_2)} > 1$. Then every vertex cover of size at most $2$ must contain both $w_1$ and $w_2$, and we can again conclude that $H$ has a unique independent vertex cover of size at most $2$. 

     Suppose now that $N_{12} \neq \emptyset$. Now consider the following three sub-cases. 
     \begin{itemize}
         \item Both $N_{1}$ and $N_{2}$ are empty. Now if $\card{N_{12}} = 1$, say $N_{12} = \set{x}$, then $H$ has exactly two edges, $w_1 x$ and $w_2 x$, and so $H$ is covered by case Viable-1. If $\card{N_{12}} = 2$, say $N_{12} = \set{x, y}$, then $E(H) = \set{w_1 x, w_2 x, w_1 y, w_2 y}$; thus $H$ is a cycle on 4 vertices, and hence $H$ is covered by case Viable-2. If $\card{N_{12}} \geq 3$, then both $w_1$ and $w_2$ have degree at least 3, and consequently every vertex cover of size at most 2 must contain both $w_1$ and $w_2$. So we can again conclude that $\set{w_1, w_2}$ is the unique independent vertex cover of size at most $2$. 
         \item Both $N_{1}$ and $N_{2}$ are non-empty; fix $x_1 \in N_1$, $x_2 \in N_2$ and $y \in N_{12}$. Notice then  that $x_1 w_1 y w_2 x_2$ is a path on 5 vertices, and in this case, every vertex cover of size at most 2 must contain both $w_1$ and $w_2$. We can therefore conclude that $\set{w_1, w_2}$ is the unique independent vertex cover of size at most 2. 
         \item Exactly one of $N_{1}$ and $N_{2}$ is non-empty. Assume without loss of generality that $N_1 \neq \emptyset$ and $N_2 = \emptyset$; the other case is symmetric. Recall that we are under the assumption that $\card{N_{12}} \neq \emptyset$;  fix $y \in N_{12}$.   
         \begin{itemize}
              \item Suppose that either $\card{N_1} \geq 2$ or $\card{N_{12}} \geq 2$.  Then the vertex $w_1$ has degree at least 3 (notice that $N(w_1) = N_1 \cup N_{12}$), and therefore $w_1$ must belong to every vertex cover of size at most $2$. Hence $y$ cannot belong to any independent vertex cover of size at most 2 (as $w_1 y \in E(G)$), which implies that $w_2$ must belong to every independent vertex cover of size at most $2$ (to cover the edge $y w_2$). We have thus shown that every independent vertex cover of size at most $2$  must contain both $w_1$ and $w_2$, and hence $\set{w_1, w_2}$ is the unique independent vertex cover of size at most 2.  
              \item The only remaning case is when $\card{N_1} = 1$ and $\card{N_{12}} = 1$, say $N_1 = \set{x}$ (and $N_{12} = \set{y}$). Then   $E(H) = \set{x w_1, w_1 y, y w_2}$; thus $H$ is a path on 4 vertices, and hence covered by case Viable-2. \qedhere
         \end{itemize}
     \end{itemize}
\end{proof}

\begin{figure}
    \centering
    \begin{tikzpicture}[scale=0.5, every node/.style={circle, minimum size=2.5mm}, every edge quotes/.style = {auto, font=\footnotesize, sloped}]

\node[minimum size=6mm, draw, fill=gray!30] (r1) at (-10.5, 10.0) {$r_1$};
\node[minimum size=6mm, draw, fill=gray!30] (r2) at (-10.5, 7.0) {$r_2$};
\draw[thick] (r1) to (r2);
\node[] (H10) at (-10.5, 5.7) {$H_{10}$};
\node[minimum size=6mm, draw, fill=gray!30] (s1) at (-7.5, 8.5) {$s_1$};
\node[minimum size=6mm, draw, fill=gray!30] (s2) at (-5.0, 9.5) {$s_2$};
\node[minimum size=6mm, draw, fill=gray!30] (s3) at (-5.0, 7.5) {$s_3$};

\draw[thick] (s1) to (s2);
\draw[thick] (s1) to (s3);

\node[] (H12) at (-6.0, 5.7) {$H_{11}$};

\node[minimum size=6mm, draw, fill=gray!30] (t1) at (-1.5, 10.0) {$t_1$};
\node[minimum size=6mm, draw, fill=gray!30] (t3) at (1.0, 10.0) {$t_3$};
\node[minimum size=6mm, draw, fill=gray!30] (t2) at (-1.5, 7.0) {$t_2$};
\node[minimum size=6mm, draw, fill=gray!30] (t4) at (1.0, 7.0) {$t_4$};

\draw[thick] (t1) to (t2);
\draw[thick] (t3) to (t4);

\node[] (H12) at (0.0, 5.7) {$H_{12}$};
\node[minimum size=6mm, draw, fill=gray!30] (u) at (4.5, 10.0) {$u_1$};
\node[draw, fill=gray!30] (x2) at (7.0, 10.5) {};
\node[draw, fill=gray!30] (x3) at (7.0, 9.5) {};

\draw[thick] (u) to (x3);
\draw[thick] (u) to (x2);

\node[minimum size=6mm, draw, fill=gray!30] (v) at (4.5, 7.0) {$u_2$};
\node[minimum size=6mm, draw, fill=gray!30] (y1) at (7.0, 7.0) {$u_3$};

\draw[thick] (v) to (y1);

\node[] (H13) at (5.75, 5.7) {$H_{13}$};

\node[minimum size=6mm, draw, fill=gray!30] (v1) at (11.0, 12.0) {$v_1$};

\node[minimum size=6mm, draw, fill=gray!30] (v2) at (9.5, 9.5) {$v_2$};
\node[minimum size=6mm, draw, fill=gray!30] (v4) at (12.5, 9.5) {$v_4$};

\node[minimum size=6mm, draw, fill=gray!30] (v3) at (11.0, 7.0) {$v_3$};

\draw[thick] (v1) to (v2);
\draw[thick] (v2) to (v3);
\draw[thick] (v3) to (v4);

\node[] (H14) at (11, 5.7) {$H_{14}$};


\node[minimum size=6mm, draw, fill=gray!30] (u) at (17.0, 12.0) {$w_1$};

\node[minimum size=6mm, draw, fill=gray!30] (p1) at (15.5, 9.5) {$w_2$};
\node[minimum size=6mm, draw, fill=gray!30] (p3) at (18.5, 9.5) {$w_4$};

\node[minimum size=6mm, draw, fill=gray!30] (v) at (17.0, 7.0) {$w_3$};

\draw[thick] (u) to (p1);
\draw[thick] (u) to (p3);

\draw[thick] (v) to (p1);
\draw[thick] (v) to (p3);

\node[] (H15) at (17.0, 5.7) {$H_{15}$};

\end{tikzpicture}
    \caption{Examples of viable graphs with more than one independent vertex cover of size 2. Each of them has  at least two independent vertex covers of size 2. In $H_{10}$, both $\set{r_1}$ and $\set{r_2}$ are independent vertex covers of size at most 2. In $H_{11}$, both $\set{s_1}$ and $\set{s_2, s_3}$ are independent vertex covers of size at most 2; In $H_{12}$, $\set{t_1, t_3}$, $\set{t_1, t_4}$, $\set{t_2, t_3}$ and $\set{t_2, t_4}$ are all  independent vertex covers of size 2. In $H_{13}$, both $\set{u_1, u_2}$ and $\set{u_1, u_3}$ are independent vertex covers of size 2; in $H_{14}$, both $\set{v_1, v_3}$ and $\set{v_2, v_4}$ are independent vertex covers of size 2; in $H_{15}$, both $\set{w_1, w_3}$ and $\set{w_2, w_4}$ are independent vertex covers of size 2.}
    \label{fig:viable-non-unique-vc}
\end{figure}


\subparagraph*{Centre-set of a viable graph.} Consider a viable graph $H$. The \emph{centre-set} of $H$, denoted by $W(H)$, is the set of all vertices in $H$ that belong to an independent vertex cover of size at most $2$; that is,
\[
W(H) = \set{w \in V(H) ~ \Big\rvert ~ 
\begin{matrix}
    \text{ there exists } X \subseteq V(H) \text{ such that } X \text{ is an independent vertex cover of }H, \\
    ~\card{X} \leq 2, \text{ and } w \in X
\end{matrix} }.
\]
Naturally, if $H$ has a unique independent vertex cover of size at most 2, then the centre-set is precisely this independent vertex cover. However, there is some variability for the corner cases (Viable-1 to Viable-4) of the previous lemma. For technical reasons, we provide explicitly the properties of the centre-sets in the following result. 
\begin{corollary}\label{cor:centre-set}
    Consider a viable graph $H$. Then  $\card{W(H)} \leq 4$. In particular, the following assertions are true. 
    \begin{description}
        \item[Case 1.] If $H$ has at most two edges, then $\card{V(H)} \leq 4$ and $W(H) = V(H)$. 
        
        \item[Case 2.] If $H$ is a path or a cycle on exactly 4 vertices, then $W(H) = V(H)$.
        
        \item[Case 3.] If $H$ has  at least 3 edges, and $H$ has a connected component $C$ with a single edge $e = uv$,  then $H$ has exactly one other component $C'$, and  $C'$ is star with at least two edges. Moreover, $W(H) = \set{u, v, w}$, where $w$ is the unique centre of $C'$. 
        
        \item[Case 4.] Suppose $H$ is a star (or equivalently if $H$ has a vertex cover of size exactly $1$) with at least $ 3$ edges, then  $W(H)$ is the singleton set consisting of the unique centre of $H$. 
        
        \item[Case 5.] If $H$ is not covered by any of the three previous cases, then $H$ has no vertex cover of size $1$, and $H$ has a unique independent vertex cover $X$ of size exactly 2, and consequently $W(H) = X$.  
    \end{description}
\end{corollary}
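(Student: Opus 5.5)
The plan is to derive each case from the structural analysis already carried out in Lemma~\ref{lem:viable-ind-vc}, together with a direct description of the independent vertex covers of size at most $2$ in each small situation. The bound $\card{W(H)} \leq 4$ then follows by inspecting the cases: in Cases 1 and 2 we have $\card{V(H)}\le 4$; in Case 3 the centre-set has three elements; in Case 4 it has one; and in Case 5 it has two. It therefore suffices to establish the five assertions, and to check that Cases 1--5 exhaust all viable graphs, which mirrors the case structure of Lemma~\ref{lem:viable-ind-vc}.

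\textbf{Case 1.} If $H$ has at most two edges and no isolated vertices, then $\card{V(H)}\le 4$. We show that every vertex lies in some independent vertex cover of size at most $2$ by listing the possibilities.
\begin{itemize}
\item A single edge $uv$: both $\{u\}$ and $\{v\}$ are covers.
\item A path $xwy$: we have the covers $\{w\}$ and $\{x,y\}$, and $\{x,y\}$ is independent.
\item Two disjoint edges $ab$ and $cd$: each of the four sets obtained by choosing one endpoint from each edge is an independent cover.
\end{itemize}

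\textbf{Case 2.} For the path $v_1v_2v_3v_4$ the covers are $\{v_1,v_3\}$ and $\{v_2,v_4\}$. For the $4$-cycle the two colour classes are the covers. In both situations every vertex appears, so $W(H)=V(H)$.

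\textbf{Case 3.} Let $C=\{uv\}$ be a single-edge component. Any independent vertex cover of size at most $2$ spends exactly one vertex on $C$, so the remainder $H-C$ must be covered by a single vertex. Since $H$ has no isolated vertices, $H-C$ is therefore a star, and it has at least two edges because $H$ has at least $3$ edges. This star has a unique centre $w$, which is the only vertex covering it. The independent covers are thus exactly $\{u,w\}$ and $\{v,w\}$, giving $W(H)=\{u,v,w\}$.

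\textbf{Case 4.} Let $H$ be a star with centre $w$ and at least $3$ edges. The set $\{w\}$ is a cover. Any cover avoiding $w$ must contain all $\ge 3$ leaves, so its size exceeds $2$. Any cover of size $2$ containing $w$ has the form $\{w,x\}$ with $x$ a leaf adjacent to $w$, and this set is not independent. Hence $W(H)=\{w\}$.

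\textbf{Case 5.} This case needs the most care, although it is not deep. Suppose $H$ falls outside Cases 1--4. Then $H$ has no vertex cover of size $1$, since a graph with such a cover is a star and is already handled by Case 1 (at most $2$ edges) or Case 4. The graph $H$ also has at least $3$ edges, is not $P_4$ or $C_4$, and has no single-edge component, because with at least $3$ edges that situation is Case 3. Lemma~\ref{lem:viable-ind-vc} then puts $H$ under Viable-4, so there is a unique independent vertex cover $X$ of size at most $2$. Since no cover of size $1$ exists, $\card{X}=2$ and $W(H)=X$. The one point to verify is that Viable-1 to Viable-3 in the lemma correspond exactly to Cases 1--3 here, in particular that the situation ``single-edge component and at most two edges'' is absorbed by Case 1. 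This holds by the way the cases are ordered.
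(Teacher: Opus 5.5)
Your proposal is correct and follows essentially the same route as the paper. You verify Cases 1--4 by directly listing the independent vertex covers of size at most $2$, and you get Case 5 from Lemma~\ref{lem:viable-ind-vc} after noting that Cases 1--4 absorb Viable-1 to Viable-3 and every graph with a vertex cover of size $1$. Your Case 4 is also slightly more careful than the paper's: the paper asserts that $\{w'\}$ is the only vertex cover of size at most $2$, whereas you correctly point out that covers $\{w,x\}$ exist but are not independent.
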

\begin{proof}
        Observe first that the five cases in the statement of the lemma together show that $\card{W(H)} \leq 4$. Let us now see that each of the cases is indeed true. 

        Suppose first that $H$ has at most two edges; recall that as $H$ is viable, $H$ has no isolated vertices, and hence $H$ has either exactly one edge or exactly two edges. Suppose $H$ has exactly one edge, say $E(H) = \set{uv}$, then $V(H) = \set{u, v}$ and both $\set{u}$ and $\set{v}$ are independent vertex covers of size at most $2$, and hence $W(H) = V(H) = \set{u, v}$. On the other hand, suppose $H$ has exactly two edges, say $E(H) = \set{e, e'}$. Then there are two cases: either $e$ and $e'$ share an endpoint or they do not. In the former case, say $e = uv$ and $e' = u v'$, we have $V(H) = \set{u, v, v'}$, and the sets $\set{u}$ and $\set{v, v'}$ are both independent vertex covers of size at most $2$, and consequently $W(H) = \set{u, v, v'}$. In the latter case, say $e = uv$ and $e' = u'v'$, where $u, v, u', v'$ are all distinct, we have  $V(H) = \set{u, v, u', v'}$, and the sets $\set{u, u'}$ and $\set{v, v'}$ are both independent vertex covers of size at most $2$, and consequently $W(H) = \set{u, v, u', v'}$.  
        
        If $H$ is path on 4 vertices, say $V(H) = \set{v_1, v_2, v_3, v_4}$ and $E(H) = \set{v_1 v_2, v_2 v_3, v_3 v_4}$, then  notice that both $\set{v_1, v_3}$ and $\set{v_2, v_4}$ are independent vertex covers of $H$, and consequently, $v_1, v_2, v_3, v_4 \in W(H)$. Similarly, if $H$ is a cycle on exactly $4$ vertices, with say, $V(H) = \set{w_1, w_2, w_3, w_4}$ and $E(H) = \set{w_1 w_2, w_2 w_3, w_3 w_4, w_4 w_1}$, then again, $\set{w_1, w_3}$ and $\set{w_2, w_4}$ are independent vertex covers of $H$, and consequently $w_1, w_2, w_3, w_4 \in W(H)$. 

    Suppose now that $H$ has at least 3 edges, and $H$ has a  connected component $C$ with a single edge $e = uv$. First of all, observe that $H$ has exactly two connected components. To see this, notice that as $H$ has no isolated vertex, each connected component of $H$ contains an edge, and therefore every vertex cover must contain at least one vertex from each connected component. And as $H$ is viable and therefore has a vertex cover of size at most 2, we can conclude that $H$ has at most two connected components. But as $H$  has at least 3 edges, $H$ must have a component $C' \neq C$ with at least two edges. Thus $C$ and $C'$ are the only two connected components of $H$. And notice that every vertex cover of $H$ of size 2 must contain exactly one vertex from $C$ and exactly one vertex from $C'$. Therefore $C'$ has a vertex cover of size exactly $1$, or equivalently $C'$ is a star. As $C'$ is a star with at least two edges, $C'$ has a unique centre $w$, and notice that $\set{w}$ is the only vertex cover of $C'$  of size 1. Notice now that both $\set{u, w}$ and $\set{v, w}$ are independent vertex covers of $H$, and they are the only independent vertex covers of $H$ of size 2. Thus $W(H) = \set{u, v, w}$. 

    Suppose now that $H$ is a star with at least 3 edges. Then $H$ has a unique centre; let $w' \in V(H)$ be the unique centre. First of all, $\set{w'}$ is an independent vertex cover of $H$ of size at most 2, and hence $w' \in W(H)$. Now, as $w'$ has at least 3 edges incident with it, we have $\card{N(w')} \geq 3$, and therefore, any vertex cover not containing $w'$ must contain $N(w')$, and consequently must have size at least 3. Thus $\set{w'}$ is the only vertex cover of $H$ of size at most 2. Hence $W(H) = \set{w'}$. 

    Suppose now that $H$ is not covered by any of the three cases considered above. Since Cases 1, 2 and 3 do not apply,  by  Lemma~\ref{lem:viable-ind-vc}, $H$ has a unique independent vertex cover $X$ of size at most 2. And since Case 4, does not apply, $H$ does not have a vertex cover of size $1$, and therefore we can conclude that $H$ has a unique independent vertex cover $X$ of size exactly 2. Then by the definition of $W(H)$, we have $W(H) = X$.
\end{proof}

\begin{observation}\label{obs:non-viable-large} 
    Observe that a graph without isolated vertices and with at most two edges is viable.    
    And recall that by definition, a non-viable graph has no isolated vertices. Every non-viable graph, therefore, has at least three edges. 
\end{observation}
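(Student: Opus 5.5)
The statement is Observation~\ref{obs:non-viable-large}: every non-viable graph has at least three edges. I would prove it directly from the definitions, in two steps.

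First, I show that a graph $H$ with no isolated vertices and at most two edges is viable. It suffices to exhibit an independent vertex cover of size at most $2$. Since $H$ has no isolated vertices, every vertex is an endpoint of some edge, so $H$ has at least one edge. I would split into three exhaustive cases.
\begin{itemize}
\item If $E(H)=\set{uv}$, then $V(H)=\set{u,v}$ and $\set{u}$ is an independent vertex cover.
\item If $E(H)=\set{uv,uv'}$ with $v\neq v'$, then $\set{u}$ covers both edges and is trivially independent.
\item If $E(H)=\set{uv,u'v'}$ with the four endpoints distinct, then $\set{u,u'}$ covers both edges. It is independent because the only edges are $uv$ and $u'v'$, and neither joins $u$ to $u'$.
\end{itemize}
The graphs are simple, so two edges either share exactly one endpoint or are disjoint, and these cases are exhaustive. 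This is exactly the content of Case~1 of Corollary~\ref{cor:centre-set}, so one may alternatively cite that case.

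Second, I take the contrapositive. By definition, a non-viable graph has no isolated vertices and is not viable. If it had at most two edges, the first step would make it viable, which is a contradiction. Hence it has at least three edges.

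There is no real obstacle here. The only point needing care is the case split on how two edges can intersect, and the observation that the chosen cover is independent in each case.
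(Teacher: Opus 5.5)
Your proof is correct and is essentially the paper's own argument: the observation rests on exactly your three-case enumeration (one edge, two edges sharing an endpoint, two disjoint edges), which the paper carries out in Case~1 of the proof of Corollary~\ref{cor:centre-set}, followed by the same contrapositive. One harmless nit: the step ``no isolated vertices implies at least one edge'' fails for the graph with no vertices, but that graph is trivially viable via the empty cover, and the paper tacitly makes the same assumption.
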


\subsubsection{Cliquewidth of a mixed graph}
\begin{definition}[Mixed graph]
    A mixed graph is a graph that may contain both undirected and directed edges. For convenience, we call the undirected edges simply ''edges'' and the directed edges ''arcs.'' For a mixed graph $M$, we use $E(M)$ to denote the set of edges of $M$, and $A(M)$ to denote the set of arcs of $M$. For vertices $u, v \in V(M)$, we denote the undirected edge between $u$ and $v$ by $uv$ or $\set{u, v}$, and we denote the arc directed from $u$ to $v$ by $(u, v)$. 
\end{definition}

By a labelled mixed graph, we mean a mixed graph $M$ equipped with a labeling function $\fn{\mu}{V(M)}{\mathbb{N}}$; we may not always mention $\mu$ explicitly and instead simply write that $M$ is a labelled mixed graph. For $\kw \in \mathbb{N}$, we say that a labelled mixed graph $(M, \mu)$ is $\kw$-labelled if $\mu(v) \leq \kw$ for every $v \in V(M)$. 
As a shorthand, we write $\mu(M)$ to denote the set $\mu(V(M)) = \set{\mu(v) ~|~ v \in V(M)}$. 
Let $(M_1, \mu_1)$ and $(M_2, \mu_2)$ be two labelled mixed graphs such that $V(M_1) \cap V(M_2) = \emptyset$. By the disjoint union of $(M_1, \mu_1)$ and $(M_2, \mu_2)$, we mean the labelled mixed graph $(M, \mu)$ where $V(M) = V(M_1) \cup V(M_2)$, $E(M) = E(M_1) \cup E(M_2)$ and $A(M) = A(M_1) \cup A(M_2)$, and the labeling function $\fn{\mu}{V(M)}{\mathbb{N}}$ is defined as follows: for $v \in V(M)$, we have $\mu(v) = \mu_1(v)$ if $v \in V(M_1)$, and $\mu(v) = \mu_2(v)$ if $v \in V(M_2)$.

We consider the following five operations involving labelled graphs. 
\begin{enumerate}
        \item Introduce operation: Create a vertex $v$ with label $i \in \mathbb{N}$. We denote this operation by $\intro_i(v)$. 
        
        \item Union operation: Disjoint union of two labelled mixed graphs $(M_1, \mu_1)$ and $(M_2, \mu_2)$ such that  $\mu_1(M_1) \cap \mu_2(M_2) = \emptyset$. We denote this operation by $\union((M_1, \mu_1), (M_2, \mu_2))$. 
        
        \item Relabeling operation: For a labelled mixed graph $(M, \mu)$, and for distinct $i, j \in \mathbb{N}$, change the label of all vertices with label $i$ to $j$. We denote this operation by $\relabel_{i, j}(M, \mu)$. 

        \item Edge-join operation: For a labelled mixed graph $(M, \mu)$, and for distinct $i, j \in \mathbb{N}$, add an edge between every vertex of label $i$ and every vertex of label $j$. 
        We denote this operation by $\ejoin_{i, j}(M, \mu)$. 

        \item Arc-join operation: For a labelled mixed graph $(M, \mu)$, and for distinct $i, j \in \mathbb{N}$, add an arc directed from every vertex of label $i$ to every vertex of label $j$. We denote this operation by $\ajoin_{i, j}(M, \mu)$.  
\end{enumerate}

A sequence of operations, in which each operation is one of the five operations above, is what we call a \emph{clique-expression}. For a clique-expression $\sigma$, we will use $(M^{\sigma}, \mu^{\sigma})$ to denote the labelled mixed graph constructed by $\sigma$. 

Consider a clique-expression $\sigma$ and the labelled mixed graph $(M^{\sigma}, \mu^{\sigma})$ constructed by $\sigma$. We can associate $\sigma$ with a rooted binary tree $T^{\sigma}$ in a natural way: Each node $z \in V(T^{\sigma})$ corresponds to an operation in $\sigma$, and in particular, each leaf of $T^{\sigma}$ corresponds to an introduce operation. Notice that each node of $T^{\sigma}$ that corresponds to an edge-join, arc-join or a relabeling operation has exactly one child, whereas a node corresponding to a disjoint union operation has exactly two  children. For a node $z$ of $T^{\sigma}$, let $\sigma_z$ be the sub-expression of $\sigma$ rooted at $z$, by which we mean the subsequence of $\sigma$ corresponding to the nodes in the subtree rooted at $z$. Consider  $(M^{\sigma_z}, \mu^{\sigma_z})$, the labelled mixed graph constructed by $\sigma_{z}$. Notice that $M^{\sigma_z}$ is a subgraph of $M$ consisting of all the vertices, edges and arcs  introduced in the subtree rooted at $z$. In particular, for the root $\hat z$ of $T^{\sigma}$, we have $(M^{\sigma_{\hat z}}, \mu^{\sigma_{\hat z}}) = (M^{\sigma}, \mu^{\sigma})$. 
For a node $z \in V(T^{\sigma})$ and a label $i \in [\kw]$, let $V(M^{\sigma_z, i})$ be the set of vertices of $(M^{\sigma_z}, \mu^{\sigma_z})$ with label $i$, i.e., $V(M^{\sigma_z, i}) = \set{v \in V(M^{\sigma_z}) ~|~ \mu^{\sigma_z}(v) = i}$. 

Let $\kw$ be a positive integer. We say that a clique-expression $\sigma$ is a \emph{$\kw$-expression} if $(M^{\sigma_z}, \mu^{\sigma_z})$ is $\kw$-labelled for every node $z \in V(T^{\sigma})$. 
Consider a mixed graph $M$. We say that $M$ admits a $\kw$-expression if there is a $\kw$-expression $\sigma$ such that $M = M^{\sigma}$; in this case, we also say that $\sigma$ is a $\kw$-expression for $M$. The \emph{cliquewidth} of $M$ is the least integer $\kw$ for which $M$ admits a $\kw$-expression. 

From now on, we will only work with a fixed $\kw$-expression $\sigma$ and its sub-expressions $\sigma_z$ for $z \in V(T^{\sigma})$. So to avoid clutter, we may omit $\sigma$ from the superscript in the notation introduced above, and simply write $(M, \mu)$ for $(M^{\sigma}, \mu^{\sigma})$, $T$ for $T^{\sigma}$,  $(M^z, \mu^z)$ for $(M^{\sigma_z}, \mu^{\sigma_z})$, and $V(M^{z, i})$ for $V(M^{\sigma_z, i})$.  
\begin{remark}[The disjoint labels requirement in the union operation and assumptions about the other operations in the definition of $\kw$-expression]\label{rem:assumptions}
    We note that in the $\union((M_1, \mu_1), (M_2, \mu_2))$ operation, the requirement that the labels of $M_1$ and $M_2$ be disjoint, i.e., $\mu_1(M_1) \cap \mu_2(M_2) = \emptyset$, is  usually not included in the standard definition of cliquewidth. But this requirement is harmless, as imposing $\mu_1(M_1) \cap \mu_2(M_2) = \emptyset$ would only increase the cliquewidth by a multiplicative factor of $2$. We also make a few implicit assumptions about the relabeling, edge-join and arc-join operations: We assume that each execution of these operations is  \emph{not redundant}. So, for example, we perform a $\relabel_{i, j}(\cdot)$ operation only if there already exist  at least one vertex of label $i$ and at least one vertex of label $j$. Thus for a $\kw$-expression $\sigma$, if node $z$ of $T^{\sigma}$ corresponds to the operation $\relabel_{i, j}((M, \mu))$ and $z'$ is the unique child of $z$, then $V(M^{z', i}) \neq \emptyset$ and $V(M^{z', j}) \neq \emptyset$. As for the two join operations, we preform them only if the  edges/arcs inserted by these operations are not already present in the mixed graph. That is, we perform the operation $\ejoin_{i, j}(\cdot)$ only if there does not exist any  edge between a vertex of label $i$ and a vertex of label $j$; similarly, we perform the operation $\ajoin_{i, j}(\cdot)$ only if there does not exist any arc from a vertex of label $i$ to a vertex of label $j$. We also assume that each  edge-join (resp. arc-join) operation adds at least one edge (resp. arc) to the mixed graph.  Again, these assumptions can all be made without loss of generality as any $\kw$-expression can be transformed in polynomial time into an equivalent one that satisfies all these requirements. These assumptions have previously been used  in the literature; see, for example~\cite{DBLP:journals/tcs/BergougnouxK19,DBLP:conf/esa/HegerfeldK23}.  
\end{remark}

\begin{remark}[Number of nodes in the tree $T$]\label{rem:nodes-in-T}
    Consider a \kw-expression for a graph $M$ and the corresponding tree $T$. We will rely on the known fact  that the number of nodes in $T$ is $\cO(\kw^2 \cdot \card{M})$~\cite{CourcelleO00}; it is not difficult to verify that the \kw-expression $\sigma$ contains $\card{M}$ introduce operations, at most $\card{M} - 1$ union operations, and $\cO(\kw^2 \cdot \card{M})$ other operations. 
\end{remark}

\subsubsection*{Defining the \lgp\ of a temporal pattern} For an undirected graph $H$, we use $L(H)$ to denote the line graph of $H$, which is the intersection graph of the edges of $H$; that is, the vertex  set of $L(H)$ is precisely the edge set of $H$, i.e., $V(L(H)) = E(H)$, and for vertices $e, e' \in V(L(H))$, the line graph $L(H)$ contains the edge $ee'$ if and only if the edges $e$ and $e'$ share an endpoint in $H$, i.e., $e \cap e' \neq \emptyset$. 
When talking about a graph $H$ and its line graph $L(H)$, we may refer to $H$ as the primal graph. 
Let $P = (H, R, \vartheta)$ be a temporal pattern. 
We define the \lgp\ of $P$, denoted by $\lplus(P)$,  as follows. 
First of all, $\lplus(P)$ is a mixed graph, and $\lplus(P)$ is a supergraph of $L(H)$. 
We have $V(\lplus(P)) = V(L(H))$, and we have $E(\lplus(P)) = E(L(H))$. In addition, $\lplus(P)$ contains arcs: For distinct vertices $e, e' \in V(\lplus(P))$,  $\lplus(P)$ contains the arc $(e, e')$ if and only if $\vartheta(e) \leq_{R} \vartheta(e')$ in $P$.

\subsection{Algorithm for computing $\counthoms{P}{\Gamma}$} 
The following lemma is the main technical contribution of this section.  
\begin{lemma}\label{lem:dp}
    There is an algorithm that, given a temporal graph $\Gamma = (G, \tau)$, a temporal pattern $P = (H, R, \vartheta)$ of toadwidth at most  $\kw$, and a $\kw$-expression $\sigma$ and the corresponding tree $T^{\sigma}$ for the \lgp\ of $P$ as input, runs in time $\cwDPruntime$, and outputs $\#\homs{P}{\Gamma}$. \qed
\end{lemma}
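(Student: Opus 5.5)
The plan is a bottom-up dynamic programme over the tree $T=T^{\sigma}$. At every node $z$ it counts \emph{partial homomorphisms} of the pattern restricted to $E_z := V(M^z)$, grouped by a small \emph{signature} that records everything later operations can still query. A partial homomorphism at $z$ assigns to each $e\in E_z$ a host edge $\xi(e)\in E(G)$ and orientations of its endpoints, such that two conditions hold.
\begin{itemize}
\item[(a)] Any two $e,e'\in E_z$ that are already adjacent in $M^z$ agree on the image of their common vertex of $H$.
\item[(b)] For every arc $(e,e')$ already present in $M^z$ we have $\tau(\xi(e))\le\tau(\xi(e'))$.
\end{itemize}
At the root, $M^{\hat z}=\lplus(P)$ contains every edge of $L(H)$ and every arc of the order. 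Hence (a) means exactly that the vertex images are globally well defined, so $(\nu,\xi)$ is a homomorphism $H\to G$. Condition (b) gives exactly condition 3 of a temporal homomorphism, because arcs are present for all comparable pairs, in both directions for equal $\vartheta$-values. So summing the root table over all signatures yields $\#\homs{P}{\Gamma}$.

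The key structural observation is what a future operation can ask about a label class $C=V(M^{z,i})$. First suppose a later $\ejoin$ puts an edge $f=xy$ of $H$ adjacent to all of $C$. Then every edge of $C$ meets $\{x,y\}$. Moreover, $xy\notin C$, since $H$ has no parallel edges and $f\notin C$. Hence $\{x,y\}$ is an independent vertex cover of $H[C]$, so $H[C]$ is viable and $x,y\in W(H[C])$. Any vertex shared between an edge of $C$ and a later edge joined to it is such an endpoint, so it lies in $W(H[C])$. Next suppose a later $\ajoin$ involves $C$. It imposes the same inequality against every member of $C$, so it only needs $\min_{e\in C}\tau(\xi(e))$ and $\max_{e\in C}\tau(\xi(e))$.

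The signature of a partial homomorphism at $z$ is therefore, for each label $i\le\kw$, the pair (min time, max time) of its class. If $H[C]$ is viable, it also includes the images in $V(G)$ of the vertices of $W(H[C])$. If $H[C]$ is non-viable, the class can never again be edge-joined, so no vertex images are stored. By Corollary~\ref{cor:centre-set}, $|W|\le4$, so there are at most $|\Gamma|^{\cO(\kw)}$ signatures per node. The pattern $P$ is known, so all classes, centre-sets and shared vertices are computable in time $|P|^{\cO(1)}$.

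The transitions are as follows.
\begin{itemize}
\item $\intro_i(e)$: enumerate the host edges with both orientations.
\item $\union$: the label sets are disjoint and no edges or arcs cross, so multiply the counts of the two children's signatures and concatenate them.
\item $\ejoin_{i,j}$: keep only signatures in which every pair $(f,f')\in C_i\times C_j$ agrees on the image of its common vertex; that vertex lies in $W(H[C_i])\cap W(H[C_j])$, so this is readable from the signature. Here we use that a nonredundant join of viable classes only occurs when both are viable.
\item $\ajoin_{i,j}$: keep the signatures with $\max_i\le\min_j$. This suffices since it is equivalent to all pairwise inequalities.
\item $\relabel_{i,j}$: merge $C_i$ and $C_j$, combine min/max, and compute the new stored vertex images.
\end{itemize}
Using Remark~\ref{rem:nodes-in-T}, the total running time is $\cwDPruntime$.

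I expect the relabel step to be the main obstacle. The centre-set of the merged class need not be contained in $W(H[C_i])\cup W(H[C_j])$, and its vertex images must be a function of the two old signatures for the tables to combine correctly. My plan is to store instead the \emph{relevant} set $W^*(C)$ of vertices of $H[C]$ that are endpoints of some edge outside $C$ joined later to all of $C$. This set is determined by $\sigma$ and $H$, and I would show that $W^*(C_i\cup C_j)\subseteq W^*(C_i)\cap W^*(C_j)$: a future edge adjacent to the merged class is adjacent to both parts. Hence the new signature is a projection of the old ones, and counts are aggregated by summation. Correctness of the other transitions only uses stored vertices in $W^*$ by the argument above, and $|W^*(C)|\le|W(H[C])|\le4$ keeps the state space polynomial in $|\Gamma|$ for fixed $\kw$.
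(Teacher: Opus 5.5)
\textbf{Verdict.} Your plan works, but it maintains a different invariant from the paper's. It also contains one false inclusion and one missing justification, both easy to repair.

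\textbf{Comparison with the paper.} The paper's entry $\cwDP[z,\vmap,\bt]$ counts genuine homomorphisms from $P^z$, with a globally consistent vertex map on $H^z$. Hence $\ejoin$ nodes are no-ops, and the work sits at $\union$ nodes. There, Claim~\ref{claim:union} (via Lemma~\ref{lem:common-endpoint}) shows that every vertex shared by the two sides lies in $W(z)\cap W(z_1)\cap W(z_2)$, so the two maps glue. You instead count edge-wise maps that are consistent only along line-graph edges already present. So $\union$ becomes a plain product, and consistency is checked at $\ejoin$ nodes against the stored centre images. Your observation that the endpoints of a later-joined edge $xy\notin C$ form an independent vertex cover of $H[C]$ is a shorter route to what Lemmas~\ref{lem:non-viable} and~\ref{lem:common-endpoint} establish by case analysis. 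Your relabel rule (merge min/max, project the vertex images, sum over preimages) is the paper's nine-case decomposition $\ca{I}^1,\dots,\ca{I}^9$ in compact form. State space and running time are the same.

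\textbf{Missing justification.} Under your local invariant, ``the image of $w\in W(H[C])$'' is well defined only if all edges of $C$ at $w$ agree on it, and you never justify this. The missing fact is the following. If $e,e'$ carry the same label at $z$ and $ee'\in E(L(H))$, then $ee'\in E(M^z)$. This holds because elements with equal labels keep equal labels at all ancestors, so they can never be edge-joined. The same fact is what makes your relabel projection well defined. If $w\in V(H[C_i])\cap V(H[C_j])$, then edges $e\in C_i$ and $e'\in C_j$ at $w$ are already adjacent in $M^{z'}$. So condition (a) forces the two stored images of $w$ to coincide, and entries where they differ are $0$.

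\textbf{False inclusion.} The claim $W^*(C_i\cup C_j)\subseteq W^*(C_i)\cap W^*(C_j)$ is false. A later edge $f$ adjacent to the merged class is indeed adjacent to both parts. However, an endpoint of $f$ in $V(H[C_i\cup C_j])$ need not lie in both parts. For instance, let $H$ have edges $ab,ac,cd$, and let $C_i=\{cd\}$ and $C_j=\{ab\}$ be merged and then edge-joined with $ac$. Then $W^*(C_i)=\{c\}$ and $W^*(C_j)=\{a\}$, but $W^*(C_i\cup C_j)=\{a,c\}$. The correct inclusion uses $\cup$ instead of $\cap$, and together with the agreement fact above it suffices for your argument.

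\textbf{The detour is unnecessary.} Contrary to your worry, $W(H[C_i\cup C_j])\subseteq W(H[C_i])\cup W(H[C_j])$ does hold. Take an independent vertex cover $X$ of size at most $2$ of the merged graph. Then $X\cap V(H[C_i])$ and $X\cap V(H[C_j])$ are independent vertex covers of size at most $2$ of the two parts. This is exactly the paper's Claim~\ref{claim:W(z)}, so you can keep the plain centre-sets and drop $W^*$.
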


\subsubsection*{Notation}
We use the following notation throughout this section. 
Let $\Gamma = (G, \tau)$ be a temporal graph and $P = (H, R, \vartheta))$ be a temporal pattern. Let $\lt$ denote the lifetime of $\Gamma$, i.e., $\lt = \max_{e \in E(G)} \tau(e)$. 
Let $M = \lplus(P)$, the \lgp\ of $P$, and let $\kw$ be the cliquewidth of $M$; that is, $\kw$ is the toadwidth of $P$.  Let $\sigma$ be a $\kw$-expression for $M$; in particular, let $(M, \mu)$ denote  the labelled graph constructed by $\sigma$. Also, let $T = T^{\sigma}$, the rooted tree corresponding to $\sigma$, and let $\hat z \in V(T)$ denote the root of $T$. We assume that we are given $\sigma$ along with $\Gamma$ and $P$. For a node $z$ of $T$, recall that $M^z$ is the subgraph of $M$ consisting of all the vertices, edges and arcs introduced in the subtree rooted at $z$, and for a label $i \in [\kw]$, recall that $V(M^{z, i})$ is the set of all vertices of $M^z$ with label $i$. Recall also that the elements of $V(M^{z, i})$ are edges in the primal graph $H$, i.e., $V(M^{z, i}) \subseteq E(H)$. Thus the subgraph of $H$ induced by $V(M^{z, i})$ is well-defined.\footnote{For $E' \subseteq E(H)$, the subgraph of $H$ induced by $E'$, denoted by $H[E']$, is defined as follows: The vertex set $H[E']$ contains all those vertices of $H$ that are the endpoints of the edges in $E'$, and the edge set of $H[E']$ is precisely $E'$.} We denote this subgraph by $H^{z, i}$. In particular, $V(H^{z, i})$ is the set of all vertices of $H$ that are the endpoints of the primal-graph-edges in $V(M^{z, i})$, or more formally, $V(H^{z, i}) \subseteq V(H)$ is the set of all  vertices $v \in V(H)$ such that $uv \in V(M^{z, i})$ for some $u \in V(H)$. We also define $H^z$ to be the union of $H^{z, i}$s; that is, $H^z$ is the subgraph of $H$ with $V(H^z) = \bigcup_{i \in [\kw]} V(H^{z, i})$ and $E(H^z) = \bigcup_{i \in [\kw]} E(H^{z, i}) = V(M^{z})$. Similarly, we define $P^z$ to be the temporal pattern $(H^z, R^z, \vartheta^z)$, where $R^z$ is sub-poset of $R$ whose ground-set is precisely the set $\vartheta(E(H^z)) = \set{t \in R ~|~ \text{ there exists } e \in E(H^z) \text{ such that } \vartheta(e) = t}$, and the partial order  $\leq_{R^z}$ is simply the restriction of $\leq_R$ to $E(H^z)$, and similarly $\vartheta^z$ is the restriction of $\vartheta$ to $E(H^z)$. We may omit the superscript $z$ from $R^z$ and $\vartheta^z$, unless it is absolutely necessary to specify them.   

As $V(M^{z, i})$ is a set of vertices in $\lplus(P) = M$ and a set of edges in $H$, to avoid any confusion caused by using the words vertices and edges, we will instead refer to the elements of $V(M^{z, i})$ as simply ``elements.''  

 

In what follows, we design a dynamic programming algorithm to compute $\counthoms{P}{\Gamma}$, which will prove Lemma~\ref{lem:dp}. 

\subsubsection*{Outline of the DP}
Given a temporal graph $\Gamma$, a  temporal pattern $P$, a $\kw$-expression $\sigma$ for the \lgp\ of $P$ and the corresponding tree $T$, we do a DP over $T$ as follows. At each node $z$ of $T$, we count the number of homomorphisms from $P^z$ to $\Gamma$. To do this,  
we exploit the properties imposed by the line graph structure of $M$. In particular, we will argue that for every ``active'' label class $V(M^{z, i})$, the corresponding subgraph $H^{z, i}$ is a viable graph; by an active label class $V(M^{z, i})$, we mean that all elements of $V(M^{z, i})$ may be involved in a future $\ejoin_{\cdot,
\cdot}(\cdot)$ operation. That is, $V(M^{z, i})$ is active if there exists an 
ancestor $z'$ of $z$ in the tree $T$ such that $z'$ corresponds to an $\ejoin_{\cdot,
\cdot}(\cdot)$ operation, and for every $e \in V(M^{z, i})$, a subset of the line-graph-edges incident with $e$ is inserted at node $z'$. 
And if $H^{z, i}$ is not viable 
then the elements of $V(M^{z, i})$ cannot be involved in any future $\ejoin_{\cdot, \cdot}(\cdot)$ operation. 
Using this observation, we define a small subset of primal-graph-vertices $W(z) \subseteq V(H^{z})$ as follows: $W(z)$ contains all the vertices the centre-set of $H^{z, i}$ for all active (and hence viable) $H^{z, i}$. 
We will have $\card{W(z)} \leq 4 \kw$. Now, to count the number of homomorphisms, we guess how the vertices of $W(z)$ are mapped by a homomorphism. For each $i \in [\kw]$, we also guess the earliest and the latest time-steps to which an edge of $H^{z, i}$ is mapped by a homomorphism. And for each combination of these two guesses, we count the number of homomorphisms from $P^z$ to $\Gamma$ that are consistent with the guesses. 

\subsubsection*{Ingredients for the DP}
Consider a node $z$ of $T$ and a label $i \in [\kw]$. Recall that $H^{z, i}$ is the subgraph of $H$ induced by $V(M^{z, i})$. Notice that by definition, $H^{z, i}$ has no isolated vertex and hence $\card{V(H^{z, i})} \leq 2 \card{E(H^{z, i})} = 2 \card{V(M^{z, i})}$. 
With a slight abuse of terminology, we say that $(z, i)$ is \emph{viable} if $H^{z, i}$ is viable, and that $(z, i)$ is \emph{non-viable} if $H^{z, i}$ is non-viable. 
Consider a viable pair $(z, i)$. Recall that $W(H^{z, i})$ denotes the centre-set of $H^{z, i}$; by Lemma~\ref{cor:centre-set}, we have $\card{W(H^{z, i})} \leq 4$.  

The next lemma says that if $(z, i)$ is non-viable, then all the edges of $M$ incident with $V(M^{z, i})$ have already been inserted in the sub-tree rooted at $z$. Equivalently, no more edge-join operations involving $V(M^{z, i})$ can be performed at an ancestor of $z$. 
\begin{lemma}\label{lem:non-viable}
    Let $z \in V(T)$ and $i \in [\kw]$. 
    Consider the set $F$ of edges of $M$ incident with $V(M^{z, i})$,  i.e., $F = \set{f \in E(M)  ~|~ f \text{is incident with  an element in } V(M^{z, i})}$. If $(z, i)$ is non-viable, then $F \subseteq E(M^z)$.  
\end{lemma}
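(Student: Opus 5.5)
We argue by contraposition. Suppose some edge $f = ee' \in E(M)$ with $e \in V(M^{z,i})$ is not in $E(M^z)$. Since all edges of $M$ are created by edge-join operations, $f$ is inserted at some node $z'$. Edges among vertices of $M^z$ created inside the subtree of $z$ belong to $E(M^z)$, so $z'$ must be a proper ancestor of $z$, performing $\ejoin_{j,k}$ for some labels $j \neq k$. Relabelings and unions above $z$ never split a label class: vertices with equal labels at $z$ keep equal labels at every ancestor. Hence at $z'$ all of $V(M^{z,i})$ lies in one class, say $j$. The join then makes every element of $V(M^{z,i})$ adjacent in $M$ to $e'$, where $e'$ has label $k$ at $z'$.

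We now use the line graph structure. Since $E(M)=E(L(H))$, every edge $g \in V(M^{z,i})$ of $H$ shares an endpoint with the single primal edge $e'=xy$. There are two cases.
\begin{itemize}
\item If $e' \notin V(M^{z,i})$, then every edge of $H^{z,i}$ is incident with $x$ or $y$.
\item If $e' \in V(M^{z,i})$, then $e'$ would have both labels $j$ and $k$ at $z'$, which is impossible since $j \neq k$. So this case does not occur.
\end{itemize}
In the first case $\{x,y\}\cap V(H^{z,i})$ is a vertex cover of $H^{z,i}$ of size at most $2$.

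It remains to check independence. If $x,y$ are both in $V(H^{z,i})$ and adjacent in $H^{z,i}$, then $xy=e'\in E(H^{z,i})=V(M^{z,i})$. The same argument as above excludes this, since $e'$ carries label $k\neq j$ at $z'$. Hence the cover is independent.

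$H^{z,i}$ also has no isolated vertices, being edge-induced. Therefore $H^{z,i}$ is viable, contradicting the assumption that $(z,i)$ is non-viable. It follows that $F\subseteq E(M^z)$.

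The main care point is the claim that label classes at $z$ stay merged at ancestors. This needs only the observation that relabel maps whole classes and union keeps labels. One also needs that $f$ is created at an ancestor rather than in another branch, which holds because an edge is created only at a node whose subtree contains both of its endpoints.
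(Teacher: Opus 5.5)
Your proof is correct and is essentially the paper's argument. The paper also locates the ancestor edge-join $z'$, observes that every element of $V(M^{z,i})$ must share an endpoint with $e'=u'v'$, and contradicts non-viability via a three-way case split on how many of $u',v'$ lie in $V(H^{z,i})$; your direct exhibition of $\{x,y\}\cap V(H^{z,i})$ as an independent vertex cover packages that split into one step, and you also make explicit why $e'\notin V(M^{z,i})$ (the labels differ at $z'$), which the paper uses tacitly.
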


\begin{proof}
    Assume that $(z, i)$ is non-viable, 
    and assume for a contradiction that $F \nsubseteq E(M^z)$. Let $ee' \in F \setminus E(M^z)$, where $e \in V(M^{z, i})$.  Then the tree $T$ must contain a node $z'$ at which the line-graph-edge $ee'$ is inserted through an edge-join operation. In particular, there is a node $z' \in V(T)$ such that $z'$ is an ancestor of $z$, and $z'$ corresponds to the operation $\ejoin_{i', j'}(\cdot)$ for distinct $i', j' \in [\kw]$ with $e \in V(M^{z', i'})$ and $e' \in V(M^{z', j'})$. 
    Notice that $i'$ need not be equal to $i$; in particular, on the unique path in $T$ from $z$ to $z'$, there could be a sequence of relabel operations, say $(\relabel_{i, i_1}(\cdot), \relabel_{i_1, i_2}(\cdot),\ldots, \relabel_{i_{p - 1}, i_p}(\cdot), \relabel_{i_p, i'}(\cdot)$ so that the elements of $V(M^{z, i})$ have the label $i'$ at node $z'$, and thus $V(M^{z, i}) \subseteq V(M^{z', i'})$. 
    Now, as $z'$ corresponds to $\ejoin_{i',j'}(\cdot)$, every element of $V(M^{z', i'})$ is adjacent to every element of $V(M^{z', j'})$. Then, as $V(M^{z, i}) \subseteq V(M^{z', i'})$ and as $e' \in V(M^{z', j'})$, every element of $V(M^{z, i})$ is adjacent to $e'$. 
    
    Now, consider $H^{z, i}$, the subgraph of $H$ induced by $V(M^{z, i})$. We will show that there exists an edge $\hat e \in E(H^{z, i})$ such that $\hat e e' \notin E(M)$, a contradiction to the earlier claim that every element of $V(M^{z, i}) = E(H^{z, i})$ is adjacent to $e'$; to do this, notice that we only need to show that there exists $\hat e \in E(H^{z, i})$ such that $\hat e$ and $e'$ do not have any common endpoint. Let $e' = u'v'$. If  neither $u'$ nor $v'$ are in $V(H^{z, i})$  then no edge in $E(H^{z, i})$ has a common endpoint with $e'$, and therefore $\hat e e' \notin E(M)$ for every $\hat e \in E(H^{z, i})$; notice that $E(H^{z, i}) \neq \emptyset$ as $H^{z, i}$ non-viable and hence has at least 3 edges, and therefore such an $\hat e$ indeed exists. So assume from here on that either $u' \in V(H^{z, i})$ or $v' \in V(H^{z, i})$ (or both). We now invoke the fact that $H^{z, i}$ is non-viable, and in particular the fact that $H^{z, i}$ has at least 3 edges, but no independent vertex cover of size at most 2.  Suppose exactly one of $u'$ and $v'$ belongs to $V(H^{z, i})$, say $u' \in V(H^{z, i})$ and $v' \notin V(H^{z, i})$. Notice then that  there must exist  an edge $\hat e \in E(H^{z, i})$ such that $\hat e$ is not incident with $u'$; if not, then every edge in $H^{z, i}$ must be incident with $u'$, and hence $H^{z, i}$ would be a star with centre $u'$, and consequently $H^{z, i}$ would be viable. Fix such an $\hat e$.  As $\hat e \in E(H^{z, i})$, both endpoints of $\hat e$ belong to $V(H^{z, i})$, and by assumption $v' \notin V(H^{z, i})$; thus $\hat e$ is not incident with $v'$ either, and therefore $\hat e$ and $e'$ do not have any common endpoint. 
    Suppose now that both $u'$ and  $v'$ belong to $V(H^{z, i})$.  Notice that in this case $u'v' \notin E(H^{z, i})$, as $u'v' = e' \notin V(M^{z, i}) = E(H^{z, i})$. Thus $\set{u', v'}$ is an independent set in $H^{z, i}$, and as $H^{z, i}$ is non-viable, we can conclude that $\set{u', v'}$ is not a vertex cover for $H^{z, i}$, which implies that  there exists an edge $\hat e \in E(H^{z, i})$ such that $\hat e$ is not incident with $u'$ or $v'$.  Thus $\hat e$ and $e'$ do not have any common endpoint in this case either. Therefore in either case, $\hat e e' \notin E(M)$, a contradiction.  

    We have thus shown that the assumption $F \nsubseteq E(M^z)$ leads to a contradiction, and hence the lemma follows.
\end{proof}

Lemma~\ref{lem:non-viable} tells us that if we must add line-graph-edges incident with $V(M^{z, i})$ at a node $z'$, which is an ancestor of $z$, then $(z, i)$ has to be viable. Not just that, we can say more: Recall that for $e \in V(M^{z, i})$ and $e' \in V(M^{z, j})$, we add the line-graph-edge $ee'$ (at an ancestor $z'$ of $z$) only if $e$ and $e'$ share an endpoint, say $v \in V(H)$. In this case, we can show that $v$ is in the centre-set $W(H^{z, i})$. 

\begin{lemma}\label{lem:common-endpoint}
    Let $z \in V(T)$ and $i \in [\kw]$, and let $e = uv \in V(M^{z, i})$. If there exists $e' = u'v \in V(M) \setminus V(M^{z, i})$ and $ee' \notin E(M^{z})$, then $(z, i)$ is viable and $v \in W(H^{z, i})$. 
\end{lemma}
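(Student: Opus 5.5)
The plan is to mirror the argument of Lemma~\ref{lem:non-viable}, but to extract more information from the edge-join that must eventually insert $ee'$. Since $e$ and $e'$ share the endpoint $v$, the line-graph-edge $ee'$ lies in $E(M)$. By hypothesis $ee'\notin E(M^z)$. So $ee'$ is inserted at some proper ancestor $z'$ of $z$ by an operation $\ejoin_{i',j'}(\cdot)$, with $e\in V(M^{z',i'})$ and $e'\in V(M^{z',j'})$; note that $e'$ may only be introduced outside the subtree of $z$.

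Label classes only merge along the path from $z$ to $z'$ under relabelling, so $V(M^{z,i})\subseteq V(M^{z',i'})$. Hence every element of $V(M^{z,i})=E(H^{z,i})$ is adjacent in $M$ to $e'$, i.e. every edge of $H^{z,i}$ shares an endpoint with $e'=u'v$. In other words, $\{u',v\}$ is a vertex cover of $H^{z,i}$. Viability of $(z,i)$ now follows directly from Lemma~\ref{lem:non-viable}, since $ee'$ is an edge of $M$ incident with $V(M^{z,i})$ that is not in $E(M^z)$. It remains to show $v\in W(H^{z,i})$, i.e. to exhibit an independent vertex cover of $H^{z,i}$ of size at most $2$ containing $v$.

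Note first that $v\in V(H^{z,i})$ because $v$ is an endpoint of $e$. We split into two cases.

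\textbf{Case A: $u'\notin V(H^{z,i})$.} Then every edge of $H^{z,i}$ is incident with $v$. So $\{v\}$ is a vertex cover, trivially independent, of size $1$, and $v\in W(H^{z,i})$.

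\textbf{Case B: $u'\in V(H^{z,i})$.} Then $\{u',v\}\subseteq V(H^{z,i})$. If $u'v\notin E(H^{z,i})$, then $\{u',v\}$ is an independent vertex cover of size $2$ containing $v$, and we are done. If instead $u'v\in E(H^{z,i})$, this would mean $e'\in V(M^{z,i})$, contradicting the hypothesis $e'\in V(M)\setminus V(M^{z,i})$. This is exactly how the hypothesis on $e'$ is used. (Here $u'\neq v$ since $H$ has no self-loops, and $e'\neq e$ since $e\in V(M^{z,i})$.)

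The main subtlety is Case B: ensuring the cover $\{u',v\}$ is independent. This step requires the hypothesis that $e'$ is not in the same label class, together with the fact that $H$ has no parallel edges, so that $u'v$ is uniquely the element $e'$. Everything else is a direct reuse of the relabelling and edge-join reasoning from Lemma~\ref{lem:non-viable}.
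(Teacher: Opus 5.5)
Your proof is correct, and for the second half it takes a shorter route than the paper.

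\textbf{First half (same as the paper).} Both arguments start the same way. You locate the edge-join ancestor that inserts $ee'$, and conclude that every element of $V(M^{z,i})$ shares an endpoint with $e'$. Viability then follows from Lemma~\ref{lem:non-viable}.

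\textbf{Second half (where you differ).} The paper works through the five cases of Corollary~\ref{cor:centre-set}: at most two edges, $P_4$/$C_4$, a single-edge component, a star, and a unique independent vertex cover. In each case it argues structurally that $v$ lies in the centre-set, for example that $v$ must be the centre of the star component, or that the unique cover $X$ must equal $\{u',v\}$.

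You instead use the fact that $W(H^{z,i})$ is by definition the union of all independent vertex covers of size at most $2$. So it suffices to exhibit one such cover containing $v$, and $\{u',v\}\cap V(H^{z,i})$ is one. It is a cover because every edge of $H^{z,i}$ meets $e'$. It is independent because $H$ has no parallel edges and $e'\notin V(M^{z,i})=E(H^{z,i})$.

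\textbf{What each route buys.} Your route avoids Corollary~\ref{cor:centre-set} entirely. It also makes the appeal to Lemma~\ref{lem:non-viable} redundant: exhibiting this cover in a graph with no isolated vertices already shows that $H^{z,i}$ is viable. The paper's case analysis gives finer structural information about which configuration occurs. None of that is needed where the lemma is used, namely Claim~\ref{claim:union}.

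\textbf{One inaccuracy.} Your aside that $e'$ can only be introduced outside the subtree of $z$ is false. $e'$ may lie in $V(M^{z,j})$ for some $j\neq i$; this is exactly the situation at a union node in Claim~\ref{claim:union}. You never use the remark, so just delete it.
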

\begin{proof}
    Suppose there exists $e' = u'v \in V(M) \setminus V(M^{z, i})$ and $ee' \notin E(M^{z})$. As $e$ and $e'$ have the common endpoint $v$, we have $ee' \in E(M)$. Observe now that as $ee' \in E(M) \setminus E(M^{z})$, there exists an ancestor $z'$ of $z$, such that $z'$ corresponds to the $\ejoin_{\cdot, \cdot}(\cdot)$ operation at which the (line-graph-edge) $ee'$ is inserted.  Then, as $e \in V(M^{z, i})$, we can conclude that $e'$ is adjacent in $M^{z'}$ to every element of $V(M^{z, i})$; that is, for every $\hat e \in V(M^{z, i})$, we must have $\hat e e' \in E(M)$, and therefore $\hat e$ and $e'$ must have a common endpoint; we will use this fact later in the proof. 
    
    Let us now see that $(z, i)$ is viable. 
    As $ee' \in E(M)$ and $e \in V(M^{z, i})$, the line-graph-edge $ee'$ is incident with an element in $V(M^{z, i})$; also, $ee' \notin E(M^z)$. Then, Lemma~\ref{lem:non-viable} implies that $(z, i)$ is viable. Hence $W(H^{z, i})$ is well-defined; recall that $W(H^{z, i})$ is the centre-set of the graph $H^{z, i}$, the subgraph of $H$ induced by $V(M^{z, i})$.  
    
    Let us now argue that $v \in W(H^{z, i})$. We will consider several cases, and invoke Lemma~\ref{cor:centre-set} to conclude that $v \in W(H^{z, i})$ in each case. 
 
    If $H^{z, i}$ has at most two edges, or if $H^{z, i}$ is a path or a cycle on exactly 4 vertices, then $W(H^{z, i}) = V(H^{z, i})$, and in particular, $v \in W(H^{z, i})$ (Corollary~\ref{cor:centre-set}-Cases 1 and 2). From now on, assume that $H^{z, i}$ has at least three edges. Suppose $H^{z, i}$ has a connected component $C_1$ with a single edge. Then by Lemma~\ref{cor:centre-set}-Case 3, $H$ has exactly one other component $C_2$, and $C_2$ is a star with at least 2 edges. Let $e_1$ be the single edge in the component $C_1$. Now, if $e_1  = e (= uv)$, then by Lemma~\ref{cor:centre-set}-Case 3, we have $v \in W(H^{z, i})$. Suppose this is not the case.  Then $e = uv$ belongs to the component $C_2$; we will argue that $v$ is the unique centre of $C_2$, which again will imply that $v \in W(H^{z, i})$. As $C_2$ has at least two edges, there exists an edge $e_2$ in $C_2$ such that $e_2 \neq e$.  
    As observed earlier, for every $\hat e \in V(M^{z, i})$, and in particular, for $\hat e \in \set{e, e_1, e_2}$, the edges $\hat e$ and $e'$ have a common endpoint; recall also that the common endpoint of $e$ and $e'$ is $v$. We argue that the only way this is possible is if the centre of $C_2$ is precisely $v$, so that $v$ is the common endpoint of $e_2$, $e$ and $e'$, (and $u'$ is an endpoint of $e_1$). If $v$ were not the centre of the star $C_2$, then the centre must be $u$, which implies that $e_2 = u v_2$ for some vertex $v_2 \in V(H^{z, i})$. Then, as $e_2 = u v_2$ and $e' = u' v$ share an endpoint, (and as $u \neq u'$), we must have $v_2 = u'$. Thus both the endpoints of $e'$ (i.e., $v$ and $u'$) are present in the component $C_2$. But then $e'$ and $e_1$ (the single edge in the component $C_1$) do not have any common endpoint, which is a contradiction.  
    Again by Corollary~\ref{cor:centre-set}-Case 3, we have $v \in W(H^{z, i})$. 

    Suppose $H^{z, i}$ is a star. This case is similar to the previous one; here also, it must be be the case that $v$ is the unique centre of $H^{z, i}$. To see this, notice that as $H^{z, i}$ has at least 3 edges, there exist two distinct edges $e_3, e_4 \in E(H^{z, i}) \setminus \set{e}$ such that for each $\hat e \in \set{e, e_3, e_4}$, $\hat e$ and $e'$ have a common endpoint; recall again that the common endpoint of $e$ and $e'$ is $v$.  Again, the only way this is possible is if $v$ is the centre of $H^{z, i}$, and thus $v \in W(H^{z, i})$ (by Corollary~\ref{cor:centre-set}-Case 4). 

    Now, if none of the previous cases occur, then by Corollary~\ref{cor:centre-set}-Case 5, $H^{z, i}$ has no vertex cover of size $1$, and $H^{z, i}$ has a unique independent vertex cover $X$ of size exactly 2, and consequently $W(H^{z, i}) = X$. We will show that $X = \set{u', v}$ (recall that $e' = u'v$), which will imply that $v \in W(H^{z, i})$. 
    Suppose $X \neq \set{u', v}$. Notice that the vertex $u'$ may or may not belong to the graph $H^{z, i}$.  In either case, as $u'v = e' \notin V(M^{z, i})$, the set $\set{u', v} \cap V(H^{z, i})$ is an independent set in $H^{z, i}$. Then, as $X$ is the unique independent vertex cover of $H^{z, i}$ of size at most $2$, we can conclude that $\set{u', v} \cap V(H^{z, i})$ is not a vertex cover of $H^{z, i}$. And hence there exists an edge $\tilde e \in E(H^{z, i})~ (= V(M^{z, i}))$ such that $\tilde e$ is not incident with $u'$ or $v$; thus, $\tilde e$ and $e'$ do not have any common endpoint, and hence $\tilde e e' \notin E(M)$, a contradiction. We can therefore conclude that $X = \set{u', v}$ and hence $v \in W(H^{z, i})$. 
\end{proof}

Consider a node $z \in V(T)$. To design our DP, we first need to formalize our ``guesses'' at  $z$. And we will guess the following details of every homomorphism $\varphi = (\nu, \xi)$: (i) how $\nu$ maps $W(H^{z, i})$ for each viable $(z, i)$, and (ii) the earliest and latest time-steps to which elements of $V(M^{z, i})$ are mapped by $\tau \circ \xi$ for each label $i$. For each choice of guesses, we count the number of homomorphisms that correspond to the choice. 
With this in mind, we now define three sets that we will  use to formalize the guesses. 

\subparagraph*{Definition of $W(z)$.} Consider a node $z \in V(T)$. 
We define $W(z) \subseteq V(H)$ as follows:
\[
W(z) = \bigcup_{\substack {i \in [\kw]\\ (z, i) \text{ is viable}}} W(H^{z, i}).
\]


\subparagraph*{Definition of $\maps(z)$.} 
For a node $z \in V(T)$, let $\maps(z)$ be the set of all maps $\fn{\vmap}{W(z)}{V(G)}$. 

Notice that the set $W(z)$ may be empty, in which case $\maps(z)$ is the singleton set containing the empty function. Thus $\maps(z) \neq \emptyset$ for every $z \in V(T)$. 

\subparagraph*{Definition of $\tms(z)$.} For $z \in V(T)$, we  define a set $\tms(z)$ of tuples as follows. 
Let $\tms(z)$ be the set of all tuples 
\[((\hat t_{1}, \tilde t_{1}), (\hat t_{2}, \tilde t_{2}),\ldots, (\hat t_{\kw}, \tilde t_{\kw})),
\]
where for every $i \in [\kw]$, the following conditions hold: (a) $\hat t_{i}, \tilde t_{i} \in [\lt] \cup \set{\nnn}$, (b) for each $x \in \set{\hat t_i, \tilde t_i}$, we have $x = \nnn$ if and only if $V(M^{z, i}) = \emptyset$, and (c) if $\hat t_i, \tilde t_i \in [\lt]$ (equivalently, if $V(M^{z, i}) \neq \emptyset$), then  $\hat t_i \leq \tilde t_i$. 
Notice that it may be the case that $\hat t_i = \tilde t_i$. From now on, we may write $(\hat t_i, \tilde t_i)_{i \in [\kw]}$ as a shorthand for the $\kw$-tuple $((\hat t_{1}, \tilde t_{1}), (\hat t_{2}, \tilde t_{2}),\ldots, (\hat t_{\kw}, \tilde t_{\kw}))$. And we use  boldface letters like $\bt, \bt'$, etc. to denote the elements of $\tms(z)$.


\begin{observation}\label{obs:bounds}
    Let $z$ be a node of $T$.  Observe that the following bounds hold. 
    \begin{enumerate}
        \item First, $\card{W(z)} \leq 4 \kw$. To see this, consider $i \in [\kw]$. 
        If $(z, i)$ is non-viable, then the index $i$ does not contribute any vertex to $W(z)$. 
        If $(z, i)$ is viable, then by Corollary~\ref{cor:centre-set}, we have $\card{W(H^{z, i})} \leq 4$. Thus each $i \in [\kw]$ contributes at most $4$ vertices of $H$ to $W(z)$. 


        
        \item Second, $\card{\maps(z)} \leq \card{G}^{4 \kw}$. This holds because $\card{W(z)} \leq 4 \kw$, and a function $\vmap \in \maps(z)$ may map each element of $W(z)$ to any one of the $\card{G}$ vertices of $G$.  

        \item Third, $\card{\tms(z)} \leq \lt ^{2 \kw}$. This holds because for each $i \in [\kw]$, the pair $(\hat t_i, \tilde t_i)$ has (i) at most $\lt^2$ choices if $V(M^{z, i}) \neq \emptyset$, and $(ii)$ exactly one choice, namely $(\hat t_i, \tilde t_i) = (\nnn, \nnn)$, if $V(M^{z, i}) = \emptyset$.  
    \end{enumerate}
\end{observation}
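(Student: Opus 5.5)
The statement to prove is Observation~\ref{obs:bounds}: the three bounds on $\card{W(z)}$, $\card{\maps(z)}$ and $\card{\tms(z)}$. The plan is to establish them in order, each from the previous one together with the definitions.

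\textbf{Bound on $W(z)$.} By definition, $W(z)$ is the union, over the labels $i \in [\kw]$ with $(z,i)$ viable, of the centre-sets $W(H^{z,i})$. For every such viable pair, $H^{z,i}$ is a viable graph, so Corollary~\ref{cor:centre-set} gives $\card{W(H^{z,i})} \leq 4$. Non-viable labels contribute nothing. Labels with $V(M^{z,i})=\emptyset$ also contribute nothing: $H^{z,i}$ is then the empty graph, and we simply do not count it as viable. A union bound over at most $\kw$ labels then yields $\card{W(z)} \leq 4\kw$.

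\textbf{Bound on $\maps(z)$.} This is just counting functions. The number of maps $W(z) \to V(G)$ is $\card{V(G)}^{\card{W(z)}}$. Using $\card{V(G)} \leq \card{G}$ and the first bound, this is at most $\card{G}^{4\kw}$. The degenerate cases are harmless. If $W(z)=\emptyset$, there is exactly one map, and $1 \leq \card{G}^{4\kw}$. If $\card{G}\geq 1$, the bound is monotone in the exponent, so it suffices to assume $\card{G} \geq 1$.

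\textbf{Bound on $\tms(z)$.} Here the key point is that the coordinates are chosen independently, so $\card{\tms(z)}$ is a product over $i \in [\kw]$ of the number of admissible pairs $(\hat t_i, \tilde t_i)$.
\begin{itemize}
\item If $V(M^{z,i}) = \emptyset$, condition (b) forces the single choice $(\nnn,\nnn)$.
\item Otherwise, both entries lie in $[\lt]$ with $\hat t_i \leq \tilde t_i$. This gives at most $\lt^2$ choices; the exact count is $\lt(\lt+1)/2 \leq \lt^2$.
\end{itemize}
Each factor is therefore at most $\max(1,\lt^2)$, and the product is at most $\lt^{2\kw}$, taking $\lt \geq 1$ whenever $G$ has an edge.

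No step is a real obstacle. The only point needing care is the first bound: we must rely on the per-label bound $4$ from Corollary~\ref{cor:centre-set}, and not on any disjointness of the centre-sets. The union bound does not need disjointness, so this is fine.
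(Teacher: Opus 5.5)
Your proposal is correct and follows the same argument as the paper: a union bound over the at most $\kw$ viable labels using the per-label bound $\card{W(H^{z,i})} \leq 4$ from Corollary~\ref{cor:centre-set}, direct counting of the maps $W(z) \to V(G)$, and a product over the $\kw$ independent coordinates of $\tms(z)$ with at most $\lt^2$ choices each. One cosmetic point: under the paper's definition an empty $H^{z,i}$ is technically viable (the empty set is an independent vertex cover of it), but its centre-set is empty, so it contributes nothing to $W(z)$ whichever convention you adopt.
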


\begin{definition}[$(z, \vmap, \bt)$-compatible homomorphism]\label{def:compatible}
    Consider a node $z \in V(T)$, a map  $\vmap \in \maps(z)$, and  a tuple $\bt = (\hat t_i, \tilde t_i)_{i \in [\kw]} \in \tms(z)$, and a temporal homomorphism $\varphi = (\nu, \xi)$ from $P^z = (H^z, R^z, \vartheta^z)$ to $\Gamma = (G, \tau)$. We say that $\varphi$ is \emph{$(z, \vmap, \bt)$-compatible} if the following conditions hold:
\begin{description}
    \item[(CC1).]\label{CC1} $\nu{|}_{W(z)} = \vmap$,  
    \item[(CC2).]\label{CC2} there exist $\hat e_i, \tilde e_i \in V(M^{z, i})$ such that $\tau(\xi(\hat e_i)) = \hat t_i$ and $\tau(\xi(\tilde e_i)) = \tilde t_i$ for every $i \in [k]$ with $V(M^{z, i}) \neq \emptyset$, and 
    \item[(CC3).]\label{CC3} $\hat t_i \leq \tau(\xi(e)) \leq \tilde t_i$ for every $e \in V(M^{z, i})$ and for every $i \in [\kw]$ with $V(M^{z, i}) \neq \emptyset$.\qedhere 
\end{description}
\end{definition}

Let $\comhom(z, \vmap, \bt)$ denote the set of all $(z, \vmap, \bt)$-compatible homomorphisms from $P^z$ to $\Gamma$. 
Before proceeding further, let us see how we can use Definition~\ref{def:compatible} to count the number of homomorphisms from $P = (H, R, \vartheta)$ to $\Gamma = (G, \tau)$. 

\begin{lemma}\label{lem:unique}
    Let $z \in V(T)$. For every homomorphism $\varphi = (\nu, \xi)$ from $P^z = (H^z, R^{z}, \vartheta^z)$ to $\Gamma =(G, \tau)$, there exists a unique pair $(\vmap, \bt)$, where $\vmap \in \maps(z)$ and $\bt \in \tms(z)$  such that $\varphi$ is an $(z, \vmap, \bt)$-compatible homomorphism. 
\end{lemma}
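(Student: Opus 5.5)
The plan is to write down the pair $(\vmap,\bt)$ explicitly from $\varphi$, check that it lies in $\maps(z)\times\tms(z)$, and then show that conditions (CC1)--(CC3) leave no other choice. Fix a homomorphism $\varphi=(\nu,\xi)$ from $P^z$ to $\Gamma$.

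For \emph{existence}, set $\vmap:=\nu|_{W(z)}$. This is well defined because $W(z)\subseteq V(H^z)$: each $W(H^{z,i})\subseteq V(H^{z,i})\subseteq V(H^z)$. It is a map $W(z)\to V(G)$, so $\vmap\in\maps(z)$. For each $i\in[\kw]$ we define $(\hat t_i,\tilde t_i)$ as follows.
\begin{itemize}
\item If $V(M^{z,i})=\emptyset$, set $\hat t_i=\tilde t_i=\nnn$.
\item Otherwise $V(M^{z,i})$ is a finite nonempty subset of $E(H^z)$. Set $\hat t_i:=\min_{e\in V(M^{z,i})}\tau(\xi(e))$ and $\tilde t_i:=\max_{e\in V(M^{z,i})}\tau(\xi(e))$.
\end{itemize}
In the second case both values lie in $[\lt]$, since $\tau$ takes values in $[\lt]$ by the definition of $\lt$ (assuming time steps are positive integers, as in $\tms(z)$). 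They also satisfy $\hat t_i\le\tilde t_i$. Hence $\bt=(\hat t_i,\tilde t_i)_{i\in[\kw]}$ satisfies conditions (a)--(c) of $\tms(z)$.

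We now check compatibility. (CC1) holds by the choice of $\vmap$. (CC2) holds by taking $\hat e_i$ and $\tilde e_i$ to be elements attaining the minimum and the maximum, respectively. (CC3) holds because every value $\tau(\xi(e))$ for $e\in V(M^{z,i})$ lies between that minimum and that maximum.

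For \emph{uniqueness}, suppose $\varphi$ is also $(z,\vmap',\bt')$-compatible with $\bt'=(\hat t'_i,\tilde t'_i)_{i\in[\kw]}$.
\begin{itemize}
\item (CC1) forces $\vmap'=\nu|_{W(z)}=\vmap$.
\item For each $i$ with $V(M^{z,i})=\emptyset$, condition (b) in the definition of $\tms(z)$ forces $\hat t'_i=\tilde t'_i=\nnn$.
\item For each $i$ with $V(M^{z,i})\neq\emptyset$, (CC3) gives $\hat t'_i\le\tau(\xi(e))$ for all $e\in V(M^{z,i})$, so $\hat t'_i\le\hat t_i$. (CC2) gives some $\hat e_i$ with $\tau(\xi(\hat e_i))=\hat t'_i$, so $\hat t'_i\ge\hat t_i$. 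Thus $\hat t'_i=\hat t_i$, and symmetrically $\tilde t'_i=\tilde t_i$.
\end{itemize}
Hence $(\vmap',\bt')=(\vmap,\bt)$.

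The whole argument is routine. The only point needing care is the bookkeeping: $W(z)$ must sit inside the domain of $\nu$, and the $\nnn$ entries must be forced exactly by the emptiness of the label classes, which is what condition (b) of $\tms(z)$ provides. No real obstacle is expected.
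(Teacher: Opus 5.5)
Your proposal is correct and follows the paper's proof: both take $\vmap=\nu|_{W(z)}$, define $\bt$ through the minimum and maximum of $\tau\circ\xi$ over each nonempty label class (with $\nnn$ entries for empty ones), and get uniqueness from (CC1) together with the two opposite inequalities that (CC2) and (CC3) give. Your extra bookkeeping, namely $W(z)\subseteq V(H^z)$ and the values lying in $[\lt]$, is left implicit in the paper and is harmless.
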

\begin{proof}
    Consider a homomorphism $\varphi = (\nu, \xi)$ from $P^z = (H^z, R^{z}, \vartheta^z)$ to $\Gamma =(G, \tau)$. We will show that there exist $\vmap \in \maps(z)$ and $t \in \tms(z)$as required by the statement of the lemma. 

    We define $\vmap$ to be $\nu|_{W(z)}$, the restriction of $\nu$ to $W(z)$, i.e., $\vmap(v) = \nu(v)$ for every $v \in W(z)$; if $W(z) = \emptyset$, then we take $\vmap$ to be the empty function. 

    We define the tuple $\bt = (\hat t_i, \tilde t_i)_{i \in [\kw]}$ as follows: For $i \in [\kw]$, 
    \[
    \begin{matrix}
        \hat t_i = \begin{cases}
            \min_{e \in V(M^{z, i})} \tau(\xi(e)), \text{ if } V(M^{z, i}) \neq \emptyset, \text{ and} \\
            \nnn, \text{ otherwise;} 
        \end{cases} &
        \tilde t_i = \begin{cases}
            \max_{e \in V(M^{z, i})} \tau(\xi(e)), \text{ if } V(M^{z, i}) \neq \emptyset, \text{ and} \\
            \nnn, \text{ otherwise.}
        \end{cases}
    \end{matrix}
    \]

Notice that for each $i \in [k]$, we have $\hat t_i, \tilde t_i \in [\lt]$ if $V(M^{z, i}) \neq \emptyset$, and $\hat t_i = \tilde t_i = \nnn$ otherwise. Thus by definition $\tms(z)$, we have $\bt \in \tms(z)$. 

By definition, $(\vmap, \bt)$ satisfies conditions (CC1), (CC2) and (CC3) in Definition~\ref{def:compatible}, and thus $\varphi$ is an $(z, \vmap, \bt)$-compatible homomorphism. 

Let us now see that $(\vmap, \bt)$ is the unique pair in $\maps(z) \times \tms(z) $ for which $\varphi$ is $(z, \vmap, \bt)$-compatible. Suppose not, and assume that there exists a pair $(\vmap', \bt') \in \maps(z) \times \tms(z)$ such that $(\vmap', \bt') \neq (\vmap, \bt)$ and $\varphi = (\nu, \xi)$ is $(z, \vmap', \bt')$-compatible. Then condition (CC1) implies that $\vmap' = \nu|_{W(z)}$. But by the definition of $\vmap$, we have  $\vmap = \nu|_{W(z)}$, and thus $\vmap' = \vmap$. Then, as $(\vmap', \bt') \neq (\vmap, \bt)$, we must have $\bt' \neq \bt$. Let $\bt' = (\hat t'_i, \tilde t'_i)_{i \in [\kw]}$. As $\bt' \neq \bt$,  there exists $i \in [\kw]$ such that $(\hat t'_i, \tilde t'_i) \neq (\hat t_i, \tilde t_i)$; fix such an $i \in [\kw]$. Notice that $V(M^{z, i}) \neq \emptyset$, as otherwise, we would have $(\hat t'_i, \tilde t'_i) = (\nnn, \nnn) = (\hat t_i, \tilde t_i)$. Condition (CC2) now implies that there exist $\hat e_i, \tilde e_i \in V(M^{z, i})$ such that $\tau(\xi(\hat e_i)) = \hat t'_i$ and $\tau(\xi(\tilde e_i)) = \tilde t'_i$. Let us first argue that $\hat t'_i = \hat t_i$. By the definition of $\hat t_i$, we have $\hat t_i = \min_{e \in V(M^{z, i})} \tau(\xi(e)) \leq \tau(\xi(\hat e_i)) = \hat t'_i$. Condition (CC3) implies that $\hat t'_i \leq \tau(\xi(e))$ for every $e \in V(M^{z, i})$, which implies that $\hat t'_i \leq \min_{e \in V(M^{z, i})} \tau(\xi(e)) = \hat t_i$. We have thus shown that $\hat t_i \leq \hat t'_i$ and $\hat t'_i \leq \hat t_i$, and thus $\hat t'_i = \hat t_i$. Symmetric arguments will show that $\tilde t'_i = \tilde t_i$, and thus $(\hat t'_i, \tilde t'_i) = (\hat t_i, \tilde t_i)$, a contradiction. 
\end{proof}


In light of Lemma~\ref{lem:unique}, we can define an equivalence relation $\sim_z$ on the set $\homs{P^z}{\Gamma}$ of all homomorphisms from $P^z$ to $\Gamma$ as follows: For homomorphisms $\phi, \phi' \in \homs{P^z}{\Gamma}$, we say that $\phi$ and $\phi'$ are equivalent under $\sim_z$ if there exists $(\vmap, \bt) \in \maps(z) \times \tms(z)$ such that both $\phi$ and $\phi'$ are $(z, \vmap, \bt)$-compatible. It is straightforward to verify that $\sim_z$ is indeed an equivalence relation, and that the sets $\comhom(z, \vmap, \bt)$ are precisely the equivalence classes under $\sim_z$. This observation immediately implies the following results. 

\begin{corollary}\label{cor:unique-sum}
    Let $z \in V(T)$. Then $\homs{P^z}{\Gamma} = \biguplus_{(\vmap, \bt) } \comhom(z, \vmap, \bt)$, and $\#\homs{P^z}{\Gamma} = \sum_{(\vmap, \bt)} \card{\comhom(z, \vmap, \bt)}$, where both the union and the summation are over all pairs $(\vmap, \bt) \in  \maps(z) \times \tms(z)$.  
\end{corollary}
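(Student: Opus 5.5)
Corollary~\ref{cor:unique-sum} follows directly from Lemma~\ref{lem:unique}; the plan is to show that the family $\{\comhom(z,\vmap,\bt)\}_{(\vmap,\bt)\in\maps(z)\times\tms(z)}$ is a partition of $\homs{P^z}{\Gamma}$ (allowing empty classes) and then take cardinalities.

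First I would check containment in both directions. By Definition~\ref{def:compatible}, each set $\comhom(z,\vmap,\bt)$ consists of homomorphisms from $P^z$ to $\Gamma$. Hence the union over all pairs $(\vmap,\bt)$ is contained in $\homs{P^z}{\Gamma}$. Conversely, Lemma~\ref{lem:unique} gives the existence part: every $\varphi\in\homs{P^z}{\Gamma}$ is $(z,\vmap,\bt)$-compatible for some pair $(\vmap,\bt)\in\maps(z)\times\tms(z)$. So the union equals $\homs{P^z}{\Gamma}$.

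Next I would check disjointness. Suppose $\varphi\in\comhom(z,\vmap,\bt)\cap\comhom(z,\vmap',\bt')$ with $(\vmap,\bt)\neq(\vmap',\bt')$. This contradicts the uniqueness part of Lemma~\ref{lem:unique}. So distinct index pairs give disjoint sets, and the union is the disjoint union $\biguplus$.

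Finally, $\maps(z)\times\tms(z)$ is finite by Observation~\ref{obs:bounds}. The cardinality identity $\#\homs{P^z}{\Gamma}=\sum_{(\vmap,\bt)}\card{\comhom(z,\vmap,\bt)}$ is then additivity of cardinality over a finite disjoint union. There is no real obstacle here, since all the content sits in Lemma~\ref{lem:unique}.
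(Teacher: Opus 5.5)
Your proposal is correct and matches the paper's approach: the paper likewise derives the corollary directly from Lemma~\ref{lem:unique}, phrasing existence and uniqueness as the statement that the nonempty sets $\comhom(z,\vmap,\bt)$ are exactly the equivalence classes of a relation $\sim_z$ on $\homs{P^z}{\Gamma}$. Your version checks containment and pairwise disjointness explicitly (allowing empty classes) and then uses additivity of cardinality; this is the same argument spelled out directly.
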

Corollary~\ref{cor:unique-sum} tells us that to compute $\#\homs{P^z}{\Gamma}$, it is enough to compute $\card{\comhom(z, \vmap, \bt)}$ for every $(\vmap, \bt)$. In particular, we can compute $\#\homs{P}{\Gamma}$ by computing $\card{\comhom(\hat z, \vmap, \bt)}$ for every $(\vmap, \bt) \in \maps(\hat z) \times \tms(\hat z)$, where $\hat z$ is the root of the tree $T$ (and hence $P^{\hat z} = P$). 
We now describe our dynamic programming algorithm, which will do precisely this.  
 
\subsubsection*{Definition of the states of the DP}

We are now ready to define the states of our DP. For every node $z$ of $T$, every map $\vmap \in \maps(z)$, and every tuple $\bt \in \tms(z)$ we define 
\[
\cwDP[z, \vmap, \bt] = \card{\comhom(z, \vmap, \bt)}. 
\]

\subsubsection*{Computation of the DP table entries}

We compute the entries of the DP table in a bottom-up fashion over $T$. For each node $z$ of $T$, assuming we have computed all entries corresponding to all descendants $z'$ of $z$, we can compute the entries corresponding to $z$. We split the computation into cases depending on what kind of operation $z$ corresponds to. 

Consider a node $z \in V(T)$, a map $\vmap \in \maps(z)$ and a tuple $\bt = (\hat t_q, \tilde t_q)_{q \in [\kw]} \in \tms(z)$. We discuss below how we can compute $\cwDP[z, \vmap, \bt]$.  

\subparagraph*{Introduce operation.} Suppose $z$ corresponds to the operation $\intro_i(e)$; that is, we introduce the vertex $e$ of $M = \lplus(P)$ with label $i \in [\kw]$. Let $e = uv$ for $u, v \in V(H)$. 
Recall that $z$ is a leaf in the tree $T$. 
Then $e \in V(M^{z, i})$, and more important, $V(M^{z, i}) = \set{e}$ and $V(M^{z, j}) = \emptyset$ for every $j \in [\kw] \setminus \set{i}$. Notice also that $H^{z}$ is the graph with two vertices $u$ and $v$ and the single edge $uv$, and so by Corollary~\ref{cor:centre-set}, we have $W(z) = \set{u, v}$. 
Now, consider the tuple  $\bt = (\hat t_q, \tilde t_q)_{q \in [\kw]}$. As $V(M^{z, j}) = \emptyset$ for every $j \in [\kw] \setminus \set{i}$, we must have $\hat t_j = \tilde t_j = \nnn$ for every $j \in [\kw] \setminus \set{i}$.


 Now, suppose $\varphi =(\nu, \xi)$ is an $(z, \vmap, \bt)$-compatible homomorphism from $P^z$ to $\Gamma = (G, \tau)$. As $W(z) = \set{u, v} = V(H^z)$, condition (CC1) (in  Definition~\ref{def:compatible}) implies that $\nu = \alpha$. And the definition of a homomorphism implies that there exists an edge $e^{\star}$ in $G$ between the vertices $\vmap(u) = \nu(u)$ and $\vmap(v) = \nu(v)$ with   $\xi(e) = e^{\star}$.  Finally, as  $P^z$ is a pattern with a single edge $e = uv$, condition (CC2)  implies that $\hat t_i = \tau(\xi(e)) = \tilde t_i$. Recall also that by the definition of a temporal graph, there exists at most one edge $e^{\star}$ between $\vmap(u)$ and $\vmap(v)$ with $\tau(e^{\star}) = \hat t_i = \tilde t_i$. In other words, an $(z, \vmap, \bt)$-compatible homomorphism is uniquely determined by $\vmap$ and $\tau$. 
 We thus have
 \[
 \cwDP[z, \vmap, \bt] = \begin{cases}
     1, \text{ if there exists an edge } e^{\star} \in E(G)  \text{ with }e^{\star} = \vmap(u) \vmap(v) \text{ and } \tau(e^{\star}) = \hat t_i = \tilde t_i, \\
     0, \text{ otherwise.}
 \end{cases}
 \]

 \subparagraph*{Union operation.} Suppose $z$ corresponds to the operation $\union(\cdot, \cdot)$. Let $z_1$ and $z_2$ be the two children of $z$. Then $(M^{z}, \mu^z)$ is the disjoint union of the two mixed graphs $(M^{z_1}, \mu^{z_1})$ and $(M^{z_2}, \mu^{z_2})$. Recall that by the definition of the $\union(\cdot, \cdot)$ operation, the set of labels used in $(M^{z_1}, \mu^{z_1})$ and the set of labels used in $(M^{z_2}, \mu^{z_2})$ are disjoint, i.e., $\mu^{z_1}(M^{z_1}) \cap \mu^{z_2}(M^{z_2}) = \emptyset$. This implies that for each $q \in [\kw]$, at most one of $V(M^{z_1, q})$ and $V(M^{z_2, q})$ is non-empty; we will  rely on this property a few times in the discussion that follows.  
 

 Now, as $V(M^{z_1}) \cap V(M^{z_2}) = \emptyset$, no edge of $H$ is contained in both $M^{z_1}$ and $M^{z_2}$. But there may be a vertex $v \in V(H)$ such that $v$ is an endpoint of $e_1 \in V(M^{z_1})$ and of $e_2 \in V(M^{z_2})$. Notice that $e_1 e_2$ is an edge of the line graph $L(H)$ and hence the \lgp\ $M = \lplus(H)$ in this case. And the edge $e_1 e_2$ will be inserted at a future $\ejoin_{\cdot, \cdot}(\cdot)$ operation.  
 We argue next that $v$ must belong to all three sets $W(z)$, $W(z_1)$ and $W(z_2)$. 
 \begin{claim}\label{claim:union}
     Consider a vertex $v \in V(H)$ and edges $e_1, e_2 \in E(H)$ such that $e_1 \in V(M^{z_{1}})$ and $e_2 \in V(M^{z_2})$ and $v$ is an endpoint of both $e_1$ and $e_2$. Then $v \in W(z) \cap W(z_1) \cap W(z_2)$. 
 \end{claim}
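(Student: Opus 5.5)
The plan is to reduce the claim to Lemma~\ref{lem:common-endpoint}, applied once at $z_1$ and once at $z_2$. Then we push the conclusion from the children up to $z$.

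Let $i_1 = \mu^{z_1}(e_1)$ and $i_2 = \mu^{z_2}(e_2)$. Since $e_1$ and $e_2$ share the endpoint $v$, the line-graph-edge $e_1e_2$ lies in $E(M)$. Because $e_1$ and $e_2$ are introduced in disjoint subtrees, $e_1e_2 \notin E(M^{z_1})$, $e_1e_2 \notin E(M^{z_2})$ and $e_1e_2 \notin E(M^{z})$. The union operation adds no edges, so $E(M^z)=E(M^{z_1})\cup E(M^{z_2})$.

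Apply Lemma~\ref{lem:common-endpoint} at node $z_1$ with label $i_1$, the element $e=e_1$ and $e'=e_2$. The lemma's hypotheses hold: $e_2 \notin V(M^{z_1,i_1})$ (indeed $e_2\notin V(M^{z_1})$), $e_2$ shares endpoint $v$ with $e_1$, and $e_1e_2\notin E(M^{z_1})$. So $(z_1,i_1)$ is viable and $v\in W(H^{z_1,i_1})\subseteq W(z_1)$. Applying the lemma symmetrically at $z_2$ with label $i_2$ gives $v\in W(z_2)$.

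For $v\in W(z)$, apply Lemma~\ref{lem:common-endpoint} directly at $z$ with label $i_1$. Since $\union$ does not relabel, $\mu^z(e_1)=i_1$. Also $e_2\notin V(M^{z,i_1})$, because $i_1$ is a label of $M^{z_1}$ and the union requires disjoint label sets, so $i_1\notin\mu^{z_2}(M^{z_2})$. Finally $e_1e_2\notin E(M^z)$. The lemma then yields that $(z,i_1)$ is viable and $v\in W(H^{z,i_1})\subseteq W(z)$.

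It is also worth noting why this is consistent: $V(M^{z,i_1})=V(M^{z_1,i_1})$, so $H^{z,i_1}=H^{z_1,i_1}$. The centre-sets therefore coincide, and any such $v$ lies in both.

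The only step needing care is checking the hypothesis $e'\notin V(M^{z,i})$ at $z$. That is where the disjoint-labels requirement of $\union$ is essential. Without it, $e_2$ might carry label $i_1$ in $M^z$ and the lemma would not apply. Everything else is direct bookkeeping.
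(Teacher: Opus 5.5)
Your proof is correct and essentially follows the paper's route: both reduce the claim to Lemma~\ref{lem:common-endpoint}. The paper applies the lemma once at $z$ (for $i_1$ and $i_2$) and transfers the conclusion to the children via $V(M^{z,i_p})=V(M^{z_p,i_p})$, hence $H^{z,i_p}=H^{z_p,i_p}$, whereas you apply it directly at each of $z_1$, $z_2$ and $z$; this is a cosmetic difference. As a minor aside, the disjoint-labels requirement is not truly essential for your step at $z$: if $e_2$ carried label $i_1$ in $M^z$, then $e_1$ and $e_2$ would share a label at every ancestor of $z$, so the line-graph edge $e_1e_2\in E(M)\setminus E(M^z)$ could never be inserted by an edge-join.
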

 \begin{proof} 
      Let $e_1 =  u_1 v$ and $e_2 = u_2 v$. 
      And let $i_1 = \mu^{z}(e_1)$ and $i_2 = \mu^z(e_2)$. Notice that as $(M^z, \mu^z)$ is the disjoint union of $(M^{z_1}, \mu^{z_1})$ and $(M^{z_2}, \mu^{z_2})$, we also have $\mu^{z_1}(e_1) = i_1$ and $\mu^{z_2}(e_2) = i_2$. Notice also that as $\mu^{z_1}(M^{z_1}) \cap \mu^{z_2}(M^{z_2}) = \emptyset$, and as $V(M^{z_1, i_i}) \neq \emptyset$ (because $e_1  \in V(M^{z_1, i_1})$), we have $V(M^{z_2, i_1}) = \emptyset$, and therefore we have $V(M^{z, i_1}) = V(M^{z_1, i_1})$. Using symmetric arguments, we have $V(M^{z_1, i_2}) = \emptyset$ and consequently,  $V(M^{z, i_2}) = V(M^{z_2, i_2})$. 
      
      Now, consider $p \in \set{1, 2}$. Recall that we defined $H^{z, i_p}$ to be the subgraph of $H$ induced by $V(M^{z, i_p})$. 
      As $v$ is an endpoint of $e_p \in V(M^{z, i_p})$, we have $v \in V(H^{z, i_p})$. 
      As $V(M^{z_p, i_p}) = V(M^{z, i_p})$, we have $H^{z_p, i_p} = H^{z, i_p}$, and thus, $v \in V(H^{z_p, i_p})$ as well. 
      Now, as $e_1$ and $e_2$ share the endpoint $v$, notice that $e_1 e_2$ is an edge of the line graph $L(H)$ (and hence of the \lgp\ $M$). But notice that $e_1 e_2 \notin E(M^{z})$, (as any $\ejoin_{\cdot, \cdot}(\cdot)$ operation between $e_1$ and $e_2$ is yet to be performed), and therefore there must exist an edge-join node $z'$ in $T$ such that $z'$ is an ancestor of $z$, and $e_1 e_2$ is inserted at $z'$. Then  Lemma~\ref{lem:common-endpoint} implies that (i) $(z,  i_1)$ is viable and $v \in W(H^{z, i_1})$, and that (ii) $(z, i_2)$ is viable and $v \in W(H^{z, i_2})$. 
      Now, by the definition of $W(z)$, we have $W(z) \supseteq W(H^{z, i_1})$, and thus $v \in W(z)$. 
      Also, as $V(M^{z_1, i_1}) = V(M^{z, i_1})$, the viability of $(z, i_1)$ implies that $(z_1, i_1)$ is viable.  Again, by the definition of $W(z_1)$, we have $W(z_1) \supseteq W(H^{z_1, i_1})$, and thus $v \in W(z_1)$. Identical reasoning shows that $v \in W(z_2)$. 

      We have thus shown that $v \in A^{z} \cap W(z_1) \cap W(z_2)$.
 \end{proof}


 Consider the map $\vmap$. For $p \in \set{1, 2}$, let $\vmap_p$ denote the  restriction of $\vmap$ to $W(z_p)$.  
 Now consider the tuple $\bt = (\hat t_q, \tilde t_q)_{q \in [\kw]}$. 
 We define the tuples $\bt_1 = (\hat t_{1q}, \tilde t_{1q})_{q \in [\kw]}$ and $\bt_2 = (\hat t_{2q}, \tilde t_{2q})_{q \in [\kw]}$ as follows: For $q \in [\kw]$, we define  $(\hat t_{1q}, \tilde t_{1q}) = (\hat t_q, \tilde t_q)$ if $V(M^{z_1, q}) \neq \emptyset$ and $(\hat t_{1q}, \tilde t_{1q}) = (\nnn, \nnn)$ otherwise; similarly, $(\hat t_{2q}, \tilde t_{2q}) = (\hat t_q, \tilde t_q)$ if $V(M^{z_2, q}) \neq \emptyset$ and $(\hat t_{2q}, \tilde t_{2q}) = (\nnn, \nnn)$ otherwise. 

 It is straightforward to show that if $\varphi = (\nu, \xi)$ is an $(z, \vmap, \bt)$-compatible homomorphism from $P^z$ to $\Gamma$, then for $p \in \set{1, 2}$, then $\varphi_p =(\nu_p, \xi)$ is an $(z_p, \vmap_p, \bt_p)$-compatible homomorphism from $P^{z_p}$ to $\Gamma$, where $\nu_p$ and $\xi_p$ respectively are the restrictions of $\nu$ and $\xi$ to $P^{z_p}$. We can also establish a reverse correspondence: For each $p \in \set{1, 2}$, suppose  $\varphi'_p = (\nu'_p,  \xi'_p)$ is an $(z_p, \vmap_p, \bt_p)$-compatible homomorphism from $P^{z_p}$ to $\Gamma$. Then $\varphi' = (\nu', \xi')$ is an $(z, \vmap, \bt)$-compatible homomorphism, where $\fn{\nu'}{V(H^z)}{V(G)}$ is the ``union'' of $\nu'_1$ and $\nu'_2$ defined as $\nu'(v) = \nu'_p(v)$ if $v \in V(H^{z_p})$, and similarly $\fn{\xi'}{E(H^{z})}{E(G)}$ is the ``union'' of $\xi'_1$ and $\xi'_2$ defined as $\xi'(e) = \xi'_p(e)$ if $e \in E(H^{z_p})$. Claim~\ref{claim:union} ensures that $\nu'$ is well-defined; in particular, if the vertex $v$ belongs to both $V(H^{z_1})$ and $V(H^{z_2})$, then we have $v \in W(z) \cap W(z_1) \cap W(z_2)$, and hence by condition (CC1) and the definitions of $\vmap_1$ and $\vmap_2$, we have $\nu(v) = \vmap(v) = \vmap_1(v) = \vmap_2(v)$. As  $E(H^{z_1}) = V(M^{z_1})$ and $E(H^{z_2}) = V(M^{z_2})$ are disjoint, $\xi'$ is also well-defined. 
 In short, we can show that there is a bijection between the sets $\comhom(z, \vmap, \bt)$ and $\comhom(z_1, \vmap_1, \bt_1) \times \comhom(z_2, \vmap_2, \bt_2)$. 
 We thus have 
 \[
 \cwDP[z, \vmap, \bt] = \cwDP[z_1, \vmap_1, \bt_1] \cdot \cwDP[z_2, \vmap_2, \bt_2]. 
 \]

 \subparagraph*{Relabeling operation.} Suppose $z$ corresponds to the operation $\relabel_{i, j}(\cdot)$, and let $z'$ be the unique child of $z$.  That is, at node $z$, all elements of $M^{z'}$ with label $i$ are  relabelled with label $j$; in particular, the elements of $V(M^{z', i})$ received label $j$ at $z$. Hence, $V(M^{z, i}) = \emptyset$ and $V(M^{z, j}) = V(M^{z', j}) \uplus V(M^{z', i})$. This is the only difference between $(M^z, \mu^z)$ and $(M^{z'}, \mu^{z'})$.  But notice that the patterns $P^{z}$ and $P^{z'}$ are the same, i.e., $P^{z} = P^{z'}$. We will argue that every $(z, \vmap, \bt)$-compatible homomorphism is an $(z', \vmap', \bt')$-compatible homomorphism for some $(\vmap', \bt') \in \maps(z') \times \tms(z')$; we  need to carefully identify these pairs $(\vmap', \bt')$, and the rest of the analysis is aimed at that. We first prove the following claim. 
 \begin{claim}\label{claim:W(z)}
     $W(z) \subseteq W(z')$. 
 \end{claim}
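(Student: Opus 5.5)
Let $w \in W(z)$; by definition of $W(z)$ there is a label $q \in [\kw]$ with $(z,q)$ viable and $w \in W(H^{z,q})$, i.e., $w$ lies in some independent vertex cover $X$ of $H^{z,q}$ with $\card{X}\le 2$. I will locate $w$ in $W(z')$ by distinguishing whether $q$ is affected by the relabeling. Since $z$ corresponds to $\relabel_{i,j}(\cdot)$, we have $V(M^{z,i})=\emptyset$, so $q\neq i$. If $q \notin \set{i,j}$, then $V(M^{z,q}) = V(M^{z',q})$, hence $H^{z,q}=H^{z',q}$. Viability and centre-sets depend only on this graph, so $(z',q)$ is viable and $W(H^{z',q}) = W(H^{z,q}) \ni w$, giving $w \in W(z')$.

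The real case is $q=j$, where $V(M^{z,j}) = V(M^{z',j}) \uplus V(M^{z',i})$. Thus $H^{z,j}$ is the union of $H^{z',i}$ and $H^{z',j}$ (both non-empty by Remark~\ref{rem:assumptions}), and both are subgraphs of $H^{z,j}$ without isolated vertices. The key observation is that an independent vertex cover of size at most $2$ restricts well to subgraphs. The set $X \cap V(H^{z',p})$ is independent in $H^{z',p}$, since $H^{z',p}$ is a subgraph of $H^{z,j}$. It covers every edge of $H^{z',p}$, since every edge of $H^{z',p}$ is an edge of $H^{z,j}$ and hence has an endpoint in $X$, and both endpoints lie in $V(H^{z',p})$. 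It also has size at most $2$. Hence each $H^{z',p}$ is viable, so $(z',i)$ and $(z',j)$ are viable.

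Moreover, $w$ is an endpoint of some edge $e$ of $H^{z,j}$, because there are no isolated vertices. That edge belongs to $V(M^{z',p})$ for some $p\in\set{i,j}$, so $w \in V(H^{z',p})$. Then $X\cap V(H^{z',p})$ is an independent vertex cover of $H^{z',p}$ of size at most $2$ containing $w$, and therefore $w \in W(H^{z',p}) \subseteq W(z')$.

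The only subtle point is the middle step, namely checking that intersecting $X$ with the vertex set of a subgraph yields a genuine vertex cover of that subgraph. This holds because $H^{z',p}$ is edge-induced: its vertices are exactly the endpoints of its edges, and all of its edges are edges of $H^{z,j}$. Note that no use of Lemma~\ref{lem:viable-ind-vc} or of the case analysis of Corollary~\ref{cor:centre-set} is needed; the definition of the centre-set as the union of all small independent vertex covers suffices.
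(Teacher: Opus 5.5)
Your proof is correct and follows essentially the same route as the paper's proof. You use the same case split: either the label $q$ (necessarily $q\neq i$) is untouched by the relabelling, or $q=j$. In the second case you use the same observation, that intersecting an independent vertex cover $X$ of $H^{z,j}$ with $V(H^{z',i})$ and with $V(H^{z',j})$ gives independent vertex covers of size at most $2$ of these subgraphs, and at least one of them contains $w$.
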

 \begin{claimproof}
     Recall first that $W(z) = \bigcup_{(z, q)} W(H^{z, q})$, where the union is over all \emph{viable} $(z, q)$, and $W(z') = \bigcup_{(z', q)} W(H^{z'}, q)$, where the union is over all \emph{viable} $(z', q)$. 
     To prove the claim, consider a vertex $v \in W(z)$. Then there exists an index $p \in [\kw]$ such that $(z, p)$ is viable and $v \in W(H^{z, p})$; in particular, $v$ is a vertex of the viable graph $H^{z, p}$, the subgraph of $H$ induced by the subset $V(M^{z, p})$ of $E(H)$. Notice that $p \neq i$ as $V(M^{z, i}) = \emptyset$. Now, recall that the relableing operation did not affect $V(M^{z', q})$ for any $q \in [\kw] \setminus \set{i, j}$. So if $p \notin \set{i, j}$, then $V(M^{z, p}) = V(M^{z', p})$, and consequently $H^{z, p} = H^{z', p}$, and hence $H^{z', p}$ is viable and in particular, $W(z) = W(z')$. Thus $v \in W(z')$ in this case. 
     
     Now suppose $p = j$. Then $v \in W(H^{z, j})$ and in particular, the graph $H^{z, j}$ is viable. The viability of $H^{z, j}$, along with the fact that  $v \in W(H^{z, j})$ implies that the graph $H^{z, j}$ has an independent vertex cover, say $X \subseteq V(H^{z, j})$ such that $\card{X} \leq 2$ and $v \in X$.   Recall that $V(M^{z, j}) = V(M^{z', i}) \uplus V(M^{z', j})$; that is, $V(M^{z, j})$ is a superset of both $V(M^{z', i})$ and  $V(M^{z', j})$, and hence $H^{z, j}$ is a supergraph of both $H^{z', i}$ and $H^{z', j}$; in particular, $H^{z, j}$ is the precisely the graph with vertex set $V(H^{z', i}) \cup V(H^{z', j})$ and edge set $E(H^{z', i}) \uplus E(H^{z', j})$. Thus $v$ is a vertex of at least one of the graphs $H^{z', i}$ and $H^{z', j}$. 
     Also, as the supergraph $H^{z, j}$ is viable, we can conclude that both $H^{z', i}$ and $H^{z', j}$ are also viable. In particular, $X \cap V(H^{z', i})$ is an independent vertex cover of size at most 2 for $H^{z', i}$, and $X \cap V(H^{z', j})$ is an independent vertex cover of size at most 2 for $H^{z', j}$. Now, as $v \in X$ and $v \in V(H^{z', i}) \cup V(H^{z', j})$, we can conclude that either $v$ belongs to the independent vertex cover $X \cap V(H^{z', i})$ in which case $v \in W(H^{z', i})$, or $v$ belongs to the independent vertex cover $X \cap V(H^{z', j})$ in which case $v \in W(H^{z', j})$. In either case, $v \in W(z')$. 

     We have thus shown that $W(z) \subseteq W(z')$. 
     \end{claimproof}

     Now in the next two claims, we explore how $\comhom(z, \vmap, \bt)$ and $\comhom(z', \vmap', \bt')$ are related. Let $\bt = (\hat t_q, \tilde t_q)_{q \in [\kw]}$. 
     
     \begin{claim}\label{claim:tuples}  
         Consider a pair  $(\vmap', \bt') \in \maps(z') \times \tms(z')$, where $\bt' = (\hat s_q, \tilde s_q)_{q \in [\kw]}$. Then none of $\hat s_i, \hat s_j, \hat t_j, \tilde s_i, \tilde s_j, \tilde t_j$ is $\nnn$. Also, if $\comhom(z, \vmap, \bt) \cap \comhom(z', \vmap', \bt') \neq \emptyset$, then it holds that $(\hat t_q, \tilde t_q) = (\hat s_q, \tilde s_q)$ for every $q \in [\kw] \setminus \set{i, j}$; and $\hat t_j = \min\set{\hat s_i, \hat s_j}$ and $\tilde t_j = \max\set{\tilde s_i, \tilde s_j}$.  
     \end{claim}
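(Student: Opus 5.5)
Both parts follow from the bookkeeping of the relabelling at $z$ together with conditions (CC2) and (CC3) of Definition~\ref{def:compatible}; nothing beyond that definition should be needed.

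For the first assertion (no $\nnn$ entries), I will use the non-redundancy convention of Remark~\ref{rem:assumptions}. Since $z$ corresponds to $\relabel_{i,j}(\cdot)$ with child $z'$, that convention gives $V(M^{z',i})\neq\emptyset$ and $V(M^{z',j})\neq\emptyset$. Also $V(M^{z,j})=V(M^{z',i})\uplus V(M^{z',j})\neq\emptyset$. Condition (b) in the definition of $\tms(\cdot)$ says an entry is $\nnn$ exactly when the corresponding label class is empty. Hence $\hat s_i,\tilde s_i,\hat s_j,\tilde s_j,\hat t_j,\tilde t_j$ all lie in $[\lt]$.

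For the second assertion, fix $\varphi=(\nu,\xi)$ in the intersection. This is possible because $P^z=P^{z'}$, so both sets consist of homomorphisms of the same pattern.

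Take $q\notin\{i,j\}$. The relabelling leaves this class untouched, so $V(M^{z,q})=V(M^{z',q})$. If the class is empty, both pairs equal $(\nnn,\nnn)$. Otherwise, the uniqueness argument of Lemma~\ref{lem:unique} applies: (CC2) and (CC3) force $\hat t_q$ and $\hat s_q$ both to equal $\min_{e\in V(M^{z,q})}\tau(\xi(e))$, and $\tilde t_q,\tilde s_q$ both to equal the corresponding maximum.

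For label $j$, the same reasoning gives $\hat t_j=\min_{e\in V(M^{z,j})}\tau(\xi(e))$. Since $V(M^{z,j})$ is the disjoint union of the two nonempty classes, this minimum is $\min\{\hat s_i,\hat s_j\}$, using $\hat s_i=\min_{V(M^{z',i})}\tau\circ\xi$ and $\hat s_j=\min_{V(M^{z',j})}\tau\circ\xi$. The statement $\tilde t_j=\max\{\tilde s_i,\tilde s_j\}$ follows symmetrically.

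The only step that needs care is justifying that (CC2) and (CC3) pin down these values as the actual minimum and maximum. I will do this by repeating the two-inequality argument from Lemma~\ref{lem:unique}: (CC2) supplies an element attaining the value, and (CC3) shows no element lies below (resp.\ above) it. I expect no real obstacle.
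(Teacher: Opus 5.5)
Your proposal is correct and follows the paper's proof. The first assertion comes from the non-redundancy convention of Remark~\ref{rem:assumptions} together with condition (b) in the definition of $\tms(\cdot)$. The second comes from using (CC2) and (CC3) to identify each entry with the actual minimum or maximum of $\tau\circ\xi$ over its label class, and then splitting $V(M^{z,j}) = V(M^{z',i}) \uplus V(M^{z',j})$. The only difference is presentational: the paper writes the label-$j$ step as two inequalities ($\hat t_j \le \min\{\hat s_i,\hat s_j\}$ from the superset relation, and the reverse from a (CC2) witness for $\hat t_j$ combined with (CC3) at $z'$) instead of as a minimum over a disjoint union.
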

     \begin{claimproof}
            First of all, as $z$ corresponds to the operation $\relabel_{i, j}(\cdot)$, and $z'$ is the unique child of $z$, both  $V(M^{z', i})$ and $V(M^{z', j})$ are non-empty (because the relabel operation is not redundant; see Remark~\ref{rem:assumptions}), and consequently $V(M^{z, j}) = V(M^{z', i}) \uplus V(M^{z', j})$ is non-empty. These facts, along with the definitions of $\bt$ and $\bt'$, together imply the first  assertion in the statement of the claim.
            
           Now, to prove the second assertion, assume that $\comhom(z, \vmap, \bt) \cap \comhom(z', \vmap', \bt') \neq \emptyset$. 
           Fix a  homomorphism $\varphi = (\nu, \xi) \in \comhom(z', \vmap', \bt') \cap \comhom(z, \vmap, \bt)$. That is, $\varphi$ is both  $(z', \vmap', \bt')$-compatible and $(z, \vmap, \bt)$-compatible.
           First, since $\varphi$ is $(z, \vmap, \bt)$-compatible, conditions (CC2) and (CC3) together imply that for every $q \in [\kw]$ with $V(M^{z, q}) \neq \emptyset$, we have $\hat t_q = \min_{e \in V(M^{z, q})} \tau(\xi(e))$. Similarly, from the $(z', \vmap', \bt')$-compatibility of $\varphi$, we also have $\hat s_q = \min_{e \in V(M^{z', q})} \tau(\xi(e))$. Now, since $V(M^{z, q}) = V(M^{z', q})$ for every $q \in [\kw] \setminus \set{i, j}$, we get $\hat s_q = \hat t_q$ for $q \in [\kw] \setminus \set{i, j}$. Recall that $V(M^{z, j}) = V(M^{z', i}) \uplus V(M^{z', j})$. 
        Hence, as $V(M^{z, j}) \supseteq V(M^{z', i})$, we have $\tilde t_j = \min_{e \in V(M^{z, j})} \tau(\xi(e)) \leq \min_{e \in V(M^{z', i})} \tau(\xi(e)) = \hat s_i$; by similar reasoning, we get $\hat t_j \leq \hat s_j$. Thus $\hat t_j \leq \hat s_i$ and $\hat t_j \leq \hat s_j$, which together imply that $\hat t_j \leq \min\set{\hat s_i, \hat s_j}$. Also, since $\varphi$ is $(z, \vmap, \bt)$-compatible, by condition (CC2), there exists $\hat e_j \in V(M^{z, j})$ with $\hat t_j = \tau(\xi(\hat e_j))$. Now we use the fact that $\varphi$ is $(z', \vmap', \bt')$-compatible, specifically condition (CC3): As $\hat e_j \in V(M^{z, j}) = V(M^{z', i}) \uplus V(M^{z', j})$, we either have $\hat e_j \in V(M^{z', i})$, in which case $\hat s_i \leq \tau(\xi(\hat e_j)) = \hat t_j$, or we have $\hat e_j \in V(M^{z', j})$, in which case $\hat s_j \leq \tau(\xi(\hat e_j)) = \hat t_j$. In either case $\min\set{\hat s_i, \hat s_j} \leq \hat t_j$. We have thus shown that $\hat t_j = \min\set{\hat s_i, \hat s_j}$.  
        Symmetric argumentswill show that $\tilde s_q = \tilde t_q$ for $q \in [\kw] \setminus \set{i, j}$ and $\tilde s_j \leq \tilde t_j$. 
     \end{claimproof}

 \begin{claim}\label{claim:all-or-none}
     Consider a pair $(\vmap', \bt') \in \maps(z') \times \tms(z')$, where $\bt' = (\hat s_q, \tilde s_q)_{q \in [\kw]}$. Then either no $(z', \vmap', \bt')$-compatible homomorphism is also $(z, \vmap, \bt)$-compatible (i.e., $\comhom(z', \vmap', \bt')$ and  $\comhom(z, \vmap, \bt)$ are disjoint), or every $(z', \vmap', \bt')$-compatible homomorphism is also $(z, \vmap, \bt)$-compatible  (i.e., $\comhom(z', \vmap', \bt')$ is a subset of  $ \comhom(z, \vmap, \bt)$). Moreover, if $\comhom(z', \vmap', \bt')$ is a subset of  $ \comhom(z, \vmap, \bt)$, then the map $\vmap$ is the restriction of $\vmap'$ to $W(z)$, or equivalently $\vmap'$ is an extension of $\vmap$.  
 \end{claim}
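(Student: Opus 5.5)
The key point is that the conditions defining $(z', \vmap', \bt')$-compatibility and $(z, \vmap, \bt)$-compatibility can be compared directly. Both refer to the same pattern $P^{z}=P^{z'}$ and the same homomorphism $\varphi$. Once $(\vmap', \bt')$ is fixed, membership of $\varphi \in \comhom(z', \vmap', \bt')$ in $\comhom(z, \vmap, \bt)$ should be decided by a test that does not depend on $\varphi$ at all.

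Assume $\comhom(z', \vmap', \bt') \cap \comhom(z, \vmap, \bt) \neq \emptyset$, and fix a witness $\varphi_0=(\nu_0,\xi_0)$ in the intersection. We show that every $\varphi=(\nu,\xi) \in \comhom(z', \vmap', \bt')$ also lies in $\comhom(z, \vmap, \bt)$ by checking (CC1)--(CC3) for $z$ in turn.

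\textbf{(CC1).} By Claim~\ref{claim:W(z)}, $W(z) \subseteq W(z')$. Hence $\nu|_{W(z)} = (\nu|_{W(z')})|_{W(z)} = \vmap'|_{W(z)}$, and the same identity holds for $\nu_0$. Since $\nu_0|_{W(z)}=\vmap$, we get $\vmap = \vmap'|_{W(z)}$, which also proves the ``moreover'' part of the claim. In particular $\nu|_{W(z)}=\vmap$.

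\textbf{(CC2) and (CC3) for $q \notin \set{i,j}$.} Here $V(M^{z,q})=V(M^{z',q})$. By Claim~\ref{claim:tuples} applied to $\varphi_0$, we have $(\hat t_q,\tilde t_q)=(\hat s_q,\tilde s_q)$. So these two conditions for $z$ coincide with the corresponding conditions for $z'$, which $\varphi$ satisfies.

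\textbf{Label $i$ at $z$.} This class is empty, so it imposes no condition.

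\textbf{Label $j$ at $z$.} Here $V(M^{z,j})=V(M^{z',i})\uplus V(M^{z',j})$, and Claim~\ref{claim:tuples} gives $\hat t_j=\min\set{\hat s_i,\hat s_j}$ and $\tilde t_j=\max\set{\tilde s_i,\tilde s_j}$.
\begin{itemize}
\item For (CC3), every $e$ in the union satisfies $\hat s_i\le\tau(\xi(e))\le\tilde s_i$ or $\hat s_j\le\tau(\xi(e))\le\tilde s_j$, by (CC3) for $z'$. Hence $\hat t_j\le \tau(\xi(e))\le \tilde t_j$.
\item For (CC2), (CC2) at $z'$ gives witnesses $\hat e_i,\hat e_j$ with times $\hat s_i,\hat s_j$. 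Whichever attains the minimum serves as $\hat e_j$ at $z$; the $\tilde{}$ witness is handled symmetrically.
\end{itemize}
Thus $\varphi\in\comhom(z,\vmap,\bt)$.

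If the intersection is empty, the first alternative of the claim holds trivially, so the dichotomy follows.

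The only delicate step is that the equalities from Claim~\ref{claim:tuples} involve only the tuples $\bt,\bt'$ and not the homomorphism. This is exactly why a single witness $\varphi_0$ suffices to transfer the conditions to all of $\comhom(z',\vmap',\bt')$. The same reasoning applies to $\vmap=\vmap'|_{W(z)}$ in (CC1).
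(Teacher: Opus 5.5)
Your proof is correct and follows essentially the same route as the paper: fix a witness $\varphi_0$ in the intersection, obtain (CC1) and the ``moreover'' part from $W(z)\subseteq W(z')$, and transfer (CC2)/(CC3) using the tuple relations of Claim~\ref{claim:tuples}, which depend only on $(\bt,\bt')$. Your handling of (CC3) for label $j$ is actually cleaner than the paper's. You split $V(M^{z,j})$ into $V(M^{z',i})\uplus V(M^{z',j})$ and use $\hat t_j=\min\set{\hat s_i,\hat s_j}$ and $\tilde t_j=\max\set{\tilde s_i,\tilde s_j}$. The paper instead bounds every $e\in V(M^{z,q})$ by $\hat s_q,\tilde s_q$ and misstates the direction as $\tilde t_q\le\tilde s_q$.
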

 \begin{claimproof}
     Assume that $\comhom(z', \vmap', \bt')$ and  $\comhom(z, \vmap, \bt)$ are not disjoint, i.e., $\comhom(z', \vmap', \bt') \cap \comhom(z, \vmap, \bt) \neq  \emptyset$. Fix a   homomorphism $\varphi = (\nu, \xi) \in \comhom(z', \vmap', \bt') \cap \comhom(z, \vmap, \bt)$. That is, $\varphi$ is both  $(z', \vmap', \bt')$-compatible and $(z, \vmap, \bt)$-compatible. 
    Let us now see that $\comhom(z', \vmap', \bt') \subseteq \comhom(z, \vmap, \bt)$, and to that end consider $\varphi' = (\nu', \xi') \in \comhom(z', \vmap', \bt')$. We will show that $\varphi'$ satisfies the three conditions required to be an $(z, \vmap, \bt)$-compatible homomorphism. 
    \begin{itemize}
        \item Since both $\varphi$ and $\varphi'$ are $(z', \vmap', \bt')$-compatible, by condition (CC1) in Definition~\ref{def:compatible}, we have $\nu|_{W(z')} = \vmap'$ and $\nu'|_{W(z')} = \vmap'$. That is, the maps $\nu$ and $\nu'$ agree on the set $W(z')$. By Claim~\ref{claim:W(z)}, we have $W(z) \subseteq W(z')$, and  we can therefore conclude that $\nu$ and $\nu'$ agree on the set $W(z)$. Now as $\varphi$ is also $(z, \vmap, \bt)$-compatible, by (CC1), we further have $\nu|_{W(z)} = \vmap$, which implies that $\nu'|_{W(z)} = \vmap$. Thus $\varphi'$ satisfies condition (CC1). Notice that this also shows that $\vmap = \nu|_{W(z)} = \vmap'|_{W(z)}$, which proves the second assertion in the statement of the claim. 

        \item We now show that $\varphi'$ satisfies condition (CC2). Consider $q \notin \set{i, j}$ with $V(M^{z, q}) \neq \emptyset$. By Claim~\ref{claim:tuples}, we have $(\hat t_q, \tilde t_q) = (\hat s_q, \tilde s_q)$.  
        As $\varphi'$ is $(z', \vmap', \bt')$-compatible, by condition (CC2), there exist $\hat e_q, \tilde e_q \in V(M^{z', q})$ such that $\tau(\xi'(\hat e_q)) = \hat s_q$ and $\tau(\xi'(\tilde e_q)) = \tilde s_q$. Now, since $V(M^{z', q}) = V(M^{z, q})$, and as $(\hat t_q, \tilde t_q) = (\hat s_q, \tilde s_q)$, it follows that $\hat e_q, \tilde e_q \in V(M^{z, q})$, and $\tau(\xi'(\hat e_q)) = \hat t_q$ and $\tau(\xi'(\tilde e_q)) = \tilde t_q$. Thus (CC2) holds for $q \notin \set{i, j}$. 
        Note that (CC2) vacuously holds for $q = i$ as  $V(M^{z, i}) = \emptyset$. Now we consider $q = j$. Again, by Claim~\ref{claim:tuples}, we have $\hat t_j = \min\set{\hat s_i, \hat s_j}$. Hence either $\hat t_j = \hat s_i$ or $\hat t_j = \hat s_j$. Suppose $\hat t_j = \hat s_i$; the other case is identical. Now, we invoke the $(z', \vmap', \bt')$-compatibility of $\varphi'$, specifically condition (CC2), which implies that there exists $\hat e \in V(M^{z', i})$ such that $\hat s_i = \tau(\xi'(\hat e))$. As $V(M^{z, j}) \supseteq V(M^{z', i})$ and as $\hat t_j = \hat s_i$, we have $\hat e \in V(M^{z, j})$ and $\tau(\xi'(\hat e)) = \hat t_j$. Identical reasoning will show that there exists $\tilde e \in V(M^{z, j})$ with $\tau(\xi'(\tilde e)) = \tilde t_j$. Thus (CC2) holds for $q = j$ as well. 
        \item We now show that $\varphi'$ satisfies (CC3). Consider $q \in [\kw]$ with $V(M^{z, q}) \neq \emptyset$. Then Claim~\ref{claim:tuples} implies that $\hat t_q \leq \hat s_q$ and $\tilde t_q \leq \tilde s_q$. Now, as $\varphi'$ is $(z', \vmap', \bt')$-compatible, by condition (CC3), for every $e \in V(M^{z, q})$, we have $\hat s_q \leq \tau(\xi'(e)) \leq \tilde s_q$ which implies that $\hat t_q \leq \hat s_q \leq \tau(\xi'(e)) \leq \tilde s_q \leq \tilde t_q$. Thus condition (CC3) holds as well. 
    \end{itemize}
    We have thus shown that $\varphi'$ is $(z, \vmap, \bt)$-compatible. 
 \end{claimproof}

 Claim~\ref{claim:all-or-none} immediately leads to the following result. 
 \begin{claim}\label{claim:calI}
    There exists a subset $\ca{I}$ of $\maps(z') \times \tms(z')$ such that \[
    \comhom(z, \vmap, \bt) = \biguplus_{(\vmap', \bt') \in \ca{I}} \comhom(z', \vmap', \bt').
    \]
    In particular, $\ca{I}$ is precisely the set of all $(\vmap', \bt')$ in $\maps(z') \times \tms(z')$ such that $\comhom(z', \vmap', \bt') \subseteq \comhom(z, \vmap, \bt)$. 
 \end{claim}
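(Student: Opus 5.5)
The plan is to take $\ca{I}$ to be exactly the set of all pairs $(\vmap', \bt') \in \maps(z') \times \tms(z')$ with $\comhom(z', \vmap', \bt') \subseteq \comhom(z, \vmap, \bt)$, as the statement prescribes. It then suffices to show that the union over $\ca{I}$ equals $\comhom(z, \vmap, \bt)$ and that the union is disjoint.

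For the inclusion of the union in $\comhom(z, \vmap, \bt)$, nothing needs to be done: it holds by the definition of $\ca{I}$. For the reverse inclusion, fix $\varphi \in \comhom(z, \vmap, \bt)$. Since $P^z = P^{z'}$, $\varphi$ is also a homomorphism from $P^{z'}$ to $\Gamma$. By Lemma~\ref{lem:unique}, applied at node $z'$, there is a pair $(\vmap', \bt')$ with $\varphi \in \comhom(z', \vmap', \bt')$. Hence $\comhom(z', \vmap', \bt') \cap \comhom(z, \vmap, \bt) \neq \emptyset$. Claim~\ref{claim:all-or-none} then upgrades this nonempty intersection to the inclusion $\comhom(z', \vmap', \bt') \subseteq \comhom(z, \vmap, \bt)$, so $(\vmap', \bt') \in \ca{I}$ and $\varphi$ lies in the union.

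Disjointness follows from Corollary~\ref{cor:unique-sum} (equivalently, from the uniqueness part of Lemma~\ref{lem:unique}) at node $z'$. The sets $\comhom(z', \vmap', \bt')$ for distinct pairs are distinct equivalence classes of $\sim_{z'}$ and are therefore pairwise disjoint, so the union over $\ca{I}$ is a disjoint union.

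I expect no real obstacle here. The only point needing care is the identification $P^z = P^{z'}$, which makes homomorphisms from $P^z$ and from $P^{z'}$ literally the same objects, so that Lemma~\ref{lem:unique} may be invoked at $z'$. The substantive work, the all-or-none dichotomy, was already done in Claim~\ref{claim:all-or-none}.
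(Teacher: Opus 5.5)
Your proof is correct and follows the paper's argument essentially verbatim. You define $\ca{I}$ by the inclusion condition, so one inclusion holds by definition, and you get the reverse inclusion from Lemma~\ref{lem:unique} at $z'$ (using $P^z = P^{z'}$) together with Claim~\ref{claim:all-or-none}; disjointness comes from the sets $\comhom(z', \vmap', \bt')$ being the equivalence classes of $\sim_{z'}$.
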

 \begin{claimproof}
    Consider the set $\ca{I}$ as defined in the second assertion in the statement of the claim. By the definition of $\ca{I}$, we have $\bigcup_{(\vmap', \bt') \in \ca{I}} \comhom(z', \vmap', \bt') \subseteq \comhom(z, \vmap, \bt)$. Now the fact that the sets $\comhom(z', \vmap', \bt')$ are equivalence classes under the relation $\sim_{z'}$ imply that these sets are also pairwise-disjoint, and hence the union is indeed a disjoint union, i.e., $\comhom(z, \vmap, \bt) \subseteq \biguplus_{(\vmap', \bt') \in \ca{I}} \comhom(z', \vmap', \bt')$.

    Now, to prove that $\comhom(z, \vmap, \bt) \subseteq \biguplus_{(\vmap', \bt') \in \ca{I}} \comhom(z', \vmap', \bt')$, consider $\varphi \in \comhom(z, \vmap, \bt)$. As $P^z = P^{z'}$, every homomorphism from $P^{z}$ to $\Gamma$ is also a homomorphism from $P^{z'}$ to $\Gamma$; in particular, $\varphi$ is a homomorphism form $P^{z'}$ to $\Gamma$. 
    Recall also that by Lemma~\ref{lem:unique}, every homomorphism from $P^{z'}$ to $\Gamma$ is an $(z', \vmap', \bt')$-compatible homomorphism for exactly one pair $(\vmap', \bt') \in \maps(z') \times \tms(z')$. Let $(\vmap'', \bt'')$ be the unique pair in $\maps(z') \times \tms(z')$ such that $\varphi$ is $(z', \vmap'', \bt'')$-compatible. Thus $\varphi$ is both $(z, \vmap, \bt)$-compatible and $(z', \vmap'', \bt'')$-compatible, which implies that $\comhom(z, \vmap, \bt) \cap \comhom(z', \vmap'', \bt'') \neq \emptyset$. Then, by Claim~\ref{claim:all-or-none}, we can conclude that $\comhom(z', \vmap'', \bt'') \subseteq \comhom(z, \vmap, \bt)$, which implies that $(\vmap'', \bt'') \in \ca{I}$. This shows that $\comhom(z, \vmap, \bt) \subseteq \biguplus_{(\vmap', \bt') \in \ca{I}} \comhom(z', \vmap', \bt')$.
    \end{claimproof}

 Claim~\ref{claim:calI} shows  that to compute $\card{\comhom(z, \vmap, \bt)}$, we only need to identify the elements of the set $\ca{I}$ and sum up $\card{\comhom(z', \vmap', \bt')}$ for all $(\vmap', \bt') \in \ca{I}$. We do this below. In particular, we will define nine pairwise-disjoint sets $\ca{I}^1, \ca{I}^2,\ldots, \ca{I}^9 \subseteq \maps(z') \times \tms(z')$, and argue that $\ca{I}$ is the union of these nine sets; the pairwise-disjointness of the sets will ensure that we will not over-count when summing up $\card{\comhom(z', \vmap', \bt')}$ over the elements $(\vmap', \bt')$ of $\ca{I}$.  
    
 Before explicitly identifying the elements of $\ca{I}$, let us flesh out some of the properties of $\ca{I}$, and to that end consider a hypothetical $(\vmap', \bt') \in \ca{I}$. 

 Let us first focus on the map $\vmap'$. Recall that $\vmap'$ must be a map from $W(z')$ to $V(G)$. 
 Now, for a homomorphism $\varphi$ from $P^{z'} = P^z$ to $\Gamma$ to be both $(z, \vmap, \bt)$-compatible and $(z', \vmap', \bt')$-compatible, by the second assertion of Claim~\ref{claim:all-or-none}, we must have $\vmap = \vmap'|_{W(z)}$, i.e., $\vmap$ must be a restriction of $\vmap'$. With this in mind, we define the following set: 
 Let $\ext(\vmap, z')$ denote the set of all maps $\fn{\alpha'}{W(z')}{V(G)}$ that extend $\vmap$, or equivalently $\vmap'|_{W(z)} = \vmap$. Notice that $\ext(\vmap, z') \subseteq \maps(z')$. 
 
 Let us now focus on the tuple $\bt' \in \tms(z')$ for $(\vmap', \bt') \in \ca{I}$. 
 Recall that by Claim~\ref{claim:tuples}, we have $\hat t_j, \tilde t_j \neq \nnn$, and in particular, $\hat t_j, \tilde t_j \in [\lt]$. Now, consider an $(z, \vmap, \bt)$-compatible homomorphism $\varphi = (\nu, \xi)$. By definition, there exist $\hat e_j, \tilde e_j \in V(M^{z, j})$ such that $\hat t_j = \tau(\xi(\hat e_j))  = \min_{e \in V(M^{z, j})} \tau(\xi(e))$, and $\tilde t_j = \tau(\xi(\tilde e_j))  = \max_{e \in V(M^{z, j})} \tau(\xi(e))$; the edges $\hat e_j$ and $\tilde e_j$ are ``witnesses'' for the earliest and latest time-steps to which elements in $V(M^{z, j})$ are mapped by $\xi$. But notice that there may be more than one witness for each of $\hat t_j$ and $\tilde t_j$. Now, as $V(M^{z, j}) = V(M^{z', i}) \uplus V(M^{z', j})$, we have to  consider different possibilities depending on which of the two sets $V(M^{z', i})$ and $V(M^{z', j})$ contains a witness for $\hat t_j$ and a witness for $\tilde t_j$. Notice that it may also be the case that each of $\hat t_j$ and $\tilde t_j$ has witnesses in both $V(M^{z', i})$ and $V(M^{z', j})$; that is, there may exist $e_i \in V(M^{z', i})$ and $e_j \in V(M^{z', j})$ with $\tau(\xi(e_i)) = \tau(\xi(e_j)) = \hat t_j$. Thus, we will have to consider all possible cases where each of $\hat t_j$ and $\tilde t_j$ has witnesses in $V(M^{z', i})$ alone, in $V(M^{z', j})$ alone, and in both $V(M^{z', i})$ and $V(M^{z', j})$. 
 
 Before formally accounting for all these cases, let us consider, as an example, the case when $\hat t_j$ has a witness in $V(M^{z', i})$ alone and $\tilde t_j$ has a witness in both $V(M^{z', i})$ and $V(M^{z', j})$. Let $\bt' = (\hat s_q, \tilde s_q)_{q \in [\kw]}$, and consider a homomorphism $\varphi = (\nu, \xi)$ that is both $(z', \vmap', \bt')$-compatible and $(z, \vmap, \bt)$-compatible.  
 Now, as $\hat t_j$ has a witness in $V(M^{z', i})$, there exists $\hat e_j \in V(M^{z', i})$ such that $\tau(\xi(\hat e_j)) = \hat t_j = \min_{e \in V(M^{z', i})} \tau(\xi(e))$. We must therefore have $\hat s_i = \hat t_j$.  Also, as $\hat t_j$ has no witnesses in $V(M^{z', j})$, there does not exist any $e \in V(M^{z', j})$ with $\tau(\xi(e)) = \hat t_j$; in particular, we must have $\min_{e \in V(M^{z', j})} \tau(\xi(e)) \geq \hat t_j + 1$, and therefore we must have $\hat s_j \geq \hat t_j + 1$. By similar reasoning, as $\tilde t_j$ has witnesses in both $V(M^{z', i})$ and $V(M^{z', j})$, there exist $\tilde e_{j, i} \in V(M^{z', i})$ and $\tilde e_{j, j} \in V(M^{z', j})$ with $\tau(\xi(\tilde e_{j, i})) = \tau(\xi(\tilde e_{j, j})) = \tilde t_j = \max_{e \in V(M^{z', i})} \tau(\xi(e)) = \max_{e \in V(M^{z', j})} \tau(\xi(e))$. We  must therefore have $\tilde s_i = \tilde s_j = \tilde t_j$. 
 
 With the above discussion as well as Claim~\ref{claim:tuples} in mind, we first define the following set:
 \[
 \ca{S}^{\bt} = \set{(\hat s_q, \tilde s_q)_{q \in [\kw]} \in \tms(z') ~|~ \hat s_q = \hat t_q \text{ and } \tilde s_q = \tilde t_q \text{ for every } q \in [k] \setminus \set{i, j}}. 
 \]
 Notice that by definition, $\ca{S}^{\bt} \subseteq \tms(z')$, and by the definition of $\tms(z')$, for every $q \in [\kw]$, either $(\hat s_q, \tilde s_q) = (\nnn, \nnn)$ or $\hat s_q \leq \tilde s_q$.  We now define the following nine subsets of $\ca{S}^{\bt}$, each of which corresponds to a possible case depending on which of the two sets $V(M^{z', i})$ and $V(M^{z', j})$ contain witnesses for $\hat t_j$ and $\tilde t_j$. We label the definitions of these sets with ``tags'' that help us identify which set corresponds to which case. For example, if $V(M^{z', i})$ (resp. $V(M^{z', j})$) alone contains a witness for $\hat t_j$, then we use the tag $i$-min (resp. $j$-min), and if both $V(M^{z', i})$ and $V(M^{z', j})$ contain witnesses for $\hat t_j$, we use the tag $ij$-min. Similarly, we use the tags $i$-max, $j$-max, and $ij$-max when dealing with $\tilde t_j$.

 \begin{equation*}
     \ca{S}^{\bt, 1} = \set{ (\hat s_q, \tilde s_q)_{q \in [k]}\in \ca{S}^{\bt} ~~~ \Big\rvert ~~~ 
     \begin{matrix}
     \hat s_i = \hat t_j \text{ and } \tilde s_i = \tilde t_j, \text{ and } \\ 
     
      \hat t_j + 1 \leq  \hat s_j \leq \tilde s_j \leq \tilde t_j - 1
     \end{matrix}} \tag{$i$-min, $i$-max}
 \end{equation*}

 \begin{equation*}
     \ca{S}^{\bt, 2} = \set{(\hat s_q, \tilde s_q)_{q \in [k]}\in \ca{S}^{\bt} ~~~ \Big\rvert ~~~ 
     \begin{matrix}
     \hat s_i = \hat t_j \text{ and } \tilde s_i \leq \tilde t_j - 1, \text{ and } \\ 
     
     \hat s_j \geq \hat t_j + 1 \text{ and } \tilde s_j = \tilde t_j
     \end{matrix}} \tag{$i$-min, $j$-max}
 \end{equation*}


 \begin{equation*}
     \ca{S}^{\bt, 3} = \set{ (\hat s_q, \tilde s_q)_{q \in [k]}\in \ca{S}^{\bt} ~~~ \Big\rvert ~~~ 
     \begin{matrix}
     \hat s_i \geq \hat t_j + 1 \text{ and } \tilde s_i = \tilde t_j, \text{ and } \\
     
      \hat s_j = \hat t_j \text{ and } \tilde s_j \leq \tilde t_j - 1
     \end{matrix}} \tag{$i$-max, $j$-min}
 \end{equation*}

  \begin{equation*}
     \ca{S}^{\bt, 4} = \set{(\hat s_q, \tilde s_q)_{q \in [k]}\in \ca{S}^{\bt} ~~~ \Big\rvert ~~~ 
     \begin{matrix}
      \hat t_j + 1 \leq  \hat s_i \leq \tilde s_i \leq \tilde t_j - 1, \text{ and } \\
      
      \hat s_j = \hat t_j \text{ and } \tilde s_j = \tilde t_j 
     \end{matrix}} \tag{$j$-min, $j$-max}
 \end{equation*}

 \begin{equation*}
     \ca{S}^{\bt, 5} = \set{(\hat s_q, \tilde s_q)_{q \in [k]}\in \ca{S}^{\bt} ~~~ \Big\rvert ~~~ 
     \begin{matrix}
      \hat s_i = \hat t_j \text{ and } \tilde s_i = \tilde t_j, \text{ and } \\
      
      \hat s_j = \hat t_j \text{ and } \tilde s_j \leq \tilde t_j - 1
     \end{matrix}} \tag{$ij$-min, $i$-max}
 \end{equation*}

 \begin{equation}
     \ca{S}^{\bt, 6} = \set{(\hat s_q, \tilde s_q)_{q \in [k]}\in \ca{S}^{\bt} ~~~ \Big\rvert ~~~ 
     \begin{matrix}
      \hat s_i = \hat t_j \text{ and } \tilde s_i \leq \tilde t_j - 1, \text{ and } \\
      
      \hat s_j = \hat t_j \text{ and } \tilde s_j = \tilde t_j 
     \end{matrix}} \tag{$ij$-min, $j$-max}
 \end{equation}

 \begin{equation*}
     \ca{S}^{\bt, 7} = \set{(\hat s_q, \tilde s_q)_{q \in [k]}\in \ca{S}^{\bt} ~~~ \Big\rvert ~~~ 
     \begin{matrix}
      \hat s_i = \hat t_j \text{ and } \tilde s_i = \tilde t_j, \text{ and } \\
      
      \hat s_j \geq \hat t_j + 1 \text{ and } \tilde s_j = \tilde t_j 
     \end{matrix}} \tag{$i$-min, $ij$-max}
 \end{equation*}

 \begin{equation*}
     \ca{S}^{\bt, 8} = \set{(\hat s_q, \tilde s_q)_{q \in [k]}\in \ca{S}^{\bt} ~~~ \Big\rvert ~~~ 
     \begin{matrix}
      \hat s_i \geq \hat t_j + 1 \text{ and } \tilde s_i = \tilde t_j, \text{ and } \\
      
      \hat s_j = \hat t_j \text{ and } \tilde s_j = \tilde t_j 
     \end{matrix}} \tag{$j$-min, $ij$-max}
 \end{equation*}

 \begin{equation*}
     \ca{S}^{\bt, 9} = \set{(\hat s_q, \tilde s_q)_{q \in [k]}\in \ca{S}^{\bt} ~~~ \Big\rvert ~~~ 
     \begin{matrix}
      \hat s_i = \hat t_j \text{ and } \tilde s_i = \tilde t_j, \text{ and } \\
      
      \hat s_j = \hat t_j \text{ and } \tilde s_j = \tilde t_j 
     \end{matrix}} \tag{$ij$-min, $ij$-max}
 \end{equation*}

  For $\ell \in \set{1, 2,\ldots, 9}$, let $\ca{I}^{\ell} = \ext(\vmap, z') ~\times~ \ca{S}^{\bt, \ell}$. It is straightforward to verify that the sets $\ca{S}^{\bt, 1}, \ca{S}^{\bt, 2},\ldots, \ca{S}^{\bt, 9}$, and therefore the sets $\ca{I}^1, \ca{I}^2,\ldots, \ca{I}^9$, are pairwise-disjoint. And that for $(\vmap', \bt') \in \ca{I}$, there exists exactly one index $\ell \in \set{1, 2,\ldots, 9}$ such that $(\vmap', \bt') \in \ca{I}^{\ell}$. 
  In other words, we have $\ca{I} = \biguplus_{\ell = 1}^9 \ca{I}^{\ell}$, and thus by Claim~\ref{claim:calI}, $\comhom(z, \vmap, \bt) = \biguplus_{(\vmap', \bt') \in \ca{I}} \comhom(z', \vmap', \bt') = \biguplus_{\ell = 1}^9 \biguplus_{(\vmap', \bt') \in \ca{I}^{\ell}} \comhom(z', \vmap', \bt')$. 
 We thus have 
 \[
 \cwDP[z, \vmap, \bt] = \sum_{(\vmap', \bt') \in \ca{I}} \cwDP[z', \vmap', \bt'] = \sum_{\ell = 1}^9 \sum_{(\vmap', \bt') \in \ca{I}^{\ell}} \cwDP[z', \vmap', \bt'].
 \]

 \subparagraph*{Edge-join operation.} Suppose $z$ corresponds to the operation $\ejoin_{i, j}(\cdot)$ for distinct $i, j \in [\kw]$, and let $z'$ be the unique child of $z$. Notice that $P^{z} = P^{z'}$; that is, the addition of the line-graph-edges through the $\ejoin_{i, j}(\cdot)$ operation does not change the pattern. Notice also that  $V(M^{z, q}) = V(M^{z', q})$, and therefore $H^{z, q} = H^{z', q}$ for every $q \in [\kw]$. We thus have $W(z) = W(z')$, ~ $\maps(z) = \maps(z')$, ~  $\tms(z)  = \tms(z')$, and consequently $\comhom(z, \vmap, \bt) = \comhom(z', \vmap, \bt)$. Therefore,   
 \[
 \cwDP[z, \vmap, \bt] = \cwDP[z', \vmap, \bt]. 
 \]
 We note that while it may seem at this point  that the insertion of line-graph-edges through the $\ejoin_{i, j}(\cdot)$ operation does not matter for the computation of $\cwDP[z, \vmap, \bt]$, that is not the case. We used the existence of line-graph-edges while computing the DP table entries corresponding to  the union operation; specifically, Claim~\ref{claim:union}  relies on the fact that appropriate line-graph-edges will be inserted at a future node.  

 \subparagraph*{Arc-join operation.} Suppose $z$ corresponds to the operation $\ajoin_{i, j}(\cdot)$ for distinct $i, j \in [\kw]$, and let $z'$ be the unique child of $z$. Notice first that  $V(M^{z, q}) = V(M^{z', q})$, and therefore $H^{z, q} = H^{z', q}$ for every $q \in [\kw]$. We thus have $W(z) = W(z')$, ~ $\maps(z) = \maps(z')$, ~  $\tms(z)  = \tms(z')$. Notice now that the only difference between the patterns $P^{z}$ and $P^{z'}$ is that $P^{z}$ contains the additional constraints $\vartheta(e_i) \leq_{R} \vartheta(e_j)$ (or more precisely $\vartheta^z(e_i) \leq_{R^z} \vartheta^z(e_j))$ for every $e_i \in V(M^{z, i})$ and $e_j \in V(M^{z, j})$. Thus every $(z, \vmap, \bt)$-compatible homomorphism is trivially $(z', \vmap, \bt)$-compatible. 
 
 Now, consider an $(z', \vmap, \bt)$-compatible homomorphism $\varphi =(\nu, \xi)$. Let us see that $\varphi =(\nu, \xi)$ is $(z, \vmap, \bt)$-compatible if only if $\tilde t_i \leq \hat t_j$. Suppose $\varphi =(\nu, \xi)$ is $(z, \vmap, \bt)$-compatible. 
 Then by condition (CC2) Definition~\ref{def:compatible}, there must exist $\tilde e_i \in V(M^{z, i})$ with $\tau(\xi(\tilde e_i)) = \tilde t_i$, and there must exist $\hat e_j \in V(M^{z, j})$ with $\tau(\xi(\hat e_j)) = \hat t_j$. And as $\vartheta(e_i) \leq_R \hat \vartheta(e_j)$, by condition (CC3), we have $\tau(\xi(\tilde e_i)) \leq \tau(\xi(\hat e_j))$, which implies that $\tilde t_i = \tau(\xi(\tilde e_i)) \leq \tau(\xi(\hat e_j)) = \hat t_j$. 
 Conversely, suppose that $\tilde t_i \leq \hat t_j$. To show that $\varphi = (\nu, \xi)$ is $(z, \vmap, \bt)$-compatible, notice that we only need to show that $\tau(\xi(e_i)) \leq \tau(\xi(e_j))$ for every $e_i \in V(M^{z, i})$ and $e_j \in V(M^{z, j})$. By condition (CC3), for every $e_i \in V(M^{z, i})$, we have $\tau(\xi(e_i)) \leq \tilde t_i$, and for every  $e_j \in V(M^{z, j})$, we have $\hat t_j \leq \tau(\xi(e_j))$, and thus $\tau(\xi(e_i)) \leq \tilde t_i \leq \hat t_j \leq \tau(\xi(e_j))$. 

 These arguments show that $\comhom(z, \vmap, \bt) = \comhom(z', \vmap, \bt)$ if $\tilde t_i \leq \hat t_j$, and $\comhom(z, \vmap, \bt) = \emptyset$ otherwise. 
 We thus have
 \[
 \cwDP[z, \vmap, \bt] = \begin{cases} 
                        \cwDP[z', \vmap, \bt], ~~~~~~ \text{ if } \tilde t_i \leq \hat t_j,\text{ and } \\
                        0, ~~~~~~~~~ ~~~~~~~~~~~~~ ~ \text{ otherwise.}
                       \end{cases}
 \]

This completes the description of the DP. We are now ready to complete the proof of Lemma~\ref{lem:dp}.
\begin{proof}[Proof of Lemma~\ref{lem:dp}]
   The preceding  discussion shows that our recurrences for computing $\cwDP[z, \vmap, \bt]$ are correct. Therefore, we can indeed compute $\card{\comhom(z, \vmap, \bt)}$ for every node $z \in V(T)$, $\vmap \in \maps(z)$ and $\bt \in \tms(z)$; in particular, we can compute $\card{\comhom(\hat z, \vmap, \bt)}$, where $\hat z$ is root of the tree $T$.  Now, recall that $P^{\hat z} = P$, which,  along with  Corollary~\ref{cor:unique-sum}, implies that $\counthoms{P}{\Gamma} = \sum_{(\vmap, \bt)} \card{\comhom(\hat z, \vmap, \bt)} = \sum_{(\vmap, \bt)} \cwDP[\hat z, \vmap, \bt]$, where the summation is over all pairs $(\vmap, \bt) \in \maps(\hat z) \times \tms(\hat z)$.  

   The algorithm is now straightforward. Given $\Gamma$, $P$ and $T$, we compute all the entries in the DP table, and output $\sum_{(\vmap, \bt)} \cwDP[\hat z, \vmap, \bt]$. 

   As for the running time, notice that the number of entries in the DP table is $\sum_{z \in V(T)} \card{\maps(z)} \cdot \card{\tms(z)}$. By Remark~\ref{rem:nodes-in-T}, we have $\card{V(T)} = \cO(\kw^2 \cdot \card{M})$, and note that the number of vertices in the \lgp\ $M$ of $P$ is $\card{P}^{\cO(1)}$. By  Observation~\ref{obs:bounds}, we have $\card{\maps(z)} \leq \card{G}^{4 \kw}$ and $\card{\tms(z)} \leq \lt^{2 \kw}$, where $\lt$ is the lifetime of $\Gamma$, i.e., $\lt = \max_{e \in E(G)} \tau(e)$. Thus the total number of entries in the DP table is $ \card{P}^{\cO(1)} \cdot \card{G}^{\cO(\kw)} \cdot \lt^{2 \kw} = \cwDPruntime$, where we write $\card{\Gamma}$ as a shorthand for $\card{G} \cdot \lt$. 
   
   Notice also that we can compute each entry of the DP table in time $\cwDPruntime$. In particular, we can compute the table entries corresponding to the introduce operation in time $\card{\Gamma}^{\cO(1)}$. To compute the entries corresponding to the union, the edge-join and arc-join operations, we only need to look up at most two previously computed table entries. As for the relabeling operation, notice that we need to look up all the entries $\cwDP[z', \vmap', \bt']$ with $(\vmap', \bt') \in \ca{I}^{\ell} = \ext(\vmap, z') ~\times~ \ca{S}^{\bt, \ell}$ for all $\ell \in \set{1, 2,\ldots, 9}$. Now, as $\ext(\vmap, z') \subseteq \maps(z')$, by Observation~\ref{obs:bounds}, we have $\card{\ext(\vmap, z')} \leq \card{G}^{4\kw}$. And notice that for each $\ell$, the elements of $\ca{S}^{\bt, \ell}$ are determined only by $(\hat s_i, \tilde s_i)$ and $(\hat s_j, \tilde s_j)$, and each of these pairs has $\lt^2$ choices; every other pair $(\hat s_q,\tilde s_q)$ has exactly one choice, namely, $(\hat s_q, \tilde s_q) = (\hat t_q, \tilde t_q)$. We thus have $\card{\ca{I}^{\ell}} \leq \card{G}^{4 \kw} \cdot \lt^{4} = \card{\Gamma}^{\cO(\kw)}$. In short, we can compute the table entries corresponding to the relabeling operation by looking up  $\card{\Gamma}^{\cO(\kw)}$ many previously computed entries. 

   Finally, to complete the proof of the lemma, note that we also need to show that each DP table entry is not too large. To that end, observe that for each node $z$ of $T$, the number of homomorphisms $\varphi = (\nu, \xi)$ from $P^z = (H^z, R^z, \vartheta^z)$ to $\Gamma = (G, \tau)$ is at most $\card{G}^{\card{H}} \cdot  \lt^{\card{E(H)}}$; this is because $\nu$ has at most $\card{G}^{\card{H}}$ choices as each vertex of $H$ could be mapped to any of the $\card{G}$ vertices, and for each choice of $\nu$, each edge $uv$ of $H$ could be mapped to one of the at most $\lt$ many parallel edges between $\nu(u)$ and $\nu(v)$. In short, the number of homomorphisms, and therefore each entry of the DP table, can be encoded using $\cO(\card{H}^2 \log(\card{G} \cdot \lt)) = \card{P}^{\cO(1)} \cdot \log \Gamma$ bits. 

   We can thus conclude that the overall running time of our algorithm is upper bounded by $\cwDPruntime$. 
\end{proof}

\begin{proof}[Proof of Theorem~\ref{thm:intro_main_algo}]
    Follows from Lemma~\ref{lem:dp} and the fact that $T^\sigma$ can be computed in time only depending on the temporal pattern $P$.
\end{proof}

\section{A Classification for Totally Ordered Patterns}
In this section we prove Theorem~\ref{thm:intro_main_classification}, which establishes a complete classification of the parameterised complexity of $\#\textsc{TemporalHom}_\mathrm{TO}(\mathcal{H})$, where $\ca{H}$ is a recursively enumerable class of static graphs. 
We discuss the upper bound in Section~\ref{sec:upper-bound}, and the lower bound in Section~\ref{sec:lower-bound}.
Throughout this section, we only deal with totally ordered temporal patterns, i.e., patterns $P = (H, R, \vartheta)$ in which $R$ is a totally ordered set and $\vartheta$ is a bijection. And we denote a totally ordered temporal pattern by $(H, \ord)$, where $\ord$ is a total order on $E(H)$.  

\subsection{Bounding Toadwidth via Semi-Induced Matching Number}\label{sec:upper-bound}

In this section, we show that $\#\textsc{TemporalHom}_\mathrm{TO}(\mathcal{H})$ is fixed-parameter tractable if the class of line graphs of $\ca{H}$ has bounded semi-induced matching number. To do this, we will prove that if the class of line graphs of $\ca{H}$ has bounded semi-induced matching number, then  for every $H \in \ca{H}$ and every total order $\ord$ on $E(H)$, the toadwidth of the temporal pattern $(H, \ord)$ is bounded. The proof then follows from Theorem~\ref{thm:intro_main_algo}. 

Let $H$ be a simple, undirected graph, and let $\ord$ be a total order on $E(H)$; thus $(H, \ord)$ is a temporal pattern.   Consider the line graph $L(H)$ of $H$, and the \lgp\ $\lplus((H, \ord))$; from here on, we will omit the second pair of parentheses, and simply write $\lplus(H, \ord)$ instead of $\lplus((H, \ord))$.   Let $M = \set{e_1 e'_1, e_2 e'_2,\ldots, e_{\simsize} e'_{\simsize}}$ be a semi-induced matching in $L(H)$. We say that $M$ is an \emph{order-respecting semi-induced matching} (orsim, for short) in $\lplus(H, \ord)$ if $e_i \ord e'_j$ for every $i, j \in [\simsize]$. 

In this section, we prove Lemma~\ref{lem:sim-cw-bound}, which says  that the toadwidth of  $(H, \ord)$ is upper bounded by a function of the semi-induced matching number of $L(H)$. Let us recall the formal statement of the lemma. 
\SIMBOUNDSTOADWIDTH*

To prove Lemma~\ref{lem:sim-cw-bound}, we prove the following stronger claim, which says that the toadwidth of $(H, \ord)$ is bounded if every orsim in $\lplus(H, \ord)$ has bounded size. Notice that every orsim in $\lplus(H, \ord)$ is, by definition, a semi-induced matching in $L(H)$. Hence if every semi-induced matching in $L(H)$ has bounded size, then every orsim in $\lplus(H, \ord)$ has bounded size for every total order $\ord$ on $E(H)$. And therefore, Lemma~\ref{lem:orsim-cw-bound} does imply Lemma~\ref{lem:sim-cw-bound}. 

\begin{lemma}\label{lem:orsim-cw-bound}
    Let $\simsize$ be a positive integer. 
    Let $H$ be a simple, undirected graph, and $\ord$ a total order on $E(H)$. If every orsim in $\lplus(H, \ord)$ has size at most $b$, then the toadwidth of $(H, \ord)$ is at most $\simcw$. 
\end{lemma}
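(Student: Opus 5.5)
The plan is to build a \emph{linear} clique-expression for $\lplus(H,\ord)$. We introduce the vertices of $\lplus(H,\ord)$, that is, the edges of $H$, one at a time in increasing order $e_1 \prec e_2 \prec \dots \prec e_m$ with respect to $\ord$.

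\textbf{Handling arcs.} Suppose $e_k$ has just been introduced with a fresh label $\ell^\ast$. Since $\ord$ is total, $\lplus(H,\ord)$ contains the arc $(e_{k'},e_k)$ for every $k'<k$, and contains no arc from $e_k$ to an earlier element. Hence all arcs incident with $e_k$ are produced by $\ajoin_{q,\ell^\ast}$ for every label $q$ in use. This works whatever the labels of earlier elements are.

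\textbf{Handling edges.} Line-graph edges from $e_k$ to earlier elements must be produced by $\ejoin_{q,\ell^\ast}$ for suitable labels $q$. So the labels must encode the neighbourhood of each earlier element among the \emph{future} elements. Concretely, after step $k$ let $A_k=\{e_1,\dots,e_k\}$ and $B_k=E(H)\setminus A_k$. We keep the invariant that two elements $a,a'\in A_k$ carry the same label only if $N_{L(H)}(a)\cap B_k = N_{L(H)}(a')\cap B_k$. All elements with empty future neighbourhood go to a single ``dead'' label.

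After introducing $e_k$ and performing the joins, we relabel $e_k$ into the class matching its future neighbourhood. We also merge any old classes whose future neighbourhoods coincide after $e_{k+1}$ moves from $B_k$ to $A_{k+1}$. Every old class has the same neighbourhood within $B_{k+1}$, so merges are legal, and the joins at the next step are correct. The number of labels needed is then at most (maximum number of distinct nonempty future-neighbourhood classes at a cut) $+\,2$, counting the dead label and the fresh label. The target bound $\simcwone+1$ suggests proving that the number of classes at any cut is at most $\simcwone$.

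\textbf{Bounding the classes (main obstacle).} The key link to orsims is the following observation. If $a_1,\dots,a_r\in A_k$ and $b_1,\dots,b_r\in B_k$ form a semi-induced matching in $L(H)$, then automatically $a_i \ord b_j$ for all $i,j$, so it is an orsim. Hence the bipartite ``cut graph'' $L(H)[A_k,B_k]$ has no semi-induced matching larger than $\simsize$.

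For $a\in A_k$, its future neighbourhood is determined by the set $S(a)$ of its endpoints that are incident with some edge of $B_k$; we have $|S(a)|\le 2$. Let $S$ be the set of boundary vertices of $H$, those touching both $A_k$ and $B_k$. Then the number of classes is at most the number of distinct sets $S(a)$, which are singletons or pairs of $S$.

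I would bound this in two steps.
\begin{itemize}
\item[(i)] Suppose many distinct singletons or pairs occur. Then greedily pick representatives $a_i$, each with a private boundary endpoint $v_i$ and a $B_k$-edge $b_i$ at $v_i$, chosen so that no $b_j$ meets $a_i$ for $i\ne j$. This gives an orsim. Obstructions to the greedy choice come from a $B_k$-edge or an $A_k$-edge joining two boundary vertices, and a $B_k$-edge is adjacent in $L(H)$ only to $A_k$-edges at its two endpoints.
\item[(ii)] Handle these obstructions by a Ramsey-style, or more precisely an Erd\H{o}s--Szekeres-style, counting. Since $L(H)$ adjacency involves only two endpoints, this should give polynomial rather than Ramsey-type bounds. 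Along the way, bound first the number of boundary vertices of large ``private'' degree by $O(\simsize^2)$, and then the pairs among them by $O(\simsize^4)$.
\end{itemize}
The explicit polynomial $4\simsize^4+\dots$ should come out of a careful version of this counting. I expect the case analysis here, which must ensure the extracted matching is genuinely semi-induced, to be the hardest part. Once the class bound holds, the expression above is a $(\simcw)$-expression, proving Lemma~\ref{lem:orsim-cw-bound} and hence Lemma~\ref{lem:sim-cw-bound}.
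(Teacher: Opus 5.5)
Your clique-expression is the paper's construction. You introduce $e_1,\dots,e_m$ in $\ord$-order with a fresh label, arc-join every existing label into it, edge-join the label classes adjacent to it, and keep labels equal to the classes of ``same neighbourhood in $B_k$''. This costs one label more than the maximum number of classes at a cut. Your remark that a semi-induced matching of the cut graph $L(H)[A_k,B_k]$ is automatically an orsim is exactly the link the paper exploits, and your reformulation via the sets $S(a)$ is a valid upper bound on the number of classes.

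The gap is that the only substantial part of the lemma is not proved, namely that every cut has at most $\simcwone$ classes. Step (ii) is a hope (``should give polynomial bounds''), not an argument. Bounding boundary vertices of ``large private degree'' leaves the remaining boundary vertices unaccounted for, and nothing guarantees that the pairs you extract are semi-induced. No Ramsey or Erd\H{o}s--Szekeres argument is needed: all of $A_k$ precedes all of $B_k$, so greedy selection in a conflict structure of bounded degree suffices.

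The missing ingredient that bounds the conflict degree is the following. Once the endpoints of the chosen $A_k$-edges are suitably distinct, any $b+1$ pairs $(a_j,b_j)$ whose $B_k$-edges share their far endpoint already form an orsim. The paper uses this twice.

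\emph{Locally:} every vertex $v$ touches at most $b(b+2)+2$ classes.
\begin{itemize}
\item Take $A_k$-edges $vu_j$ in distinct live classes.
\item At most one $u_j$ has no $B_k$-edge. Two such edges would both have future neighbourhood exactly the $B_k$-edges at $v$, hence be equivalent.
\item The remaining $u_j$ carry $B_k$-edges $u_jw_j$. Two pairs $(vu_j,u_jw_j)$ and $(vu_p,u_pw_p)$ conflict only if $w_j=u_p$ or $w_p=u_j$.
\item Each pair therefore has at most $b+1$ conflicts: one via distinctness of the $u$'s, and at most $b$ via the shared-far-endpoint observation. A greedy choice yields $b+1$ conflict-free pairs.
\end{itemize}

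\emph{Globally:}
\begin{itemize}
\item Pick a representative $x_jv_j$ of each live class, with $v_j$ incident to some $v_jy_j\in B_k$.
\item Since $x_j$ and $v_j$ together touch at most $2b(b+2)+3$ classes, greedily keep $2b(b+1)+1$ representatives whose endpoints are pairwise distinct.
\item Discard collisions $y_j\in\{x_p,v_p\}$, of which there are at most $2b+1$ per pair, to obtain an orsim of size $b+1$.
\end{itemize}
This is exactly where $\simcwone=2b(b+1)\bigl(2b(b+2)+3\bigr)+1$ comes from. Without an argument of this kind, your proposal establishes only the easy reduction to counting classes, not the lemma.
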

\begin{proof}
     Assume that every orsim in $\lplus(H, \ord)$ has size at most $b$. Let $m = \card{E(H)}$. 
     To prove the lemma, we will construct a clique-expression of $\lplus(H, \ord)$ using at most $\simcw$ labels. And to do this, we will define a sequence of $m$ equivalence relations $\sim_1, \sim_2,\ldots, \sim_m$, and bound the number of equivalence classes under each relation by $\simcwone$. And we will ensure that the number of labels used in our clique-expression is at most $1 + \max_{i \in [m]} \text{number of eq. classes under } \sim_i$.  
     
     Let us first define the equivalence relations. Let $(e_1, e_2,\ldots, e_m)$ be the ordering of the edges of $H$ imposed by $\ord$. For each $i \in [m]$, let $E_i = \set{e_1, e_2, \ldots, e_i}$. Consider the line graph $L(H)$. For each $i \in [m]$ and $p \in [i]$, let $N_{> i}(e_p)$ be the set of neighbors of $e_p$ in the set $\set{e_{i + 1}, e_{i + 2},\ldots, e_m}$; by a neighbor of $e_p$, we mean neighbor in the line graph $L(H)$. That is, $N_{> i}(e_p) = \set{e_q ~|~ q > i \text{ and } e_p e_q \in E(L(H))}$. For each $i \in [m]$, we define an equivalence relation $\sim_i$ on the set $E_i$ as follows: For $p, q \in [i]$, we define $e_p \sim_i e_q$ if and only if $N_{> i}(e_p) = N_{> i}(e_q)$. That is, $e_p$ and $e_q$ are equivalent under $\sim_i$ if and only if they have the same neighbors in $\set{e_{i + 1}, e_{i + 2},\ldots, e_m}$. 
     It is straightforward to verify that $\sim_i$ is indeed an equivalence relation.  
     The following claim is immediate from the definition of $\sim_i$, and therefore we state it without proof.
     \begin{claim}\
         For $i, j \in [m]$ with $i \leq j$, and $e, e' \in E_i$, if $e \sim_i e'$ then $e \sim_j e'$. 
     \end{claim} 
     
     As $\sim_i$ is an equivalence relation, $\sim_i$ partitions $E_i$ into equivalence classes. We will prove that the number of equivalence classes under each $\sim_i$ is at most $\simcwone$, which we will then use to bound the cliquewidth of $\lplus(H, \ord)$. 
     \begin{claim}\label{claim:eq-classes}
         For each $i \in [m]$, the number of equivalence classes under the  relation $\sim_i$ is  at most $\simcwone$. 
     \end{claim}
    
    We postpone the proof of Claim~\ref{claim:eq-classes}. For now, we simply note that the main argument behind Claim~\ref{claim:eq-classes} is this: If the claim is not true and there are $\simcw$ equivalence classes $C_1, C_2,\ldots, C_{\simcw}$ under the relation $\sim_i$, then  there exist $b + 1$ indices $j_1, j_2,\ldots, j_{\simsize + 1} \in [\simcw]$ and $2(b + 1)$ (primal graph) edges $x_1 v_1, x_2 v_2,\ldots, x_{\simsize + 1} v_{\simsize + 1}, ~v_1 y_1, v_2 y_2,\ldots, v_{\simsize + 1} y_{\simsize + 1} \in E(H)$ such that (i) $x_p v_p \in C_{j_p} \subseteq E_i$ and $v_p y_p \in \set{e_{i + 1}, e_{i + 2},\ldots, e_m}$ for every $p \in [\simsize + 1]$ and (ii) the set $\set{\set{x_1 v_1, v_1 y_1}, \set{x_2 v_2, v_2 y_2}, \ldots, \set{x_{\simsize + 1} v_{\simsize + 1}, v_{\simsize + 1} y_{\simsize + 1}}}$ of line graph edges is a semi-induced matching in $L(H)$.  As $x_p v_p \ord e_i \ord v_q y_q$ for every $p, q \in [\simsize + 1]$, this semi-induced matching will be an orsim, and in particular an orsim of size $\simsize + 1$, which will contradict our assumption that every orsim has size at most $\simsize$.  
    
    Assuming Claim~\ref{claim:eq-classes}, let us  complete the proof of the lemma. We now construct a cliquewidth expression of $\lplus(H, \ord)$ using at most $\simcw$ labels; we  use labels $1, 2, 3$, etc. in our construction, which is an iterative procedure that takes place in $m$ stages. 
    In stage $1$, we introduce $e_1$ with label $1$. 
    For each $i = 2, 3, \ldots, m$, in this order, we execute stage $i$, which consists of the following steps. 
    \begin{description}
        \item[Step $i$.1.] We introduce $e_i$ with label $\hat \ell$, where $\hat \ell \in \mathbb{N}$ is a ``fresh'' label---a label that is not shared by any of $e_1, e_2,\ldots, e_{i  - 1}$ at the end of stage $(i - 1)$, and  we choose the least $\hat \ell \in \mathbb{N}$ such that $\hat \ell$ is fresh. 
        \item[Step $i$.2.] We perform a union operation to combine $e_i$ and the previously introduced $e_1, e_2,\ldots, e_{i - 1}$.
        \item[Step $i$.3.] We  insert the arcs directed from $\set{e_1, e_2,\ldots, e_{i - 1}}$ to $e_i$ using arc-join operations.  
        \item[Step $i$.4.] We  insert the (line-graph) edges between $e_i$ and the neighbors of $e_i$ in $\set{e_1, e_2,\ldots, e_{i - 1}}$ using edge-join operations. 
        \item[Step $i$.5.] We perform relabeling operations in the following manner: We relabel in such a way that by the end of this step, for $e_p, e_q \in E_i$, both $e_p$ and $e_q$ would have the same label if and only if $e_p \sim_i e_q$; in particular, if there exist labels $\ell, \ell' \in \mathbb{N}$, where $\ell < \ell'$, such that for every $e_p$ with label $\ell$ and every $e_q$ with label $\ell'$, we have $e_p \sim_i e_q$, then we relabel all $\ell'$-labelled elements of $E_i$ with the label $\ell$. This completes stage $i$.
    \end{description}
     For each $i \in [m]$, let $\eqc(\sim_i)$ denote the number of equivalence classes under the relation $\sim_i$. We now state the following two claims, which show that the above construction is feasible and indeed constructs $\lplus(H, \ord)$, and that the number of labels used in this construction is $1 + \max_{i} \eqc(\sim_i)$; we postpone the proofs of these claims.  
    \begin{claim}\label{claim:feasible}
        For every $i \in [m]$, all the operations in stage $i$ are feasible. Moreover, at the end of stage $i$, the following statements hold. 
        \begin{description}
            \item[(S1).] Let $e_p, e_q \in E_i$.   Then  $e_p$ and $e_q$ have the same label if and only if $e_p \sim_i e_q$. 

            \item[(S2).] Let $\ell \in \mathbb{N}$ be such that there exists at least one element of $E_i$ with the label $\ell$; we call such a label an active label. Then the set of all elements of $E_i$ with label $\ell$ is an equivalence class under the relation $\sim_i$. Consequently, the number of active labels  is precisely $\eqc(\sim_i)$. 
            
            \item[(S3).] The graph constructed at the end of stage $i$ is precisely $\lplus(H[E_i], \ord_{i})$, where $H[E_i]$ is the subgraph of $H$ induced by $E_i$, and $\ord_i$ is the restriction of $\ord$ to $E_i$. 
        \end{description}  
    \end{claim}
    \begin{claim}\label{claim:labels-eqc}
        The number of labels used in the construction above is 
        \[
        \begin{cases}
            1, ~~~~~~~~~~~~~~~~~~~~~~~~~~~~~~~~~~~~~~~~~~~~~~~~~~~~~~\text{ if } m = 1 \\
             1 + \max\set{\eqc(\sim_i) ~|~ i \in [m - 1]}, \text{ otherwise.}
        \end{cases}
        \]
       
    \end{claim}
    Let us quickly complete the proof of the lemma before proving the claims. By Claim~\ref{claim:feasible}, the construction above is feasible and we indeed construct $\lplus(H, \ord)$. By Claim~\ref{claim:labels-eqc}, we use at most $1 + \max_{i \in [m - 1]} \eqc(\sim_i)$ labels in this construction. Finally by Claim~\ref{claim:eq-classes}, for each $i \in [m]$, we have $\eqc(\sim_i) \leq \simcwone$. And hence we can conclude that the cliquewidth of $\lplus(H, \ord)$ is at most $\simcw$, which proves the lemma. We now prove Claims~\ref{claim:feasible},  \ref{claim:labels-eqc} and \ref{claim:eq-classes}.

    \begin{claimproof}[Proof of Claim~\ref{claim:feasible}]
        We prove the claim by induction on $i$. The base case, i.e., $i = 1$, is trivial. 
        Consider $i > 1$, and assume that the claim holds for all $j < i$. 
        Then by the induction hypothesis, the graph constructed at the end of stage ${(i - 1)}$ is precisely $\lplus(H[E_{i - 1}], \ord_{i - 1})$. 
        
        Let us first see that the operations in stage $i$ are feasible. 
        Step $i$.1 is clearly feasible, as it only involves the introduce operation, which can always be performed irrespective of the labels of any of $e_1, e_2,\ldots, e_{i - 1}$. Let $\hat \ell \in \mathbb{N}$ be the label with which we introduce $e_i$ in Step $i$.1. Recall that $\hat \ell$ is a fresh label, and thus none of $e_1, e_2,\ldots, e_{i - 1}$ have the label $\hat \ell$ at the end of stage ${(i - 1)}$.  We can therefore perform the union operation involving $\lplus(H[E_{i - 1}], \ord_{i - 1})$ and $e_i$, and thus Step $i.2$ is feasible.  As for Step $i.3$, notice that in this step, we need to insert the arc $(e_p, e_i)$ for every $e_p \in E_{i - 1}$. We can indeed do this because $e_p$ and $e_i$ have different labels; recall again that the label $\hat \ell$ with which we introduce $e_i$ is not shared by any $e_p \in E_{i - 1}$. 
        
        Now consider Step $i$.4. Here we need to insert the (line graph) edge $e_p e_i$ for every $e_p \in N(e_i) \cap E_{i - 1}$. Notice  that for every label $\ell \in \mathbb{N} \setminus \set{\hat \ell}$, and for every $e_p, e_q \in E_{i - 1}$ such that $e_p$ and $e_q$ have label $\ell$ at the end of stage ${(i - 1)}$, by the induction hypothesis (applied to statement (S1)), we have $e_p \sim_{i - 1} e_q$, which implies that  $N_{> i - 1}(e_p) = N_{> i - 1}(e_q)$. Thus either both $e_p$ and $e_q$ are adjacent to $e_i$, or neither of them is. We can therefore insert all the (line-graph) edges between $e_i$ and $E_{i - 1}$ by performing the $\ejoin_{\hat \ell, \ell}(\cdot)$ operation for every label $\ell \in \mathbb{N} \setminus \set{\hat \ell}$ such that there exists $e_p \in N(e_i) \cap E_{i - 1}$ with label $\ell'$. Thus Step $i$.4 is also feasible. 
        
        Now Step $i$.5. Recall that in this step, we need to relabel the elements of $E_i$ in such a way that at the end of this step $e_p$ and $e_q$ should have the same label if and only if $e_p \sim_i e_q$. To that end, consider $e_p, e_q \in E_i$, where $p < q$. 
        Suppose first that $e_p \nsim_i e_q$. We will argue that $e_p$ and $e_q$ have different labels immediately before we perform Step $i$.5; notice that this will ensure that $e_p$ and $e_q$ will continue to have different labels at the end of stage $i$ as well, because our relabeling in Step $i$.5 would give $e_p$ and $e_q$ the same label only if $e_p \sim_i e_q$. If $q = i$, then $e_p$ and $e_q$ have different labels, as we introduced $e_q = e_i$ with a fresh label. So assume that $q < i$; recall that $p < q$. Thus $e_p, e_q \in E_{i - 1}$. As $e_p \nsim_{i} e_q$, by Claim~\ref{claim:eq-classes}, we can conclude that $e_p \nsim_{i - 1} e_q$, which implies that $N_{> i - 1}(e_p) \neq N_{> i - 1} (e_q)$. Then, by the induction hypothesis (applied to statement (S1)), $e_p$ and $e_q$ have different labels at the end of stage ${(i - 1)}$, and consequently  they have different labels immediately before we perform the operations in Step $i$.5 as well, because Steps $i$.1-$i$.4 do not involve any relabeling operation. 
        
        Conversely, suppose that $e_p \sim_i e_q$. We will show that either $e_p$ and $e_q$ already have the same label or we can indeed perform the relabeling operation as in Step $i$.5 so that they both receive the same label. To do this, we will consider two cases depending on whether $e_p \sim_{i - 1} e_q$ or not. If $e_p \sim_{i - 1} e_q$ (which implicitly implies that $e_p, e_q \in E_{i - 1}$ and in particular that $q < i$), then  the induction hypothesis (applied to statement (S1))  implies that $e_p$ and $e_q$ have the same label at the end of stage ${(i - 1)}$; hence $e_p$ and $e_q$ will continue to have the same label thereafter. So suppose that $e_p \nsim_{i - 1} e_q$. Then, either $q = i$, or $q < i$ but $e_p \nsim_{i - 1} e_q$. Let $\ell, \ell' \in \mathbb{N}$ be such that immediately before Step $i$.5, $\ell$ is the label of $e_p$ and $\ell'$ is the label of $e_q$. And let $C \subseteq E_i$ be the set of elements in $E_i$ with label $\ell$, and $C' \subseteq E_i$ the set of elements in $E_i$ with label $\ell'$. We have $e_p \in C$ and $e_q \in C'$. Now, if $q = i$, then we must have introduced $e_q = e_i$ with label $\ell'$ in Step $i$.1, which means that $\ell'$ must have been a fresh label, and therefore $e_q = e_i$ is the only element of $E_i$ with label $\ell'$, and all elements of $E_i$ with label $\ell$ indeed belong to $E_{i - 1}$; that is, $C' = \set{e_i}$ and $C \subseteq E_{i - 1}$. On the other hand, if $q < i$, then notice that both $C$ and $C'$ are contained in $E_{i - 1}$; to see this, observe that $p < q < i$ and therefore $e_i \notin \set{e_p, e_q}$, and as $e_p$ has label $\ell$ and $e_q$ has label $\ell'$, and as we introduced $e_i$ with a fresh label, we can conclude that $e_i$ cannot have the labels $\ell$ or $\ell'$. Thus, either $C \subseteq E_{i - 1}$ and $C' = \set{e_i}$ (the $q = i$ case), or $C, C' \subseteq E_{i - 1}$ (the $q < i$ but $e_p \nsim_{i - 1} e_q$ case). Then, by the induction hypothesis (applied to statement (S2)), we can conclude that in the former case $C$ is an equivalence class under the relation $\sim_{i - 1}$, and in the latter case both $C$ and $C'$ are equivalence classes under the relation $\sim_{i - 1}$. In either case, for every $e \in C$ and every $e' \in C'$, we will argue that $e \sim_i e'$; to re-emphasize, in the $q = i$ case, $C' = \set{e_q}$ and hence $e' = e_q$. This follows from the following two series of equivalences; the first series holds irrespective of whether $q = i$ or $q < i$, whereas the second series holds only in the $q < i$ case; the justifications for these are stated below.  
        \[
        \begin{matrix}
         \text{When } q \leq i:~~ & e \sim_{i - 1} e_p & \implies & e \sim_i e_p & \implies & e \sim_i e_q. \\
       \text{When } q < i:~~ &e' \sim_{i - 1} e_q & \implies & e' \sim_i e_q. & {} & {} 
        \end{matrix}
        \]
        
        \begin{enumerate}
            \item First, $e \sim_{i - 1} e_p$; this holds because $C$ is an equivalence class under $\sim_{i - 1}$, and $e, e_p \in C$. In the $q < i$ case, we also have $e' \sim_{i - 1} e_q$, because $C'$ is an equivalence class under $\sim_{i -1}$ in this case  and $e', e_q \in C'$. 
            \item The fact that $e \sim_{i - 1} e_p$, along with Claim~\ref{claim:eq-classes}, implies that $e \sim_i e_p$. Similarly, in the $q < i$ case, we also have $e' \sim_i e_{q}$.  
            \item Now, $e \sim_i e_p$ implies that $e \sim_i e_q$; this holds because of the transitivity of the equivalence relation $\sim_i$, and $e_p \sim_i e_q$ by our assumption. Notice that in the $q = i$ case, we have already shown that $e \sim_i e'$, as $e' = e_q$ in this case.  
            \item Finally, in the the $q < i$ case, as $e \sim_i e_q$ and $e' \sim_i e_q$, by the symmetry and transitivity of $\sim_i$, we can conclude that $e \sim_i e'$.  
        \end{enumerate}
        We have thus shown that $e \sim_i e'$ for every $e \in C$ and $e' \in C'$, and thus we can indeed relabel in such a way that all elements of $C \cup C'$ receive the label $\min\set{\ell, \ell'}$. And thus Step $i$.5 is feasible. 

        Let us now quickly observe that the feasibility of the operations guarantee that at the end of stage $i$, statements (S1), (S2) and (S3) hold. Statement (S3) holds because Steps $i$.1-$i$.4 are feasible. To see this, recall that by induction hypothesis, the graph constructed at the end of stage $(i - 1)$ is $(\lplus(H[E_{i - 1}]), \ord_{i - 1})$. Then in Steps $i$.1-$i$.4, we introduce $e_i$ and add the edges and arcs between $e_i$ and $E_{i - 1}$ to $(\lplus(H[E_{i - 1}]), \ord_{i - 1})$, which produces precisely the graph $\lplus(H[E_i], \ord_i)$. Statement (S1) holds because Step $i$.5 is feasible: Step $i$.5 guarantees that  $e_p, e_q \in E_i$ have the same label if and only if $e_p \sim_i e_q$. Finally, notice that statement (S2) is an immediate consequence of statement (S1). 
    \end{claimproof}

    \begin{claimproof}[Proof of Claim~\ref{claim:labels-eqc}]
        Note first that in stage $1$, we use just one label; in particular, if $m = 1$, then our construction consists only of stage $1$, and hence we use just one label. Consider $m > 1$. 
        By Claim~\ref{claim:feasible}-statement (S2), for every $i > 1$, the number of active labels at the end of stage $(i - 1)$ is exactly $\sim_{i - 1}$. 
        
        Observe now that the only points in our construction when we would possibly use a previously unused label are Steps $i$.1 for $i \in [m]$.
        Observe now that at any point in our construction, the labels used until that point (not necessarily within a stage) is a set of consecutive numbers starting from $1$, say $1, 2, \ldots, \ell$, for some $\ell \in \mathbb{N}$. This holds because in stage $1$, we introduce $e_1$ with label $1$. Then in every subsequent stage, say in stage $i$ for some $i > 1$, when we introduce $e_i$, we introduce $e_i$ with the \emph{least} fresh label, say $\hat \ell \in \mathbb{N}$. In particular, we would use a previously unused label $\hat \ell$ in Step $i$.1 only if none of the labels $1, 2,\ldots, \hat \ell - 1$ is available at that point, which would happen only if the labels $1, 2,\ldots, \hat \ell - 1$ are all active at the end of stage $(i - 1)$, which would happen only if there are exactly $\hat \ell - 1$ equivalence classes under the relation $\sim_{i - 1}$. We can thus conclude that the total number of labels used in our construction is precisely $1 + \max_{i \in [m - 1]}\eqc(\sim_{i - 1})$. 
    \end{claimproof}
   
    \begin{claimproof}[Proof of Claim~\ref{claim:eq-classes}]
        Recall that $(e_1, e_2,\ldots, e_m)$ is the ordering on $E(H)$ imposed by $\ord$, and for every $i \in [m]$, $E_i = \set{e_1, e_2,\ldots, e_m}$. 
        Assume that the claim is false. Then there exists $i \in [m]$ such that there are at least $\simcw$ equivalence classes under the relation $\sim_i$; fix such an $i$. In the rest of the proof, the only equivalence relation we deal with will be $\sim_i$, and the only equivalence classes we deal with will be equivalence classes under $\sim_i$; so we will simply refer to equivalence classes without specifying the relation $\sim_i$. We will show that $\lplus(H, \ord)$ contains an orsim of size at least $\simsize + 1$, which will contradict our assumption that every orsim in $\lplus(H, \ord)$ has size at most $b$. In particular, for every (line graph edge) $ee'$ in this orsim, we will have $e \in E_i$ and $e' \in \set{e_{i + 1}, e_{i + 2},\ldots, e_m}$. 

        We first define three terms that we will use in this proof. For $e \in E_i$, we say that $e$ is \emph{live} if $e$ has a neighbor in $\set{e_{i + 1}, e_{i + 2},\ldots, e_m}$; that is, $e$ is live if $N_{> i}(e) \neq \emptyset$. For an equivalence class $C$, with a slight abuse of terminology, we say that $C$ is \emph{live} if every element of $C$ is live. Notice that by the definition of $\sim_i$, either every element of $C$ is live or none of them is.  
        Notice also that all equivalence classes, except possibly one, are live; if two equivalence classes $C$ and $C'$ are not live, then $N_{> i}(e) = N_{> i}(e') = \emptyset$ for every $e \in C$ and $e' \in C'$, and thus $e \sim_i e'$, which implies that both $e$ and $e'$ must belong to the same equivalence class and hence $C = C'$. 
        For a vertex $v$ of $H$ and an edge $ee'$ of the line graph $L(H)$, where $e \in E_i$ and $e' \in \set{e_{i + 1}, e_{i + 2},\ldots, e_m}$, we say that $v$ is \emph{responsible for $ee'$} if $v$ is an endpoint of both $e$ and $e'$, i.e., $e = uv$ and $e' = vw$ for some $u, w \in V(H)$. To emphasize, if $v$ is responsible for $ee'$, then by definition, the line graph edge $ee'$ is an edge between $E_i$ and $\set{e_{i + 1}, e_{i + 2},\ldots, e_{m}}$. Also, for a vertex $v$ of $H$, with a slight abuse of terminology, we say that $v$ is \emph{responsible} if there exists a line graph edge $ee'$ such that $v$ is responsible for $ee'$; notice in this case that (i) $e  \in E_i$ and $e' \in \set{e_{i + 1}, e_{i + 2},\ldots, e_{m}}$,   (ii) the vertex $v$ is  the common endpoint of $e$ and $e'$, and (iii) $e$ is live. We also emphasize the following relationship between a live edge and a responsible vertex: For  $e \in E_i$, where $e = uv$, notice that $e$ is live if and only if at least one of $u$ and $v$ is responsible. 
        Finally, for a vertex $v$ of $H$ and an equivalence class $C$, we say that $v$ \emph{touches} $C$ if there exists $e \in C$ such that $v$ is an endpoint of $e$. i.e., there exists $e \in C$ where $e = uv$ for some $u \in V(H)$. 

        We claim that every vertex $v$ of $H$ touches at most $\simsize( \simsize + 2) + 2$ equivalence classes. Suppose this is not true. Then there exist a vertex $v \in V(H)$ and $\simsize (\simsize + 2) + 3$ distinct equivalence classes $C'_1, C'_2,\ldots, C'_{\simsize (\simsize + 2) + 3}$ such that $v$ touches $C'_{j}$ for all $j \in [\simsize (\simsize + 2) + 3]$. As at most one these equivalence classes is not live, at least $\simsize (\simsize + 2) + 2$ of them are live; assume without loss of generality that $C'_1, C'_2,\ldots, C'_{\simsize (\simsize + 2) + 2}$ are live. As $v$ touches each of these equivalence classes, for each $j \in [\simsize (\simsize + 2) + 2]$, the equivalence class $C'_{j}$ contains a (primal graph) edge incident with $v$; for each $j \in [\simsize (\simsize + 2) + 2]$, fix such an edge $v u_j \in C'_j$. Let us now observe the following three facts.
        \begin{description}
            \item[(F1).] As $H$ has no parallel edges, the (primal graph) vertices  $u_1, u_2,\ldots, u_{\simsize (\simsize + 2) + 2}$ are all distinct. 
            \item[(F2).] For each $j \in [\simsize (\simsize + 2) + 2]$, as $C'_j$ is live, $v u_j$ is live in particular. 
            \item[(F3).] All $u_j$s, except possibly one, are responsible. Suppose not. Say there exist distinct $j_1, j_2 \in [\simsize (\simsize + 2) + 2]$ such that $u_{j_1}$ and $u_{j_2}$ are not responsible. We will argue that $v u_{j_1} \sim_i v u_{j_2}$, which will contradict the fact that $v u_{j_1}$ and $v u_{j_2}$ belong to different equivalence classes. To see this, notice that as $u_{j_1}$ is not responsible, $v$ must be responsible for all the  line graph edges between $v u_{j_1}$ and $\set{e_{i + 1}, e_{i + 2},\ldots, e_m}$. In other words, in the line graph $L(H)$, the neighbors of $v u_{j_1}$ in the set $\set{e_{i + 1}, e_{i + 2}, \ldots, e_m}$ are precisely those $e_p \in \set{e_{i + 1}, e_{i + 2}, \ldots, e_m}$ that have $v$ as an endpoint; that is, $N_{> i}(v u_{j_1}) = \set{e_p ~|~ i + 1 \leq p \leq m  \text { and } e_p = wv \text{ for some } w \in V(H)}$. 
            Identical reasoning applies to $v u_{j_2}$, and we consequently have $N_{> i}(v u_{j_1}) = N_{> i}(v u_{j_2})$, which implies that $v u_{j_1} \sim_i v u_{j_2}$, a contradiction.  
        \end{description}  
        In light of Fact (F3), assume without loss of generality that $u_1, u_2,\ldots, u_{\simsize (\simsize + 2) + 1}$ are responsible. Then for every $j \in [\simsize (\simsize + 2) + 1]$, there exists a (primal graph) edge $u_j w_j$ in $ \set{e_{i + 1}, e_{i + 2}, \ldots, e_{m}}$, so that $\set{v u_j, u_j w_j}$ is an edge in the line graph. As $u_1, u_2,\ldots, u_{\simsize (\simsize + 2) + 1}$ are all distinct (Fact  (F1) above), $u_1 w_1, u_2 w_2,\ldots, u_{\simsize (\simsize + 2) + 1} w_{\simsize (\simsize + 2) + 1}$ are all distinct. (As an aside, notice that $w_1, w_2,\ldots, w_{\simsize (\simsize + 2) + 1}$ need not all be distinct.) 
        Thus the set $M'_v = \set{\set{v u_1, u_1 w_1}, \set{v u_2, u_2 w_2}, \ldots, \set{v u_{\simsize (\simsize + 2) + 1}, u_{\simsize (\simsize + 2) + 1} w_{\simsize (\simsize + 2) + 1}}}$ is a set of $\simsize (\simsize + 2) + 1$ distinct line graph edges. 
        
        We will argue that $M'_v$ contains a semi-induced matching $M_v$ of size at least $b + 1$; for every $j, p \in [\simsize (\simsize + 2) + 1]$, the fact that $v u_j \in E_i$ and $u_{p} w_{p} \in \set{e_{i + 1}, e_{i + 2},\ldots, e_m}$ and hence $v u_j \ord u_p w_p$ will then immediately imply that $M_v$ is an orsim in $\lplus(H, \ord)$, a contradiction to our assumption that every orsim in $\lplus(H, \ord)$ has size at most $\simsize$.  To that end, notice that as $H$ is a simple graph, and hence has no self-loops or parallel edges, and as $v u_j$ and $u_j w_j$ are distinct edges of $H$, we can conclude that $u_j \neq v$, $w_j \neq v$ and $w_j \neq u_j$ for every $j \in [\simsize (\simsize + 2) + 1]$. 
        It is however possible that $w_j = u_{p}$ for distinct $j, p \in [\simsize (\simsize + 2) + 1]$. 
        Notice also that if $w_j = u_{p}$, then any semi-induced matching can contain at most one of the two line graph edges $\set{v u_j, u_j w_j}$ and $\set{v u_{p}, u_{p} w_{p}}$, as the line graph would contain an edge between $u_j w_j$ and $v u_{p}$ in this case. But for each $j \in [\simsize (\simsize + 2) + 1]$, we can easily bound the number of indices $p$ such that $w_j = u_p$ or $u_j = w_p$. 
        \begin{description}
            \item[(F4).] First of all, if there exist $\simsize + 1$ distinct indices $p_1, p_2,\ldots, p_{\simsize + 1} \in [\simsize (\simsize + 2) + 1]$ such that $w_{p_1} = w_{p_2} = \cdots = w_{p_{\simsize + 1}}$, then notice that the subset $\set{\set{v u_{p_1}, u_{p_1} w_{p_1}}, \set{v u_{p_2}, u_{p_2} w_{p_2}},\ldots, \set{v u_{p_{\simsize + 1}}, u_{p_{\simsize + 1}} w_{p_{\simsize + 1}}}}$ of $M'_v$ is a semi-induced matching of size $\simsize + 1$, and in particular, an orsim of size $\simsize + 1$, a contradiction. So we can assume from now on that there do not exist $\simsize + 1$ distinct indices $p_1, p_2,\ldots, p_{\simsize + 1} \in [\simsize (\simsize + 2) + 1]$ such that $w_{p_1} = w_{p_2} = \cdots = w_{p_{\simsize + 1}}$. 

            \item[(F5).] Fact (F4) immediately implies that for every $j \in [\simsize (\simsize + 2) + 1]$, there exist at most $\simsize$ many indices $p \in [\simsize (\simsize + 2) + 1] \setminus \set{j}$ such that $u_j = w_p$.  

            \item[(F6).] For every $j \in [\simsize (\simsize + 2) + 1]$, there exists at most one index $p \in [\simsize (\simsize + 2) + 1] \setminus \set{j}$ such that $w_j = u_p$; this holds because $u_1, u_2,\ldots, u_{\simsize (\simsize + 2) + 1}$ are all distinct (Fact (F1)). 
        \end{description}
        
        We now construct a semi-induced matching $M_v \subseteq M'_v$ by greedily adding elements of $M'_v$ to $M_v$: We start with $M_v = \emptyset$, and for $j = 1, 2,\ldots, \simsize (\simsize + 2) + 1$ in this order, we add $\set{v u_j, u_j w_j}$ to $M_v$ if $w_j \neq u_p$ and $u_j \neq w_p$ for every $\set{v u_p, u_p w_p} \in M_v$; this completes the construction of $M_v$. 
        \begin{description}
            \item[(F7).] Observe now that the set $M_v$ constructed this way will have size at least $\simsize + 1$. Because each time we add a $\set{v u_j, u_j w_j}$ to $M_v$, we render at most $\simsize + 1$ many elements $\set{v u_{p}, u_{p} w_{p}} \in M'_v \setminus M_v$ ineligible to be added in the future, as there exists at most $\simsize$ many indices $p$ such that $u_j = w_p$, and at most one index $p$ such that $w_j = u_{p}$ (Facts (F5) and (F6) above).  So more generally, for $r \in [\simsize (\simsize + 2) + 1]$, after adding $r$ many elements to $M_v$, we will have $\card{M_v} = r$, and we will have rendered at most $r(\simsize + 1)$  elements of $M'_v \setminus M_v$ ineligible-for-future, and thus $M'_v$ would still have at least $\card{M'_v} - \card{M_v} - r(\simsize + 1) = \card{M'_v} -  r(\simsize + 2)$ ``eligible'' elements left, each of which could be added to $M_v$. In particular, as $\card{M'_v} = \simsize (\simsize + 2) + 1$, even after adding $\simsize$ many elements to $M_v$, we would still have at least one eligible element left in $M'_v$, which we would add to $M_v$. 
            We can thus conclude that we would add at least $\simsize + 1$ elements to $M_v$, and  we will have $\card{M_v} \geq \simsize + 1$ in the end. 

            \item[(F8).] Also, by construction, $M_v$ is a semi-induced matching; to see this, consider two distinct elements $\set{v u_p, u_p w_p}, \set{v u_j, u_j w_j} \in M_v$, and let us observe that in the line graph $L(H)$, there does not exist an edge between $v u_p$ and $u_j w_j$ or between $u_p w_p$ and $v u_j$.  Assume without loss of generality that $p < j$, so that we would have added $\set{v u_p, u_p w_p}$ to $M_v$ before adding $\set{v u_j, u_j w_j}$. Then by the construction of $M_v$, we have $w_j \neq u_p$ and $u_j \neq w_p$. We have already noted that  $u_p \neq u_j$ (Fact (F1)).  As $v u_p$ and $v u_j$ are edges of $H$ and as $H$ has no self-loops, we have $v \neq u_p$ and $v \neq u_j$. And finally, as $v u_p$ and $u_p w_p$ are distinct edges of $H$ (recall that $v u_p \in E_i$ and $u_p w_p \in E(H) \setminus E_i$), and as $H$ has no parallel edges, and in particular, $v u_p$ and $u_p w_p$ cannot be parallel edges, we can conclude that $v \neq w_p$; by identical reasoning, we have $v \neq w_j$. Notice now that $v \neq u_j$, $v \neq w_j$, $u_p \neq u_j$ and $u_p \neq w_j$ together imply that there does not exist an edge in $L(H)$ between $v u_p$ and $u_j w_j$. Similarly, $u_p \neq v$, $u_p \neq u_j$, $w_p \neq v$ and $w_p \neq u_j$ together imply that there does not exist an edge in $L(H)$ between $u_p w_p$ and $v u_j$. 
        \end{description} 
        Thus $M_v$ is a semi-induced matching of size at least $b + 1$, and in particular an orsim of size at least $b + 1$, which is a contradiction. Recall that the assumption that led to this contradiction was that $H$ has a vertex $v$ that touches at least $\simsize(\simsize + 2) + 3$ equivalence classes. 

        We have thus shown that every vertex of $H$ touches at most $\simsize(\simsize + 2) + 2$ equivalence classes. Recall that we are under the assumption that there are at least $\simcw$  equivalence classes under $\sim_i$; recall also that at most one of those equivalence classes is not live, so at least $\simcwone$  equivalence classes are live. Let $C_1, C_2,\ldots, C_{\simcwone}$ be distinct, live equivalence classes. We will show that $\lplus(H, \ord)$ contains an orsim of size at least $\simsize + 1$, a contradiction to our assumption that every orsim in $\lplus(H, \ord)$ has size at most $\simsize$. To do this, for each $j \in [\simcwone]$, fix an element $x_j v_j \in C_j$. 
        
        For every $j \in [\simcwone]$, notice that as the equivalence class $C_j$ is live, every element of $C_j$ is live. In particular, $x_j v_j$ is live, and hence at least one of $x_j$ and $v_j$ is responsible; assume without loss of generality that $v_j$ is responsible. So there exists a (primal graph) edge $v_j y_j \in \set{e_{i + 1}, e_{i + 2},\ldots, e_m}$, and hence $\set{x_j v_j, v_j y_j}$ is an edge in the line graph $L(H)$. 
        Again, for distinct indices $j, p \in [\simcwone]$, as $x_j v_j$ and $x_p v_p$ are  are distinct edges of $H$, the line graph edges $\set{x_j v_j, v_j y_j}$ and $\set{x_p v_p, v_p y_p}$ are distinct. 
        Let $M'' = \set{\set{x_j v_j, v_j y_j} ~|~ j \in [\simcwone]}$ be the set consisting of these $\simcwone$ line graph edges. We will argue that $M''$ contains a semi-induced matching $M$ of size at least $\simsize + 1$. Again, for every $j, p \in [\simcwone]$, the fact that $x_j v_j \in E_i$ and $v_{p} y_{p} \in \set{e_{i + 1}, e_{i + 2}, \ldots, e_m}$ and hence $x_j v_j \ord  v_{p} y_{p}$ will then immediately imply that $M$ is an orsim of size at least $\simsize + 1$. 
        We construct $M$ in a two-stage process. In the first stage, we construct a subset $M'$ of $M''$ with the following two properties: (i) $M'$ will have size at least $2 \simsize(\simsize + 1) + 1$, and (ii) the $x_j$s and $v_j$s that appear in different elements of $M'$ will all be distinct, i.e., for distinct $\set{x_j v_j, v_j y_j}, \set{x_{p} v_{p}, v_{p} y_{p}} \in M'$, the vertices $x_j, v_j, x_{p}, v_{p}$ will all be distinct. But it may be the case that $y_j = v_{p}$ or $y_j = x_{p}$. Then in the second stage, we prune $M'$ and construct a subset $M$ of $M'$, in which we will also have $y_j \neq v_p$ and $y_j \neq x_p$, which will ensure that $M$ is a semi-induced matching; also, $M$ will have size at least $\simsize + 1$. 
        
        We now proceed to constructing $M'$. To that end, consider $p \in [\simcwone]$. Recall that $C_p$ is the equivalence class that contains $x_p v_p$; recall also that for an equivalence class $C$, we say that $x_p$ (resp. $v_p$) touches $C$ if there exists (a primal graph edge) $e \in C$ such that $x_p$ (resp. $v_p$) is an endpoint of $e$. Let $\touch(p)$ be the set of all indices $j \in [\simcwone]$ such that either $x_p$ or $v_p$  touches the equivalence class $C_j$. Notice that $p \in \touch(p)$, as $x_p$ touches $C_p$; in fact, both $x_p$ and $v_p$ touch $C_p$. Let us now observe that $\card{\touch(p)} \leq 2\simsize (\simsize + 2) + 3$. As every vertex of $H$---and in particular each of $x_p$ and $v_p$---touches at most $\simsize (\simsize + 2) + 2$ equivalence classes, each of $x_p$ and $v_p$  contributes at most $\simsize (\simsize + 2) + 2$ indices to $\touch(p)$. So we immediately have $\card{\touch(p)} \leq 2[\simsize (\simsize + 2) + 2] =  2\simsize (\simsize + 2) + 4$. But notice that we over-counted by $1$ in this bound for $\card{\touch(p)}$: As both $x_p$ and $y_p$ touch $C_p$, we accounted for $p \in \touch(p)$ in the contribution of $x_p$ and in the contribution for $y_p$. We can therefore conclude that $\card{\touch(j)} \leq 2\simsize (\simsize + 2) + 3$. 
        We now greedily construct $M'$ from $M''$ as follows: We start with $M' = \emptyset$, and for each $j = 1, 2,\ldots, \simcwone$ in this order, we add $\set{x_j v_j, v_j y_j}$ to $M'$ if $j \notin \touch(p)$ for every index $p$ such that $\set{x_{p} v_{p}, v_{p} y_{p}} \in M'$; this completes the construction of $M'$.   
        We now prove the  following facts about the set $M'$ constructed this way.
        \begin{description}
            \item[(F9).] First, we have $\card{M'} \geq 2 \simsize(\simsize + 1) + 1$. This holds because each time we add an $\set{x_j v_j, v_j y_j}$ to $M'$, for every index $q \in [\simcwone]$ with $q \in \touch(j)$, we render $\set{x_q v_q, v_q y_q}$ ineligible to be added in the future. And as $\card{\touch(j)} \leq 2\simsize (\simsize + 2) + 3$, we render at most $2\simsize (\simsize + 2) + 3$ elements of $M''$ ineligible this way; notice that we also count $\set{x_j v_j, v_j y_j}$ in these $2\simsize (\simsize + 2) + 3$ ineligible-for-future elements as $j \in \touch(j)$. So after adding $r$ many elements to $M'$, we will have $\card{M'} = r$, we will have rendered at most $r(2\simsize (\simsize + 2) + 3)$ elements of $M''$ ineligible, and we will still have $\card{M''} - r(2\simsize (\simsize + 2) + 3)$ eligible elements in $M''$. And as $\card{M''} = \simcwone = (2\simsize^2 + 2 \simsize)(2\simsize^2 +  4\simsize + 3) + 1 = 2\simsize(\simsize + 1)[2 \simsize(\simsize + 2) + 3] + 1$, we can conclude that we would add at least $2 \simsize(\simsize + 1) + 1$ elements to $M'$, and thus we would have $\card{M'} \geq 2 \simsize(\simsize + 1) + 1$ in the end. 

            \item[(F10).] For every $\set{x_j v_j, v_j y_j} \in M'$, the vertices $x_j, v_j$ and $y_j$ are distinct;  this is in fact true for every $\set{x_j v_j, v_j y_j} \in M''$. This statement holds because $H$ is a simple graph.  First of all, $x_j v_j$ and $v_j y_j$ are distinct edges of $H$; recall that $x_j v_j \in E_i$ and $v_j y_j \in E(H) \setminus E_i$. Then, as $H$ has no self loops, we have $x_j \neq v_j$ and $v_j \neq y_j$, and as $H$ has no parallel edges, we have $x_j \neq y_j$. 
            
            \item[(F11).]  For distinct $\set{x_p v_p, v_p y_p}, \set{x_j v_j, v_j y_j} \in M'$, the vertices $x_p, v_p, x_j, v_j$ are all distinct. We already have $x_p \neq v_p$ and $x_j \neq v_j$ (Fact  (F10)). To see that other pairs of vertices are distinct too, assume without loss of generality that $p < j$ so that we would have added $\set{x_p v_p, v_p y_p}$ to $M'$ before adding $\set{x_j v_j, v_j y_j}$. And we can therefore conclude that $j \notin \touch(p)$, which   
             implies that neither $x_p$ nor $v_p$ touches the equivalence class $C_j$; recall that $C_j$ is the equivalence class that contains $x_j v_j$. As both $x_j$ and $v_j$ touch $C_j$, and neither $x_p$ nor $v_p$  touches $C_j$, we can conclude that $x_j \neq x_p$, $x_j \neq v_p$, $v_j \neq x_p$ and $v_j \neq v_p$. 
        \end{description}

        We thus have a subset $M'$ of $M''$ with $\card{M'} \geq 2 \simsize(\simsize + 1) + 1$. Let $t = \card{M'}$, and assume without loss of generality that $M' = \set{\set{x_1 v_1, v_1 y_1}, \set{x_2 v_2, v_2 y_2},\ldots, \set{x_t v_t, v_t y_t}}$. Notice that while the $x_j$s and $y_j$ appearing in different $M'$ are all distinct, we may still have $y_j = x_p$ or $y_j = v_p$ for distinct $j, p \in [t]$. But we note the following fact about such $y_j$s.   
        \begin{description}
            \item[(F12).] If there exist $\simsize + 1$ distinct indices $p_1, p_2,\ldots, p_{b + 1} \in [t]$ such that $y_{p_1} = y_{p_2} = \cdots = y_{p_{\simsize + 1}}$, then the subset $\set{\set{x_{p_1} v_{p_1}, v_{p_1} y_{p_1}}, \set{x_{p_2} v_{p_2}, v_{p_2} y_{p_2}},\ldots, \set{x_{p_{\simsize + 1}} v_{p_{\simsize + 1}}, v_{p_{\simsize + 1}} y_{p_{\simsize + 1}}}}$ of $M'$ is a semi-induced matching of size $\simsize + 1$, and in particular, an orsim of size $\simsize + 1$, a contradiction. So we assume from now on that there do not exist $\simsize + 1$ distinct indices $p_1, p_2,\ldots, p_{\simsize + 1} \in [t]$ such that $y_{p_1} = y_{p_2} = \cdots = y_{p_{\simsize + 1}}$. 
            
            \item[(F13).] Fact (F12) immediately implies that for every $j \in [t]$, there exist at most $\simsize$ many indices $p \in [t] \setminus \set{j}$ such that $y_p = x_j$, and there exist at most $\simsize$ many indices $p' \in [t] \setminus \set{j}$ such that $y_{p'} = v_j$. 
            
            \item[(F14).] For every $j \in [t]$, there exists at most one index $q \in [t] \setminus \set{j}$ such that $x_q = y_j$ or $v_q = y_j$. This holds because the vertices $x_1, x_2,\ldots, x_t, v_1, v_2,\ldots, v_t$ are all distinct (Facts (F10) and (F11) above). 
        \end{description}
        
        We now follow our familiar strategy and  construct a semi-induced matching $M$ from $M'$. We start with $M = \emptyset$, and for $j = 1, 2,\ldots, t$ in this order, we add $\set{x_j v_j, v_j y_j}$ to $M$ if  $y_p \neq x_j$, $y_p \neq v_j$, $x_p \neq y_j$ and  $v_p \neq y_j$ for every $\set{x_p v_p, v_p y_p} \in M$; this completes the construction of $M$. And we observe the following facts about $M$. 

        \begin{description}
            \item[(F15).] We have $
            \card{M} \geq \simsize + 1$. This holds because each time we add an $\set{x_j v_j, v_j y_j}$ to $M$, we render at most $2 \simsize + 1$ elements of $M' \setminus M$ ineligible to be added in the future. These ineligible-for-future elements (if any of them exists) are precisely (i) $\set{x_p v_p, v_p y_p}$ with $y_p = x_j$ (at most $\simsize$ many such indices $p \in [t]$, by (F13)), (ii) $\set{x_{p'} v_{p'}, v_{p'} y_{p'}}$ with $y_{p'} = v_j$ (at most $\simsize$ many such indices  $p' \in [t]$, again, by (F13)), and (iii) $\set{x_q v_q, v_q y_q}$ with $x_q = y_j$ or $v_q = y_j$ (at most one such index $q \in [t]$, by (F14)). Hence, after adding $r$ many elements to $M$, we will have $\card{M} = r$, we will have rendered at most $r(2 \simsize + 1)$ elements of $M' \setminus M$ ineligible for future, and we will still have $\card{M'} - \card{M} - r(2 \simsize + 1) = \card{M'} - r (2 \simsize + 2)$ eligible elements left in $M'$. As $\card{M'} \geq 2 \simsize(\simsize + 1) + 1$, we can conclude that we would add at least $\simsize + 1$ elements to $M$, and thus we would have $\card{M} \geq \simsize + 1$ in the end. 

            \item[(F16).] The set $M$ is a semi-induced matching in $L(H)$. To see this, consider distinct $\set{x_p v_p, v_p y_p}$ and $\set{x_j v_j, v_j y_j} \in M$, and let us observe that in the line graph $L(H)$, there does not exist an edge between $x_p v_p$ and $v_j y_j$ or between $v_p y_p$ and $x_j v_j$.  Assume without loss of generality that $p < j$, so that we would have added $\set{v u_p, u_p w_p}$ to $M_v$ before adding $\set{v u_j, u_j w_j}$. We already have $x_p \neq v_j$ and $v_p \neq v_j$ (Fact (F11)). By the construction of $M$, we also have $x_p \neq y_j$ and $v_p \neq y_j$, for otherwise we would not have added $\set{x_j v_j, v_j y_j}$ to $M$. These facts imply that there is no line graph edge between $x_p v_p$ and $v_j y_j$. Using symmetric arguments, we can conclude that there is no line graph edge between $v_p y_p$ and $x_j v_j$ either. 
        \end{description}

          We have thus shown that $M$ is a semi-induced matching in $L(H)$ of size at least $\simsize + 1$.    And as $x_p v_p \ord v_j y_j$ for every $\set{x_p v_p, v_p y_p}, \set{x_j v_j, v_j y_j} \in M$, we can conclude that $M$ is indeed an orsim of size at least $\simsize + 1$ in $\lplus(H, \ord)$, a contradiction to the fact that every orsim in $\lplus(H, \ord)$ has size at most $b$. Recall  that the assumption that led to this contradiction was that the equivalence relation $\sim_i$ has at least $\simcw$ equivalence classes. We can thus conclude that $\sim_i$ has at most $\simcwone$ equivalence classes. 
    \end{claimproof}
    This completes the proof of the lemma. 
\end{proof}

\subsection{Parameterised Intractability along large Semi-Induced Matchings}\label{sec:lower-bound}
Recall that a semi-induced matching of a graph $L$ is a set $M$ of pairwise disjoint edges $\{u_1,v_1\},\dots,\{u_k,v_k\}$ such that there are no edges in $L$ between $u_i$ and $v_j$ for $i \neq j$.
A $P_2$-\emph{packing} of a graph $H$ is a set $S$ of pairs of $2$-paths $(u_1,v_1,w_1),\dots,(u_k,v_k,w_k)$ such that the $v_i$ are pairwise distinct. 

\begin{observation}\label{obs:semi-induced-line-match_P2packing}
    Let $H$ be a graph and let $L=L(H)$ be the line graph of $H$. If $L$ has a semi-induced matching of size $k$, then $H$ contains a $P_2$-packing of size $k$.\qed
\end{observation}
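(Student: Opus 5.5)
The plan is to read off the $P_2$-packing directly from the semi-induced matching, using nothing beyond the definition of the line graph. Let $\{e_1,e'_1\},\dots,\{e_k,e'_k\}$ be a semi-induced matching of size $k$ in $L=L(H)$.

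First, each edge $\{e_i,e'_i\}$ of $L$ is a pair of distinct edges of $H$ that share an endpoint. We call this common endpoint $v_i$, and write $e_i=u_iv_i$ and $e'_i=v_iw_i$. Since $e_i\neq e'_i$, we get $u_i\neq w_i$; if $H$ has parallel edges, we fix one common endpoint. This gives a $2$-path $(u_i,v_i,w_i)$ in $H$ for every $i\in[k]$.

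The key step is to show that the middle vertices $v_1,\dots,v_k$ are pairwise distinct. Suppose $v_i=v_j$ for some $i\neq j$. Then $e_i$ and $e'_j$ both contain $v_i$. By the matching property, the $2k$ elements $e_1,\dots,e_k,e'_1,\dots,e'_k$ are pairwise distinct vertices of $L$, so in particular $e_i\neq e'_j$. Hence $e_i$ and $e'_j$ are adjacent in $L$, which contradicts semi-inducedness (no edges between $e_i$ and $e'_j$ for $i\neq j$).

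Therefore $(u_1,v_1,w_1),\dots,(u_k,v_k,w_k)$ is a $P_2$-packing of size $k$. There is no real obstacle here. The only point needing care is using the disjointness of the matching edges to ensure $e_i\neq e'_j$, so that sharing the vertex $v_i$ really does produce an edge in $L$.
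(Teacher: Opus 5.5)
Your proof is correct and is exactly the routine argument the paper leaves implicit, since the observation is stated without proof: the shared endpoint of each matched pair serves as the middle vertex, and $v_i=v_j$ for $i\neq j$ would make the distinct vertices $e_i$ and $e'_j$ of $L$ adjacent, contradicting semi-inducedness. The only quibble is your parenthetical on parallel edges: fixing a common endpoint does not restore $u_i\neq w_i$ when $e_i,e'_i$ are parallel (two parallel edges already give a semi-induced matching of size $1$ in $L(H)$ but no path on three vertices), so your argument, like the observation itself, needs $H$ simple, which is the setting in which it is applied.
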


\subsubsection{Construction of the Global Gadget}
In this subsection, we will define and analyse the main construction employed in our hardness reduction.
To this end, we introduce an edge-coloured intermediate version of the homomorphism counting problem.

\begin{definition}[$H$-coloured temporal graphs]
    An $H$-colouring of a graph $G$ is a homomorphism $c=(\nu_c,\xi_c)$ from $G$ to $H$ such that $\nu$ is surjective.\footnote{We note that the surjectivity condition is not usually present for $H$-colourings in the literature; however, for the purpose of this work, we will rely on the surjectivity constraint.} An $H$-coloured temporal graph is a triple $(G,\tau,c)$ such that $(G,\tau)$ is a temporal graph, and $c$ is an $H$-colouring of $G$.
\end{definition}

\begin{definition}[Edge-colourful homomorphisms]
     Let $(H,\preccurlyeq)$ be a totally ordered temporal pattern, and let $(G,\tau,c)$ be an $H$-coloured temporal graph with $c=(\nu_c,\xi_c)$. A homomorphism $\varphi=(\nu_\varphi,\xi_\varphi)$ from $(H,\preccurlyeq)$ to $(G,\tau)$ is called \emph{edge-colourful} if the mapping $e \mapsto \xi_c(\xi_\varphi(e))$ is a bijection on $E(H)$. In other words, $\varphi$ is called edge-colourful if it hits all edge-colours of $G$ w.r.t.\ the edge-colouring $\xi_c$. We write
     \[ \colhom{(H,\preccurlyeq)}{(G,\tau,c)} \]
     for the set of all edge-colourful homomorphisms from $(H,\preccurlyeq)$ to $(G,\tau,c)$.
\end{definition}

The majority of the work in this section is dedicated towards proving the following construction, which will allow us to reduce the clique problem to counting colourful temporal homomorphisms from totally ordered temporal patterns that contain large $P_2$-packings. 
\begin{lemma}\label{lem:gadget_construction}
    There is a polynomial-time algorithm $\mathbb{A}$ that receives as input 
    \begin{itemize}
        \item  a totally ordered temporal pattern $(H,\preccurlyeq)$,
        \item a positive integer $k$, encoded in binary,
        \item a $P_2$-packing of size $(k+1)^2-1$ of $H$, and
        \item a graph $F$ with $|V(F)|>k$.
    \end{itemize}
     $\mathbb{A}$ computes an $H$-coloured temporal graph $(G,\tau,c)$ and a total order $\preccurlyeq$ on $E(H)$ such that $F$ contains a $k$-clique if and only if $\#\colhom{(H,\preccurlyeq)}{(G,\tau,c)}\geq 1$.\qed
\end{lemma}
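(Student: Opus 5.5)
The plan is to reduce \textsc{Clique} through a grid of $2$-paths, in the spirit of the left pattern in Figure~\ref{fig:temp_patterns_example}. The order-constraints between neighbouring grid cells will force consistent choices of clique vertices, and the $H$-colouring will confine every edge of $H$ to its own gadget.

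\emph{Grid indexing.} Index the given $P_2$-packing by the positions $(i,j)\in\nzero{k+1}^2$ minus one corner; this is why exactly $(k+1)^2-1$ paths are needed. For position $(i,j)$, write the $2$-path as $(p^1_{i,j},p^2_{i,j},p^3_{i,j})$ with edges $e^\ell_{i,j}=p^1_{i,j}p^2_{i,j}$ and $e^r_{i,j}=p^2_{i,j}p^3_{i,j}$. The middle vertices $p^2_{i,j}$ are pairwise distinct by definition of a packing.

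\emph{The host $(G,\tau,c)$.} For every edge $f$ of $H$ outside the packing, and every vertex of $H$ that is not a middle vertex, we put a single copy into $G$, coloured by itself. These are the \emph{trivial gadgets}: they admit exactly one colourful image, and we give it a time fitting the chosen total order.

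For a grid cell $(i,j)$ with $i,j\in[k]$ and $i\neq j$, the vertex $p^2_{i,j}$ gets one copy $(p^2_{i,j},a,b)$ for each ordered pair $(a,b)$ with $ab\in E(F)$. The edge $e^\ell_{i,j}$ gets a copy joining $(p^2_{i,j},a,b)$ to the copy of $p^1_{i,j}$, and its time encodes $a$. Symmetrically, the copy of $e^r_{i,j}$ has a time encoding $b$. The diagonal cells use $a=b$ over all of $V(F)$. The cells in the extra row and column, $i=0$ or $j=0$, are ``wrap-around'' cells that only copy a value; they carry no adjacency test. Because $p^2_{i,j}$ is mapped to a single copy, the homomorphism condition forces the values read off $e^\ell_{i,j}$ and $e^r_{i,j}$ to come from one $F$-edge, or one vertex on the diagonal. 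All copies are coloured by their originals, so $c$ is a surjective $H$-colouring, and edge-colourfulness means every original edge is hit exactly once.

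\emph{Choice of the total order and times.} The algorithm outputs a total order $\preccurlyeq$ extending a snake-like traversal of the grid, as in Figure~\ref{fig:temp_patterns_example}. The non-packing edges are placed in the order, with their trivial times, where they impose no further constraint. The times are built from a base time per position in $\preccurlyeq$, plus an offset in $[|V(F)|]$ encoding the vertex value. The offset is used with positive sign on ``forward'' edges and with reversed sign $|V(F)|+1-x$ on ``backward'' edges, and the spacing between bases is chosen so that only designated pairs of edges can interact.

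The goal is that along each row, and along each column, the comparisons imposed by $\preccurlyeq$ yield a chain of inequalities $x\le x'\le\dots$ on the encoded values. This chain returns to its start through the extra wrap-around row and column, so it forces all values along a row, and along a column, to be equal. Row $i$ then carries a single value $x_i$, and column $j$ carries a single value $y_j$. The diagonal cells force $x_i=y_i$, and the off-diagonal cells force $x_iy_j\in E(F)$.

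\emph{Correctness.} Any colourful homomorphism therefore yields vertices $x_1,\dots,x_k$ that are pairwise adjacent, hence distinct, and so form a $k$-clique. Conversely, a $k$-clique determines the images canonically, and one checks that every comparison demanded by $\preccurlyeq$ is met. This gives $\#\colhom{(H,\preccurlyeq)}{(G,\tau,c)}\ge1$ if and only if $F$ contains a $k$-clique. The construction is polynomial in $|H|$, $k$ and $|F|$; here $k\le|V(H)|$ because the packing has size $(k+1)^2-1$.

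\emph{Main obstacle.} The hardest step is the time and order design. A total order constrains \emph{every} pair of edges, not only grid neighbours, and the strict inequalities of a totally ordered pattern rule out literal equality of times. So the bases and offsets must make all non-designated pairs trivially satisfied, by pushing them into separated time blocks. At the same time, each designated neighbouring pair must have overlapping ranges whose comparison encodes exactly $x\le x'$, with the cyclic closure through the extra row and column turning these inequalities into equalities. I would first try blocks of width $|V(F)|$, with consecutive designated pairs shifted by exactly one block, and then verify the strict-versus-nonstrict bookkeeping case by case.
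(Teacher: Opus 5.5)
Your architecture matches the paper's: the paths indexed by $\nzero{k+1}^2$ minus a corner, clones of $p^2_{h,v}$ indexed by ordered $F$-edges off the diagonal and by $F$-vertices on the diagonal and in the anchor row and column, colouring by originals, and a cyclic chain closed through the anchor cell. However, the step you defer as the ``main obstacle'' is the actual content of the lemma, and the design you sketch would not supply it. If consecutive designated edges are shifted by a whole block of width $n=|V(F)|$, their ranges $[B+1,B+n]$ and $[B+n+1,B+2n]$ are disjoint. The comparison then holds for all values, with or without the reversal $n+1-x$, so it encodes nothing; the shift must be a single time unit. More seriously, suppose your snake visits the cells one after another. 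It then interleaves left edges, carrying the row value, with right edges, carrying an independent column value. But any edge lying in $\preccurlyeq$ between two edges whose comparison is to encode exactly $x\le x'$ must itself carry $x$. Otherwise, for $F=K_n$, two cliques that agree on its value but not on $x$ would force its time into two disjoint windows. The paper therefore groups edges by the value they carry. Here $\mathsf{row}_h$ is $e^\ell_{h,0}\prec e^\ell_{h,1}\prec\dots\prec e^\ell_{h,k}\prec e^r_{h,0}$, and $\mathsf{column}_v$ is $e^r_{0,v}\prec e^r_{1,v}\prec\dots\prec e^r_{k,v}\prec e^\ell_{0,v}$. The order $\preccurlyeq$ lists all rows, then all columns, then the non-packing edges, so the two edges of an off-diagonal cell lie in different segments. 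The times put the vertex in the high-order digit: $t^\ell(u,h,v)=(u+1)n^{2(h+1)}+(v+1)$ and $t^r(u,h,v)=(u+1)n^{2(k+v+2)}+(h+1)$. The closing anchor edge gets $t^\ell(u,h,k)+1$ (resp.\ $t^r(u,k,v)+1$). Both anchor edges hang off the same clone $p^2_{h,0}(u)$. Hence, with $A=n^{2(h+1)}>k$, the sandwich $(u+1)A+1<(u'+1)A+v+1<(u+1)A+k+2$ forces $u'=u$ along the row; this is where $|V(F)|>k$ is used. The different powers of $n$ make all cross-segment pairs automatic (Lemma~\ref{lem:tdef}). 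With unit shifts and this grouping, your additive scheme $B+a+x$ would also work.

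Second, you treat edge-colourfulness as colour preservation. Phrases such as ``confine every edge of $H$ to its own gadget'' and ``$p^2_{i,j}$ is mapped to a single copy'' presuppose $c(\varphi(e))=e$. Edge-colourful only means that $e\mapsto c(\varphi(e))$ is a bijection of $E(H)$. Since the colour classes of designated pairs must have overlapping time ranges, block separation alone does not exclude a nontrivial permutation. The paper has to derive $c(\varphi(e))=e$ from the times. The non-packing edges receive the $\kappa$ largest times and are the $\kappa$ last edges of $\preccurlyeq$, so they keep their colours. The argument then proceeds backwards from $e^\ell_{0,k}$ through the column and row segments; you need an argument of this kind. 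Finally, a non-packing edge incident with a middle vertex $p^2_{h,v}$ cannot receive a single copy. Every clone must inherit the full neighbourhood of $p^2_{h,v}$. Otherwise $\varphi(p^2_{h,v})$ is forced onto the one clone carrying that copy, which breaks completeness.
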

For the purpose of avoiding notational clutter, we allow ourselves to fix the following objects:
\begin{itemize}
    \item $k\geq 2$ is a positive integer.
    \item $H$ is a graph containing a $P_2$-packing of size $(k+1)^2-1$.
    \item $F$ is a graph with $n>k$ vertices and $m$ edges. We assume that $F$ does not contain isolated vertices as they will not contribute to any $k$- cliques.
\end{itemize}
For what follows, recall that, for any natural number $N$, we set $\nzero{N}=\{0,\dots,N-1\}$. Moreover, further streamlining the construction, we will use the below conventions within this section, illustrated in Figure~\ref{fig:gadget_graphH}.
\begin{itemize}
    \item We use $U$ to denote the set of vertices of $F$, and we assume $U=\nzero{n}$. Furthermore, we will use the letter $u$ for denoting vertices of $F$, such as $u,u',u_1,u_2,\dots$.
    \item We will use the letters $\ell$ and $r$ solely in the context of \textbf{l}eft and \textbf{r}ight.
    \item We will use the letters $h$ and $v$ solely in the context of \textbf{h}orizontal and \textbf{v}ertical indexes in a grid.
    \item The 2-paths in the $P_2$-packing of $H$ are denoted by $p_{h,v}$ for $h,v \in \nzero{k+1}$, excluding $p_{0,0}$. We denote the vertices of $p_{h,v}$ by $p^1_{h,v}$, $p^2_{h,v}$, and $p^3_{h,v}$, and we denote its edges by $e^\ell_{h,v}=\{p^1_{h,v},p^2_{h,v}\}$ and $e^{r}_{h,v}=\{p^2_{h,v},p^3_{h,v}\}$. 
\end{itemize}

\begin{figure}[t]
    \centering
    \begingroup
\renewcommand{\arraystretch}{2.5}
\setlength{\arrayrulewidth}{0.25pt}
\arrayrulecolor{black!50}

\[
\setlength{\fboxsep}{0pt}
\fboxrule=0.8pt
\fbox{%
$%
\begin{array}{c|c|c|c|c}
\cellcolor{gray!15}
&
\PTwoCell{0}{1}{~}{different}
&
\PTwoCell{0}{2}{~}{different}
&
\cdots
&
\PTwoCell{0}{k}{~}{different}
\\ \hline
\PTwoCell{1}{0}{~}{different}
&
\PTwoCell{1}{1}{~}{same}
&
\PTwoCell{1}{2}{~}{different}
&
\cdots
&
\PTwoCell{1}{k}{~}{different}
\\ \hline
\vdots & \vdots & \vdots & \ddots & \vdots
\\ \hline
\PTwoCell{k}{0}{~}{different}
&
\PTwoCell{k}{1}{~}{different}
&
\PTwoCell{k}{2}{~}{different}
&
\cdots
&
\PTwoCell{k}{k}{~}{same}
\end{array}%
$%
}
\]
\endgroup

    \caption{Visualization of the indexing for vertices and edges in the $P_2$-packing contained in $H$ as used for the construction in Lemma~\ref{lem:gadget_construction}.}
    \label{fig:gadget_graphH}
\end{figure}

\paragraph*{Constructing $\Gamma=(G,\tau,c)$ from $H$ and $F$}
We provide an illustration in Figure~\ref{fig:gadgetG}.
For the construction of $G$, we will start from $H$ by iteratively cloning the center vertices $p^2_{h,v}$ in the $P_2$-packing. Specifically, we proceed for all $h,v \in \nzero{k}$ as follows:
\begin{enumerate}
    \item If $h=v>0$, we replace $p^2_{v,v}$ by $n$ fresh vertices $\{p^2_{v,v}(u)\}_{u \in U}$. \\ Each $p^2_{v,v}(u)$ will have the same neighbourhood as $p^2_{v,v}$. We emphasize that this creates a biclique between $\{p^2_{i,i}(u)\}_{u \in U}$ and $\{p^2_{j,j}(u)\}_{u \in U}$ if $p^2_{i,i}$ and $p^2_{j,j}$ are adjacent in $H$. The edges connecting $p^2_{v,v}(u)$ to $p^1_{v,v}$ and $p^3_{v,v}$ are denoted by, respectively, $e^{\ell}_{v,v}(u)$ and $e^{r}_{v,v}(u)$. 
    \item If $h=0$ or $v=0$, we proceed similarly: we replace $p^2_{h,v}$ by $n$ fresh vertices $\{p^2_{h,v}(u)\}_{u \in U}$. Again, the edges connecting $p^2_{h,v}(u)$ to $p^1_{h,v}$ and $p^3_{h,v}$ are denoted by, respectively, $e^{\ell}_{h,v}(u)$ and $e^{r}_{h,v}(u)$.
    Similarly as in (1), this can create bicliques between sets of fresh vertices, but those will not cause a problem.
    \item Otherwise, that is, if $h\neq v$ and $h,v>0$, we replace $p^2_{h,v}$ by $2m$ fresh vertices as follows: For each $e=\{u,u'\}$ of $F$, we add $p^2_{h,v}(u,u')$ and $p^2_{h,v}(u',u)$.\\ 
    The edges connecting $p^2_{h,v}(u,u')$ to $p^1_{h,v}$ and $p^3_{h,v}$ are denoted by, respectively, $e^{\ell}_{h,v}(u)$ and $e^{r}_{h,v}(u')$.\\ 
    The edges connecting $p^2_{h,v}(u',u)$ to $p^1_{h,v}$ and $p^3_{h,v}$ are denoted by, respectively, $e^{\ell}_{h,v}(u')$ and $e^{r}_{h,v}(u)$.\\
    Similarly as in (1), this can create bicliques between sets of fresh vertices, but those will not cause a problem.
\end{enumerate}
This finishes the construction of $G$. Intuitively, the vertices $p_{h,v}(\cdot)$ in $G$ are organised in a grid-like subgraph; we refer again to Figure~\ref{fig:gadgetG} for an illustration. Nodes of type (1) are the diagonal, nodes of type (2) are what we call \emph{anchors} for rows ($p_{h,0}$) and columns ($p_{0,v}$), and nodes of type (3) correspond to the remaining cells.

Before continuing with $\tau$, we make the following easy but crucial observation:
\begin{observation}
The vertices of $G$ can be partitioned as follows:
    \[ V(G) = V(H)\setminus \{p^2_{h,v}\}_{h,v \in [k+1]} ~\dot\cup~ \{p^2_{v,v}(u)\}_{\substack{u \in U\\0<v}}~ \dot\cup~ \{p^2_{h,v}(u)\}_{\substack{u \in U\\v=0 \oplus h = 0}} ~\dot\cup~\{p^2_{h,v}(u,u')\}_{\substack{\{u,u'\}\in E(F)\\h,v>0 \wedge h\neq v}} \]
Moreover, the function $c: V(G) \to V(H)$ defined by
\[ c(x)=\begin{cases}
    p^2_{v,v} & x= p^2_{v,v}(u) \text{  for some } 0<v\in [k+1]\text{, and } u \in U\\
    p^2_{h,v} & x= p^2_{h,v}(u) \text{  for some } h,v\in [k+1] \text{ with $h=0 \oplus v=0$, and } u \in U\\
    p^2_{h,v} & x= p^2_{h,v}(u,u') \text{  for some } h,v\in [k] \text{ with $v \neq h$ and $h,v>0$, and } \{u,u'\} \in E(F)\\
    x & \text{otherwise}
\end{cases} \]
induces a vertex-surjective homomorphism from $G$ to $H$. \qed
\end{observation}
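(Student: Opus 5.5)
The statement to prove is the final Observation: the vertex partition of $G$, and the claim that $c$ induces a vertex-surjective homomorphism $G\to H$. I would verify it directly from the cloning construction, in three steps.

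\emph{Step 1: the partition.} $G$ is obtained from $H$ by deleting each centre vertex $p^2_{h,v}$ of the packing (all $(h,v)\neq(0,0)$) and inserting fresh clones. These centres are pairwise distinct by the definition of a $P_2$-packing, so each is replaced exactly once. Each replaced $p^2_{h,v}$ falls into exactly one of the three construction types:
\begin{itemize}
\item diagonal, $h=v>0$;
\item anchor, exactly one of $h,v$ equal to $0$;
\item off-diagonal, $h\neq v$ and $h,v>0$.
\end{itemize}
Since every clone is fresh, the four listed sets (untouched vertices of $H$, diagonal clones, anchor clones, off-diagonal clones) are pairwise disjoint and cover $V(G)$.

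\emph{Step 2: $c$ is a homomorphism.} Every clone $x$ of $p^2_{h,v}$ is given a neighbourhood that is the image of $N_H(p^2_{h,v})$ under replacing each neighbouring centre by its clones. This holds for diagonal and anchor clones, which copy the full neighbourhood, and for off-diagonal clones $p^2_{h,v}(u,u')$, which are joined to $p^1_{h,v}$ and $p^3_{h,v}$ via $e^\ell_{h,v}(u)$ and $e^r_{h,v}(u')$ and, by the construction, inherit the remaining adjacencies of $p^2_{h,v}$. Take an edge $xy$ of $G$. By construction it is either an edge of $H$ between untouched vertices, or arises from an edge $\{p^2_{h,v},y'\}$ of $H$ in which one or both endpoints were replaced by clones. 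In every case $c$ maps $x,y$ back to the original endpoints, which are adjacent in $H$. We set $\xi_c$ to send the edge $xy$ to that original edge of $H$. $H$ has no loops and no clone is adjacent to another clone of the same centre, so no edge collapses to a loop.

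\emph{Step 3: surjectivity.} Vertices of $H$ other than the centres are fixed by $c$. Each centre has at least one clone: for diagonal and anchor centres because $n>k\ge 1$, and for off-diagonal centres because $F$ has an edge (it has no isolated vertices and $n>k\ge2$). Hence $c$ is vertex-surjective.

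The main obstacle is the consistency issue in Step 2. The convention ``$h,v\in\nzero{k}$'' in the construction is inconsistent with the indices $\nzero{k+1}$ used elsewhere, and it must be read as covering all centres. Likewise, for off-diagonal clones one has to confirm that their edges to other packing vertices are inherited; otherwise some edges of $H$ would have no preimage (harmless for being a homomorphism) but the construction description would be incomplete. I would settle this by adopting the reading in which every clone inherits all adjacencies of its centre.
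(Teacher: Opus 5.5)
Your proposal is correct and follows essentially the paper's own argument. The paper gives no proof beyond remarking that $c$ maps each clone back to the centre $p^2_{h,v}$ it replaced, and the partition, the homomorphism property and vertex-surjectivity follow directly from the cloning construction exactly as you verify. Your reading that the indices range over all $(h,v)\neq(0,0)$, and that off-diagonal clones inherit the full neighbourhood of $p^2_{h,v}$, is the intended one; the paper's remark that case (3) also creates bicliques between clone sets confirms it.
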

Note that the mapping $c$ merely maps back the cloned vertices to the orginal vertices in $H$, inducing a canonical $H$-colouring $(\nu_c,\xi_c)$ of $G$. We allow ourselves to use $c$ interchangeably for both $\nu_c$ and $\xi_c$.

We will now proceed by defining $\tau$. 
To this end, define the following two functions taking as input $u \in U (=\nzero{n})$ and $h,v \in \nzero{k+1}$.
\begin{align}
    t^{\ell}(u,h,v) &:= (u+1)n^{2(h+1)}+(v+1) \\
    t^r(u,h,v) &:= (u+1) n^{2(k+v+2)} + (h+1)
\end{align} 

\begin{figure}
    \centering
    \begingroup
\renewcommand{\arraystretch}{2.5}
\setlength{\arrayrulewidth}{0.25pt}
\arrayrulecolor{black!50}

\[
\setlength{\fboxsep}{0pt}
\fboxrule=0.8pt
\fbox{%
$%
\begin{array}{c|c|c|c|c}
\cellcolor{gray!15}
&
\DiamondCell{0}{1}{\forall\{u,u'\}\in E(F)}{different}
&
\DiamondCell{0}{2}{\forall\{u,u'\}\in E(F)}{different}
&
\cdots
&
\DiamondCell{0}{k}{\forall\{u,u'\}\in E(F)}{different}
\\ \hline
\DiamondCell{1}{0}{\forall\{u,u'\}\in E(F)}{different}
&
\DiamondCell{1}{1}{\forall u\in U}{same}
&
\DiamondCell{1}{2}{\forall\{u,u'\}\in E(F)}{different}
&
\cdots
&
\DiamondCell{1}{k}{\forall\{u,u'\}\in E(F)}{different}
\\ \hline
\DiamondCell{2}{0}{\forall\{u,u'\}\in E(F)}{different}
&
\DiamondCell{2}{1}{\forall\{u,u'\}\in E(F)}{different}
&
\DiamondCell{2}{2}{\forall u\in U}{same}
&
\cdots
&
\DiamondCell{2}{k}{\forall\{u,u'\}\in E(F)}{different}
\\ \hline
\vdots & \vdots & \vdots & \ddots & \vdots
\\ \hline
\DiamondCell{k}{0}{\forall\{u,u'\}\in E(F)}{different}
&
\DiamondCell{k}{1}{\forall\{u,u'\}\in E(F)}{different}
&
\DiamondCell{k}{2}{\forall\{u,u'\}\in E(F)}{different}
&
\cdots
&
\DiamondCell{k}{k}{\forall u\in U}{same}
\end{array}%
$%
}
\]
\endgroup
    \caption{Illustration of the construction of $G$ from $H$ and $F$.}
    \label{fig:gadgetG}
\end{figure}

\begin{lemma}\label{lem:tdef}
The functions $t^\ell$ and $t^r$ have the following properties:
\begin{itemize}
    \item[(a)] For all $u,u' \in U$ and $0<h<k$ we have $ t^\ell(u,h,k)+1 < t^\ell(u',h+1,0)$.
    \item[(b)] For all $u,u' \in U$ and $0<v<k$ we have $t^r(u,k,v)+1 < t^r(u',0,v+1)$.
    \item[(c)] For all $u,u'\in U$ and $h,h',v,v'\in \nzero{k}$, we have $t^{\ell}(u,h,v)+1 <
    t^r(u',h',v')$.
    \end{itemize}
\end{lemma}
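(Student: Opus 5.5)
The plan is a direct calculation. The only facts about $n$ that we need are that every $u\in U=\nzero{n}$ satisfies $1\le u+1\le n$, and that $n>k\ge 2$, so $n\ge 3$. Each time value is a ``large'' term, namely $(u+1)$ times a power of $n$, plus a ``small'' offset of at most $k+1$. The strategy is to bound the left-hand side of each inequality from above by taking $u+1=n$ and the largest offset, and to bound the right-hand side from below by taking $u'+1=1$ and the smallest offset. In every case the powers of $n$ then differ by at least one factor of $n$, and that gap swamps the offsets.

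For (a), we have $t^\ell(u,h,k)+1\le n\cdot n^{2(h+1)}+(k+1)+1=n^{2h+3}+k+2$. On the other side, $t^\ell(u',h+1,0)\ge n^{2(h+2)}+1=n^{2h+4}+1$. It then suffices to check that $n^{2h+4}-n^{2h+3}=n^{2h+3}(n-1)>k+1$. This holds because $n^{2h+3}(n-1)\ge 2n^3>n>k$.

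Part (b) is identical with the exponent $2(k+v+2)$ in place of $2(h+1)$. We get $t^r(u,k,v)+1\le n^{2(k+v+2)+1}+k+2$ and $t^r(u',0,v+1)\ge n^{2(k+v+3)}+1$, and the same estimate $n^{N}(n-1)>k+1$ with $N=2(k+v+2)+1$ finishes it.

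For (c), the indices $h,v$ range over $\nzero{k}$, so $h\le k-1$ and $v+1\le k$. Hence $t^\ell(u,h,v)+1\le n^{2(h+1)+1}+k+1\le n^{2k+1}+k+1$. On the other side, since $v'\ge 0$, we have $t^r(u',h',v')\ge n^{2(k+v'+2)}+1\ge n^{2k+4}+1$. Here the gap $n^{2k+4}-n^{2k+1}$ exceeds $k$ by an even wider margin. This bound would survive even if $h$ were allowed to equal $k$, since $2k+3<2k+4$.

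There is no real obstacle in this proof. The only step needing care is bookkeeping: using the correct ranges ($u+1\le n$, and the offsets $v+1,h+1\le k+1$) so that the maximal left-hand power is compared against the minimal right-hand power. Beyond that, we must use $n\ge 3$, which follows from $n>k\ge 2$, to absorb the additive constants.
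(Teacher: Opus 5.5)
Your proposal is correct and takes essentially the same route as the paper. Like the paper, you bound the left-hand side from above using $u+1\le n$ and the maximal offset, bound the right-hand side from below by its leading power of $n$, and let the extra factor of $n$ (with $n\ge 3$) absorb the additive constants. Your remark that (c) still holds for $h=k$ matches the paper's own estimate $n\cdot n^{2k+2}+(k+2)$. This is the case needed later, when the last edge of $\mathsf{row}_k$ is compared with the column edges.
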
 

\begin{proof}
We prove all three claims separately.
\begin{enumerate}
    \item[(a)] We have
    \[t^\ell(u,h,k)+1 \leq n\cdot n^{2(h+1)} + k+2 < n^{2h+3} + n^{2h+3} < n^{2h+4} = n^{2((h+1)+1)} \leq t^\ell(u',h+1,0) \,. \]
    \item[(b)] We have
    \[t^r(u,k,v)+1 \leq n\cdot n^{2(k+v+2)} + k+2 < n^{2k+2v+5} + n^{2k+2v+5} < n^{2k+2v+6} = n^{2(k+(v+1)+2)} \leq t^\ell(u',0,v+1) \,. \]
    \item[(c)] We have 
    \[t^\ell(u,h,v)+1 \leq n\cdot n^{2k+2} + (k+2) <  n^{2k+3} + n^{2k+3} <  n^{2k+4} = n^{2(k+2)} \leq t^r(u',h',v')  \]
    \end{enumerate}
\end{proof}

Next, for defining $\tau$, let $R$ denote the set of edges of $H$ not contained in the $P_2$-packing (note that $R$ might contain edges between two vertices of the $P_2$-packing). Set $\kappa = |R|$ and fix any total order $\prec_R$ on $R$; we write $\hat{e}_1,\dots,\hat{e}_\kappa$ for the elements of $R$, ordered by $\prec_R$. We then set $\tau$ as follows:
\begin{definition}[Specification of $\tau$]\label{def:tau_in_hardness_reduction}
We assign the edges of $G$ the following times:
\begin{enumerate}
    \item For all $0<h,v \in \nzero{k+1}$ and $u\in U$, we set $\tau(e^\ell_{h,v}(u))=t^\ell(u,h,v)$ and $\tau(e^r_{h,v}(u))=t^r(u,h,v)$.
    \item For all $0<h \in \nzero{k+1}$ and $u \in U$, we set $\tau(e^\ell_{h,0}(u))=t^\ell(u,h,0)$ and $\tau(e^r_{h,0}(u))= t^\ell(u,h,k)+1$.
    \item For all $0<v \in \nzero{k+1}$ and $u \in U$, we set $\tau(e^r_{0,v}(u))=t^r(u,0,v)$ and $\tau(e^\ell_{0,v}(u))= t^r(u,k,v)+1$.
    \item For any remaining edge $e$ of $G$ we proceed as follows: Observe that $c(e)=\hat{e}_j$ for some $\hat{e}_j\in R$. We set $\tau(e)= t^r(n,k+1,k+1)+2+j$.
\end{enumerate}
\end{definition}

\subsubsection{Defining the total order for $E(H)$}\label{sec:ordering_on_H}
We define the total order $\preccurlyeq$ as follows: For each $0<h \in \nzero{k+1}$ we define the segment
\[\mathsf{row}_h :~e^\ell_{h,0} \prec e^\ell_{h,1} \prec \dots \prec e^\ell_{h,k} \prec e^r_{h,0}\]
For each $0<v \in \nzero{k+1}$ we define the segment
\[ \mathsf{column}_v :~e^r_{0,v} \prec e^r_{1,v} \prec \dots \prec e^r_{k,v} \prec e^\ell_{0,v}\]
Finally, we concatenate the segments (and $R$) as follows:
\[ \prec ~:=~~~~\mathsf{row}_1 \prec \mathsf{row}_2 \prec \dots \prec \mathsf{row}_{k} \prec \mathsf{column}_1 \prec \mathsf{column}_2 \prec \dots \prec \mathsf{column}_k\prec \hat{e}_1 \prec \dots \prec \hat{e}_\kappa \]

\subsubsection{Correctness of the Global Gadget}

\begin{lemma}\label{lem:gadget_correct}
    $F$ has a $k$-clique if and only if there is an edge-colourful homomorphism from $(H,\preccurlyeq)$ to $(G,\tau,c)$.
\end{lemma}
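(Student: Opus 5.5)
We prove the two directions separately. The forward direction is a direct construction. The backward direction is where the timestamps $t^\ell,t^r$ and the total order $\preccurlyeq$ do the work: together they force consistency along rows and columns of the grid.

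\emph{Forward direction.} Let $\{u_1,\dots,u_k\}$ be a $k$-clique of $F$. Define $\varphi=(\nu,\xi)$ to be the identity on every vertex of $H$ that is not a centre $p^2_{h,v}$. The centres are mapped as follows:
\begin{itemize}
\item $p^2_{v,v}\mapsto p^2_{v,v}(u_v)$;
\item $p^2_{h,v}\mapsto p^2_{h,v}(u_h,u_v)$ for $h\neq v$ with $h,v>0$, which exists because $\{u_h,u_v\}\in E(F)$;
\item $p^2_{h,0}\mapsto p^2_{h,0}(u_h)$ and $p^2_{0,v}\mapsto p^2_{0,v}(u_v)$.
\end{itemize}
Every edge of $H$ then maps to an edge of $G$ with the same colour, so $\varphi$ is edge-colourful. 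This includes the edges in $R$, since cloned vertices inherit the neighbourhoods of the originals.

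It remains to check the temporal order. In $\mathsf{row}_h$ the left edges receive times $t^\ell(u_h,h,0)<t^\ell(u_h,h,1)<\dots<t^\ell(u_h,h,k)<t^\ell(u_h,h,k)+1$. These increase because the first summand is fixed and only the term $v+1$ grows. Columns are analogous, with $u_v$ fixed in $t^r$. Consecutive rows are ordered by Lemma~\ref{lem:tdef}(a), and consecutive columns by Lemma~\ref{lem:tdef}(b). The last row lies before the first column by Lemma~\ref{lem:tdef}(c). The edges in $R$ receive times exceeding every $t^r$ value, in the order $\prec_R$. Hence $\varphi$ respects $\preccurlyeq$ strictly.

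\emph{Backward direction.} Let $\varphi$ be an edge-colourful homomorphism.

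First we show that $\varphi$ preserves colours. Since $e\mapsto c(\xi(e))$ is a bijection on $E(H)$, and the non-centre vertices are not cloned, I will argue that $c\circ\varphi$ is an automorphism-like map. The timestamps then pin it down. Strict monotonicity along $\preccurlyeq$, together with the layered magnitudes of the times (all row times $<$ all column times $<$ all $R$ times, and within these blocks ordered by powers of $n$), forces the edge of rank $j$ in $\preccurlyeq$ to be mapped to the edge-colour of rank $j$. Concretely, the $j$-th smallest colour class in time is the $j$-th edge of $\preccurlyeq$, and a bijection that is monotone between two totally ordered sets of the same size is the identity. Therefore $c(\xi(e))=e$ for all $e$.

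Next we define the selected vertices. For each centre $p^2_{h,v}$, the image $\nu(p^2_{h,v})$ is a clone: $p^2_{h,v}(u)$ or $p^2_{h,v}(u,u')$. Both incident edges $e^\ell_{h,v}$ and $e^r_{h,v}$ pass through this same clone. For $h>0$, let $a_h$ be the vertex with $\nu(p^2_{h,0})=p^2_{h,0}(a_h)$. Now consider row $h$:
\begin{itemize}
\item $\tau(\xi(e^\ell_{h,0}))=t^\ell(a_h,h,0)$;
\item $\tau(\xi(e^r_{h,0}))=t^\ell(a_h,h,k)+1$.
\end{itemize}
Every intermediate edge $e^\ell_{h,v}$ is mapped to some $t^\ell(x,h,v)$, and these values must lie strictly between the two times above. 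Because the leading term is $(x+1)n^{2(h+1)}$ and the offsets $v+1\le k+1<n$ are small, the lower bound forces $x\ge a_h$ and the upper bound forces $x\le a_h$. So all left edges in row $h$ carry $u$-value $a_h$. The symmetric argument for column $v$ with $t^r$ shows that all right edges in column $v$ carry a common value $b_v$.

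Finally we extract the clique. At a diagonal cell $(v,v)$ the clone $p^2_{v,v}(u)$ has both of its edges indexed by the same $u$, so $a_v=b_v$. At an off-diagonal cell $(h,v)$ the clone $p^2_{h,v}(a_h,b_v)$ exists only if $\{a_h,b_v\}\in E(F)$. Combining these, for all $h\neq v$ we have $\{a_h,a_v\}\in E(F)$. In particular $a_h\neq a_v$, since $F$ has no loops. Hence $\{a_1,\dots,a_k\}$ is a $k$-clique of $F$.

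\emph{Main obstacle.} The delicate step is the first step of the backward direction: showing that colourfulness together with strict monotonicity forces $c(\xi(e))=e$. The sandwich bounds in the row and column arguments rely on comparing the leading terms, which also needs care. I expect the colour-preservation step to require the precise rank-counting argument sketched above.
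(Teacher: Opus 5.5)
Your forward direction and the end of your backward direction (the sandwich in each row and column, $a_v=b_v$ on the diagonal, reading off the clique) match the paper. The gap is the colour-preservation step.

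\textbf{The rank-counting premise is false.} Your claim that ``the $j$-th smallest colour class in time is the $j$-th edge of $\preccurlyeq$'' does not hold. All colour classes of $\mathsf{row}_h$ have times $(u+1)n^{2(h+1)}+s$ with the \emph{same} power of $n$, and $u$ ranges over all of $U$, so the classes interleave; for example $t^\ell(0,h,k)<t^\ell(1,h,0)$. The same happens in every $\mathsf{column}_v$.

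Applied correctly, the monotone-bijection idea gives only two things: $c\circ\xi$ fixes every edge of $R$, and it maps each block $\mathsf{row}_h$, $\mathsf{column}_v$ onto itself. Inside a block, the times allow colours to be permuted, provided $u$ increases.

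Your definition of $a_h$ and your sandwich both assume that $\nu(p^2_{h,0})$ is a clone of $p^2_{h,0}$, and that is exactly what is unproved. If it did hold, the block would be pinned. The first and last edges of $\mathsf{row}_h$ would meet at some $p^2_{h,0}(a)$, so their images would have times $t^\ell(a,h,0)$ and $t^\ell(a,h,0)+k+1$. The $k$ edges in between would then be forced onto $t^\ell(a,h,1),\dots,t^\ell(a,h,k)$.

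\textbf{The step cannot be closed in general.} The packing only requires distinct centres, and shared endpoints allow a counterexample to the statement itself. Take the following instance.
\begin{itemize}
\item $k=3$, $F=C_4$ on $\{0,1,2,3\}$, $R=\emptyset$.
\item The fifteen 2-paths are vertex-disjoint except $p^1_{1,0}=p^1_{1,3}$, $p^3_{1,0}=p^3_{1,3}$, $p^1_{3,0}=p^1_{3,1}$ and $p^3_{3,0}=p^3_{3,1}$. This gives two 4-cycles, and the packing even arises from a semi-induced matching in $L(H)$.
\item Put $(b_1,b_2,b_3)=(0,1,2)$. Fix every non-centre vertex except the four shared ones.
\item Column anchors and row~2: $p^2_{0,v}\mapsto p^2_{0,v}(b_v)$, $p^2_{2,0}\mapsto p^2_{2,0}(1)$, $p^2_{2,1}\mapsto p^2_{2,1}(1,0)$, $p^2_{2,2}\mapsto p^2_{2,2}(1)$, $p^2_{2,3}\mapsto p^2_{2,3}(1,2)$.
\item Remaining ordinary centres: $p^2_{1,1}\mapsto p^2_{1,1}(0)$, $p^2_{1,2}\mapsto p^2_{1,2}(0,1)$, $p^2_{3,2}\mapsto p^2_{3,2}(2,1)$, $p^2_{3,3}\mapsto p^2_{3,3}(2)$.
\item Row~1 twist: $p^2_{1,0}\mapsto p^1_{1,0}$, $p^1_{1,0}\mapsto p^2_{1,0}(0)$, $p^2_{1,3}\mapsto p^3_{1,0}$, $p^3_{1,0}\mapsto p^2_{1,3}(3,2)$.
\item Row~3 twist: $p^2_{3,0}\mapsto p^1_{3,0}$, $p^1_{3,0}\mapsto p^2_{3,0}(0)$, $p^2_{3,1}\mapsto p^3_{3,0}$, $p^3_{3,0}\mapsto p^2_{3,1}(3,0)$.
\end{itemize}

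All required clones exist. The map is edge-colourful: $c\circ\xi$ swaps $e^\ell_{1,3}\leftrightarrow e^r_{1,0}$ and $e^\ell_{3,1}\leftrightarrow e^r_{3,0}$. The times are as follows.
\begin{itemize}
\item $\mathsf{row}_1$: $t^\ell(0,1,0)$, $t^\ell(0,1,1)$, $t^\ell(0,1,2)$, $t^\ell(0,1,3)+1$, $t^\ell(3,1,3)$.
\item $\mathsf{row}_3$: $t^\ell(0,3,0)$, $t^\ell(0,3,3)+1$, $t^\ell(2,3,2)$, $t^\ell(2,3,3)$, $t^\ell(3,3,1)$.
\item $\mathsf{row}_2$ and each $\mathsf{column}_v$ receive the intended times with values $1$ and $b_v$.
\end{itemize}
Every constraint of $\preccurlyeq$ holds strictly, yet $C_4$ has no triangle.

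Here $c\circ\nu$ is a nontrivial block-preserving automorphism of $H$, so your automorphism remark does not rescue the step. The paper's own proof has the same hole: its induction from $e^\ell_{0,k}$ compares $t^r(u,k,k)+1$ only with times carrying the same $u$.

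A correct backward direction therefore needs an extra hypothesis, or a modified construction, guaranteeing that every anchor centre is sent to one of its own clones. For instance, if the 2-paths are vertex-disjoint, then $p^2_{h,0}$ is the only vertex meeting $\mathsf{row}_h$ twice, so the block-preserving automorphism $c\circ\nu$ must fix it.
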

\begin{proof}
    We show both directions separately, and we start with the easier direction:
    \paragraph*{$F$ has a $k$-clique $\Rightarrow$ there is an edge-colourful homomorphism from $(H,\preccurlyeq)$ to $(G,\tau,c)$}
    Let $u_1,\dots,u_k$ be a $k$-clique in $F$. We define $\varphi$ as follows --- note that we only need to specify the mapping from vertices of $H$ to vertices of $G$ as, by construction, the only possible multi-edges are between non-cloned vertices:
    \begin{itemize}
        \item For all $x \in V(H)$ with $x \neq p^2_{\cdot,\cdot}(\cdot)$, we just set $\varphi(x)=x$.
        \item For all $0<v \in \nzero{k+1}$, we set $\varphi(p^2_{v,v}) = p^2_{v,v}(u_v)$.
        \item For all $0<h,v \in \nzero{k+1}$ with $h \neq v$ we set $\varphi(p^2_{h,v})=p^2_{h,v}(u_h,u_v)$.
        \item For all $0<h \in \nzero{k+1}$, we set $\varphi(p^2_{h,0})=p^2_{h,0}(u_h)$.
        \item For all $0<v \in \nzero{k+1}$, we set $\varphi(p^2_{0,v})=p^2_{0,v}(u_v)$.
    \end{itemize}
    Since the $u_i$ form a clique, all vertices $p^2_{h,v}(u_h,u_v)$ for $0<h,v \in \nzero{k+1}$ exist. Moreover, it follows also immediately from the construction of $G$ that $\varphi$ preserves edges since $c(\varphi(x))=x$ for all vertices of $H$ and all of the fresh vertices $p^2_{h,v}(u_h,u_v)$ have been added by cloning the $p^2_{h,v}$. 

    It remains to be shown that $\varphi$ also satisfies the temporal constraints. We will verify first the constraints of the row and column segments separately, and then we will verify their combination.
    Let us start by considering, for some $0<h \in \nzero{k+1}$, the segment
\[\mathsf{row}_h :~e^\ell_{h,0} \prec e^\ell_{h,1} \prec \dots \prec e^\ell_{h,k} \prec e^r_{h,0}\]
We have to verify that
\[\tau(\varphi(e^\ell_{h,0})) < \tau(\varphi( e^\ell_{h,1})) < \dots < \tau(\varphi(e^\ell_{h,k})) < \tau(\varphi(e^r_{h,0}))\,.\]
Applying $\varphi$, this is equivalent to
\[\tau(e^\ell_{h,0}(u_h)) < \tau(e^\ell_{h,1}(u_h,u_1)) < \dots < \tau(e^\ell_{h,k}(u_h,u_k)) < \tau(e^r_{h,0}(u_h)) \,,\]
which, by inserting the definition of $\tau$, is equivalent to
\[t^\ell(u_h,h,0) < t^\ell(u_h,h,1) < \dots < t^\ell(u_h,h,k) < t^\ell(u_h,h,k)+1 \,,\]
which follows immediately from the definition of $t^\ell$.

With a symmetric argument, one can show that the images of the $\mathsf{column}_v$ segments are correctly ordered. Finally, Lemma~\ref{lem:tdef} (a) implies that the time of the image of the last edge of $\mathsf{row}_h$ is less than the time of the image of the first edge of $\mathsf{row}_{h+1}$. The symmetric property holds for $\mathsf{column}$ using Lemma~\ref{lem:tdef} (b). Finally, Lemma~\ref{lem:tdef} (c) implies that time of the image of the last edge in $\mathsf{row}_k$ is less than the time of any edge in any $\mathsf{column}_v$. We are hence able conclude this direction by observing that all edges in $R$ are mapped to edges the times of which are larger than all previous times, and that the ordering of times within those edges coincides with $\prec_R$ (see (4) in the definition of $\tau$).
\paragraph*{There is an edge-colourful homomorphism from $(H,\preccurlyeq)$ to $(G,\tau)$ $\Rightarrow$ $F$ has a $k$-clique}\label{sec:hard_direction}

Let $\varphi$ be an edge-colourful homomorphism from $(H,\preccurlyeq)$ to $(G,\tau,c)$ --- we identify again $\varphi$ as a mapping from $V(H)$ to $V(G)$ as there are no multi-edges that have a cloned vertex as an endpoint, but only edges incident to cloned vertices will be relevant for specifying the $k$-clique in $F$. 

Recall, specifically, that $G$ has been obtained from $H$ by cloning intermediate vertices $p^2_{h,v}$ of the designated $P_2$-packing in $H$.
Recall further that $R \subseteq E(H)$ is the set of edges of $H$ not included in the $P_2$ packing. Moreover, recall also that $R= \hat{e}_1, \dots, \hat{e}_\kappa$, ordered by $\hat{e}_1 \prec \dots \prec \hat{e}_\kappa$; additionally, those $\kappa$ edges are the last ones in the order $\prec$. Now, from (4) in Definition~\ref{def:tau_in_hardness_reduction} it follows that any edge $e \in E(G)$ with $c(e)=\hat{e}_j$ was assigned time $t(e)=t^r(n,k+1,k+1)+2+j$. Moreover, for each $j>0$, we have
\[ t^r(n,k+1,k+1)+2+j > t^r(u,h,v) > t^\ell(u,h,v) \]
for any $u\in \nzero{N}, h,v \in \nzero{k+1}$ by Lemma~\ref{lem:tdef}. In other words \[t^r(n,k+1,k+1)+2+1~,~\dots~,~t^r(n,k+1,k+1)+2+\kappa\] constitute the $\kappa$ highest times of edges in $G$, and for each $1\leq j \leq \kappa$ there is an edge $e\in E(G)$ with $c(e)=\hat{e}_j$ and $t(e)=t^r(n,k+1,k+1)+2+j$. Since $\varphi$ is edge-colourful and since $\varphi$ preserves the ordering $\prec$, it follows that $c(\varphi(e_j))=e_j$ for all $1\leq j \leq \kappa$. Consequently, the edges $\hat{e}$ of $H$ not included in the designated $P_2$ packing must be mapped by $\varphi$ to edges $e$ in $G$ coloured by $\hat{e}$. This enables us to focus on the images of the edges of the $P_2$-packing. 

To this end, it will be convenient to consult again the two illustrations of $H$ and $G$ in Figures~\ref{fig:gadget_graphH} and~\ref{fig:gadgetG}.

Note that the left and right endpoints of the $P_2$-packing in $H$ are not necessarily disjoint. We only know that the intermediate vertices are disjoint. The same is true for the corresponding fragments of $G$. Note also that the depiction of $G$ reflects its $H$-colouring: For any $h,v,u$, the edges $e^\ell_{h,v}(u)$ and $e^r_{h,v}(u)$ are coloured, respectively, by $e^{\ell}_{h,v}$ and $e^r_{h,v}$.

Note that $\varphi$ induces an edge-colourful homomorphism from the above fragment of $H$ to the above fragment of $G$ that satisfies the temporal constraints given by $\prec$. What remains to be shown is that this mapping induces a $k$-clique in $F$.

In the first step, we show that $\varphi$ must in fact map each edge $e^\ell_{h,v}$ / $e^r_{h,v}$ to an edge $e^\ell_{h,v}(u)$ / $e^r_{h,v}(u)$ for some $u\in U$, that is, we claim that $\varphi$ is not only colourful in the sense that it hits each edge-colour of $H$, but it satisfies in fact $c(\varphi(e))=e$. To this end, recall from Section~\ref{sec:ordering_on_H} that the edges of $H$ are ordered, w.r.t. $\prec$, by
\[ \mathsf{row}_1 \prec \dots \prec \mathsf{row}_{k} \prec \mathsf{column}_1 \prec \dots \prec \mathsf{column}_k \,, \text{ where}\]
\[\mathsf{row}_h :~e^\ell_{h,0} \prec e^\ell_{h,1} \prec \dots \prec e^\ell_{h,k} \prec e^r_{h,0}\]
and
\[ \mathsf{column}_v :~e^r_{0,v} \prec e^r_{1,v} \prec \dots \prec e^r_{k,v} \prec e^\ell_{0,v}\]

Let us start with the last edge of the last column: $e^\ell_{0,k}$. Note that all edges of $G$ coloured by $c$ with $e^{\ell}_{0,k}$ are assigned time $t^r(u,k,k)+1$ for some $u$ (see Definition~\ref{def:tau_in_hardness_reduction}; item (3)). Let us assume for contradiction that  $e^{\ell}_{0,k}$ is not mapped to an edge of $G$ with time $t^r(u,k,k)+1$ for some $u$. Then, by Definition~\ref{def:tau_in_hardness_reduction}, the edge $e^{\ell}_{0,k}$ must be mapped to an edge of time $t^r(u,h,v)$ or $t^\ell(u,h,v)$, corresponding to (1) and (2) in Definition~\ref{def:tau_in_hardness_reduction} --- note that edges with times assigned in item (4) in Definition~\ref{def:tau_in_hardness_reduction} are not available anymore since the corresponding colours have been covered by the edges of $H$ outside of the $P_2$-packing. However, as $\varphi$ is edge-colourful, we obtain as a consequence, that there must be an edge $e$ of $H$ with $e \prec e^{\ell}_{0,k}$ but the time of $\varphi(e)$ is $t^r(u,k,k)+1> \max\{t^r(u,h,v),t^\ell(u,h,v)\}$ (see Lemma~\ref{lem:tdef}). This yields a contradiction and thus $e^{\ell}_{0,k}$ is mapped to an edge of $G$ with time $t^r(u,k,k)+1$. The edges with those times are precisely the edges of the form $e^\ell_{0,k}(u)$.
Continuing inductively, using the choice of $\tau$ in Definition~\ref{def:tau_in_hardness_reduction} and the properties of $t^\ell$ and $t^r$ (Lemma~\ref{lem:tdef}), we obtain, as desired that 
$\varphi$ must map each edge $e^\ell_{h,v}$ / $e^r_{h,v}$ to an edge $e^\ell_{h,v}(u)$ / $e^r_{h,v}(u)$ for some $u\in U$. Referring to the illustrations in Figures~\ref{fig:gadget_graphH} and~\ref{fig:gadgetG} of $H$ and $G$, this means that each $2$-path $e^\ell_{h,v},e^r_{h,v}$ of $H$ is mapped to one of the corresponding $2$-paths in the same cell (with index $h$ and $v$) in $G$. 

For each $(h,v)\in \nzero{k+1}^2\setminus\{(0,0)\}$, let $u^\ell_{h,v},u^r_{h,v} \in V(F)$ such that $e^\ell_{h,v}$ and $e^r_{h,v}$ are mapped, respectively, to $e^\ell_{h,v}(u^\ell_{h,v})$ and $e^r_{h,v}(u^r_{h,v})$. 

\begin{claim}\label{clm:hardness_rows_equal}
    For all $0<h \in \nzero{k+1}$ and $v \in \nzero{k+1}$, we have $u^\ell_{h,0}=u^\ell_{h,v}$.
\end{claim}
\begin{claimproof}
From the definition of the total order and the fact that $\varphi$ is a temporal homomorphism, it follows that
\[\tau(e^\ell_{h,0}(u^\ell_{h,0})) <\dots < \tau(e^\ell_{h,v}(u^\ell_{h,v})) < \dots < \tau(e^r_{h,0}(u^r_{h,0})) \,.\]
Now note that, by construction of $G$, we have $u^\ell_{h,0}=u^r_{h,0}$; inserting the definition of $\tau$ we thus obtain:
\[t^\ell(u^\ell_{h,0},h,0) <t^\ell(u^\ell_{h,v},h,v) < t^\ell(u^\ell_{h,0},h,k) +1 \,.\]
Recall that $n=|V(F)|>k$ and set $A=n^{2(h+1)}$; hence note that $A>k$. Inserting the definition of $t^\ell$, we obtain
\[ (u^\ell_{h,0}+1)A + 1 \stackrel{(a)}{<} (u^\ell_{h,v} +1)A + v+1 \stackrel{(b)}{<} (u^\ell_{h,0}+1)A + k+2 \,.\]
Finally, using that $v \leq k$ and $A>k$, we have that (a) implies $u^\ell_{h,0} \leq u^\ell_{h,v}$, and (b) implies $u^\ell_{h,v} \leq u^\ell_{h,0}$.
\end{claimproof}
\begin{claim}\label{clm:hardness_columns_equal}
    For all $0<v \in \nzero{k+1}$ and $h \in \nzero{k+1}$, we have $u^r_{0,v}=u^r_{h,v}$.
\end{claim}
\begin{claimproof}
    The argument is symmetric to the proof of Claim~\ref{clm:hardness_rows_equal}.
\end{claimproof}
Using the previous claims, we are now able to construct a $k$-clique in $F$ from this mapping --- the argument is, at this point, similar to the standard version of reducing colourful homomorphisms from grids to cliques (cf.\ \cite[Claim 2.46]{Roth19}): we claim that for $i \in [k]$, the vertices $u^\ell_{i,i}(=u^r_{i,i})$ form a $k$-clique in $F$. To this end, we show that for all $i<j$, we have that $\{u^\ell_{i,i},u^\ell_{j,j}\}$ is an edge of $F$. By Claim~\ref{clm:hardness_rows_equal}, we have that $u^\ell_{i,i}=u^\ell_{i,j}$, and by Claim~\ref{clm:hardness_columns_equal}, we have that $(u^\ell_{j,j}=)u^r_{j,j}=u^r_{i,j}$. However, by definition of $G$, the pair of edges $e^\ell_{i,j}(u^\ell_{i,j})$ and $e^r_{i,j}(u^r_{i,j})$ can only constitute a $2$-path (which is required for the homomorphism) if $\{u^\ell_{i,j},u^r_{i,j}\}$ is an edge of $F$. This concludes the proof.
\end{proof}

With the correctness of the gadget established, we can effortlessly proceed with the proof of Lemma~\ref{lem:gadget_construction}.
\begin{proof}[Proof of Lemma~\ref{lem:gadget_construction}]
    Given the specified inputs, we construct the total ordering $\preccurlyeq$ and the $H$-coloured temporal graph $(G,\tau,c)$ as in the previous section --- clearly, the construction can be done in polynomial time. Correctness follows from Lemma~\ref{lem:gadget_correct}. 
\end{proof}

\begin{corollary}\label{cor:main_hardness}
    Let $\mathcal{H}$ be a recursively enumerable class of graphs such that the linegraphs of graphs in $\mathcal{H}$ have unbounded semi-induced matching number. Then $\textsc{Clique}\fptred\#\textsc{ColTemporalHom}_\mathrm{TO}(\mathcal{H})$.
\end{corollary}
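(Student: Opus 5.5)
The reduction will take an instance $(F,k)$ of \textsc{Clique}, build the gadget of Lemma~\ref{lem:gadget_construction}, and query the oracle for $\#\textsc{ColTemporalHom}_\mathrm{TO}(\mathcal{H})$ once.

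\textbf{Trivial cases.} If $k\le 1$, or if $|V(F)|\le k$, the reduction answers directly: the instance is a yes-instance trivially when $k\le 1$ (and $F$ is non-empty), and is decidable by brute force when $|V(F)|\le k$. Isolated vertices of $F$ are deleted, after first checking the case where such vertices would be needed.

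\textbf{Finding the pattern.} Otherwise the reduction enumerates the recursively enumerable class $\mathcal{H}$ until it finds a graph $H$ whose line graph $L(H)$ has a semi-induced matching of size $(k+1)^2-1$. Such an $H$ exists by the unboundedness assumption. Whether $L(H)$ has such a matching can be checked by brute force over subsets of $E(L(H))$. The time spent in this search, and the size $|H|$, are both bounded by a computable function $g(k)$ that depends only on $k$.

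\textbf{Converting to a $P_2$-packing.} By Observation~\ref{obs:semi-induced-line-match_P2packing}, $H$ has a $P_2$-packing of size $(k+1)^2-1$. This is immediate: an edge $\{e,e'\}$ of $L(H)$ gives a $2$-path in $H$ through the common vertex of $e$ and $e'$. The centres of these $2$-paths are pairwise distinct, because two matching edges of $L(H)$ sharing a centre $v$ would yield edges $e_i$ and $e'_j$ both incident to $v$, which contradicts semi-inducedness. The reduction finds the packing by brute force within time $g(k)$.

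\textbf{Building and querying the instance.} It then applies Lemma~\ref{lem:gadget_construction} to $H$, $k$, the packing and $F$. This yields in polynomial time an order $\preccurlyeq$ and an $H$-coloured temporal graph $(G,\tau,c)$ such that $F$ has a $k$-clique iff $\#\colhom{(H,\preccurlyeq)}{(G,\tau,c)}\ge 1$. The reduction makes one oracle call on $((H,\preccurlyeq),(G,\tau,c))$ and answers accordingly.

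\textbf{Why this is an FPT Turing reduction.} The parameter of the query is $|(H,\preccurlyeq)|\le g'(k)$ for a computable $g'$, and the total running time is $g(k)\cdot|F|^{O(1)}$. Hence this is a parameterised Turing reduction.

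\textbf{Main obstacle.} The only delicate point is computability of $g$. It holds because $\mathcal{H}$ is recursively enumerable, so the enumeration halts on some witness, and the witness found depends only on $k$. The remaining steps are routine.
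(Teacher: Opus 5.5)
Your proposal is correct and follows the paper's proof: enumerate $\mathcal{H}$ to find, in time depending only on $k$, a graph $H$ with a $P_2$-packing of size $(k+1)^2-1$ (via Observation~\ref{obs:semi-induced-line-match_P2packing}), then invoke Lemma~\ref{lem:gadget_construction} and make a single oracle query. You also spell out details the paper leaves implicit, namely the trivial cases $k\le 1$ and $|V(F)|\le k$, the short argument for the Observation, and why the search time is computable. One small fix: re-check $|V(F)|>k$ after you delete the isolated vertices.
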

\begin{proof}
    By Observation~\ref{obs:semi-induced-line-match_P2packing}, the graphs in $\mathcal{H}$ contain $P_2$-packings of unbounded size. Moreover, given $k \in \mathcal{N}$, using that $\mathcal{H}$ is recursively enumerable, we can find in time only depending on $k$ a graph $H\in \mathcal{H}$ together with a $P_2$-packing of size $(k+1)^2-1$ of $H$. The reduction then follows by Lemma~\ref{lem:gadget_construction}. 
\end{proof}

\subsubsection{Proof of the intractability result}
First, using a standard inclusion-exclusion argument, we show the following:
\begin{lemma}\label{lem:coltemhom_totemphom}
    For all $\mathcal{H}$, we have $\#\textsc{ColTemporalHom}_\mathrm{TO}(\mathcal{H})\fptred \#\textsc{TemporalHom}_\mathrm{TO}(\mathcal{H})$.
\end{lemma}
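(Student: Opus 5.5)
The plan is a standard inclusion–exclusion over edge-colour classes, using the $H$-colouring $c=(\nu_c,\xi_c)$ of $G$ to build sub-instances of $(G,\tau)$. Given an instance $(H,\preccurlyeq)$ and an $H$-coloured temporal graph $(G,\tau,c)$, for every subset $S\subseteq E(H)$ let $G_S$ denote the subgraph of $G$ obtained by deleting all edges $e$ with $\xi_c(e)\notin S$, and let $\tau_S$ be the restriction of $\tau$. The oracle for $\#\textsc{TemporalHom}_\mathrm{TO}(\mathcal{H})$ is queried only on $((H,\preccurlyeq),(G_S,\tau_S))$, so the pattern stays in $\mathcal{H}$ and the parameter is unchanged. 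Only $2^{|E(H)|}$ queries are made, which is FPT.

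The key step is a counting identity, obtained by sorting homomorphisms by colour image. Every homomorphism $\varphi=(\nu,\xi)$ from $(H,\preccurlyeq)$ to $(G,\tau)$ determines the set $C(\varphi)=\{\xi_c(\xi(e)) : e\in E(H)\}\subseteq E(H)$ of colours it hits. $\varphi$ is a homomorphism into $(G_S,\tau_S)$ exactly when $C(\varphi)\subseteq S$, since the temporal constraints only refer to $\tau$ on the image edges, which is unchanged. A homomorphism is edge-colourful exactly when $C(\varphi)=E(H)$: the map $e\mapsto \xi_c(\xi(e))$ goes from $E(H)$ to $E(H)$, which is finite, so surjectivity is equivalent to bijectivity. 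Möbius inversion on the Boolean lattice then gives
\[\#\colhom{(H,\preccurlyeq)}{(G,\tau,c)}=\sum_{S\subseteq E(H)}(-1)^{|E(H)\setminus S|}\,\#\homs{(H,\preccurlyeq)}{(G_S,\tau_S)}.\]

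Two technical points need checking, though I expect neither to be a real obstacle. First, the queried objects must be legal temporal graphs: deleting edges preserves the condition that parallel edges carry distinct times, and isolated vertices are harmless. Second, the notion of homomorphism in the totally ordered setting (strict inequalities, $e\prec e'$ iff $\tau(\xi(e))<\tau(\xi(e'))$) must be the same for both problems. It is, since the oracle problem and the coloured problem use the identical definition. With these checked, the reduction computes the right-hand side with $2^{|E(H)|}$ oracle calls and polynomial overhead, establishing $\fptred$.
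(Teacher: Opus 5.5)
Your proposal is correct and is the paper's proof: the paper forms $\Gamma^c_J$ by deleting every edge whose colour lies in $J\subseteq E(H)$ and uses $\#\colhom{(H,\preccurlyeq)}{(G,\tau,c)}=\sum_{J\subseteq E(H)}(-1)^{|J|}\,\#\homs{(H,\preccurlyeq)}{\Gamma^c_J}$ with $2^{|E(H)|}$ oracle calls, which is exactly your identity under the substitution $J=E(H)\setminus S$. Your extra checks (surjectivity of $e\mapsto\xi_c(\xi(e))$ being equivalent to bijectivity, and edge deletion preserving the temporal-graph conditions) only make explicit what the paper leaves implicit.
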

\begin{proof}
    Let $(H,\preccurlyeq)$ and $\Gamma=(G,\tau,c)$ be the input to $\#\textsc{ColTemporalHom}_\mathrm{TO}(\mathcal{H})$. For each subset $J \subseteq E(H)$, we write $\Gamma^c_J$ for the (non-coloured) temporal graph obtained from $G$ by deleting all edges coloured with an element in $J$.

    Using the inclusion-exclusion formula, we observe
    \[\#\colhom{(H,\preccurlyeq)}{(G,\tau,c)} = \sum_{J \subseteq E(H)} (-1)^J\cdot \#\homs{(H,\preccurlyeq)}{\Gamma^c_J} \,.\]
    Thus, we can compute $\#\colhom{(H,\preccurlyeq)}{(G,\tau,c)}$ via $2^{|E(H)|}$ oracle calls to $\#\textsc{TemporalHom}_\mathrm{TO}(\mathcal{H})$.
\end{proof}

We are now able to conclude our hardness reduction.
\begin{lemma}\label{lemma:main_hardness}
    Let $\mathcal{H}$ be a recursively enumerable class of graphs. If the line graphs of $\mathcal{H}$ have unbounded semi-induced matching number, then $\#\textsc{TemporalHom}_\mathrm{TO}(\mathcal{H})$ is $\mathrm{W}[1]$-hard under parameterised Turing-reductions.
\end{lemma}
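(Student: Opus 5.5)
The proof is a composition of two parameterised Turing-reductions already established in the paper. Throughout, parameterised Turing-reductions compose: an FPT oracle algorithm whose oracle queries have parameter bounded by a computable function of the input parameter can be plugged into another such algorithm, and the result is again a parameterised Turing-reduction.

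First, since $\mathcal{H}$ is recursively enumerable and the line graphs of $\mathcal{H}$ have unbounded semi-induced matching number, Corollary~\ref{cor:main_hardness} yields
\[\textsc{Clique}\fptred\#\textsc{ColTemporalHom}_\mathrm{TO}(\mathcal{H}).\]
Concretely, on input $(F,k)$ we enumerate $\mathcal{H}$ until we find a graph $H$ together with a $P_2$-packing of size $(k+1)^2-1$, obtained from a semi-induced matching in $L(H)$ via Observation~\ref{obs:semi-induced-line-match_P2packing}. This search takes time depending only on $k$, and the resulting $|H|$ is bounded by a computable function of $k$. Two trivial cases are handled directly before invoking the gadget: if $|V(F)|\le k$ we answer directly, and we may also assume $k\geq 2$ and that $F$ has no isolated vertices. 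We then apply Lemma~\ref{lem:gadget_construction}, which produces $((H,\preccurlyeq),(G,\tau,c))$ in polynomial time such that $F$ has a $k$-clique if and only if the number of edge-colourful homomorphisms is at least $1$. A single oracle query to the coloured counting problem therefore decides the instance, and its parameter $|(H,\preccurlyeq)|$ depends only on $k$.

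Second, Lemma~\ref{lem:coltemhom_totemphom} gives
\[\#\textsc{ColTemporalHom}_\mathrm{TO}(\mathcal{H})\fptred\#\textsc{TemporalHom}_\mathrm{TO}(\mathcal{H}).\]
This reduction uses $2^{|E(H)|}$ oracle calls on the same pattern $(H,\preccurlyeq)$, so each query parameter equals the input parameter, and the running time is FPT. We should check that the deleted-colour graphs $\Gamma^c_J$ remain valid temporal graphs. They do, since deleting edges preserves the condition that parallel edges receive distinct times; the colouring is simply dropped.

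Composing the two reductions gives $\textsc{Clique}\fptred\#\textsc{TemporalHom}_\mathrm{TO}(\mathcal{H})$, which is $\mathrm{W}[1]$-hardness by definition.

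The only point needing care is the parameter bookkeeping in the first step: the search through the enumeration of $\mathcal{H}$ must terminate, which is guaranteed by unboundedness, and its cost and the size of the found $H$ must be bounded by a computable function of $k$. This holds because we can define that function as the running time of the (terminating) search, which depends only on $k$.
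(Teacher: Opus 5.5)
Your proposal is correct and is exactly the paper's argument. You compose $\textsc{Clique}\fptred\#\textsc{ColTemporalHom}_\mathrm{TO}(\mathcal{H})$ from Corollary~\ref{cor:main_hardness} with the inclusion--exclusion reduction of Lemma~\ref{lem:coltemhom_totemphom}, and your extra bookkeeping (the trivial instances that violate the gadget's preconditions, the computable parameter bound via the enumeration of $\mathcal{H}$, and the validity of $\Gamma^c_J$ as a temporal graph) only makes explicit what the paper's one-line proof leaves implicit.
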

\begin{proof}
    The reduction chain from $\textsc{Clique}$ is given by Corollary~\ref{cor:main_hardness} and Lemma~\ref{lem:coltemhom_totemphom}.
\end{proof}

\begin{proof}[Proof of Theorem~\ref{thm:intro_main_classification}]
    The upper bound follows from Lemma~\ref{lem:sim-cw-bound} and Theorem~\ref{thm:intro_main_algo}. The lower bound follows from Lemma~\ref{lemma:main_hardness}.
\end{proof}

\section{Conclusion and Future Work}
We have provided the first comprehensive treatment of homomorphism counting in temporal graphs, both with respect to descriptive complexity and expressive power in terms of a Lov{\'{a}}sz-style isomorphism theorem, as well as with respect to algorithmic complexity in terms of an FPT dynamic programming algorithm along the newly introduced toadwith of temporal patterns and the explicit complexity dichotomy for counting homomorphisms from totally ordered temporal patterns.

While our results fully resolve the natural special case of total orders, subsuming e.g.\ an FPT algorithm for counting temporal walks, the primary question left open for future work is whether the algorithm along toadwidth is optimal even for not necessarily totally ordered patterns. In other words, and more formally:
\begin{center}
    \textit{Is} $\#\textsc{TemporalHom}(\mathcal{C})$ \textit{intractable for each $\mathcal{C}$ of unbounded toadwidth?}
\end{center}
We suspect the answer to the above question to be affirmative.

Moreover, we propose further work on isomorphisms and temporal homomorphism indistinguishability for temporal graphs: as illustrated in the introduction, there are various well-motivated notions for isomorphisms of temporal graphs in the literature, and we believe it to be worthwhile to investigate whether our proof for order-isomorphisms can be adapted for point-wise and time-wise isomorphisms via modifying the definition of temporal homomorphisms.

Finally, with homomorphism counts being the basis for motif counting in static graphs~\cite{CurticapeanDM17}, we believe that temporal homomorphisms have the potential to provide a similar unifying framework for pattern counting in temporal graphs. In fact, our proof of Theorem~\ref{thm:lovasz_improved_intro} already makes use of a transformation of temporal subgraph isomorphisms to finite linear combinations of temporal homomorphisms, and we are optimistic that similar transformations will allow us to relate the complexity of counting more complex temporal motifs to the upper and lower bounds for counting temporal homomorphisms.

\bibliographystyle{plain}
\bibliography{references}

\begin{thebibliography}{10}

\bibitem{BeddarWiesingetal24}
Silvia Beddar{-}Wiesing, Giuseppe~Alessio D'Inverno, Caterina Graziani, Veronica Lachi, Alice Moallemy{-}Oureh, Franco Scarselli, and Josephine~Maria Thomas.
\newblock Weisfeiler-lehman goes dynamic: An analysis of the expressive power of graph neural networks for attributed and dynamic graphs.
\newblock {\em Neural Networks}, 173:106213, 2024.

\bibitem{BentertHNN20}
Matthias Bentert, Anne{-}Sophie Himmel, Andr{\'{e}} Nichterlein, and Rolf Niedermeier.
\newblock Efficient computation of optimal temporal walks under waiting-time constraints.
\newblock {\em Appl. Netw. Sci.}, 5(1):73, 2020.

\bibitem{DBLP:journals/tcs/BergougnouxK19}
Benjamin Bergougnoux and Mamadou~Moustapha Kant{\'{e}}.
\newblock Fast exact algorithms for some connectivity problems parameterized by clique-width.
\newblock {\em Theor. Comput. Sci.}, 782:30--53, 2019.

\bibitem{CasteigtsFQS12}
Arnaud Casteigts, Paola Flocchini, Walter Quattrociocchi, and Nicola Santoro.
\newblock Time-varying graphs and dynamic networks.
\newblock {\em Int. J. Parallel Emergent Distributed Syst.}, 27(5):387--408, 2012.

\bibitem{CasteigtsHMZ21}
Arnaud Casteigts, Anne{-}Sophie Himmel, Hendrik Molter, and Philipp Zschoche.
\newblock Finding temporal paths under waiting time constraints.
\newblock {\em Algorithmica}, 83(9):2754--2802, 2021.

\bibitem{Chenetal05}
Jianer Chen, Benny Chor, Mike Fellows, Xiuzhen Huang, David~W. Juedes, Iyad~A. Kanj, and Ge~Xia.
\newblock Tight lower bounds for certain parameterized {N}{P}-hard problems.
\newblock {\em Inf. Comput.}, 201(2):216--231, 2005.

\bibitem{Chenetal06}
Jianer Chen, Xiuzhen Huang, Iyad~A. Kanj, and Ge~Xia.
\newblock Strong computational lower bounds via parameterized complexity.
\newblock {\em J. Comput. Syst. Sci.}, 72(8):1346--1367, 2006.

\bibitem{CourcelleE12}
Bruno Courcelle and Joost Engelfriet.
\newblock {\em Graph Structure and Monadic Second-Order Logic - {A} Language-Theoretic Approach}, volume 138 of {\em Encyclopedia of mathematics and its applications}.
\newblock Cambridge University Press, 2012.

\bibitem{CourcelleO00}
Bruno Courcelle and Stephan Olariu.
\newblock Upper bounds to the clique width of graphs.
\newblock {\em Discret. Appl. Math.}, 101(1-3):77--114, 2000.

\bibitem{CurticapeanDM17}
Radu Curticapean, Holger Dell, and D{\'{a}}niel Marx.
\newblock Homomorphisms are a good basis for counting small subgraphs.
\newblock In Hamed Hatami, Pierre McKenzie, and Valerie King, editors, {\em Proceedings of the 49th Annual {ACM} {SIGACT} Symposium on Theory of Computing, {STOC} 2017, Montreal, QC, Canada, June 19-23, 2017}, pages 210--223. {ACM}, 2017.

\bibitem{CurticapeanM14}
Radu Curticapean and D{\'{a}}niel Marx.
\newblock Complexity of {C}ounting {S}ubgraphs: {O}nly the {B}oundedness of the {V}ertex-{C}over {N}umber {C}ounts.
\newblock In {\em Proc.\ of IEEE FOCS}, pages 130--139, 2014.

\bibitem{DabrowskiJP19}
Konrad~K. Dabrowski, Matthew Johnson, and Dani{\"{e}}l Paulusma.
\newblock Clique-width for hereditary graph classes.
\newblock In Allan Lo, Richard Mycroft, Guillem Perarnau, and Andrew Treglown, editors, {\em Surveys in Combinatorics, 2019: Invited lectures from the 27th British Combinatorial Conference, Birmingham, UK, July 29 - August 2, 2019}, pages 1--56. Cambridge University Press, 2019.

\bibitem{DalmauJ04}
V{\'{\i}}ctor Dalmau and Peter Jonsson.
\newblock The complexity of counting homomorphisms seen from the other side.
\newblock {\em Theor. Comput. Sci.}, 329(1-3):315--323, 2004.

\bibitem{DellGR18}
Holger Dell, Martin Grohe, and Gaurav Rattan.
\newblock Lov{\'{a}}sz meets weisfeiler and leman.
\newblock In Ioannis Chatzigiannakis, Christos Kaklamanis, D{\'{a}}niel Marx, and Donald Sannella, editors, {\em 45th International Colloquium on Automata, Languages, and Programming, {ICALP} 2018, Prague, Czech Republic, July 9-13, 2018}, volume 107 of {\em LIPIcs}, pages 40:1--40:14. Schloss Dagstuhl - Leibniz-Zentrum f{\"{u}}r Informatik, 2018.

\bibitem{DoringMW24}
Simon D{\"{o}}ring, D{\'{a}}niel Marx, and Philip Wellnitz.
\newblock Counting small induced subgraphs with edge-monotone properties.
\newblock In Bojan Mohar, Igor Shinkar, and Ryan O'Donnell, editors, {\em Proceedings of the 56th Annual {ACM} Symposium on Theory of Computing, {STOC} 2024, Vancouver, BC, Canada, June 24-28, 2024}, pages 1517--1525. {ACM}, 2024.

\bibitem{Dvorak10}
Zdenek Dvor{\'{a}}k.
\newblock On recognizing graphs by numbers of homomorphisms.
\newblock {\em J. Graph Theory}, 64(4):330--342, 2010.

\bibitem{EnrightMM25}
Jessica~A. Enright, Kitty Meeks, and Hendrik Molter.
\newblock Counting temporal paths.
\newblock {\em Algorithmica}, 87(5):736--782, 2025.

\bibitem{FlumG06}
J{\"{o}}rg Flum and Martin Grohe.
\newblock {\em Parameterized Complexity Theory}.
\newblock Texts in Theoretical Computer Science. An {EATCS} Series. Springer, 2006.

\bibitem{TempTreewidth}
Till Fluschnik, Hendrik Molter, Rolf Niedermeier, Malte Renken, and Philipp Zschoche.
\newblock As time goes by: Reflections on treewidth for temporal graphs.
\newblock In Fedor~V. Fomin, Stefan Kratsch, and Erik~Jan van Leeuwen, editors, {\em Treewidth, Kernels, and Algorithms - Essays Dedicated to Hans L. Bodlaender on the Occasion of His 60th Birthday}, volume 12160 of {\em Lecture Notes in Computer Science}, pages 49--77. Springer, 2020.

\bibitem{FockeR24}
Jacob Focke and Marc Roth.
\newblock Counting small induced subgraphs with hereditary properties.
\newblock {\em {SIAM} J. Comput.}, 53(2):189--220, 2024.

\bibitem{GaoR22}
Jianfei Gao and Bruno Ribeiro.
\newblock On the equivalence between temporal and static equivariant graph representations.
\newblock In Kamalika Chaudhuri, Stefanie Jegelka, Le~Song, Csaba Szepesv{\'{a}}ri, Gang Niu, and Sivan Sabato, editors, {\em International Conference on Machine Learning, {ICML} 2022, 17-23 July 2022, Baltimore, Maryland, {USA}}, volume 162 of {\em Proceedings of Machine Learning Research}, pages 7052--7076. {PMLR}, 2022.

\bibitem{GurskiW07}
Frank Gurski and Egon Wanke.
\newblock Line graphs of bounded clique-width.
\newblock {\em Discret. Math.}, 307(22):2734--2754, 2007.

\bibitem{Hanetal04}
Jing-Dong~J. Han, Nicolas Bertin, Tong Hao, Debra~S. Goldberg, Gabriel~F. Berriz, Lan~V. Zhang, Denis Dupuy, Albertha J.~M. Walhout, Michael~E. Cusick, Frederick~P. Roth, and Marc Vidal.
\newblock Evidence for dynamically organized modularity in the yeast protein--protein interaction network.
\newblock {\em Nature}, 430:88--93, 2004.

\bibitem{Heegetal25}
Franziska Heeg, Jonas Sauer, Petra Mutzel, and Ingo Scholtes.
\newblock Weisfeiler and leman follow the arrow of time: Expressive power of message passing in temporal event graphs.
\newblock {\em CoRR}, abs/2505.24438, 2025.

\bibitem{DBLP:conf/esa/HegerfeldK23}
Falko Hegerfeld and Stefan Kratsch.
\newblock Tight algorithms for connectivity problems parameterized by clique-width.
\newblock In Inge~Li G{\o}rtz, Martin Farach{-}Colton, Simon~J. Puglisi, and Grzegorz Herman, editors, {\em 31st Annual European Symposium on Algorithms, {ESA} 2023, September 4-6, 2023, Amsterdam, The Netherlands}, volume 274 of {\em LIPIcs}, pages 59:1--59:19. Schloss Dagstuhl - Leibniz-Zentrum f{\"{u}}r Informatik, 2023.

\bibitem{HolmeS12}
Petter Holme and Jari Saramäki.
\newblock Temporal networks.
\newblock {\em Physics Reports}, 519(3):97--125, 2012.
\newblock Temporal Networks.

\bibitem{ETH}
Russell Impagliazzo and Ramamohan Paturi.
\newblock On the complexity of k-{SAT}.
\newblock {\em J. Comput. Syst. Sci.}, 62(2):367--375, 2001.

\bibitem{KempeKK02}
David Kempe, Jon~M. Kleinberg, and Amit Kumar.
\newblock Connectivity and inference problems for temporal networks.
\newblock {\em J. Comput. Syst. Sci.}, 64(4):820--842, 2002.

\bibitem{Tempcount4}
Paul Liu, Austin~R. Benson, and Moses Charikar.
\newblock Sampling methods for counting temporal motifs.
\newblock In J.~Shane Culpepper, Alistair Moffat, Paul~N. Bennett, and Kristina Lerman, editors, {\em Proceedings of the Twelfth {ACM} International Conference on Web Search and Data Mining, {WSDM} 2019, Melbourne, VIC, Australia, February 11-15, 2019}, pages 294--302. {ACM}, 2019.

\bibitem{Tempcount2}
Penghang Liu, Valerio Guarrasi, and Ahmet~Erdem Sar{\i}y{\"u}ce.
\newblock Temporal network motifs: Models, limitations, evaluation.
\newblock {\em IEEE Transactions on Knowledge and Data Engineering}, 35(1):945--957, 2021.

\bibitem{Longaetal23}
Antonio Longa, Veronica Lachi, Gabriele Santin, Monica Bianchini, Bruno Lepri, Pietro Lio, Franco Scarselli, and Andrea Passerini.
\newblock Graph neural networks for temporal graphs: State of the art, open challenges, and opportunities.
\newblock {\em Trans. Mach. Learn. Res.}, 2023, 2023.

\bibitem{Lovasz12}
L{\'{a}}szl{\'{o}} Lov{\'{a}}sz.
\newblock {\em Large Networks and Graph Limits}, volume~60 of {\em Colloquium Publications}.
\newblock American Mathematical Society, 2012.

\bibitem{Marx10}
D{\'{a}}niel Marx.
\newblock Can you beat treewidth?
\newblock {\em Theory Comput.}, 6(1):85--112, 2010.

\bibitem{Miloetal02}
R.~Milo, S.~Shen-Orr, S.~Itzkovitz, N.~Kashtan, D.~Chklovskii, and U.~Alon.
\newblock Network {Motifs}: {Simple} {Building} {Blocks} of {Complex} {Networks}.
\newblock {\em Science}, 298(5594):824--827, 2002.
\newblock \_eprint: https://www.science.org/doi/pdf/10.1126/science.298.5594.824.

\bibitem{Morrisetal19}
Christopher Morris, Martin Ritzert, Matthias Fey, William~L. Hamilton, Jan~Eric Lenssen, Gaurav Rattan, and Martin Grohe.
\newblock Weisfeiler and leman go neural: Higher-order graph neural networks.
\newblock In {\em The Thirty-Third {AAAI} Conference on Artificial Intelligence, {AAAI} 2019, The Thirty-First Innovative Applications of Artificial Intelligence Conference, {IAAI} 2019, The Ninth {AAAI} Symposium on Educational Advances in Artificial Intelligence, {EAAI} 2019, Honolulu, Hawaii, USA, January 27 - February 1, 2019}, pages 4602--4609. {AAAI} Press, 2019.

\bibitem{Tempcount1}
Ashwin Paranjape, Austin~R. Benson, and Jure Leskovec.
\newblock Motifs in temporal networks.
\newblock In Maarten de~Rijke, Milad Shokouhi, Andrew Tomkins, and Min Zhang, editors, {\em Proceedings of the Tenth {ACM} International Conference on Web Search and Data Mining, {WSDM} 2017, Cambridge, United Kingdom, February 6-10, 2017}, pages 601--610. {ACM}, 2017.

\bibitem{Roth19}
Marc Roth.
\newblock {\em Counting problems on quantum graphs}.
\newblock PhD thesis, Saarland University, Germany, 2019.

\bibitem{Roth26}
Marc Roth.
\newblock Parameterised counting complexity theory.
\newblock {\em Comput. Sci. Rev.}, 59:100837, 2026.

\bibitem{Tempcount3}
Ahmet~Erdem Sar{\i}y{\"u}ce.
\newblock A powerful lens for temporal network analysis: temporal motifs.
\newblock {\em Discover Data}, 3(1):14, 2025.

\bibitem{WalegaR25}
Przemyslaw~Andrzej Walega and Michael Rawson.
\newblock Expressive power of temporal message passing.
\newblock In Toby Walsh, Julie Shah, and Zico Kolter, editors, {\em Thirty-Ninth {AAAI} Conference on Artificial Intelligence, Thirty-Seventh Conference on Innovative Applications of Artificial Intelligence, Fifteenth Symposium on Educational Advances in Artificial Intelligence, {AAAI} 2025, Philadelphia, PA, USA, February 25 - March 4, 2025}, pages 21000--21008. {AAAI} Press, 2025.

\bibitem{ZhaoTHOJL10}
Qiankun Zhao, Yuan Tian, Qi~He, Nuria Oliver, Ruoming Jin, and Wang{-}Chien Lee.
\newblock Communication motifs: a tool to characterize social communications.
\newblock In Jimmy~X. Huang, Nick Koudas, Gareth J.~F. Jones, Xindong Wu, Kevyn Collins{-}Thompson, and Aijun An, editors, {\em Proceedings of the 19th {ACM} Conference on Information and Knowledge Management, {CIKM} 2010, Toronto, Ontario, Canada, October 26-30, 2010}, pages 1645--1648. {ACM}, 2010.

\end{thebibliography}
\end{document}